\tikzstyle{every picture}=[>={stealth}]
\newcommand{\dom}{\lambda}
\newcommand{\defeq}{\colonequals}
\newcommand{\card}[1]{\left|#1\right|}
\newcommand{\calF}{\mathcal{F}}
\newcommand\NOT{\lnot}
\newcommand\VARS{\mathsf{Vars}}
\newcommand\LEAVES{\mathsf{Leaves}}
\newcommand\SINT{\mathrm{SINT}}
\newcommand\SCOV{\mathrm{SCOV}}
\renewcommand\root{\mathrm{root}}
\newcommand\spl{\mathrm{Split}}
\newcommand\psw{\mathrm{psw}}
\newcommand\tsw{\mathrm{tsw}}
\newcommand\pw{\mathrm{pw}}
\newcommand\tw{\mathrm{tw}}
\newcommand\degree{\mathrm{degree}}
\newcommand\UNF{\mathrm{Unj}}
\newcommand{\NN}{\mathbb{N}}
\newcommand{\arity}{\mathrm{arity}}
\newcommand{\var}{\mathsf{var}}
\newcommand{\out}{\mathsf{output}}
\newcommand\restr[2]{{%
  \kern-\nulldelimiterspace %
  #1 %
  _{|#2} %
  }}
\newcommand{\myparagraph}[1]{\subparagraph*{#1}}
\title{Connecting Knowledge Compilation Classes and Width Parameters}
\titlerunning{Connecting Knowledge Compilation Classes and Width Parameters}
\author{Antoine Amarilli}{%
LTCI, T\'el\'ecom ParisTech, Universit\'e Paris-Saclay, Paris}{%
antoine.amarilli@telecom-paristech.fr}{}{}
\author{Florent Capelli}{CRIStAL, Université de Lille, CNRS, Lille}{%
florent.capelli@univ-lille.fr}{}{}
\author{Mikaël Monet}{%
LTCI, T\'el\'ecom ParisTech, Universit\'e Paris-Saclay, Paris}{%
mikael.monet@telecom-paristech.fr}{}{}
\author{Pierre Senellart}{DI ENS, ENS, CNRS, PSL University, Paris}{%
pierre.senellart@ens.fr}{}{}
\authorrunning{A. Amarilli and F. Capelli and M. Monet and P. Senellart}
\subjclass{H.2 DATABASE MANAGEMENT}
\keywords{Knowledge compilation, Boolean circuits,
\mbox{d-DNNFs}, OBDDs, treewidth, pathwidth}
\renewcommand{\phi}{\varphi}
\renewcommand{\epsilon}{\varepsilon}
\renewcommand{\leq}{\leqslant}
\renewcommand{\geq}{\geqslant}
\theoremstyle{plain}
\newtheorem{proposition}[theorem]{Proposition}
\newtheorem{claim}[theorem]{Claim}
\newtheorem{observation}[theorem]{Observation}
\newtheorem{result}{Result}
\begin{document}

\maketitle

\begin{abstract}
  The field of \emph{knowledge compilation} establishes the tractability of many
tasks by studying how to \emph{compile} them to Boolean circuit classes obeying some
requirements such as structuredness, decomposability, and determinism. However,
in other settings such as intensional query
evaluation on databases, we obtain Boolean circuits that satisfy some width bounds,
e.g., they have bounded treewidth or pathwidth. In this work, we give a systematic
picture of many circuit classes considered in knowledge compilation and show
how they can be systematically connected to width measures, through
upper and lower bounds. Our upper bounds show that bounded-treewidth circuits
can be constructively converted to d-SDNNFs, in time linear in the circuit size
and singly exponential in the treewidth; and that bounded-pathwidth circuits can
similarly be converted to uOBDDs. We show matching lower
bounds on the compilation of monotone DNF or CNF formulas to structured targets, assuming a constant bound on the
arity (size of clauses) and degree (number of occurrences of each
variable): any d-SDNNF (resp., SDNNF) for such a DNF (resp., CNF) must be of
exponential size in its treewidth, and the same holds for uOBDDs (resp., n-OBDDs)
when considering pathwidth. Unlike most previous work, our bounds apply to \emph{any}
formula of this class, not just a well-chosen family. Hence, we show that
pathwidth and treewidth respectively characterize the efficiency of compiling monotone DNFs to
uOBDDs and d-SDNNFs with compilation being singly exponential in the
corresponding width parameter. We also show that our lower bounds on CNFs extend
to unstructured compilation targets, with an exponential lower bound in the
treewidth (resp., pathwidth) when compiling monotone CNFs of constant arity and
degree to DNNFs (resp., nFBDDs).

\end{abstract}

\section{Introduction}
\label{sec:introduction}
\emph{Knowledge compilation} studies how problems can be solved by compiling
them into classes of Boolean circuits or binary decision diagrams (BDDs) to which
general-purpose algorithms can be applied. This field has introduced numerous
such classes or \emph{compilation targets}, defined by various restrictions on
the circuits or BDDs, and
studied which operations can be solved on them; e.g., the class of
\emph{d-DNNFs} requires that negation is only applied at the leaves,
$\land$-gates are on disjoint variable subsets, and $\lor$-gates have mutually
exclusive inputs. However, a different way to define restricted classes is to
bound some graph-theoretic \emph{width parameters}, e.g., treewidth, which
measures how the data can be decomposed as a tree, or
pathwidth, the special case of treewidth with path-shaped decompositions. Such restrictions have been used in particular in the field of
database theory and probabilistic databases~\cite{suciu2011probabilistic} in the
so-called \emph{intensional approach} where we compute a \emph{lineage
circuit}~\cite{jha2012tractability} that represents the output of a query or the
possible worlds that make it true, and where these circuits can sometimes be
shown to have bounded treewidth~\cite{jha2010bridging,amarilli2015provenance,amarilli2017combined}.

At first glance, classes such as
\emph{bounded-treewidth circuits} seem very different from usual knowledge
compilation classes such as d-DNNF. Yet, for some tasks such as probability computation
(computing the probability of the circuit under an independent distribution on
the variables), both classes are known to be tractable: the problem can be
solved in linear time on d-DNNFs by definition of the
class~\cite{darwiche2001tractable}, and for bounded-treewidth circuits we can
use \emph{message passing}~\cite{lauritzen1988local} to solve probability
computation in time linear in the circuit and 
exponential in the treewidth. This hints at the existence of a connection
between traditional knowledge compilation classes and bounded-width classes.

This paper presents such a connection and shows that the width of circuits
is intimately linked to many well-known knowledge compilation classes.
Specifically, we show a link between the \emph{treewidth} of Boolean circuits
and the width of their representations in common circuit targets; and show a
similar link between the \emph{pathwidth} of Boolean circuits and the width of
their representation in BDD targets. We demonstrate this link by showing
\emph{upper bound} results on compilation targets, to show that
bounded-width circuits can be compiled to circuits or BDD targets in linear
time and with singly exponential complexity in the width parameter. We also show
corresponding \emph{lower bound} results that establish that these compilation
targets must be exponential in the width parameters, already for a restricted class
of Boolean formulas. We now present our contributions and results in more
detail.

The first contribution of this paper (in Section~\ref{sec:kc}) is to give a
systematic picture of the 12 knowledge compilation circuit classes that we
investigate. We classify them along three independent axes:
\begin{itemize}
  \item \emph{Conjunction}: we distinguish between \emph{BDD classes}, such as OBDDs (ordered binary decision diagrams
    \cite{bryant1992symbolic}), where logical conjunction is only used to test
    the value of a variable and where computation follows a \emph{path} in the
    structure; and \emph{circuit classes} which allow decomposable
    conjunctions and where computation follows a \emph{tree}.
  \item \emph{Structuredness}: we distinguish between \emph{structured classes}, where the circuit or
    BDD always decomposes the variables along the same order or
    v-tree~\cite{pipatsrisawat2008new}, and \emph{unstructured classes} where no such
    restriction is imposed except \emph{decomposability} (each variable must be
    read at most once).
  \item \emph{Determinism}: we distinguish between classes that feature \emph{no
    disjunctions} beyond decision on a variable value (OBDDs, FBDDs, and dec-DNNFs), classes that feature
    \emph{unambiguous} or \emph{deterministic disjunctions} (uOBDDs, uFBDDs, and d-DNNFs), and
    classes that feature \emph{arbitrary disjunctions} (nOBDDs, nFBDDs, and
    DNNFs).
\end{itemize}
This landscape is summarized in Fig.~\ref{fig:kc-classes}, and we review known
translations and separation results that describe the relative expressive power
of these features.

The second contribution of this paper (in Sections~\ref{sec:result}
and~\ref{sec:proof}) is to show an \emph{upper bound} on the compilation of
bounded-treewidth classes to d-SDNNFs, and of bounded-pathwidth classes to OBDD
variants. For \emph{pathwidth}, existing work had
already studied the compilation of bounded-pathwidth circuits to
OBDDs~\cite[Corollary~2.13]{jha2012tractability}, which can be made constructive
\cite[Lemma~6.9]{amarilli2016tractable}.
Specifically, they show that a circuit of pathwidth $\leq k$ can be converted in polynomial time into an OBDD of width $\leq 2^{(k+2)2^{k+2}}$.
Our first contribution is to show that, by using \emph{unambiguous OBDDs}
(\emph{uOBDDs}), we can do the same but with linear time complexity, and with
the size of the uOBDD as well as its \emph{width} (in the classical knowledge
compilation sense) being singly exponential in the pathwidth. Specifically:

\begin{result}[see Theorem~\ref{thm:upper_bound_pw}]
\label{res:upper}
  Given as input a Boolean circuit $C$ of pathwidth~$k$ on $n$ variables,
  we can compute
  in time $O(|C| \times f(k))$
  a complete uOBDD equivalent to $C$ of width $\leq f(k)$
  and size $O(n \times f(k))$, where $f$ is singly exponential.
\end{result}

For \emph{treewidth}, we show that 
bounded-treewidth circuits can be compiled to the class of \emph{d-SDNNF
circuits}:
\begin{result}[see Corollary~\ref{cor:upper_bound}]
\label{res:upper2}
  Given as input a Boolean circuit $C$ of treewidth~$k$ on $n$ variables,
  we can compute in time $O(|C| \times f(k))$ a complete d-SDNNF equivalent to
  $C$ of width $\leq f(k)$ and size $O(n \times f(k))$, where $f$ is singly exponential.
\end{result}
The proof of Result~\ref{res:upper2}, and its variant that shows
Result~\ref{res:upper}, is quite challenging: we 
transform the input circuit bottom-up by
considering all possible valuations of the gates in
each bag of the tree decomposition, and keeping track of additional information to
remember which guessed values have been substantiated by a corresponding input.
Result~\ref{res:upper2} generalizes a recent theorem of Bova and
Szeider in~\cite{bova2017circuit} which we improve in two ways.
First, our result is constructive, whereas~\cite{bova2017circuit} only shows a bound on the
size of the d-SDNNF, without bounding the complexity of effectively computing
it. Second, our bound is singly exponential in~$k$,
whereas~\cite{bova2017circuit} is doubly exponential;
this allows us to 
be competitive with message passing (also singly exponential in~$k$),
and we believe it can be useful for practical applications.
We also explain how Result~\ref{res:upper2} implies the tractability
of several tasks on bounded-treewidth circuits, e.g.,
probabilistic query evaluation,
enumeration~\cite{amarilli2017circuit}, 
quantification~\cite{capelli2019tractable},
MAP
inference~\cite{fierens2015inference}, etc.

The third contribution of this paper is to show \emph{lower bounds} on how efficiently we
can convert from width-based classes to the compilation targets that we study.
Our bounds already apply to a weaker formalism of width-based circuits, namely,
monotone formulas in CNF (conjunctive normal form) or DNF (disjunctive normal
form). Our first two bounds (in Sections~\ref{sec:set} and~\ref{sec:structured})
are shown for \emph{structured} compilation targets, i.e., OBDDs, where we
follow a fixed order on variables, and SDNNFs, where we follow a fixed v-tree;
and they apply to arbitrary monotone CNFs and DNFs. The first lower bound 
concerns pathwidth and OBDD representations: we show that,
up to factors in the formula arity (maximal size of clauses) and degree (maximal number
of variable occurrences), any OBDD for a monotone CNF or DNF must
be of width exponential in the pathwidth $\pw(\phi)$ of the
formula~$\phi$. Formally:

\begin{result}[Corollary~\ref{cor:obdd_omega}]
  \label{res:lower_obdd}
  For any monotone CNF $\phi$ (resp., monotone DNF~$\phi$) of constant arity and
  degree, the size of the
  smallest nOBDD (resp., uOBDD) computing~$\phi$ is $2^{\Omega(\pw(\phi))}$.
\end{result}
This result generalizes several existing lower bounds in knowledge
compilation that 
exponentially separate CNFs from OBDDs,
such as~\cite{devadas93comparing} and~\cite[Theorem~19]{bova2017compiling}. 

Our second lower bound shows the analogue of Result~\ref{res:lower_obdd}
for the treewidth $\tw(\phi)$ of the formula~$\phi$ and (d-)SDNNFs:
\begin{result}[Corollary~\ref{cor:SDNNF_omega}]
  \label{res:lower_dsdnnf}
  For any monotone CNF $\phi$ (resp., monotone DNF~$\phi$) of constant arity and
  degree, the size of the
  smallest SDNNF (resp., d-SDNNF) computing~$\phi$ is $2^{\Omega(\tw(\phi))}$.
\end{result}

These two lower bounds contribute to a vast landscape of knowledge
compilation results
giving lower bounds on compiling 
specific Boolean functions to
restricted
circuits classes, e.g.,
\cite{devadas93comparing,razgon2014obdds,bova2017compiling} to OBDDs,
\cite{cali2017non} to dec-SDNNF,
\cite{beame2015new} to \emph{sentential decision diagrams} (SDDs),
\cite{pipatsrisawat2010lower,bova2016knowledge} to d-SDNNF,
\cite{bova2016knowledge,capelli2016structural,capelli2017understanding} to d-DNNFs and DNNFs.
However, all those lower bounds (with the exception of some results
in~\cite{capelli2016structural,capelli2017understanding}) apply to
well-chosen families of Boolean functions (usually CNF),
whereas Result~\ref{res:lower_obdd} and~\ref{res:lower_dsdnnf} 
apply to \emph{any} monotone CNF and DNF. Together with
Result~\ref{res:upper},
these generic lower bounds point to a strong relationship
between width parameters and structure representations, on monotone CNFs and
DNFs of constant arity and degree. Specifically, the smallest width of OBDD
representations of any such formula $\phi$ is in~$2^{\Theta(\pw(\phi))}$, i.e.,
precisely singly exponential in the pathwidth; and
an analogous bound applies to d-SDNNF size and treewidth of DNFs.

To prove these two lower bounds, we leverage known results from knowledge
compilation and communication complexity~\cite{bova2016knowledge} (in
Section~\ref{sec:set}) of which we give a unified presentation. Specifically, we
show that Boolean functions captured by uOBDDs (resp., nOBDDs) and d-SDNNF (resp.,
SDNNF) variants can be represented via a small cover (resp., disjoint cover) of
so-called \emph{rectangles}. We also show two Boolean functions (\emph{set
covering} and \emph{set intersection}) which are known not to have any such
covers. We then bootstrap the lower bounds on these two functions to a general
lower bound in Section~\ref{sec:structured}, by rephrasing pathwidth and
treewidth to new notions of \emph{pathsplitwidth} and \emph{treesplitwidth},
which intuitively measure the performance of a variable ordering or v-tree. We
then show that, for DNFs and CNFs with a high pathsplitwidth (resp.,
treesplitwidth), we can find the corresponding hard function ``within'' the CNF
or DNF, and establish hardness.

Our last lower bound result is shown in Section~\ref{sec:unstructured},
where we lift the assumption that the compilation targets are structured:

\begin{result}[Corollary~\ref{cor:unstructured}]
  \label{res:lower_fbdd}
  For any monotone CNF $\phi$ of constant arity and degree, the size of the
  smallest nFBDD computing $\phi$ is $2^{\Omega(\pw(\phi))}$, and the size of the
  smallest DNNF computing $\phi$ is $2^{\Omega(\tw(\phi))}$.
\end{result}

This result generalizes Result~\ref{res:lower_obdd} and~\ref{res:lower_dsdnnf}
by lifting the structuredness assumption, but they only apply to CNFs (and not to
DNFs).
The proof of these results reuses the notions of pathsplitwidth and
treesplitwidth, along with a more involved combinatorial argument on the size of
rectangle covers.

The current article extends the conference article~\cite{amarilli2018connecting}
in many ways:
\begin{itemize}
  \item We added Section~\ref{sec:kc} which gives a systematic presentation of
    knowledge compilation classes and reviews known results that relate them.
  \item In Section~\ref{sec:result}, the upper bound result on uOBDDs
    (Result~\ref{res:upper}) was added, the results were rephrased in terms of
    width, and the size of the circuit has been improved\footnote{This
    observation is due to Stefan Mengel and is adapted from the recent article
    \cite{capelli2019tractable}.} to be linear in the
    number of variables like in~\cite{bova2017circuit}.
  \item The presentation of the lower bounds in Sections~\ref{sec:set}
    and~\ref{sec:structured} was restructured to clarify the connection with
    communication complexity. The lower bounds on OBDDs
    (Result~\ref{res:lower_obdd}) was extended to uOBDDs and nOBDDs.
  \item The bounds on unstructured representations in
    Section~\ref{sec:unstructured} are new.
  \item We include full proofs for all results.
\end{itemize}

\section{Preliminaries}
\label{sec:preliminaries}
We give preliminaries on trees, hypergraphs, treewidth, and Boolean functions.

\myparagraph{Graphs, trees, and DAGs.} We use the standard notions of
directed and undirected graphs, of paths in a graph, and of cycles. All
graphs considered in the paper are finite.

A \emph{tree}~$T$ is an undirected graph that has no cycles and that is
connected (i.e., there exists exactly one path between any two different
nodes). Its \emph{size}~$|T|$ is its number of edges. A tree $T$ is
\emph{rooted} if it has a distinguished node $r$ called the \emph{root}
of $T$. Given two adjacent nodes $n_1, n_2$ of a rooted tree $T$ with
root $r$, if $n_1$ lies on the (unique) path from $r$ to $n_2$, we say
that $n_1$ is the \emph{parent} of $n_2$ and that $n_2$ is a \emph{child}
of $n_1$. A \emph{leaf} of $T$ is a node that has no child, and an
\emph{internal node} of $T$ is a node that is not a leaf. Given a set $U$
of nodes of~$T$, we denote the set of leaves of $U$ by $\LEAVES(U)$. A
node~$n'$ is a \emph{descendant} of a node $n$ in a rooted tree if $n
\neq n'$ and $n$ lies on the path from $n'$ to the root. For $n \in T$,
we denote by $T_n$ the subtree of~$T$ rooted at~$n$. A rooted tree is
\emph{binary} if all nodes have at most two children, and it is
\emph{full} if all internal nodes have exactly two children.
A rooted full binary tree is called \emph{right-linear} if the children of each internal node are ordered (we then talk of a \emph{left} or \emph{right child}), and if every right child is a leaf.

A \emph{directed acyclic graph} (or \emph{DAG})~$D$ is a directed graph
that has no cycles. A DAG~$D$ is \emph{rooted} if it has a distinguished
node~$r$ such that there is a path from~$r$ to every node in~$D$. A
\emph{leaf}
of~$D$ is a node that has no child.

\myparagraph{Hypergraphs, treewidth, pathwidth.}
A \emph{hypergraph} $H=(V,E)$ consists of a finite set of \emph{nodes}
(or \emph{vertices}) $V$ and of a set $E$ of \emph{hyperedges} (or simply
\emph{edges}) which are non-empty subsets of~$V$.
We always assume that hypergraphs have at least one edge.
For a node~$v$ of~$H$, we write $E(v)$
for the set of edges of~$H$ that contain $v$.
The \emph{arity} of~$H$, written
$\arity(H)$,
is the maximal size of an edge of~$H$.
The \emph{degree} of~$H$,
written $\degree(H)$,
is the maximal number of edges to which a
vertex belongs, i.e., $\max_{v\in V} \card{E(v)}$.

A \emph{tree decomposition} of a hypergraph $H = (V, E)$
is a rooted tree $T$, whose nodes~$b$ (called
\emph{bags}) are labeled by a subset $\dom(b)$ of~$V$,
and which satisfies:
\begin{enumerate}[(i)]
  \item for every
hyperedge $e \in E$, there is a bag $b \in T$ with $e \subseteq
\dom(b)$; 
\item for all $v \in V$, the set of bags $\{b \in T \mid v \in
  \dom(b)\}$ is a connected subtree of~$T$.
\end{enumerate}
For brevity, we often identify a bag $b$ with its domain $\dom(b)$.
The \emph{width} of~$T$ is $\max_{b\in T} \card{\dom(b)}-1$.
The \emph{treewidth} of~$H$, denoted $\tw(H)$, is the minimal width of a tree decomposition of~$H$.
Pathwidth (denoted $\pw(H)$) is defined similarly 
but with \emph{path decompositions}, tree decompositions where all nodes
have at most one child.

It is NP-hard to determine the treewidth of a hypergraph~$(V,E)$,
but 
we can compute a tree decomposition in linear time
when parametrizing by the treewidth. This can be done in time
$O(|V|\times 2^{(32+\epsilon)k^3})$ with the classical result
of~\cite{bodlaender1996linear}, or, using a recent algorithm by Bodlaender
et al.~\cite{bodlaender2016approximation}, in time $O(|V|\times2^{ck})$:
\begin{theorem}[\cite{bodlaender2016approximation}]
	\label{thm:bod-precise}
        There exists a constant $c$ such that, given a hypergraph
        $H=(V,E)$ and an integer $k \in \NN$, we can check in time
	$O(|V| \times 2^{ck})$ whether $\tw(H) \leq k$, and if yes
        output a tree decomposition of $H$ of width $\leq 5k+4$.
\end{theorem}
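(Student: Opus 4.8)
The plan is to reduce the hypergraph problem to its graph analogue and then invoke the graph algorithm of~\cite{bodlaender2016approximation}, whose core I sketch below. First I would pass from $H = (V,E)$ to its \emph{primal graph} $G = \mathrm{primal}(H)$, the undirected graph on vertex set~$V$ in which two distinct vertices are adjacent iff they co-occur in some hyperedge of~$H$. The key observation is that tree decompositions transfer \emph{exactly} between $H$ and $G$: each hyperedge $e \in E$ is a clique of~$G$, and by the standard fact that every clique lies inside some bag of any tree decomposition of a graph (a consequence of condition (ii) together with the Helly property of subtrees of a tree), any tree decomposition of~$G$ automatically satisfies condition (i) for~$H$; conversely a tree decomposition of~$H$ places every hyperedge, hence every co-occurring pair and thus every edge of~$G$, inside a bag, and condition (ii) is literally the same requirement. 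Therefore $\tw(H) = \tw(G)$ and a single tree with its bags witnesses both notions, so it suffices to find a tree decomposition of~$G$ of width $\leq 5k+4$ or to certify $\tw(G) > k$.

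Two preliminary tests keep everything within the stated budget $O(\card{V} \times 2^{ck})$, which mentions only $\card{V}$. While reading~$E$, if I meet a hyperedge of size $> k+1$ I stop and answer ``no'', since a clique on $m$ vertices has treewidth $m-1$, so such a hyperedge forces $\tw(H) \geq k+1$. Otherwise every hyperedge has size $\leq k+1$ and maps to a clique of~$G$; since distinct hyperedges are distinct subsets, and a graph of treewidth $\leq k$ contains only $O(2^{k}\card{V})$ cliques (each sits in one of the $O(\card{V})$ bags of a tree decomposition, which has $\leq 2^{k+1}$ subsets), an input presenting more than $c'\,2^{k}\card{V}$ hyperedges can be rejected immediately. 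Hence reading~$H$ and building~$G$ (which then has $O(k\card{V})$ edges, as a treewidth-$\leq k$ graph has at most $k\card{V}$ edges) both cost $O(\card{V} \times 2^{ck})$.

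It remains to run the graph algorithm on~$G$. Its skeleton is a recursive balanced-separator scheme: one maintains a vertex set $W$ of size $O(k)$ destined to become (part of) a bag, and repeatedly calls a subroutine that, in a graph of treewidth $\leq k$, finds a separator $S$ with $\card{S} \leq k+1$ such that every connected component of $G - S$ meets $W$ in at most $\card{W}/2$ vertices, and that otherwise certifies $\tw(G) > k$. One then forms the bag $W \cup S$, of size $O(k)$, and recurses on each component with an updated boundary; tuning the constants in this scheme yields the width guarantee $5k+4$. The separator subroutine itself runs in time $2^{O(k)}\card{V}$ using flow- and separator-based techniques parametrized by~$k$.

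The genuinely hard part --- the whole content of~\cite{bodlaender2016approximation} --- is making this recursion run in time \emph{linear}, not merely polynomial, in $\card{V}$ while keeping the single-exponential factor $2^{O(k)}$: a naive recursion on balanced separators already gives the width bound but spends $\Omega(\card{V}\log\card{V})$ or more total work. Achieving true linearity requires their careful amortized analysis and their fast approximate-separator machinery, which I would take as the black box supplied by that paper. Finally, by the transfer established in the first paragraph, the width-$(5k+4)$ tree decomposition returned for~$G$ is, with no modification, a tree decomposition of~$H$ of the same width, which completes the algorithm.
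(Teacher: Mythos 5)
Your proposal is correct and takes essentially the same route as the paper: the paper offers no proof of its own, stating the theorem as a direct appeal to the graph algorithm of Bodlaender et al.~\cite{bodlaender2016approximation}, with the extension to hypergraphs left implicit. The details you supply --- the primal-graph transfer via the clique-containment lemma (so $\tw(H)=\tw(G)$ and decompositions carry over verbatim), plus the early-rejection guards for hyperedges of size $>k+1$ or for more than $O(2^k\card{V})$ hyperedges so that the running time stays $O(\card{V}\times 2^{ck})$ --- are exactly the routine steps needed to justify the hypergraph statement, and they are sound.
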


For simplicity, we will often assume that a tree decomposition is $v$-\emph{friendly}, for a node $v \in V$, meaning that:
	\begin{enumerate}
	      \item 
                it is a full binary tree, i.e., each node has exactly zero or two children;
              \item
                for every internal bag $b$ with children $b_l,b_r$ we have $b \subseteq b_l \cup b_r$;
              \item
                for every leaf bag $b$ we have $|b| \leq 1$;
	      \item the root bag of $T$ only contains the node $v$.
        \end{enumerate}

Assuming a tree decomposition to be $v$-friendly for a fixed $v$ can be
done without loss of generality:        

\begin{lemma}
	\label{lem:friendly}
	Given a tree decomposition $T$ of a hypergraph $H$ of width $k$ and a node $v$ of $H$, we can
        compute in time $O(k\times |T|)$ a $v$-friendly tree decomposition $T'$ of $H$ of width
        $k$.
\end{lemma}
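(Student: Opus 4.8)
The plan is to transform $T$ through a short sequence of local operations, each of which preserves the width and the two defining properties of a tree decomposition (edge coverage and the connectedness of each set $\{b \mid x \in \dom(b)\}$ of bags), and each of which costs only $O(k)$ work per existing bag. The central gadget will be a \emph{broom}: given a set $S = \{x_1,\dots,x_p\} \subseteq V$, I build a right-leaning full binary tree whose root has domain $S$, whose leaves are the singletons $\{x_1\},\dots,\{x_p\}$, and where the $i$-th internal bag has domain $\{x_i,\dots,x_p\}$ with children $\{x_i\}$ and the $(i{+}1)$-th internal bag. Every internal bag of the broom is exactly the union of its two children, all its bags have size $\le |S| \le k+1$, and its total size is $O(k)$. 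This lets me realise any prescribed bag content $S$ as the root of a structure that already satisfies the friendliness conditions internally.

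First I would locate a bag $b^*$ with $v \in \dom(b^*)$ (which exists since $v \in V$) and re-root $T$ at $b^*$; since edge coverage and connectedness are properties of the underlying undirected tree and do not refer to the root, re-rooting yields a valid tree decomposition, and now the root contains $v$. Next I would binarise: each bag with more than two children is replaced by a comb of copies of itself (all with the same domain), which preserves width and adds only $O(|T|)$ bags in total, after which every bag has zero, one, or two children. Note that for a comb node one child is a full-domain copy, so its uncovered set is empty and the original uncovered content surfaces only at the bottom of the comb.

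Then I would repair the friendliness conditions bottom-up. For an internal bag $b$ with child set $\{c_i\}$, write $D_b = \dom(b) \setminus \bigcup_i \dom(c_i)$ for its uncovered vertices; by connectedness each $x \in D_b$ occurs in $b$ and possibly in ancestors of $b$, but in no descendant, so hanging a broom for $D_b$ below $b$ keeps the decomposition valid. Concretely: if $b$ has two children, I insert a fresh bag $b'$ with $\dom(b') = \dom(b) \setminus D_b$ carrying the two original children, and make the broom for $D_b$ the sibling of $b'$; then $b$ is covered by $\{b', \text{broom}\}$ and $b'$ by its children (its uncovered set being empty). If $b$ has a single child $c$, I add the broom for $D_b = \dom(b)\setminus\dom(c)$ (an empty leaf when $D_b = \emptyset$) as a second child, achieving fullness and covering at once. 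If $b$ is a leaf with $|\dom(b)| \ge 2$ I replace it by the broom for $\dom(b)$, and leaves with $|\dom(b)| \le 1$ are left untouched. Each step adds $O(k)$ bags of size $\le k+1$, so the width stays $k$ and the total cost is $O(k\times|T|)$.

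Finally, to enforce the last condition, I graft a new root $\rho$ with $\dom(\rho)=\{v\}$ whose children are a leaf $\{v\}$ and the current root $R$. Since the top bag after the previous steps still carries the content of $b^*$ (or is already $\{v\}$), we have $v \in \dom(R)$, so the subtree of bags containing $v$ stays connected and $\rho$ is covered. The main obstacle is precisely condition~(2): naively adding each bag's missing vertices to one of its children would inflate the width, so the crux is to realise the uncovered vertices instead in fresh bounded-size descendant bags, which is exactly what the broom together with the $b'$-split accomplishes. The remaining work, namely checking gadget by gadget that every modification preserves edge coverage and the connectedness of each $\{b\mid x\in\dom(b)\}$, is routine and I would carry it out case by case.
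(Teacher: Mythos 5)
Your proposal is correct and takes essentially the same route as the paper's proof: make a bag containing $v$ the root (the paper instead creates a fresh root $\{v\}$ at the start, where you graft it at the end), binarize by replacing high-degree bags with combs of copies, and then repair coverage and leaf size by introducing each bag's uncovered vertices one at a time through a chain whose steps each spawn a singleton leaf --- your ``broom'' is exactly the paper's chain gadget, merely hung as a sibling of a bag $b' = \dom(b) \setminus D_b$ rather than spliced in place of $b$. The only nitpick is your parenthetical claim that a bag containing $v$ exists ``since $v \in V$'': under the paper's definition of tree decomposition a vertex belonging to no hyperedge may appear in no bag at all, which is precisely why the paper's proof specifies that in this degenerate case the fresh root $\{v\}$ is attached to an arbitrary bag.
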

\begin{proof}
	We first create a bag $b_\root$ containing only the node $v$, and make this bag the root of $T$ by connecting it to a bag of $T$ that contains $v$ (if there is no such bag then we connect $b_\root$ to 
	an arbitrary bag of $T$).
  Then, we make the tree decomposition binary (but not necessarily full) by
  replacing each bag $b$ with children $b_1, \ldots, b_n$ with $n > 2$ by a
  chain of bags with the same label as~$b$ to which we attach the children
  $b_1, \ldots, b_n$. This process is in time $O(|T|)$ and does not change the
  width.

  We then ensure the second and third conditions, by applying a transformation
  to leaf bags and to internal bags. We first modify every leaf bag $b$ containing
  more than one vertex by a chain of at most $k$ internal bags with leaves where
  the vertices are added one after the other. Then, we modify every
  internal bag $b$ that contains elements $v_1, \ldots, v_n$ not present in the
  union $D$ of its children: we replace $b$ by a chain of at most $k$ internal bags
  $b_1', \ldots, b_n'$ containing respectively $b, b \setminus \{v_n\}, b
  \setminus \{v_n, v_{n-1}\}, \ldots, D$, each bag having a child introducing
  the corresponding gate $v_i$. This is in time $O(k \times |T|)$, and again
  it does not change the width; further, the result of the process is a tree decomposition that satisfies the
  second, third and fourth conditions and is still a binary tree.

  The only missing part is to ensure that the tree decomposition is full, which
  we can simply do in linear time by adding bags with an empty label as a
  second children for internal nodes that have only one child. This is obviously
  in linear time, does not change the width, and does not affect the other
  conditions, concluding the proof.
\end{proof}

\myparagraph{Boolean functions.}
A (Boolean) \emph{valuation} of a set $V$ is a function $\nu: V \to \{0,
1\}$, which can also be seen as the set of elements of $V$ mapped to~$1$.
A \emph{Boolean function}~$\phi$ on variables~$V$ is a mapping
$\phi:2^V\to\{0,1\}$ that associates
to each valuation $\nu$ of~$V$ a Boolean value $\phi(\nu)$ in $\{0, 1\}$ called the
\emph{evaluation} of~$\phi$ according to~$\nu$.
We write $\#\phi$ the number of satisfying valuations of $\phi$.
Given two Boolean functions $\phi_1,\phi_2$, we write $\phi_1 \Rightarrow \phi_2$ when every satisfying valuation of $\phi$ also satisfies $\phi_2$.
We write $\bot$ the Boolean function that maps every valuation to~$0$.

Let $X,Y$ be two disjoint sets, $\nu$ a valuation on $X$ and $\nu'$ a valuation on $Y$. We denote by $\nu \cup \nu'$ the valuation on $X \cup Y$ such that $(\nu \cup \nu')(a)$ is $\nu(a)$ if $a \in X$
and is $\nu'(a)$ if $a\in Y$.
Let $\phi$ be a Boolean function on $V$, and $\nu$ be a valuation on a set $X \subseteq V$.
We denote by $\nu(\phi)$ the Boolean function on variables $V \setminus X$ such that, for any valuation $\nu'$ of $V \setminus X$, $\nu(\phi)(\nu') = \phi(\nu \cup \nu')$.
When $\nu$ is a Boolean valuation on $V$ and $X \subseteq V$, we denote by $\nu|_X$ the Boolean valuation on $X$ defined by $\nu|_X(x) = \nu(x)$ for all $x$ in $X$.

Two simple formalisms for representing Boolean functions are 
\emph{Boolean circuits} and formulas in \emph{conjunctive normal form}
or \emph{disjunctive normal form}. We will discuss more elaborate
formalisms, namely binary decision diagrams and decomposable normal
negation forms, in Section~\ref{sec:kc}.

\myparagraph{Boolean circuits.}
A \emph{(Boolean) circuit} $C = (G, W, g_\out, \mu)$ is a DAG
$(G, W)$ whose vertices~$G$ are called \emph{gates}, whose edges $W$ are called
\emph{wires},
where $g_\out \in G$
is the \emph{output gate},
and where each gate
$g \in G$ has a \emph{type} $\mu(g)$ among $\var$ (a \emph{variable
gate}), $\NOT$,
$\lor$, $\land$.
The \emph{inputs}
of a gate $g \in G$ is the set $W(g)$ of gates $g' \in G$
such that $(g', g) \in W$; the \emph{fan-in} of~$g$ is its number of inputs.
We require $\NOT$-gates to have fan-in~1 and
$\var$-gates to have fan-in~0.
The \emph{treewidth} of~$C$ is that of the hypergraph~$(G,
W')$, where $W'$ is $\{\{g,g'\} \mid (g,g') \in W\}$.
Its \emph{size} $|C|$ is the number of wires.
The set $C_\var$ of \emph{variable gates} of~$C$ are those of type~$\var$.
Given a valuation $\nu$ of~$C_\var$, we extend it to an
\emph{evaluation} of~$C$ by mapping each variable $g \in C_\var$
to~$\nu(g)$, and evaluating the other gates according to their type.
We recall the convention that $\land$-gates (resp., $\lor$-gates) with no input evaluate to $1$ (resp., $0$).
The Boolean function on~$C_\var$ \emph{captured} (or \emph{computed}, or \emph{represented}) by the circuit
is the one that maps~$\nu$ to the evaluation of~$g_\out$ under~$\nu$.
Two circuits are \emph{equivalent} if they capture the same function. 

\myparagraph{DNFs and CNFs.}
We also study other representations of Boolean functions, namely,
Boolean formulas in \emph{conjunctive normal form} (\emph{CNFs}) and in
\emph{disjunctive normal form} (\emph{DNFs}). 
A CNF (resp., DNF) $\phi$ on a set of variables $V$ is a conjunction (resp.,
disjunction) of
\emph{clauses},
each of which
is a disjunction (resp., conjunction)  of \emph{literals} on~$V$, i.e.,
variables of~$V$ (a \emph{positive} literal) or their negation
(a \emph{negative} literal).

A \emph{monotone CNF} (resp., monotone DNF) is one where all literals are
positive, in which case
we often identify a clause to the set of variables
that it contains.
We always assume that monotone CNFs and monotone DNFs are
\emph{minimized}, i.e., no clause 
is a subset
of another.
This ensures that every monotone Boolean function has a unique
representation as a monotone CNF (the disjunction of its prime
implicants), and likewise for monotone DNF.
In addition, when we consider a valuation $\nu$ of a subset $X$ of the
variables of a monotone CNF/DNF $\phi$, we always take the Boolean
function $\nu(\phi)$ to be the equivalent minimized monotone CNF/DNF.

We assume that monotone CNFs and DNFs always contain at least
one non-empty clause (in particular, they cannot represent constant functions).
Monotone CNFs and DNFs~$\phi$ are isomorphic to hypergraphs:
the vertices are the variables of~$\phi$, and the hyperedges are the clauses
of~$\phi$.
We often identify~$\phi$ with its hypergraph.
In particular, the \emph{pathwidth} $pw(\phi)$
and \emph{treewidth} $\tw(\phi)$ of~$\phi$, and its \emph{arity} and \emph{degree},
are defined as that of its hypergraph.

Observe that there is a connection between the treewidth and pathwidth of a
monotone CNF or DNF with the treewidth and pathwidth of the natural Boolean circuit computing
it. Namely:

\begin{observation}
  For any monotone CNF or DNF formula $\phi$, there is a circuit~$C_\phi$
  capturing~$\phi$ such that $\tw(C_\phi) \leq \tw(\phi)+2$
  and $\pw(C_\phi) \leq \pw(\phi)+2$.
\end{observation}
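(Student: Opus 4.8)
The plan is to build $C_\phi$ as the natural circuit for a monotone CNF (the DNF case being dual), and then to massage a tree decomposition of the hypergraph of $\phi$ into one of the circuit's hypergraph. Concretely, for a monotone CNF $\phi = \bigwedge_c \bigvee_{x \in c} x$, I take $C_\phi$ to have one variable gate $g_x$ for each variable $x$, one $\lor$-gate $g_c$ for each clause $c$ whose inputs are the gates $g_x$ for $x \in c$, and one output $\land$-gate $g_\out$ whose inputs are all the $g_c$. It is immediate that $C_\phi$ captures $\phi$. The hypergraph $(G, W')$ of $C_\phi$ then has exactly the edges $\{g_x, g_c\}$ for $x \in c$ together with the edges $\{g_c, g_\out\}$, and I must exhibit a tree decomposition of it of width at most $\tw(\phi)+2$ (and, starting from a path decomposition, of pathwidth at most $\pw(\phi)+2$).

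First I would start from an optimal tree decomposition $T$ of the hypergraph of $\phi$, of width $k = \tw(\phi)$, identifying each variable with its gate $g_x$. By condition~(i) of tree decompositions, each clause $c$ has a bag $b_c$ with $c \subseteq \dom(b_c)$. I then transform $T$ into a decomposition $T'$ of $(G, W')$ in two steps: (a)~add the output gate $g_\out$ to every bag; and (b)~for each clause $c$, create a fresh bag $b'_c \defeq \dom(b_c) \cup \{g_c, g_\out\}$ and attach it as a new child of $b_c$.

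I would then check the three requirements. For edge coverage, the edge $\{g_c, g_\out\}$ and every edge $\{g_x, g_c\}$ with $x \in c$ are covered by $b'_c$, since $b'_c$ contains $g_c$, $g_\out$, and all of $c \subseteq \dom(b_c)$, and every circuit vertex lies in some bag. For connectivity: each $g_c$ lives only in the single bag $b'_c$; the bags containing a variable gate $g_x$ are the connected subtree $S_x$ of $T$ where $x$ occurred, plus the new leaves $b'_c$ for the clauses $c \ni x$, each attached to $b_c \in S_x$, so their union stays connected; and $g_\out$ occurs in every bag, which is why step~(a) is needed. For the width, the enlarged original bags have size at most $k+2$ and each new bag $b'_c$ has size $|\dom(b_c)| + 2 \le k+3$, so $T'$ has width at most $k+2 = \tw(\phi)+2$.

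Finally, for pathwidth I would run the same construction on an optimal path decomposition, but instead of branching in step~(b) I insert each $b'_c$ into the sequence immediately after $b_c$ (inserting the bags of several clauses sharing the same $b_c$ consecutively), which keeps the decomposition a path. The only point needing care — and the main subtlety of the whole argument — is the interval condition, which can be broken by inserting, in the middle of a variable's interval, a bag that omits that variable; this is precisely why I take $b'_c = \dom(b_c) \cup \{g_c, g_\out\}$ rather than the smaller $c \cup \{g_c, g_\out\}$, so that $b'_c$ and $b_c$ carry the same variable gates and the occurrences of each $g_x$ remain contiguous. With this choice the interval condition holds for the variable gates, for each $g_c$ (a single bag), and for $g_\out$ (every bag), and the width bound is again $k+2 = \pw(\phi)+2$. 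The DNF case is identical after swapping the roles of $\land$ and $\lor$, since it yields the same underlying hypergraph and wiring.
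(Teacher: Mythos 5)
Your proof is correct and follows essentially the same route as the paper's: the same three-layer circuit, the same trick of adding the output gate to every bag, and a dedicated bag per clause containing that clause's gate (the paper phrases this as duplicating the bag $\beta(K)$ and adding $v_K$ to the copy, which is exactly your child bag $b'_c = \dom(b_c) \cup \{g_c, g_\out\}$). Your explicit handling of the pathwidth case --- inserting $b'_c$ immediately after $b_c$ and keeping all of $\dom(b_c)$ in it so the interval condition survives --- is just a more detailed account of what the paper dismisses with ``the proof for pathwidth is the same.''
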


\begin{proof}
  Fix $\phi$ and define $C_\phi$ as follows if $\phi$ is a CNF: $C_\phi$ has one input gate $i_x$
  for each variable $x$ of $\phi$, one $\lor$-gate $v_K$ for each clause $K$ of
  $\phi$ whose inputs are the $i_x$ for each variable $x$ of~$K$, and one $\land$-gate $o$ 
  which is the output gate of $C_\phi$ and whose inputs are all the $v_K$. If
  $\phi$ is a DNF we replace $\lor$-gates by~$\land$-gates and vice-versa.

We claim that $\tw(C_\phi) \leq \tw(\phi)+2$. Indeed, given a tree
decomposition $T$ of the hypergraph of $\phi$ of width $\tw(\phi)$, we can
construct a tree decomposition $T'$ of $C_\phi$ as follows. First, we replace each bag
$b$ of $T$ in $T'$ by $b' = \{i_x \mid x \in b\} \cup \{o\}$. Second, for every
  clause $K$, we consider a bag $\beta(K)$ of~$T$ containing all variables of~$K$ (which must
  exist because $T$ is a tree decomposition of~$\phi$): we modify~$T'$ to
  make~$\beta$ injective by
  creating sufficiently many copies of each bag in the image of~$\beta$ and
  connecting them to the original bag. Third, we add $v_K$ to the domain of~$\beta(K)$ for each clause~$K$.

The connectedness of $T'$ follows by the connectedness of $T$ and by the fact
that for each clause $K$, $v_K$ appears only once in $T'$ and $o$ appears in
every bag of $T'$. Moreover, every edge of $C_\phi$ is covered by some bag of
$T'$: indeed, both the edges of the form $(o,v_K)$ and the edges of the form
$(v_K, i_x)$ are covered in the (only) bag of $T'$ containing $v_K$. It is then
immediate that~$T'$ has the prescribed width.

  The proof for pathwidth is the same, because if $T$ is a path then $T'$
  can also be constructed to be a path.
\end{proof}

Note that there is no obvious converse to this result, e.g., the family of
single-clause CNFs $x_1 \lor \cdots \lor x_n$ has unbounded treewidth but can be
captured by a circuit of treewidth~$1$. A finer connection can be made by
considering the \emph{incidence treewidth}~\cite{Szeider04b} of CNFs and DNFs,
but we do not investigate this in this work.

\section{Knowledge Compilation Classes: BDDs and DNNFs}
\label{sec:kc}
\begin{figure}[t]
  \centering
\footnotesize
\hspace*{-1cm}	\begin{tikzpicture}

	\coordinate (origin) at (0,0) {};

        \node[text width=8em,text centered] (and) at (0,3) {\textbf{decomposable\\$\pmb{\land}$-gates}};

        \coordinate (or) at (4.5,0) {};
        \path (or.north) ++(0,.02) node[anchor=south east] {\textbf{$\pmb{\lor}$-gates}};

	\node (structure) at (1.725,1.5) {\textbf{structured}};

	\draw[->] (origin) -- (and)
          node[pos=.85,auto] {arbitrary} node[pos=.15,auto] {decision};
	\draw[->] (origin) -- (or) 
          node[pos=.9,below] {arbitrary} 
          node[pos=.5,below] {deterministic}
          node[pos=.1,below] {decision};
        \draw[->] (origin) -- (structure)
          node[pos=.85,left] {no} node[pos=.4,left] {yes};

  \begin{scope}[xshift=-2.5cm,xscale=1.15]
	\node (000) at (7,0) {OBDD};
	\node (001) at (8,1) {FBDD};
	\node (010) at (7,2) {dec-SDNNF};
	\node (011) at (8,3) {dec-DNNF};
	\node (100) at (9,0) {uOBDD};
	\node (101) at (10,1) {uFBDD};
	\node (110) at (9,2) {d-SDNNF};
	\node (111) at (10,3) {d-DNNF};
	\node (200) at (11,0) {nOBDD};
	\node (201) at (12,1) {nFBDD};
	\node (210) at (11,2) {SDNNF};
	\node (211) at (12,3) {DNNF};

    \tikzset{every path/.style={->}}

\newcommand{\drawwithbg}[3][]{
    \draw[white,line width=3pt,opacity=.8] (#2) -- (#3);
    \draw[black,#1] (#2) -- (#3);
}

	\drawwithbg[->>]{000}{001}
	\drawwithbg{000}{010}
        \drawwithbg[->>]{010}{011}
	\drawwithbg{001}{011}
        \drawwithbg[->>]{000}{100}
	\drawwithbg[->>]{001}{101}
	\drawwithbg[->>]{010}{110}
	\drawwithbg[->>]{011}{111}
	\drawwithbg{100}{110}
        \drawwithbg[->>]{100}{101}
	\drawwithbg{101}{111}
        \drawwithbg[->>]{110}{111}
	\drawwithbg[->>]{100}{200}
	\drawwithbg[->>]{101}{201}
	\drawwithbg[->>]{110}{210}
	\drawwithbg[->>]{111}{211}
	\drawwithbg{200}{210}
        \drawwithbg[->>]{200}{201}
	\drawwithbg{201}{211}
        \drawwithbg[->>]{210}{211}
  \end{scope}
	\end{tikzpicture}
\caption{Dimensions of the knowledge compilation classes consider in this paper
(left), and diagram of the classes (right). Arrows indicate
polynomial-time compilation; all classes are separated and no arrows 
are missing (except those implied by transitivity). Double arrows
indicate that the classes are exponentially separated; when an arrow is
not double, quasi-polynomial compilation in the reverse direction exists.}
\label{fig:kc-classes}
\end{figure}

We now review some representation formalisms for Boolean functions that
are used in knowledge compilation, based either on \emph{binary decision
diagrams} (also known as \emph{branching programs}
\cite{wegener2000branching}) or on Boolean circuits in \emph{decomposable
negation normal form} \cite{DBLP:journals/jacm/Darwiche01}; in the rest of the
paper
we will study
translations between
bounded-width Boolean circuits and these classes. The classes that we consider have all
been introduced in the literature (see, in particular,
\cite{DBLP:journals/jair/DarwicheM02} for the main ones) but we sometimes
give slightly different (but equivalent) definitions in order to see them
in a common framework. An element $C$ of a knowledge compilation class
$\mathcal{C}$ is associated with its \emph{size} $|C|$ (describing how compact it
is) and with the Boolean function $\phi$ that it \emph{captures}.

A summary of the classes considered is provided in
Fig.~\ref{fig:kc-classes}.
This figure also shows (with arrows) when a class can be compiled into another in
polynomial-time (i.e., when one can transform an element $C$ of class
$\mathcal{C}$ capturing a Boolean function~$\phi$ into $C'$ of class
$\mathcal{C}'$ capturing~$\phi$, in time polynomial in
$|C|$). All classes shown in Fig.~\ref{fig:kc-classes} are
unconditionally \emph{separated} and some (cf.\ double arrows) are \emph{exponentially separated}. 
Specifically, we say that a class $\mathcal{C}$ is
separated (resp., exponentially separated) from a class $\mathcal{C}'$ if
there exists a family of Boolean functions $(\phi_k)$ captured by
elements $(C_k)$ of class
$\mathcal{C}$ such that all families of class~$\mathcal{C}'$
capturing~$(\phi_k)$ have size $\Omega(|C_k|^\alpha)$ for all
$\alpha\geq 0$ (resp., of size $\Omega(2^{|C_k|^\alpha})$ for some
$\alpha>0$).

We first consider general classes, then \emph{structured} variants of
these classes, and further introduce the notion of \emph{width} of these
structured classes. When introducing classes of interest, we recall or
prove non-trivial polynomial-time compilation and separation results related to
that class.

\subsection{Unstructured Classes}

We start by defining general, unstructured classes, i.e., those in the
background of Fig.~\ref{fig:kc-classes}, namely (non-deterministic) 
free binary decision diagrams
and circuits in decomposable negation normal form.

\subsubsection{Free Binary Decision Diagrams}

A \emph{non-deterministic binary decision diagram} (or \emph{nBDD}) on a
set of variables $V = \{v_1, \ldots, v_n\}$ is a rooted DAG~$D$
with labels on edges and nodes, verifying:

\begin{enumerate}[(i)]
\item there are exactly two leaves (also called \emph{sinks}), one being
  labeled by~$0$ (the \emph{$0$-sink}), the other one by~$1$ (the
    \emph{$1$-sink});

\item internal nodes are labeled either by $\lor$ or by a variable
  of~$V$;

\item each internal node that is labeled by a variable has two outgoing edges, labeled $0$ and $1$.
\end{enumerate}

The \emph{size} $\card{D}$ of~$D$ is its number of edges.
Let $\nu$ be a valuation of $V$, and let~$\pi$ be a path in $D$ from the root to a sink of~$D$. 
	We say that $\pi$ is \emph{compatible} with~$\nu$ if for every node~$n$
        of $\pi$ that is labeled by a variable $x$ of $V$, the path $\pi$ goes through the outgoing edge of $n$ labeled by $\nu(x)$.
	Observe that multiple paths might be compatible with~$\nu$, because no condition is imposed on nodes of the path that are labeled by $\lor$; in other words,
	the behaviour at $\lor$ nodes is non-deterministic.
	An nBDD $D$ \emph{captures} a Boolean function $\phi$ on~$V$ defined as follows: 
        for every valuation~$\nu$ of $V$, if there exists a path $\pi$ from the root to the $1$-sink that is compatible with $\nu$, then $\phi(\nu)=1$, else $\phi(\nu)=0$.
        An nBDD is \emph{unambiguous} when, for every valuation~$\nu$, there exists at most one path $\phi$ from the root to the $1$-sink that is compatible with $\nu$.
        A~\emph{BDD} is an nBDD that has no $\lor$-nodes.

	The most general form of nBDDs that we will consider in this
        paper are \emph{non-deterministic free binary decision
        diagrams} (\emph{nFBDDs}): they are nBDDs such that for every path from the root to a leaf, no two nodes of that path are labeled by the same variable.
        In addition to the nFBDD class, we will also study the class
        \emph{uFBDD} of unambiguous nFBDDs, and the class \emph{FBDD} of nFBDDs
        having no $\lor$-nodes.
\begin{proposition}\label{prp:separation-fbdd}
  nFBDDs are exponentially separated from uFBDDs, and uFBDDs are
  exponentially separated from FBDDs.
\end{proposition}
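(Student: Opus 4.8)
The plan is to establish each exponential separation by exhibiting an explicit family of Boolean functions $(\phi_k)$ together with polynomial-size representations $(C_k)$ in the stronger class, such that every representation of $\phi_k$ in the weaker class has size $2^{\Omega(|C_k|^\alpha)}$ for some $\alpha > 0$; by the definition of exponential separation recalled above (the double arrows in Figure~\ref{fig:kc-classes}), this is exactly what is required. There are thus two tasks: first, a family that is easy for nFBDDs but hard for uFBDDs; second, a family that is easy for uFBDDs but hard for FBDDs. Each task splits into an \emph{upper bound} (a small construction in the stronger class) and a \emph{lower bound} (an exponential bound in the weaker class).

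For the upper bounds I would exploit the fact that the extra power of the stronger class comes precisely from nondeterminism at the $\lor$-nodes, which implements a guess-and-verify strategy along a read-once path. Concretely, a Boolean function given as a polynomial-size DNF yields a polynomial-size nFBDD: at the root one branches through $\lor$-nodes over the choice of a term, and each branch reads the variables of that term once along a path to the $1$-sink, rejecting to the $0$-sink on the first unsatisfied literal; this is read-once since each path reads only the variables of its chosen term. For the family used in the second separation I would choose $\phi_k$ so that every satisfying valuation has a \emph{unique} such certificate, which makes the analogous construction \emph{unambiguous} and hence a uFBDD. These upper bounds are routine once the families are fixed.

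The lower bounds are where the work lies. For the FBDD (deterministic read-once) lower bound in the second separation, I would use the classical counting technique for read-once branching programs \cite{wegener2000branching}: restrict to a well-chosen set of partial valuations and argue that along any variable ordering a deterministic read-once diagram must reach exponentially many pairwise distinct subfunctions, forcing exponentially many nodes. For the uFBDD lower bound in the first separation, determinism-based counting does not apply, since unambiguous nondeterminism is strictly more powerful than determinism; instead I would use unambiguity directly: in a uFBDD each satisfying valuation has exactly one accepting path, so the accepting paths induce a partition of the satisfying set into read-once ``rectangles,'' and I would lower-bound the number of rectangles needed to cover the satisfying set \emph{disjointly} by a fooling-set-style argument tailored to $\phi_k$.

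I expect the uFBDD lower bound to be the main obstacle. Ruling out all \emph{unambiguous} read-once diagrams is considerably more delicate than ruling out deterministic ones, because one may only exploit the uniqueness of the \emph{accepting} path rather than of the path itself; making the disjoint-cover argument go through for a concrete family is exactly the nontrivial (though by now standard) ingredient, for which I would rely on the explicit separating families and their cover lower bounds established in the branching-program literature.
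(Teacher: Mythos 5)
Your overall architecture — explicit families, polynomial constructions in the stronger class, exponential lower bounds from the branching-program literature in the weaker class — is exactly the paper's, which proves both separations by citation: the Sauerhoff function~\cite{Sauerhoff03} via \cite[Proposition~7 and Theorem~9]{bova2016knowledge} for nFBDDs vs.\ uFBDDs, and Bollig's row/column function \cite{bollig1997complexity,wegener2000branching} for uFBDDs vs.\ FBDDs. However, there is a genuine technical flaw in your plan for the first separation: a fooling-set-style argument cannot, even in principle, establish the uFBDD lower bound. A fooling set bounds from below the size of \emph{arbitrary} rectangle covers, disjoint or not, since every rectangle of any cover contains at most one fooling point. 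So any function admitting such a bound (for the relevant partitions) also has no small nFBDD, contradicting the upper bound you need on the very same family. Separating nondeterminism from unambiguity requires a technique that genuinely exploits disjointness; the result the paper relies on is Sauerhoff's approximation/discrepancy bound (\cite[Theorem~4.10]{Sauerhoff03}, used in \cite[Theorem~9]{bova2016knowledge}), and correspondingly, in the structured setting of this paper, the disjoint-cover bound for $\SINT_n$ in Theorem~\ref{thm:sint_rect_lowerbound} is not a fooling-set argument either — only the $\SCOV_n$ (non-disjoint) half is.

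A second problem is your commitment to DNF-shaped families. For the first separation you would need a polynomial-size DNF requiring exponential-size uFBDDs; no such family is known, this is adjacent to what the paper calls a long-standing open problem (d-DNNF lower bounds for DNFs), and the paper's own disjointness-exploiting counting technique (Lemma~\ref{lem:small_fraction}) is explicitly noted there to fail for DNFs. The known separating families are not DNFs: Sauerhoff's function has a small nFBDD as a disjunction of polynomially many small OBDDs (not of terms), and the paper's uFBDD for Bollig's function is the disjunction of two FBDDs reading the matrix in row order and in column order respectively, unambiguous because the disjuncts (``odd weight and some full row'' vs.\ ``even weight and some full column'') are mutually exclusive. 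Your ``unique certificate'' intuition is the right one for this construction, but the certificates are whole sub-FBDDs, not DNF terms. The plan becomes the paper's proof once you relax ``DNF term'' to ``disjunct computed by a small OBDD/FBDD'' and replace fooling sets by disjointness-sensitive bounds; as written, it could not be completed.
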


\begin{proof}
  The exponential separation between nFBDDs and uFBDDs is shown
  in~\cite{bova2016knowledge}: Proposition~7 of~\cite{bova2016knowledge} shows that there
  exists an nFBDD of size $O(n^2)$ for the Sauerhoff function~\cite{Sauerhoff03}
  over $n^2$ variables, while Theorem~9
  of~\cite{bova2016knowledge}, relying on
  \cite[Theorem 4.10]{Sauerhoff03}, shows that any representation of this
  function as a d-DNNF (a formalism
  that generalizes uFBDD, see our Proposition~\ref{prp:fbdd-to-dnnf-fw})
  necessarily has size $2^{\Omega(n)}$.

  To separate uFBDDs from FBDDs, we rely on the proof
  of exponential separation of PBDDs and
  FBDDs in~\cite[Theorem 11]{bollig1997complexity} (see also \cite[Theorem
    10.4.7]{wegener2000branching}). Consider the Boolean function
  $\phi_n$ on $n^2$ variables that tests whether,
  in an~$n\times n$ Boolean matrix, either the number of $1$'s is odd and
  there is a row full of $1$'s, or the number of $1$'s is even and there
  is a column full of $1$'s. As shown in
  \cite{bollig1997complexity,wegener2000branching}, 
  an FBDD for $\phi_n$ has necessarily
  size $2^{\Omega(n^{1/2})}$. On the other hand, it is easy to construct an FBDD of
  size $O(n^2)$ to test if the number of $1$'s is odd and
    there is a row full of $1$'s (enumerating variables in row order),
    and to construct an FBDD of size $O(n^2)$ to test if the number of
    $1$'s is even and there is a column full of $1$'s (enumerating
    variables in column order). An uFBDD for~$\phi_n$ is then obtained by
    simply adding an $\lor$-gate joining these two FBDDs, since only one of these two functions can evaluate to~$1$ under a given
    valuation.
\end{proof}

\subsubsection{Decomposable Negation Normal Forms}
We say that a circuit $C$ is in \emph{negation normal form} (NNF)
if the inputs of $\NOT$-gates are always variable gates.
For a gate~$g$ in a Boolean circuit~$C$,
we write $\VARS(g)$ for the set of variable gates
that have a directed path to~$g$ in~$C$.
An $\land$-gate $g$ of $C$ is \emph{decomposable} if it has at most two inputs and if, in case it has two input gates
$g_1\neq g_2$, then we have $\VARS(g_1) \cap \VARS(g_2) = \emptyset$.
We call $C$ \emph{decomposable} if each $\land$-gate is.
We write DNNF for an NNF that is decomposable.
Some of our proofs will use the standard notion of a \emph{trace} in an NNF:

  \begin{definition}
    \label{def:trace}
	  Let $C$ be an DNNF and $g$ be a gate of $C$. A \emph{trace of $C$
          starting at $g$} is a set $\Xi$ of gates of~$C$ that is minimal by
          inclusion and where:
	\begin{itemize}
		\item We have $g \in \Xi$;
		\item If $g' \in \Xi$ and $g'$ is an $\land$-gate, then all
                  inputs of~$g'$ are in~$\Xi$, i.e., $W(g') \subseteq \Xi$;
		\item If $g' \in \Xi$ and $g'$ is an $\lor$-gate, then exactly one input of $g'$
                  is in $\Xi$;
		\item If $g' \in \Xi$ and $g'$ if a $\lnot$-gate with input
                  variable gate $g''$, then $g''$ is in $\Xi$.
	\end{itemize}
\end{definition}

Observe that a gate $g \in C$ is \emph{satisfiable} (i.e., there exists a valuation $\nu$ such that $g$ evaluates to $1$ under $\nu$) if and only if there exists a trace of $C$ starting at $g$.
Indeed, given such a trace, define the valuation $\nu$ that maps to $0$ all the
variables $x$ such that a $\lnot$-gate with input~$x$ is in $\Xi$, and to $1$ all the other
variables: this valuation clearly satisfies $g$, noting in particular that each
variable occurs at most once in~$\Xi$ thanks to decomposability.
Conversely, when $g$ is satisfiable, it is clear that one can obtain a trace
starting at $g$ whose literals (variable gates, and negations of the variables that are an input to a
$\lnot$-gate) evaluate to~$1$ under the witnessing valuation~$\nu$.
This means that we can check in linear time whether a DNNF is
\emph{satisfiable}, i.e., if it has an accepting valuation, by computing
bottom-up the set of gates at which a trace starts.

As we will later see, the tractability of satisfiability of DNNFs does not extend to some other
tasks (e.g., model counting or probability computation). For these tasks, a useful additional requirement on circuits is \emph{determinism}. An $\lor$-gate $g$ of $C$ is \emph{deterministic} if there
is no pair $g_1\neq g_2$ of input gates of~$g$ and valuation $\nu$ of $C_\var$ such
that 
$g_1$ and~$g_2$ both evaluate to~$1$ under~$\nu$.
A Boolean circuit is \emph{deterministic} if each $\lor$-gate is.
We write d-DNNF for an NNF that is both decomposable and deterministic.
Model counting and probability computation can be done in linear time for
d-DNNFs thanks to decomposability and determinism (in fact this does not even
use the restriction of being an NNF).

Observe that, while decomposability is a syntactical restriction that can be
checked in linear time, the determinism property is semantic, and it is
co-NP-complete to check if a given $\lor$-gate of a circuit is deterministic:
hardness comes from the fact that an arbitrary Boolean circuit $C$ is unsatisfiable iff $\lor(C,1)$ is deterministic.
This motivates the notion of \emph{decision} gates, which gives us a syntactic way
to impose determinism.
Formally, an $\lor$-gate is a \emph{decision} gate if it is of the form $(x \land
C_1) \lor (\lnot x \land C_2)$, for some variable $x$ and (generally non-disjoint) subcircuits $C_1,C_2$.
A \emph{dec-DNNF} is a DNNF where all $\lor$-gates are decision gates: it is in
particular a d-DNNF.

\begin{proposition}
  DNNFs are exponentially separated from d-DNNFs, and d-DNNFs are
  exponentially separated from dec-DNNFs.
\end{proposition}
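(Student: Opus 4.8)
The plan is to exhibit, for each of the two separations, a family of Boolean functions $(\phi_k)$ that admits representations of size polynomial in~$k$ in the more general class, together with an exponential lower bound $2^{\Omega(k)}$ on the size of any representation in the more restricted class; feeding these into the definition of exponential separation (choosing the exponent $\alpha$ small enough that $2^{\Omega(k)}$ dominates $\Omega(2^{|C_k|^\alpha})$) then yields the claim. Since dec-DNNF $\subseteq$ d-DNNF $\subseteq$ DNNF holds by definition, the only content lies in producing the hard functions and their lower bounds. Throughout I would adopt the rectangle-cover viewpoint on decomposable circuits coming from the connection with communication complexity in~\cite{bova2016knowledge}: a DNNF corresponds to a \emph{cover} of the accepting valuations by combinatorial rectangles, while the determinism of a d-DNNF forces the corresponding cover to be \emph{disjoint}, so that the separations reduce to gaps between cover size and disjoint-cover size.

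For DNNF versus d-DNNF I would reuse the Sauerhoff function~\cite{Sauerhoff03} already invoked in Proposition~\ref{prp:separation-fbdd}. That proof provides an nFBDD of size $O(n^2)$ for it over $n^2$ variables; since an nFBDD tests each variable at most once along every root-to-sink path, the two branches of each decision node never share a variable, so reading every decision node as $(x \land C_1) \lor (\lnot x \land C_2)$ produces a genuinely decomposable circuit, i.e.\ a DNNF of polynomial size. On the other hand Theorem~9 of~\cite{bova2016knowledge} gives a $2^{\Omega(n)}$ lower bound on any d-DNNF for this function. Together these separate DNNF from d-DNNF exponentially, and in the rectangle language they witness a function whose accepting valuations admit a small cover but only an exponential disjoint cover.

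For d-DNNF versus dec-DNNF the situation is more delicate, since both classes are deterministic and support polynomial-time model counting, so no counting argument can distinguish them. Here I would look for a function whose determinism is \emph{semantic} — a disjunction $g_1 \lor g_2$ of mutually exclusive subfunctions whose exclusivity does not come from testing a single variable — so that it admits a compact d-DNNF, while the syntactic restriction of dec-DNNF, namely that every disjunction is a decision $(x \land C_1) \lor (\lnot x \land C_2)$ on some variable, forces an essentially branching-program-like exploration of the variables and hence exponential size. The upper bound must crucially exploit an $\lor$-gate that is deterministic for a reason other than a variable test, while the lower bound would be obtained by refining the rectangle-cover technique of~\cite{bova2016knowledge} so as to see the decision structure.

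The main obstacle is precisely this second separation. For DNNF versus d-DNNF the ingredients are already at hand (a polynomial nFBDD plus the known d-DNNF lower bound), but separating d-DNNF from dec-DNNF demands a lower-bound method fine enough to detect the purely syntactic decision restriction even though the two classes coincide on every semantic feature (decomposability, determinism, tractable counting). Making a single explicit family simultaneously cheap as a d-DNNF and expensive as a dec-DNNF is the genuinely hard point, and I would expect to import the separating family and its analysis from~\cite{bova2016knowledge} rather than re-derive it from scratch.
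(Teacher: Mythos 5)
Your first separation (DNNF vs.\ d-DNNF) is correct and is exactly the paper's argument: the Sauerhoff function has an $O(n^2)$ nFBDD by \cite[Proposition~7]{bova2016knowledge}, hence a polynomial DNNF (this is Proposition~\ref{prp:fbdd-to-dnnf-fw}; note your justification is slightly off --- decomposability only needs that the tested variable $x$ not reappear in $C_1$ or $C_2$, which is what freeness gives; the two branches of a decision node \emph{may} share variables, and that is harmless since they feed an $\lor$-gate), while \cite[Theorem~9]{bova2016knowledge} gives the $2^{\Omega(n)}$ d-DNNF lower bound.

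The second separation (d-DNNF vs.\ dec-DNNF) is a genuine gap. You produce no family and no lower bound, and your fallback --- importing the separating family ``and its analysis from \cite{bova2016knowledge}'' --- would fail, because that paper contains no such separation; its lower-bound machinery is rectangle covers, and this machinery \emph{cannot in principle} distinguish d-DNNF from dec-DNNF: any bound obtained from disjoint rectangle covers applies uniformly to all d-DNNFs, so it can never exhibit a function that is cheap as a d-DNNF yet expensive as a dec-DNNF. This is why ``refining the rectangle-cover technique so as to see the decision structure'' is not a viable plan as stated. The paper instead cites \cite[Corollary~3.5]{beame2017exact}, where the separation rests on an entirely different mechanism: dec-DNNFs (unlike general d-DNNFs) admit a quasi-polynomial simulation by FBDDs (cf.\ \cite{beame2013lower}, used in Proposition~\ref{prp:fbdd-to-dnnf-bw}), so an exponential FBDD lower bound for a function that happens to have a small, \emph{semantically} deterministic d-DNNF transfers to an exponential dec-DNNF lower bound. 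Your intuition about seeking semantic determinism is the right one, but without the simulation-by-FBDDs step (or some other technique that genuinely sees the syntactic decision restriction) the proof does not go through.
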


\begin{proof}
  The exponential separation of DNNFs and d-DNNFs is in
  \cite[Proposition~7 and Theorem~9]{bova2016knowledge},
  by a similar argument to the proof of our
  Proposition~\ref{prp:separation-fbdd}.

  The exponential separation of d-DNNFs and dec-DNNFs is in
  \cite[Corollary 3.5]{beame2017exact}.
\end{proof}

\subsubsection{Connections between FBDDs and DNNFs}

We have presented our unstructured classes of decision diagrams (namely FBDDs,
uFBDDs, and nFBDDs), and of decomposable NNF circuits (dec-DNNF, d-DNNF, and
DNNF). We now discuss the relationship between these various classes. We first 
observe that nFBDDs (and their subtypes) can be compiled to DNNF (and their
subtypes):

\begin{proposition}\label{prp:fbdd-to-dnnf-fw}
  nFBDDs (resp., uFBDDs, FBDDs)
  can be compiled to
  DNNFs (resp., d-DNNFs, dec-DNNFs)
  in linear time.
\end{proposition}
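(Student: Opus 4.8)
The plan is to give a single syntactic translation, apply it uniformly in all three cases, and then verify separately the decomposability and determinism structure it produces. Given an nFBDD $D$, I would build a DNNF $C$ with one gate $g_n$ per node $n$ of $D$, processing $D$ in reverse topological order. The $1$-sink becomes a constant-true gate (an $\land$-gate with no input, which evaluates to $1$ by convention) and the $0$-sink becomes a constant-false gate (an empty $\lor$-gate). An $\lor$-node $n$ with distinct children $m_1,\dots,m_k$ becomes $g_n \defeq g_{m_1} \lor \cdots \lor g_{m_k}$, and a variable node $n$ testing $x$, with $0$-child $n_0$ and $1$-child $n_1$, becomes the decision gate $g_n \defeq (\lnot x \land g_{n_0}) \lor (x \land g_{n_1})$, introducing a variable gate for $x$ and a $\lnot$-gate for $\lnot x$. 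The output gate of $C$ is $g_\root$. Since the only $\lnot$-gates are applied to variable gates, $C$ is in NNF, and since each node yields $O(1)$ gates with total wire count proportional to the number of edges of $D$, the construction runs in time $O(|D|)$.

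Next I would prove correctness by a bottom-up induction establishing, for each node $n$ and valuation $\nu$, that $g_n$ evaluates to $1$ under $\nu$ iff there is a path from $n$ to the $1$-sink compatible with~$\nu$. The sink and $\lor$-node cases are immediate (an edge out of an $\lor$-node tests no variable), and the variable-node case follows by splitting on $\nu(x)$. Applying this at $n=\root$ shows that $C$ captures the same Boolean function as~$D$.

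The key structural point is that freeness of $D$ yields decomposability of $C$. For a decision gate $g_n = (\lnot x \land g_{n_0}) \lor (x \land g_{n_1})$, decomposability of its two $\land$-gates requires $x \notin \VARS(g_{n_0})$ and $x \notin \VARS(g_{n_1})$. Now $\VARS(g_{n_i})$ is exactly the set of variables labelling variable-nodes in the sub-DAG of $D$ rooted at $n_i$; were some such node also labelled $x$, concatenating a root-to-$n$ path, the edge $n \to n_i$, and a path from that node to a sink would give a root-to-leaf path testing $x$ twice, contradicting freeness. Hence all $\land$-gates are decomposable and $C$ is a DNNF, which settles the nFBDD case. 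The FBDD case then comes for free: if $D$ has no $\lor$-nodes, every $\lor$-gate of $C$ is a decision gate, so $C$ is a dec-DNNF.

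The main obstacle is the uFBDD case, where I must show the $\lor$-gates coming from $\lor$-nodes are deterministic. Fix an $\lor$-node $n$ with distinct children $m_i \neq m_j$ and suppose some valuation $\nu$ makes both $g_{m_i}$ and $g_{m_j}$ evaluate to~$1$; by the correctness claim there are compatible paths $\pi_i : m_i \to 1\text{-sink}$ and $\pi_j : m_j \to 1\text{-sink}$. Choosing any path $\rho$ from the root to $n$, freeness guarantees that the variables tested along $\rho$ are disjoint from those tested along $\pi_i$ and along $\pi_j$, since each of the concatenations $\rho\,\pi_i$ and $\rho\,\pi_j$ is a root-to-leaf path. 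I would therefore define a valuation $\nu'$ agreeing with $\rho$ on the former variables and with $\nu$ on the latter (arbitrary elsewhere); this is conflict-free precisely by the disjointness just noted. Then $\rho$ followed by $n \to m_i$ followed by $\pi_i$, and $\rho$ followed by $n \to m_j$ followed by $\pi_j$, are two distinct root-to-$1$-sink paths both compatible with $\nu'$, contradicting unambiguity. Hence every such $\lor$-gate is deterministic, the decision gates are deterministic by construction, and $C$ is a d-DNNF, completing the uFBDD case.
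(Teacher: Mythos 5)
Your proposal is correct and takes essentially the same route as the paper's proof: the identical node-by-node rewriting (variable nodes become decision gates $(\lnot x \land g_{n_0}) \lor (x \land g_{n_1})$, $\lor$-nodes stay as $\lor$-gates), followed by the same three-case analysis for DNNF, d-DNNF, and dec-DNNF. The only difference is that you explicitly verify what the paper merely asserts — the correctness induction, decomposability via freeness (no variable repeated on a root-to-leaf path), and determinism of the $\lor$-node gates via unambiguity — and these verifications are all sound.
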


\begin{proof}
We first describe the linear-time compilation of an nFBDD to a DNNF that
captures the same function: recursively rewrite every internal node $n$
labeled with variable $x$ by a circuit $(x\land D_0)\lor (\NOT x\land D_1)$,
where $D_0$ and $D_1$ are the (not necessarily disjoint) rewritings of the nodes
to which $n$ respectively had a 0-edge and a 1-edge. We note that the new
$\lor$-gate is a decision gate and the two $\land$-gates are
decomposable. Furthermore:
\begin{itemize}
  \item if $D$ is unambiguous, all $\lor$-gates in the rewriting are
    deterministic, so we obtain a d-DNNF;
  \item if $D$ is an FBDD, then $\lor$-gates are only introduced in the
    rewriting, so we obtain a dec-DNNF. \qedhere
\end{itemize}
\end{proof}

The proof above implies that nFBDDs (resp., uFBDDs, FBDDs) are the restriction of
DNNFs (resp., d-DNNFs, dec-DNNFs) to the case where $\land$-gates, in
addition to being decomposable, are also all \emph{decision}
$\land$-gates, i.e., $\land$-gates appearing as children of a decision
$\lor$-gate.

Unlike previous compilation results, Proposition~\ref{prp:fbdd-to-dnnf-fw} does
not come with an exponential separation:
we can compile in the other direction at a \emph{quasi-polynomial} cost, i.e., in time
$2^{O((\log n)^\alpha)}$ for some fixed $\alpha>0$:

\begin{proposition}\label{prp:fbdd-to-dnnf-bw}
  DNNFs (resp., d-DNNFs, dec-DNNFs)
  can be compiled to
  nFBDDs (resp., uFBDDs, FBDDs)
  in quasi-polynomial time.
\end{proposition}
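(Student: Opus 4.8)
The plan is to give a single recursive construction that converts a DNNF $D$ into an equivalent nBDD reading each variable at most once along every path (hence an nFBDD), and then to check that determinism and the decision-gate structure are inherited in the two stronger cases. The nodes of the target nBDD are labelled by \emph{tasks}: a task is a stack $(g_1,\dots,g_k)$ of gates of $D$ still to be satisfied, to be processed from the top down, the root being the singleton task $(g_\out)$. The intended invariant is that a path from the root reaches the $1$-sink exactly when the variable values read so far extend to a trace of $D$ starting at $g_\out$ in the sense of Definition~\ref{def:trace}. Processing the top gate $g_k$ of a task splits on $\mu(g_k)$: a literal gate reads its variable and either continues with the shortened task $(g_1,\dots,g_{k-1})$ or routes to the $0$-sink; an $\lor$-gate $g_k=a\lor b$ yields an $\lor$-node branching to $(g_1,\dots,g_{k-1},a)$ and $(g_1,\dots,g_{k-1},b)$; and a decomposable $\land$-gate $g_k=a\land b$ continues with the enlarged task $(g_1,\dots,g_{k-1},a,b)$, deferring the obligation to satisfy both children. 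The DAG structure of $D$ is handled transparently, since tasks are sequences of \emph{gates}: identical tasks are merged, so the output is again a DAG.

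Next I would verify correctness together with the crucial read-once property. Correctness is a direct matching of accepting root-to-sink paths with traces of $D$ consistent with the read values. The read-once (free) property is where decomposability is used: the gates simultaneously present on a task stack are descendants of the distinct children of a chain of $\land$-gates, so by decomposability their variable sets $\VARS(\cdot)$ are pairwise disjoint; hence no variable can label two reads along a single path, and the output is free. I would then treat the three regimes. If $D$ is a d-DNNF, then at each $\lor$-gate at most one input evaluates to $1$ under a fixed valuation, so at each $\lor$-node at most one branch can be completed into an accepting path, and an induction shows the nBDD is unambiguous, i.e.\ a uFBDD. If moreover every $\lor$ is a decision gate $(x\land C_1)\lor(\NOT x\land C_2)$, the two branches are distinguished by the value of $x$, so the $\lor$-node is replaced by a genuine variable-reading decision node, removing all $\lor$-nodes and yielding an FBDD; in the general case we simply obtain an nFBDD. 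In all cases this is the inverse, at a higher cost, of the linear-time forward compilation of Proposition~\ref{prp:fbdd-to-dnnf-fw}.

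The main obstacle is the size bound, and controlling it is the technical heart of the proof: the number of distinct reachable tasks is bounded by $s^{d}$, where $s=|D|$ and $d$ is the maximal stack depth along a path, and a naive choice of $d$ is linear, giving only an exponential bound. The fix is to process the children of each $\land$-gate in a \emph{balanced} order, always processing the child $a$ with fewer descendant gates first and deferring (keeping pending) the larger child $b$. Then a node lies on the stack only while we are inside the first (smaller) child of its defining $\land$-gate, so the stack depth along a path equals the number of ancestors whose smaller child contains the current node; since entering a smaller child at least halves the current subtree size and the subtree size only decreases along a path, this happens at most $O(\log s)$ times. Hence $d=O(\log s)$, the number of tasks is $s^{O(\log s)}=2^{O((\log s)^2)}$, and the whole construction runs in quasi-polynomial time, as required. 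Making this counting argument fully precise — and confirming that the balanced reordering does not disturb the decomposability-based read-once property or the determinism analysis above — is the step I expect to require the most care.
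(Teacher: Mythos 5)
Your proof is correct, but it takes a genuinely different route from the paper: the paper's proof of Proposition~\ref{prp:fbdd-to-dnnf-bw} is purely by citation --- it invokes \cite[Corollary 3.2]{beame2013lower} for compiling dec-DNNFs to FBDDs, \cite[Section 5]{beame2015new} for DNNFs to nFBDDs, and \cite[Proposition 1]{bollig2018relative} for the fact that the same compilation applied to a d-DNNF yields a uFBDD --- whereas you reconstruct from scratch the stack-based simulation with the process-the-lighter-child-first trick that underlies those references. Your task/stack construction, the use of decomposability to get the read-once property, the inheritance of unambiguity from determinism and of variable-reading nodes from decision gates, and the $s^{O(\log s)}$ count via nested light-child scopes of geometrically decreasing size are all sound, and they rederive the three cited results in a single uniform construction. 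What the paper's approach buys is brevity and the exact bounds of the references; what yours buys is self-containment and a unified treatment of the three cases. One caveat to attend to when making your counting argument fully precise: in a DAG, the sub-circuits below the two inputs of a decomposable $\land$-gate need not be node-disjoint --- they may share gates, though only gates $h$ with $\VARS(h)=\emptyset$ (constant subcircuits), since any other common descendant would violate decomposability. Hence the claim that the lighter child has at most half the size of its parent's sub-DAG requires either propagating constants away in a linear-time preprocessing step, or measuring size after discarding variable-free gates; with that fixed, the nesting relation (each deferred stack entry corresponds to a light scope strictly contained in the previous one, so scope sizes at least halve) gives stack depth $\leq \log_2 |D| + O(1)$ and the claimed quasi-polynomial size and running time.
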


\begin{proof}
  Quasi-polynomial compilation has been first 
  shown for dec-DNNFs and FBDDs in in~\cite[Corollary 3.2]{beame2013lower}. This result was extended
  in~\cite[Section 5]{beame2015new} to the compilation of DNNFs to
  nFBDDs. Finally, in
  \cite[Proposition 1]{bollig2018relative}, it is shown that the same
  compilation yields a uFBDD when applied to a d-DNNF.
\end{proof}

We will see in Proposition~\ref{prp:bdd-separated-nnf}
that these quasi-polynomial time compilations cannot be made
polynomial-time, which will conclude the separation of all classes in the
background of Fig.~\ref{fig:kc-classes}. We now move to structured classes, that
are in the foreground of Fig.~\ref{fig:kc-classes}.

\subsection{Structured Classes}

The classes introduced so far are \emph{unstructured}: there is no particular
order or structure in the way variables appear within an nFBDD, or within
a DNNF circuit. In this section, we introduce \emph{structured} variants
of these classes, which impose additional constraints on how 
variables are used. Such additional restrictions often help with the
tractability of some operations: for example, given two FBDDs $F_1,F_2$
capturing Boolean functions $\phi_1,\phi_2$, it is NP-hard to decide if $\phi_1 \land \phi_2$ if satisfiable~\cite[Lemma 8.14]{meinel2012algorithms}. 
By contrast, with the \emph{ordered binary decision
diagrams}~\cite{bryant1991complexity} (OBDDs) that we now define, we can perform
this task tractably: given two OBDDs $O_1$ and $O_2$ that are ordered in the
same way, we can compute in polynomial time an OBDD representing $O_1 \land
O_2$, for which we can then decide satisfiability.
We first present OBDDs, and we then present \emph{SDNNFs} which are the structured
analogues of DNNF.

\subsubsection{Ordered Binary Decision Diagrams}
	A \emph{non-deterministic ordered binary decision diagram}
        (\emph{nOBDD})
        is an nFBDD~$O$ with a total order $\mathbf{v} = v_{i_1}, \ldots,
        v_{i_n}$ on the variables which \emph{structures}~$O$, i.e.,
        for every path $\pi$ from the root of~$O$ to a leaf,
        the sequence of variables which labels the internal nodes of~$\pi$
        (ignoring $\lor$-nodes)
        is a subsequence of~$\mathbf{v}$.
	We say that the nOBDD is \emph{structured by $\mathbf{v}$}.
        We also define \emph{uOBDDs} as the unambiguous nOBDDs, and 
        \emph{OBDDs} as the nOBDDs without any $\lor$-node.

        Like in the unstructured case (Proposition~\ref{prp:separation-fbdd}),
        these classes are exponentially separated:

\begin{proposition}
  nOBDDs are exponentially separated from uOBDDs, and uOBDDs are
  exponentially separated from OBDDs.
\end{proposition}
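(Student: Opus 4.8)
The plan is to mirror the proof of Proposition~\ref{prp:separation-fbdd} for the unstructured classes, establishing the two separations one at a time by exhibiting, in each case, a family of Boolean functions with a small representation in the more expressive class and a provably exponential one in the less expressive class. In both cases the lower bound should be \emph{inherited} from the unordered world through the inclusion chains $\mathrm{uOBDD}\subseteq\mathrm{uFBDD}$ and $\mathrm{OBDD}\subseteq\mathrm{FBDD}$: a family that is hard for the smaller unordered class is a fortiori hard for its ordered restriction, since every (u)OBDD is in particular a (u)FBDD and hence cannot be smaller than the smallest such. The only genuinely new work therefore lies in the \emph{upper bounds}, i.e., in producing the small \emph{ordered} representations, and this is exactly where the single fixed variable order imposed by OBDDs creates the difficulty.

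For the separation of nOBDDs from uOBDDs I would reuse the Sauerhoff function~\cite{Sauerhoff03} on $n^2$ variables, as in Proposition~\ref{prp:separation-fbdd}. The lower bound comes for free: a uOBDD is a uFBDD, which by Proposition~\ref{prp:fbdd-to-dnnf-fw} compiles in linear time to a d-DNNF, and any d-DNNF for the Sauerhoff function has size $2^{\Omega(n)}$ by~\cite[Theorem~9]{bova2016knowledge}; hence any uOBDD for it also has size $2^{\Omega(n)}$, which is exponential in the $O(n^2)$ size of the small representation. For the upper bound I would verify that the polynomial nFBDD of~\cite[Proposition~7]{bova2016knowledge} is, or can be rearranged into, an \emph{ordered} diagram: the nondeterminism there guesses an index and then verifies a parity along a fixed scan of the grid, which can be made to respect a single global order, yielding an nOBDD of size $O(n^2)$. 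This step parallels the unstructured case almost verbatim, the only new point being the ordering check.

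The separation of uOBDDs from OBDDs is the delicate one, and it cannot reuse the function $\phi_n$ from Proposition~\ref{prp:separation-fbdd}: there the small uFBDD was obtained by joining with an $\lor$-gate two FBDDs reading the variables in \emph{two different} orders (row order and column order), which is illegal for a uOBDD, where every root-to-sink path must respect one common order. I therefore need a family that admits a small \emph{unambiguous} OBDD under a single fixed order yet requires exponential size as a \emph{deterministic} OBDD under \emph{every} order. The natural route is through the connection between OBDD variants and communication complexity: under a fixed order, the width of the smallest OBDD (resp.\ uOBDD) is controlled, up to the usual translations, by the deterministic (resp.\ unambiguous) one-way communication complexity of the partition induced at a cut, so any function whose unambiguous complexity is small for some order-induced partition while its deterministic complexity remains large across all balanced cuts separates the two classes. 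I would instantiate this with a known such function from~\cite{wegener2000branching,bova2016knowledge}, in the spirit of the PBDD-vs-FBDD separation of~\cite{bollig1997complexity} used above, and transfer its communication bounds to OBDD size.

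The main obstacle is precisely this last upper bound: exhibiting one variable order under which an unambiguous nondeterministic OBDD is polynomial, while guaranteeing that \emph{no} order admits a small deterministic OBDD. Unlike the nOBDD-versus-uOBDD case, I cannot inherit the construction from the unstructured proof, so I must simultaneously ensure that the nondeterministic guessing can be made \emph{unambiguous} (exactly one accepting path per model, e.g.\ by guessing a quantity that is fixed by the input, such as a parity or a unique witness) and that the deterministic lower bound is a genuine best-partition bound holding for all variable orders, not merely for the convenient interleaved one.
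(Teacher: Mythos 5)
Your lower-bound strategy is sound: since every uOBDD is a uFBDD (hence compiles in linear time to a d-DNNF by Proposition~\ref{prp:fbdd-to-dnnf-fw}) and every OBDD is an FBDD, exponential lower bounds for the unstructured classes do transfer to their ordered restrictions. The gaps are in your upper bounds. For nOBDD versus uOBDD, your plan rests on the claim that the $O(n^2)$ nFBDD of \cite[Proposition~7]{bova2016knowledge} for the Sauerhoff function respects, or can be rearranged to respect, a single variable order. It cannot: the Sauerhoff function combines a row-wise and a column-wise mod-sum function on an $n\times n$ matrix, and its small nFBDD is precisely a root $\lor$-node joining one OBDD that reads the matrix in row-major order and one that reads it in column-major order --- the very two-order trick that you correctly observe is illegal inside an nOBDD when you discuss the second separation. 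Under a single (say row-major) order, the column-wise part is a parity-of-sums function for which nondeterminism does not help (a balanced prefix cut yields an exponential rectangle-cover lower bound), so no small nOBDD for this function is available; your description of the nondeterminism as ``guessing an index and verifying a parity along a fixed scan'' does not match the actual construction. The paper avoids this issue with a different witness family: monotone DNFs of bounded arity and degree built from expander graphs, whose pathwidth is linear in their size. The uOBDD lower bound $2^{\Omega(\pw(\phi_n))}$ is then the paper's own Corollary~\ref{cor:obdd_omega}, while the nOBDD upper bound is immediate because \emph{any} monotone DNF has a linear-size nOBDD: fix an arbitrary order, represent each clause by a small OBDD under that order, and join these OBDDs by an $\lor$-node.

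For uOBDD versus OBDD, you correctly diagnose what is needed --- a function with a polynomial-size unambiguous nOBDD under one order but exponential OBDD size under every order --- but you never exhibit such a function; you defer to ``a known such function from \cite{wegener2000branching,bova2016knowledge}'' and yourself call this the main obstacle, so this half of the proposal is a plan rather than a proof. The paper's witness is the Hidden Weighted Bit function $\mathrm{HWB}_n$: Bryant~\cite{bryant1991complexity} proved a $2^{\Omega(n)}$ lower bound on OBDDs for $\mathrm{HWB}_n$ that holds under every variable order, and a polynomial-size nOBDD exists (nondeterministically guess the Hamming weight $k$ together with the value of $x_k$, then verify both along a single left-to-right scan of the variables); this nOBDD is unambiguous because the guessed data is uniquely determined by the input, which is exactly the resolution of the difficulty you flag. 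This construction and its unambiguity are observed in \cite[Theorem~10.2.1 and proof of Corollary~10.2.2]{wegener2000branching}.
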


\begin{proof}
	The exponential separation between nOBDDs and uOBDDs will follow from our lower bounds on uOBDDs. 
	Indeed, Corollary~\ref{cor:obdd_omega} shows a lower bound on the size of uOBDDs representing bounded-degree and bounded-arity monotone DNFs of 
	high pathwidth. 
	But there exists a family of DNFs $(\phi_n)_{n \in \NN}$ of bounded degree and arity whose treewidth (hence pathwidth) is linear in their size: for instance, 
	DNFs built from \emph{expander graphs} (see 
\cite[Theorem~5 and Proposition~1]{grohe2009treewidth}). 
        Hence, for such a family $(\phi_n)_{n \in \NN}$ we have that any uOBDD for~$\phi_n$ is of size $2^{\Omega(|\phi_n|)}$.
	By contrast, it is easy to see that any DNF $\phi$ can be represented as
        an nOBDD in linear time. To do so, fix an arbitrary variable order $\mathbf{v}$ of the variables of $\phi$.
	Any clause $K$ of $\phi$ can clearly be represented as a small OBDD with order $\mathbf{v}$. Taking the disjunction of all these OBDDs then yields an nOBDD equivalent to $\phi$ 
	of linear size.

	For the separation between uOBDDs and OBDDs, consider the \emph{Hidden Weighted Bit} function $\mathrm{HWB}_n$ on variables $V = \{x_1,\ldots,x_n\}$, defined for a valuation $\nu$ of $V$ by:
	\[\mathrm{HWB}_n(\nu) \defeq \begin{cases}
		0 & \text{if } \sum_{i=1}^n \nu(x_i) = 0; \\
		\nu(x_k) & \text{if } \sum_{i=1}^n \nu(x_i) = k \neq 0.
	\end{cases}
		\]
	Bryant~\cite{bryant1991complexity} showed that OBDDs for $\mathrm{HWB}_n$ have size $2^{\Omega(n)}$.
	By contrast, it is not too difficult to construct uOBDDs of
        polynomial size for $\mathrm{HWB}_n$.
        This was observed in~\cite[Theorem~10.2.1]{wegener2000branching} for
        nOBDDs, with a note
        \cite[Proof of Corollary~10.2.2]{wegener2000branching} that the
        constructed nBDDs are unambiguous. See also
	\cite[Theorem~3]{bova2016sdds},
	which covers the case of \emph{sentential decision diagrams} instead of
        uOBDDs.
\end{proof}

\subsubsection{Structured DNNFs}
For NNFs, as with BDDs, it is possible to introduce a notion of 
\emph{structuredness}, that goes beyond that of
decomposability. 

Here, the analogue of a total order of variables (that structured an OBDD) is what is called a \emph{v-tree}, which is a tree whose leaves correspond to variables. More formally, a~\emph{v-tree}~\cite{pipatsrisawat2008new} over a set~$V$ is a rooted full
  binary tree $T$ whose leaves
        are in bijection with $V$. We always identify each leaf with the associated
        element of~$V$. We will also use the notion of an \emph{extended v-tree}
        $T$~\cite{capelli2019tractable} over a set $V$, which is like a v-tree, except that
        there is only an injection between $V$ and $\LEAVES(T)$.
	That is, some leaves can correspond to no element
        of $V$: we call those leaves \emph{unlabeled} (and they can intuitively
        stand for constant gates in the circuit). 

	\begin{definition}
	A \emph{structured DNNF} (resp., \emph{extended
        structured DNNF}), denoted \emph{SDNNF} (resp.,
        \emph{extended SDNNF}), is a triple $(D,T,\rho)$
        consisting of:
		\begin{itemize}
			\item a DNNF $D$;
			\item a v-tree (resp., extended v-tree) $T$ over $D_\var$;
			\item a mapping $\rho$ labeling each $\land$-gate of $g$ with a node of~$T$ that satisfies the following:
				for every $\land$-gate $g$ of~$D$ with $1 \leq m \leq 2$ inputs $g_1, g_m$, the node $\rho(g)$ \emph{structures} $g$, i.e.,
        there exist $m$ distinct children $n_1, n_m$ of $\rho(g)$ such that 
         $\VARS(g_i) \subseteq \LEAVES(T_{n_i})$ for all $1 \leq i \leq m$.
		\end{itemize}
	\end{definition}

We also define d-SDNNF and dec-SDNNF as structured d-DNNF and dec-DNNF,
and define extended d-SDNNF and extended dec-SDNNF in the expected way.

As in the case of FBDDs and DNNFs, observe that an OBDD (resp., uOBDD, nOBDD) is a special type of dec-SDNNF (resp., d-SDNNF, SDNNF). 
Namely, the transformation described above
Proposition~\ref{prp:fbdd-to-dnnf-fw}, when applied to an OBDD (resp., uOBDD, nOBDD), yields a dec-SDNNF (resp., d-SDNNF, SDNNF) that is structured by a v-tree
that is right-linear (recall the definition from
Section~\ref{sec:preliminaries}). Hence, we have:

\begin{proposition}\label{prp:obdd-to-sdnnf}
  nOBDDs (resp., uOBDDs, OBDDs)
  can be compiled to
  SDNNFs (resp., d-SDNNFs, dec-SDNNFs)
  in linear time.
\end{proposition}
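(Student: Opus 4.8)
The plan is to reuse the nFBDD-to-DNNF rewriting described just above Proposition~\ref{prp:fbdd-to-dnnf-fw} and to observe that, for an \emph{ordered} diagram, this rewriting already produces a \emph{structured} circuit once we equip it with an appropriate v-tree. Indeed, Proposition~\ref{prp:fbdd-to-dnnf-fw} tells us that applying the rewriting to an nOBDD (resp.\ uOBDD, OBDD) yields a DNNF (resp.\ d-DNNF, dec-DNNF) in linear time, so the only remaining work is to exhibit a v-tree $T$ and a labeling $\rho$ witnessing that this DNNF is an SDNNF, to check that $T$ is right-linear, and to verify that the whole construction stays linear-time. Since the determinism/decision flavor is inherited verbatim from Proposition~\ref{prp:fbdd-to-dnnf-fw}, I would treat the three cases uniformly and argue only structuredness.

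Second, I would build $T$ directly from the order $\mathbf{v}$. Let $w_1, \ldots, w_m$ be the subsequence of $\mathbf{v}$ consisting of the variables that actually occur in the diagram; these are exactly the variable gates $D_\var$ of the rewritten circuit, using a single shared variable gate and a single shared $\NOT$-gate per variable. Define internal nodes $N_1, \ldots, N_{m-1}$ where $N_1$ is the root, the left child of $N_j$ is $N_{j+1}$ for $j \leq m-2$, the right child of $N_j$ is the leaf labeled $w_j$ for each $j \leq m-1$, and $N_{m-1}$ additionally has the leaf $w_m$ as its left child. This $T$ is a full binary tree all of whose right children are leaves, hence right-linear, and a straightforward induction gives $\LEAVES(T_{N_{j+1}}) = \{w_{j+1}, \ldots, w_m\}$ (with the convention that the left child of $N_{m-1}$ plays the role of $N_m$, having leaf set $\{w_m\}$).

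Third, I would define $\rho$. Every $\land$-gate produced by the rewriting has the shape $w_j \land D$ or $\NOT w_j \land D$, where $D$ is the rewriting of a successor of some node labeled $w_j$; I set $\rho$ to send all such gates to $N_j$ when $j \leq m-1$. Structuredness then holds because $\VARS$ of the literal input is exactly $\{w_j\} = \LEAVES(T_{\text{leaf }w_j})$, the leaf being the right child of $N_j$, while $\VARS(D) \subseteq \{w_{j+1}, \ldots, w_m\} = \LEAVES(T_{n})$ for $n$ the left child of $N_j$. This containment is precisely where orderedness and freeness are used: every variable read below a $w_j$-node comes strictly later than $w_j$ in $\mathbf{v}$. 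The only degenerate case is a node labeled $w_m$, where $D$ is a rewritten sink, hence a constant gate with $\VARS(D) = \emptyset$; I would map such a gate to $N_{m-1}$, assigning the literal to the leaf $w_m$ and $D$ to the leaf $w_{m-1}$, whose leaf set trivially contains $\emptyset$. As $T$ has $O(m)$ nodes and $\rho$ is read off in a single pass over the rewritten circuit, the construction stays linear-time.

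The main obstacle, such as it is, lies in this last verification: one must be sure that the recursive rewriting, together with the reuse of a single variable gate per variable, never lets a variable $w_j$ reappear inside the subcircuit $D$ placed beside it in an $\land$-gate. This is exactly what guarantees both decomposability (already handled by Proposition~\ref{prp:fbdd-to-dnnf-fw}) and the containment $\VARS(D) \subseteq \LEAVES(T_n)$ needed for structuredness, and it is where the ordered and free structure of the diagram is essential; it is also what makes a \emph{right-linear} v-tree, rather than an arbitrary one, the natural target.
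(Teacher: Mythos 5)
Your proof is correct and takes essentially the same route as the paper: both construct the right-linear v-tree directly from the variable order $\mathbf{v}$ and then apply, as-is, the rewriting from Proposition~\ref{prp:fbdd-to-dnnf-fw}, with the determinism/decision flavor inherited verbatim. The only difference is one of detail---you explicitly exhibit the labeling $\rho$ and verify structuredness (including the last-variable case), which the paper's proof leaves implicit.
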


\begin{proof}
 Given the variable order $\mathbf{v} = v_1 \ldots v_n$ of
an nOBDD, we construct our right-linear v-tree $T$ as having a
root $r_1$, internal nodes $r_i$ with $r_i$ being the left child of
$r_{i-1}$ for $2 \leq i \leq n-1$, leaf nodes $v_1 \ldots v_{n-1}$ with
$v_i$ being the right child of $r_i$, and leaf node $v_n$ being the right
child of $r_{n-1}$. We then apply as-is the translation described in the proof
  of
  Proposition~\ref{prp:fbdd-to-dnnf-fw}.
\end{proof}

As in the unstructured case (Proposition~\ref{prp:fbdd-to-dnnf-fw}), there is no exponential separation
result: indeed, analogously to Proposition~\ref{prp:fbdd-to-dnnf-bw} in
the unstructured case, there exist quasi-polynomial compilations in the
other direction:

\begin{proposition}\label{prp:sdnnf-to-obdd}
  SDNNFs (resp., d-SDNNFs, dec-SDNNFs)
  can be compiled to
  nOBDDs (resp., uOBDDs, OBDDs)
  in quasi-polynomial time.
\end{proposition}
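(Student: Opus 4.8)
The plan is to reduce to the unstructured compilation of Proposition~\ref{prp:fbdd-to-dnnf-bw} and then to show that structuredness forces the output to be \emph{ordered}. An SDNNF $(D,T,\rho)$ is in particular a DNNF, so Proposition~\ref{prp:fbdd-to-dnnf-bw} already compiles $D$ in quasi-polynomial time into an nFBDD (resp., uFBDD, FBDD). The only thing left to establish is that, because $D$ is structured by the v-tree~$T$, the resulting decision diagram can be made to respect a single total order on $D_\var$, and is therefore an nOBDD (resp., uOBDD, OBDD).

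First I would fix the total order $\mathbf{v}$ on $D_\var$ obtained by reading the leaves of~$T$ in a fixed left-to-right depth-first traversal; this order has the property that for every node~$n$ of~$T$ the variables of $\LEAVES(T_n)$ form a contiguous block of~$\mathbf{v}$. I then claim that the unstructured construction can be carried out so that every root-to-sink path queries its variables in an order consistent with~$\mathbf{v}$. This is exactly where structuredness is used: the DNNF-to-FBDD construction proceeds by decomposing along decomposable $\land$-gates, and in a structured DNNF each $\land$-gate~$g$ with inputs $g_1,g_2$ has $\VARS(g_1)$ and $\VARS(g_2)$ confined to the leaf sets of two distinct children of $\rho(g)$. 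Hence the recursion always finishes one child-subtree before moving to its sibling, and it never interleaves variables belonging to disjoint blocks of~$\mathbf{v}$; the variables queried along any path therefore form a subsequence of~$\mathbf{v}$, which is precisely the definition of being structured by~$\mathbf{v}$.

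The size and running time are then inherited from Proposition~\ref{prp:fbdd-to-dnnf-bw}, and the three variants are handled uniformly, since that proposition already maps a d-DNNF to a uFBDD and a dec-DNNF to an FBDD; combining this with the ordering argument above, a d-SDNNF yields a uOBDD and a dec-SDNNF an OBDD, matching the exact converses of Proposition~\ref{prp:obdd-to-sdnnf}.

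The main obstacle is to reconcile the \emph{balanced-cut} strategy that the unstructured construction uses to guarantee quasi-polynomial size with the requirement of a \emph{single} globally consistent order: one must verify that the balanced decompositions can be chosen to align with the nodes of~$T$. I expect the cleanest way to make this rigorous is to first rebalance~$T$ to depth $O(\log n)$ (with only polynomial blow-up of the SDNNF, and while preserving determinism and the decision restriction needed for the uOBDD and OBDD cases), and then run a direct DFS-order construction whose width at each cut is bounded by $|D|^{O(\mathrm{depth}(T))}$: since the prefix defined by any DFS cut is a union of at most $\mathrm{depth}(T)$ complete subtrees of~$T$, structuredness bounds the number of distinct subfunctions crossing the cut, yielding total size $|D|^{O(\log n)}$, which is quasi-polynomial. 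Checking that this rebalancing and width bound interact correctly with the unambiguity condition is the step I would expect to require the most care.
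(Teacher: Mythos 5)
Your plan has a genuine gap at its core, and it is worth contrasting with the paper, whose proof is simply a citation: quasi-polynomial SDNNF$\to$nOBDD compilation is \cite[Theorem~2]{bollig2018relative}, unambiguity of the output for deterministic inputs is \cite[Proposition~2]{bollig2018relative}, and the OBDD case for dec-SDNNFs follows by observing that \cite[Simulation~2]{bollig2018relative} produces no $\lor$-nodes when the input has only decision gates. The first problem is that the claim in your second paragraph is false for the construction behind Proposition~\ref{prp:fbdd-to-dnnf-bw}. That construction is \emph{not} a top-down recursion along decomposable $\land$-gates (the rewriting you describe is essentially the linear-time one of Proposition~\ref{prp:fbdd-to-dnnf-fw}, which goes in the \emph{opposite} direction and only works when all $\lor$-gates are decision gates). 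Instead, it nondeterministically guesses, for each accepting trace, a gate $g$ whose variable set is balanced (roughly between a third and two thirds of the remaining variables), verifies the variables of $\VARS(g)$ \emph{first}, and only then the remaining variables. Even when the input is structured by $T$, the guessed block $\LEAVES(T_m)$ typically sits in the \emph{middle} of the DFS order $\mathbf{v}$, so the resulting paths test that block before variables that precede it in $\mathbf{v}$; the output is an nFBDD, not an nOBDD. This is exactly why the structured case is a separate theorem in the literature rather than a corollary of the unstructured one.

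Your fallback plan (last paragraph) does not close the gap either. First, the rebalancing lemma you invoke---every SDNNF can be restructured by a v-tree of depth $O(\log n)$ with \emph{polynomial} blow-up, preserving determinism and decision gates---is not available in the paper and is not a known easy fact; the natural proof (pick a balanced node of $T$, expand the function as a rectangle cover of size $\leq \card{D}$ at that node using completeness, recurse on both sides) incurs a blow-up of $\card{D}^{O(\log n)}$ and \emph{is} essentially the balanced-split machinery of the theorem you are trying to prove, so the plan is circular. Second, your width bound counts the wrong quantity: the number of distinct \emph{subfunctions} across a DFS cut is only bounded by $2^{O(w\cdot \mathrm{depth}(T))}$, where $w$ is the number of $\lor$-gates per v-tree node---exponential in $\card{D}$ in general, not $\card{D}^{O(\mathrm{depth}(T))}$. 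The correct count is over recursion-stack configurations, i.e., tuples of one guessed $\lor$-gate per ancestor of the cut, which requires completeness (so that every trace contains exactly one $\lor$-gate per v-tree node) and inherently yields a \emph{nondeterministic} diagram. That suffices for nOBDDs, and uniqueness of traces under determinism gives uOBDDs, but it cannot give the dec-SDNNF$\to$OBDD case: a guess-and-verify construction is nondeterministic, and the decision variable of a decision gate structured at $n$ need not be the first leaf of $T_n$ in the DFS order, so the guess cannot be resolved deterministically when the block is entered. This last case, which the paper handles through the specific shape of \cite[Simulation~2]{bollig2018relative}, is left entirely unaddressed by your scheme.
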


\begin{proof}
  Quasi-polynomial time compilation of a SDNNF into an nOBDD is proved in
  \cite[Theorem~2]{bollig2018relative}, by adapting the compilation of
  \cite{beame2015new} from DNNFs to nFBDDs. Furthermore,
  \cite[Proposition~2]{bollig2018relative} shows that the resulting nOBDD
  is unambiguous if the SDNNF is deterministic. But it is easy to see
  that the same compilation \cite[Simulation 2]{bollig2018relative} yields
  an OBDD if the input is a dec-SDNNF: indeed, in a dec-SDNNF there are
  no $\lor$-gates that are not decision gates, so no $\lor$-gates are
  produced in the output.
\end{proof}

\subsection{Comparing Structured and Unstructured Classes}
To obtain all remaining separations in Fig.~\ref{fig:kc-classes}, and justify
that no arrows are missing, we need two last results in which we will compare structured and
unstructured classes.

The first result describes the power of decomposable $\land$-gates as opposed to
decision gates: it shows that the least powerful class that has arbitrary
decomposable $\land$-gates (dec-SDNNF) cannot be compiled to the most
powerful class with decision $\land$-gates (nFBDD) without a
super-polynomial size increase.

\begin{proposition}
  \label{prp:bdd-separated-nnf}
  There exists a family of functions $(\phi_n)$ that has $O(n^2)$ dec-SDNNF
  but no nFBDD of size smaller than $n^{\Omega(\log(n))}$.
\end{proposition}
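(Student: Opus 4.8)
The plan is to exhibit a concrete family of Boolean functions witnessing this separation, and the natural candidate is a function whose structure makes it easy for a \emph{structured} decomposable circuit but provably hard for any \emph{free binary decision diagram}. The key intuition is that dec-SDNNFs gain power from \emph{decomposable} $\land$-gates (which let two variable-disjoint subfunctions be combined without either ``reading'' the other), whereas an nFBDD must linearize all of its variable tests along paths. A classical candidate exploiting exactly this gap is a multiplexer/addressing-style function, or a function based on permutation matrices in the spirit of the hidden-weighted-bit or the ``indirect storage access'' functions studied in the BDD literature. I would take $\phi_n$ to be a function on $O(n^2)$ variables admitting a small v-tree-structured decomposition, such as a function of the form where one block of ``address'' variables selects, in a decomposable fashion, which of several variable-disjoint subformulas is evaluated.

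First I would give the easy direction, the $O(n^2)$ dec-SDNNF upper bound. Here I would explicitly construct a v-tree $T$ over the variables that respects the block structure of $\phi_n$, and build a dec-SDNNF whose $\land$-gates split the variable set exactly along the children of the corresponding $T$-nodes; because $\phi_n$ is designed so that its subfunctions act on disjoint variable blocks, every $\land$-gate I introduce is genuinely decomposable, and every $\lor$-gate is a decision gate, so the circuit is a legitimate dec-SDNNF. Counting gates per block and summing over the $O(n)$ blocks of size $O(n)$ yields the $O(n^2)$ bound. This step is mostly bookkeeping once the function is fixed.

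The hard part will be the lower bound: proving that \emph{every} nFBDD for $\phi_n$ has size $n^{\Omega(\log n)}$. I expect this to be the main obstacle, and rather than reprove it from scratch I would invoke an existing quasi-polynomial lower bound from the branching-program literature for a function of this exact flavour (for instance the known $n^{\Omega(\log n)}$ lower bounds on FBDD/read-once branching program size for suitable multiplexer or pointer-chasing functions, as used in \cite{beame2013lower,beame2015new}). The delicate point is that an nFBDD is \emph{non-deterministic}, so I must ensure the cited bound holds against non-determinism, or upgrade a deterministic-FBDD bound using the quasi-polynomial simulation of Proposition~\ref{prp:fbdd-to-dnnf-bw} together with an $n^{\Omega(\log n)}$ DNNF-level lower bound; I would have to check carefully that the simulation's quasi-polynomial blowup is compatible with the target separation, since a naive appeal could wash out the gap.

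Finally I would stitch the two directions together: the $O(n^2)$ dec-SDNNF upper bound combined with the $n^{\Omega(\log n)}$ nFBDD lower bound immediately gives that dec-SDNNF cannot be compiled into nFBDD in polynomial time, which is precisely the claimed separation and, as the surrounding text notes, completes the argument that all the arrows in Fig.~\ref{fig:kc-classes} are accounted for. If no off-the-shelf function simultaneously satisfies both halves, my fallback would be to design $\phi_n$ hand-in-hand with the lower-bound technique — choosing the block structure so the upper bound is transparent, and choosing the ``addressing'' pattern so that a standard cut-and-count or rectangle/communication-complexity argument on the read-once branching program forces the $n^{\Omega(\log n)}$ bound.
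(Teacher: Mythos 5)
Your proposal defers all of the proposition's actual content to two steps that are never carried out: choosing the function and locating the lower bound, and it is the second step where the plan breaks. The bound must hold against \emph{nondeterministic} FBDDs, and the candidates you name do not have such bounds: the classical multiplexer/addressing/hidden-weighted-bit style lower bounds are against deterministic FBDDs or OBDDs, and some of these functions provably cannot work --- e.g.\ $\mathrm{HWB}_n$ has \emph{polynomial-size} uOBDDs (hence nFBDDs), a fact this very paper uses to separate uOBDDs from OBDDs \cite{wegener2000branching,bova2016sdds}. Neither \cite{beame2013lower} nor \cite{beame2015new} supplies a quasi-polynomial lower bound against nFBDDs for a function with small dec-SDNNFs. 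Your fallback is moreover internally contradictory: an ``$n^{\Omega(\log n)}$ DNNF-level lower bound'' cannot exist for a function admitting an $O(n^2)$ dec-SDNNF, since a dec-SDNNF \emph{is} a DNNF; and routing a deterministic-FBDD bound through the quasi-polynomial simulation of Proposition~\ref{prp:fbdd-to-dnnf-bw} does wash out the gap, exactly as you fear (that simulation also shows the dec-SDNNF/nFBDD separation can be at most quasi-polynomial, so there is no slack to spare). The only viable route is a lower bound proved directly against nondeterministic read-once branching programs, which your proposal does not provide.

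For comparison, the paper does not design a bespoke block-structured function at all. It takes Razgon's family \cite{razgon2014twnrop}: for each $k$, monotone 2CNFs $\phi_n^k$ on $n$ variables of treewidth $k$, for which Razgon proves an $n^{\Omega(k)}$ size lower bound against nFBDDs (NROBPs in his terminology) --- this is precisely the off-the-shelf nondeterministic lower bound your plan is missing. The upper bound is then the classical treewidth-parameterized compilation of CNFs into dec-SDNNFs of size $2^{O(k)}\,n$ \cite{DBLP:journals/jacm/Darwiche01}, and setting $k = \lfloor \log n \rfloor$ yields both the $O(n^2)$ dec-SDNNF and the $n^{\Omega(\log n)}$ nFBDD bound simultaneously. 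If you wanted to salvage your addressing-function approach, you would essentially have to re-prove a Razgon-type theorem for it from scratch, since no cut-and-count or fooling-set argument you cite is known to survive nondeterminism at the quasi-polynomial scale.
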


\begin{proof}
  In~\cite{razgon2014twnrop}, Razgon constructs for every $k$ a family
  of 2CNF $(\phi^k_n)_{n \in \mathbb{N}}$ such that $\phi_n^k$ has
  $n$ variables and treewidth $k$. He proves (\cite[Theorem
  1]{razgon2014twnrop}) a $n^{\Omega(k)}$ lower bound on the size of
  any nFBDD computing $\phi_n^k$ (Razgon refers to nFBDD as NROBP in
  his paper). It is known from~\cite[Section
  3]{DBLP:journals/jacm/Darwiche01} that one can compile any CNF with $n$
  variables and with treewidth $k$ into a dec-SDNNF of size
  $2^{\Omega(k)}n$. Thus, $\phi_n^k$ can be computed by a
  dec-SDNNF of size $2^{\Omega(k)}n$.

  Taking $k \colonequals \lfloor \log(n) \rfloor$ gives the desired separation:
  $\phi_n^k$ can be computed by a dec-SDNNF 
  of size $O(n^2)$ 
  but by no nFBDD of size smaller than $n^{\Omega(\log(n))}$.
\end{proof}

Proposition~\ref{prp:bdd-separated-nnf} implies that no
DNNF class in the upper level of Fig.~\ref{fig:kc-classes}
can be polynomially compiled into any BDD class in the
lower level of Fig.~\ref{fig:kc-classes}.

The second result describes the power of unstructured formalisms as opposed to
structured ones: it shows that the least powerful unstructured class (FBDD) cannot be compiled
to the most powerful structured class (SDNNF) in size less than
exponential.

\begin{proposition}
  \label{prp:structure-separation}
  FBDDs are exponentially separated from SDNNFs:
  there exists a family of functions
  $(\phi_n)$ that has FBDDs of size $O(n)$ but no SDNNF of size smaller than $2^{\Omega(n)}$.
\end{proposition}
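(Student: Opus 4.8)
The plan is to exhibit an explicit family of Boolean functions that separates FBDDs from SDNNFs. Since we need functions computable by small FBDDs but requiring exponential-size SDNNFs, the natural strategy is to find a function whose variables can be read in a single fixed order along \emph{one} root-to-sink path (which an FBDD allows, since the order may differ between different branches) but which resists every possible v-tree decomposition. The obstruction for SDNNFs is structural: any v-tree fixes a single recursive partition of the variables, and an $\land$-gate structured by a node $n$ must split its variables between the subtrees hanging off $n$. So the target function should be one for which every balanced partition of the variables induces a large ``communication-style'' lower bound.

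First I would recall the well-known fact, from the theory of OBDD/FBDD lower bounds, that functions based on \emph{pointer jumping} or \emph{indirect addressing} (such as the Hidden Weighted Bit function already discussed in the excerpt, or a permutation/reachability-style function) are easy for FBDDs precisely because the FBDD can choose, at each node, which variable to query next depending on the values already read. The canonical witness for the FBDD-versus-structured separation is a function like the one testing a hidden permutation or a ``masked'' addressing function: the FBDD first reads an index block, then jumps to read exactly the relevant data bits, giving size $O(n)$ or $O(n^2)$. The crux is that this adaptivity is exactly what SDNNFs forbid, since the v-tree order is fixed in advance.

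The second and harder step is the SDNNF lower bound of $2^{\Omega(n)}$. Here I would appeal to the rectangle-cover / communication-complexity characterization that the paper announces it will develop (the Boolean functions captured by SDNNFs admit small rectangle covers with respect to \emph{any} v-tree-induced partition, cf.\ the machinery referenced around Section~\ref{sec:set}). Concretely, for any v-tree $T$ there is a node $n$ inducing a \emph{balanced} partition of the variable set $V$ into two roughly equal halves $X \uplus Y$ (this is a standard separator argument on full binary trees), and any SDNNF structured by $T$ yields a cover of the function's satisfying assignments by $2^{o(n)}$ combinatorial rectangles with respect to $(X,Y)$. If the chosen function has the property that its $1$-set has \emph{high fooling-set / rectangle complexity} under \emph{every} balanced partition---a robustness condition---then no such small cover can exist, forcing SDNNF size $2^{\Omega(n)}$.

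The main obstacle, and where I would concentrate the effort, is precisely this robustness: I must choose the function so that the lower bound holds for \emph{every} balanced bipartition $(X,Y)$ simultaneously, not merely for one fixed partition, because the adversary (the SDNNF) gets to pick the most favorable v-tree. The standard route is to use a function with a large fooling set under all balanced partitions, e.g.\ an inner-product--like or intersection-like gadget spread over the variables so that any balanced cut still separates $\Omega(n)$ ``paired'' variables; the Sauerhoff-type or indirect-storage functions used elsewhere in the excerpt are good candidates, as is the function already cited in Proposition~\ref{prp:bdd-separated-nnf}'s lineage. I would then combine: (i) the explicit $O(n)$ FBDD construction exploiting adaptivity, and (ii) the balanced-separator plus rectangle-cover lower bound, to conclude that the family $(\phi_n)$ has FBDDs of size $O(n)$ but no SDNNF of size smaller than $2^{\Omega(n)}$, as claimed.
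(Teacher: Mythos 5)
Your high-level plan---an adaptively readable function for the FBDD upper bound, then a balanced v-tree separator plus a rectangle-cover lower bound that must hold under \emph{every} balanced bipartition---is indeed the skeleton of the known proofs, including the one the paper invokes: it cites Pipatsrisawat's and Capelli's theses, where Pipatsrisawat uses the \emph{circular bit shift} function $CBS(S,X,Y)$ (true iff $Y$ equals $X$ shifted by the number written in $S$). There, the FBDD reads $S$ first and then checks the pairs $(x_i, y_{i+S})$, while for any v-tree one picks a balanced node and a shift value $s$ splitting $\Omega(n)$ pairs, and an equality-style fooling set yields the $2^{\Omega(n)}$ bound on rectangle covers. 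But your proposal stops exactly where the work lies: you never fix a function, and the concrete candidates you float do not survive scrutiny. The Hidden Weighted Bit function has polynomial-size uOBDDs (as recalled in the paper), hence polynomial-size d-SDNNFs by Proposition~\ref{prp:obdd-to-sdnnf}, so it cannot witness the separation. The Sauerhoff function requires exponential-size d-DNNFs, and since FBDDs compile in linear time to dec-DNNFs (Proposition~\ref{prp:fbdd-to-dnnf-fw}), which are in particular d-DNNFs, it admits \emph{no} small FBDD. The function behind Proposition~\ref{prp:bdd-separated-nnf} (Razgon's) separates in the \emph{opposite} direction: it has small dec-SDNNFs but no small nFBDDs.

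Two further ingredients of your plan are not merely unexecuted but would fail as stated. First, an ``intersection-like gadget'' cannot give an SDNNF lower bound at all: SDNNFs are non-deterministic, so you need lower bounds on \emph{non-disjoint} rectangle covers, and set-intersection-type functions such as $\SINT_n$ have covers of linear size (one rectangle per intersecting index); this is precisely why the paper deploys $\SINT_n$ only against d-SDNNFs (disjoint covers) and the equality-flavoured $\SCOV_n$, with its fooling-set argument, against SDNNFs. Second, any gadget with a \emph{fixed} pairing of variables---inner product, $\SINT_n$, $\SCOV_n$ alike---violates your own robustness requirement, because the adversary's v-tree can simply respect the pairing (placing $x_i$ and $y_i$ as siblings), which makes the function computable by a linear-size SDNNF. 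The pairing must itself depend on input variables (the shift $S$ in $CBS$, or a hidden permutation); this single idea is what simultaneously buys the FBDD upper bound (adaptivity) and the robustness of the fooling-set lower bound over all v-trees. Your proposal gestures at this with ``hidden permutation'' but neither commits to such a function nor proves either bound for it, so as written the argument has a genuine gap at its crux.
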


\begin{proof}
	This separation was proved independently by Pipatsrisawat and Capelli in their PhD theses (see~\cite[Appendix D.2]{pipatsrisawat2010reasoning},
        and~\cite[Section 6.3]{capelli2016structural}).

	In his work, Pipatsrisawat considers the Boolean function
        \emph{circular bit shift} $CBS(S,X,Y)$: it is defined on a tuple
        $(S,X,Y)$ of variables with $N = s_1 \ldots s_k$, $X = x_1 \ldots
        x_{2^k}$, $Y = y_1 \ldots y_{2^k}$ for some $k \in \NN$, and it
	evaluates to $1$ on valuation~$\nu$ iff shifting the bits of $\nu(X)$ by
        $S$ (as written in binary) positions yields $\nu(Y)$.
        Pipatsrisawat shows that the CBS function
        on $n$~variables has an FBDD of
        size $O(n^2)$, but that any SDNNF for CBS has size $2^{\Omega(n)}$.

	The proof of Capelli uses techniques close to
	the ones used in Section~\ref{sec:structured}.
\end{proof}

Proposition~\ref{prp:structure-separation} implies that no
unstructured class (in the background of Fig.~\ref{fig:kc-classes})
can be polynomially compiled into any structured class (in the
foreground of Fig.~\ref{fig:kc-classes}).

Looking back at Fig.~\ref{fig:kc-classes}, we see that, indeed, all
classes are separated and no arrows are missing. The separation is exponential except when moving
(on the vertical axis in the figure) from
BDD-like classes to NNF-like classes, in which case we know (cf.
Propositions~\ref{prp:fbdd-to-dnnf-bw} and~\ref{prp:sdnnf-to-obdd}) that
quasi-polynomial compilations exist in the other direction.

\subsection{Completeness and Width}
\label{sec:completewidth}

Two last notions that will be useful for our results are the notions of \emph{completeness} and \emph{width} for structured classes.
Intuitively, completeness further restricts the structure of how variables are
tested in the circuit or BDD: in addition to the structuredness requirement, we
impose that no variables are ``skipped''. We will be able to assume completeness
without loss of generality, it will be guaranteed by our construction, and it
will be useful in our lower bound proofs.

On complete classes, we will additionally be able to define a notion of
\emph{width} that we will use to show finer lower bounds.

\myparagraph{Complete OBDDs.}
	An nOBDD $O$ on $V$ is \emph{complete} if every path from the root
        to a sink tests every variable of $V$.
	For $x \in V$, the {\em $x$-width} of a complete nOBDD
	$O$ is the number of nodes labeled with variable $x$.
	The \emph{width} of $O$ is $\max_{x \in V} x$-width of $O$.

	It is immediate that partially evaluating a complete nOBDD does not increase its width:
\begin{lemma}
	\label{lem:partial_evaluate_nOBDD}
	Let $O$ be a complete nOBDD (resp., uOBDD) on variables $V$, with order
        $\mathbf{v}$ and of width $\leq w$, and let $\phi$ be the Boolean function that $O$ captures. Let $X \subseteq V$, and 
	$\nu$ be a valuation of $X$. Then there exists a complete nOBDD (resp., uOBDD) $O'$, on variables $V \setminus X$, of order $\mathbf{v}|_{V \setminus X}$ and width $\leq w$, that computes
	$\nu(\phi)$.
\end{lemma}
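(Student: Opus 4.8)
The plan is to partially evaluate the nOBDD $O$ according to $\nu$, variable by variable, following the order $\mathbf{v}$, while maintaining completeness and never increasing the $x$-width for any variable $x \in V \setminus X$. The key observation is that when we fix the value of a variable $x \in X$, each node $n$ labeled by $x$ has two outgoing edges (labeled $0$ and $1$); we simply ``short-circuit'' $n$ by redirecting every incoming edge of $n$ to the target of the edge labeled $\nu(x)$, and then delete $n$ together with both its outgoing edges. Doing this for all nodes labeled by variables in $X$ yields a DAG $O'$ whose internal variable-labeled nodes are labeled only by variables of $V \setminus X$, with $\lor$-nodes untouched.

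First I would verify that $O'$ is a well-formed nOBDD on $V \setminus X$: the two sinks are preserved, every remaining variable node still has its two labeled outgoing edges, and since we only redirected edges to existing descendants, the order $\mathbf{v}|_{V \setminus X}$ still structures $O'$ (the sequence of variable labels along any root-to-sink path of $O'$ is obtained from a corresponding path of $O$ by deleting the nodes labeled in $X$, hence remains a subsequence of $\mathbf{v}$ restricted to $V \setminus X$). Next I would check that $O'$ captures $\nu(\phi)$: there is a natural correspondence between root-to-$1$-sink paths of $O'$ compatible with a valuation $\nu'$ of $V \setminus X$ and such paths of $O$ compatible with $\nu \cup \nu'$, because the short-circuiting exactly records the choice $\nu(x)$ at each $x$-node. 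This correspondence is a bijection at the level of paths, which immediately gives that $O'$ is unambiguous whenever $O$ is, handling the uOBDD case with no extra work.

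The width bound is where I would be most careful. Completeness of $O$ ensures every root-to-sink path tests every variable, so after short-circuiting the nodes of a variable $x \in X$, the resulting paths still test every variable of $V \setminus X$; thus $O'$ is complete on $V \setminus X$. For the width, note that the nodes of $O'$ labeled by a variable $y \in V \setminus X$ are exactly the $y$-labeled nodes of $O$ that survive (we never create new variable nodes, only delete nodes and redirect edges). Hence the $y$-width of $O'$ is at most the $y$-width of $O$, so $\max_{y \in V \setminus X}(y\text{-width of } O')\leq \max_{x\in V}(x\text{-width of }O)\leq w$.

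The main obstacle I anticipate is purely bookkeeping rather than conceptual: when several nodes labeled by variables of $X$ are chained together (e.g.\ consecutive $x$-nodes along $\mathbf{v}$ for $x \in X$), the short-circuiting must be applied transitively so that no deleted node remains a target; doing this cleanly — say by processing variables of $X$ in reverse order of $\mathbf{v}$, so that when we eliminate the nodes of a variable their redirected targets are already ``clean'' — ensures the construction terminates with a valid nOBDD and that no subtle duplication of variable nodes occurs. Once this is set up, the three verifications (well-formedness and structuredness, capturing $\nu(\phi)$ with preservation of unambiguity, and completeness with the width bound) are each routine.
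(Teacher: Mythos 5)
Your proof is correct: the paper itself gives no proof of this lemma (it is introduced with ``It is immediate that\ldots''), and your short-circuiting construction is exactly the standard argument that the authors leave implicit — eliminate each node labeled by $x \in X$ by rerouting its incoming edges to the target of its $\nu(x)$-edge, then check structuredness, the path correspondence (which preserves unambiguity), completeness, and the fact that no $y$-labeled node for $y \in V \setminus X$ is ever created or duplicated. The only detail worth adding is the case where the root of $O$ is itself labeled by a variable of $X$ (it has no incoming edges to redirect), in which case the new root is the first surviving node on the chain of $\nu$-edges out of the old root — a triviality that your reverse-order processing already accommodates.
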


\myparagraph{Complete SDNNFs.}
The notion of completeness and width of OBDDs extends naturally to SDNNFs.
Following~\cite{capelli2019tractable}, we say that a \mbox{(d-)}SDNNF (resp., extended
\mbox{(d-)}SDNNF) $(D,T,\rho)$ is \emph{complete} if $\rho$ labels \emph{every} gate of
$D$ (not just $\land$-gates) with a node of $T$ and the following conditions are
satisfied:
\begin{enumerate}
        \item The output gate of $D$ is an $\lor$-gate;
	\item For every variable gate $g$ of $D$, we have $\rho(g)=g$;
	\item For every $\NOT$-gate $g$ of $D$, letting $g'$ be the variable gate that feeds $g$, we have $\rho(g)=g'$; 
	\item For every $\lor$-gate $g$ of $D$, for any input $g'$ of $g$, the
          gate $g'$ is not an $\lor$-gate, and moreover we have $\rho(g') = \rho(g)$;
	\item For every $\land$-gate $g$ of $D$, for any input $g'$ of $g$, the
          gate $g'$ is an $\lor$-gate, and we have that $\rho(g')$ is a child of $\rho(g)$;
	\item For every $\land$-gate $g$ of $D$ and any two inputs $g' \neq g''$ of $g$, we have $\rho(g') \neq \rho(g'')$;
	\item For every $\land$-gate $g$ of $D$ such that $\rho(g)$ is an internal node of $T$, $g$ has exactly two inputs.
\end{enumerate}

For a node $n$ of $T$, the \emph{$n$-width} of a complete \mbox{(d-)}SDNNF (resp., extended complete) $(D,T,\rho)$ is the number of $\lor$-gates that are structured by $n$.
The \emph{width} of $(D,T,\rho)$ is the maximal $n$-width for a node of $T$.

One of the advantages of complete \mbox{(d-)}SDNNFs of bounded width is that we can
work with extended v-trees, and then compress their size in linear time, so that
the v-tree becomes non-extended and the size of the circuit is linear in the
number of variables. When doing so, the extended v-tree is modified in a way
that we call a \emph{reduction}:
\begin{definition}
	\label{def:vtree-refinement}
	Let $T$, $T'$ be two extended v-trees over variables $V$. We say that $T'$ is a \emph{reduction of $T$} if, for every internal node $n$ of $T$, there exists an internal node $n'$ of
	$T'$ such that $\LEAVES(T') \cap \LEAVES(T \setminus T_n) \subseteq \LEAVES(T' \setminus T'_{n'})$ and $\LEAVES(T') \cap \LEAVES(T_n) \subseteq \LEAVES(T'_{n'})$.
\end{definition}

We can now show how to compress extended complete \mbox{(d-)}SDNNFs:

\begin{lemma}[\cite{capelli2019tractable}]
	\label{lem:reduce_extended}
        Let $(D,T,\rho)$ be an extended complete \mbox{(d-)}SDNNF of width~$w$ on $n$
        variables. We can compute in linear time a complete \mbox{(d-)}SDNNF
        $(D',T',\rho')$ of width $w$ such that $T'$ is a reduction of $T$ and
        such that
	$|D'|$ is in $O(n \times w^2)$.
\end{lemma}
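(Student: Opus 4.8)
The plan is to take the extended complete (d-)SDNNF $(D,T,\rho)$ and first understand the structure of $T$. Since $(D,T,\rho)$ has width $w$, for each node $n$ of $T$ there are at most $w$ many $\lor$-gates structured by $n$. By the completeness conditions, every gate of $D$ is mapped to some node of $T$: variable gates map to their own leaves, $\lnot$-gates map to the corresponding variable leaf, $\lor$-gates share the node of their inputs' targets, and $\land$-gates whose target is internal have exactly two inputs structured by the two children. The key observation is that a node $n$ of $T$ can only ``matter'' for $D$ if some gate is structured by~$n$ or one of its descendants carries variables that are actually used; many nodes of the extended v-tree may have no gates mapped to the subtree below them in a nontrivial way, or correspond to unlabeled leaves. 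The goal is to contract all such useless parts of $T$ into a reduced v-tree $T'$ that is non-extended (every leaf labeled) and has $O(n)$ internal nodes, while relabeling $\rho$ to a $\rho'$ mapping into $T'$.

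First I would identify, for each gate of $D$, the set of variables $\VARS(g)$ it depends on, and note that by structuredness this is contained in $\LEAVES(T_{\rho(g)})$. The essential nodes of $T$ are those that are the target of some gate, i.e.\ those in the image of~$\rho$. Because the width is $w$ and each $\land$-gate with an internal target has exactly two inputs (condition~7), I would argue that the number of distinct nodes of $T$ that are ``branching points'' for $D$ — nodes having at least two children each containing variables relevant to some single $\land$-gate — is bounded in terms of $n$. The plan is then to build $T'$ by keeping only the essential nodes together with the $n$ leaves corresponding to the genuine variables, suppressing unlabeled leaves and degree-one internal nodes by the usual path-contraction, so that $T'$ is a full binary tree over $V$ with $O(n)$ nodes. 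The reduction property from Definition~\ref{def:vtree-refinement} should follow because contraction preserves, for every internal node $n$ of $T$, the partition of the surviving leaves into those inside $T_n$ and those outside: the witnessing node $n'$ of $T'$ is the contraction image of~$n$, and the inclusions $\LEAVES(T') \cap \LEAVES(T\setminus T_n) \subseteq \LEAVES(T'\setminus T'_{n'})$ and $\LEAVES(T') \cap \LEAVES(T_n) \subseteq \LEAVES(T'_{n'})$ hold by construction.

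Next I would redefine $\rho'$: each gate $g$ previously sent to $n=\rho(g)$ is sent to the contraction image $n'$ of the lowest essential ancestor, and I would verify that the seven completeness conditions still hold for $(D,T',\rho')$. The delicate point is condition~5 (the target of a child $\land$-input must be a child of the $\land$-gate's target): after contraction, a node and its former grandchild may become parent and child, so one must check that the structuredness $\VARS(g_i) \subseteq \LEAVES(T'_{n_i})$ is maintained for the relevant children $n_i$ of $\rho'(g)$. This is where the reduction property does the work, and I would invoke Lemma~\ref{lem:reduce_extended}'s hypothesis that $T'$ is a reduction of~$T$ to transfer structuredness. The width is unchanged because the number of $\lor$-gates structured by any surviving node $n'$ equals the total number structured by the nodes of $T$ that contract into~$n'$; I would argue this sum is still at most~$w$, because nodes collapsing together lie on a single contracted path and carry disjoint sets of $\lor$-gates whose union was already counted at $n'$ in the width of the original object — or, more carefully, bound it by re-examining how $\lor$-gates redistribute.

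The size bound $|D'| = O(n\times w^2)$ is the final, and I expect most labor-intensive, step: having at most $O(n)$ nodes in $T'$ and at most $w$ many $\lor$-gates per node gives $O(n\times w)$ gates of type $\lor$, and each such $\lor$-gate feeds $\land$-gates of which there are at most $w$ per relevant pair of children, yielding the quadratic factor $w^2$ on wires. I would also need to ensure the construction runs in linear time, which amounts to computing the contraction of $T$ and the updated $\rho'$ in a single bottom-up pass, using the width bound to keep each local operation constant-time. The main obstacle will be verifying that contracting the extended v-tree genuinely preserves all completeness conditions simultaneously — in particular reconciling condition~6 (distinct inputs of an $\land$-gate map to distinct nodes) and condition~7 (exactly two inputs when the target is internal) after nodes merge — since a naive contraction could force two inputs of an $\land$-gate onto the same node or leave an $\land$-gate with an internal target but only one input; handling these degeneracies (e.g.\ by introducing an unlabeled constant child or by absorbing the $\land$-gate) is the technical heart of the argument, and is precisely the content attributed to~\cite{capelli2019tractable}.
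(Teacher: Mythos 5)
Your plan---contract the extended v-tree $T$ to its ``essential'' nodes and then merely relabel $\rho$---leaves the circuit $D$ essentially untouched, and this cannot give either the width bound or the size bound. When a path of nodes $n_1,\ldots,n_k$ of $T$, each carrying up to $w$ $\lor$-gates, contracts to a single node of $T'$, all of their $\lor$-gates become structured by that one node, so the width can grow to $k\cdot w$; your justification that their union ``was already counted at $n'$ in the width of the original object'' confuses the width (a \emph{maximum} over nodes of $T$) with a \emph{sum} along a contracted path. The same issue kills the size bound: the output must have size $O(n\times w^2)$ \emph{independently of} $|D|$, and an extended v-tree can have arbitrarily many internal nodes carrying $\lor$-gates even after discarding all-unlabeled subtrees (e.g., a long caterpillar in which every internal node has one unlabeled leaf child and up to $w$ $\lor$-gates); relabeling these gates does not make them go away. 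The paper's proof is instead an iterative \emph{circuit rewiring}: after (i) merging equivalent $\land$-gates, so that each node of $T$ structures at most $w^2$ of them, and (ii) propagating constants, so that no gate is structured by an unlabeled leaf, it removes unlabeled leaves one at a time. When the leaf $l$ with parent $n$ and sibling $n'$ is removed, each $\lor$-gate $g$ structured by $n$ is reconnected directly to the set $I_g$ of non-$\lor$ gates structured by $n'$ that have a path to $g$ (computable in $O(w^2)$ per leaf), and the now-bypassed $\lor$-gates at $n'$ and $\land$-gates at $n$ are \emph{deleted}. It is this collapse of two consecutive $\lor$-levels into one---not tree contraction---that keeps exactly the $\leq w$ surviving $\lor$-gates per merged node and yields the final size $O(n\times w^2)$ in linear time.

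Two further gaps. First, your count ``at most $w$ $\land$-gates per relevant pair of children'' presupposes deduplication: nothing prevents the input from containing arbitrarily many duplicate $\land$-gates with the same pair of inputs, which is why the paper's first preprocessing step (identifying the at most $w\times w$ inequivalent $\land$-gates per node and merging duplicates) is necessary before any counting argument can start. Second, you propose to ``invoke the hypothesis that $T'$ is a reduction of $T$'' in order to transfer structuredness to the contracted tree; but the reduction property is part of the \emph{conclusion} of the lemma---a property you must verify for the $T'$ you construct---so as written that step is circular. In the paper's proof no such transfer is needed, because structuredness is maintained locally at each leaf-elimination step by the explicit rewiring.
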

\begin{proof}
	We present a complete proof, inspired by the proof
        in~\cite[Lemma~4]{capelli2019tractable}.
	As a first prerequisite, we preprocess $D$ in linear time so that the number of $\land$-gates structured by a same node $n$ of $T$ is in $O(w^2)$.
	This can be done, as in~\cite[Observation~3]{capelli2019tractable}, by
        noticing that there can be at most $w^2$ inequivalent $\land$-gates that are structured by a node $n$.
	Indeed, this is clear if $n$ is a leaf, as such an $\land$-gate cannot
        have an input (so there is at most one inequivalent $\land$-gate). If
	$n$ is an internal node with children $n_1$ and $n_2$, then, by item (7) of the definition of being an extended complete SDNNF, every
	$\land$-gate structured by~$n$ has one input among the $\leq w$ $\lor$-gates structured by~$n_1$, and one input among
        the $\leq w$ $\lor$-gates structured by~$n_2$. Hence there are $w^2$ possible inequivalent $\land$-gates.
	We can then merge all the $\land$-gates that are equivalent, and obtain
        a complete \mbox{(d-)}SDNNF where for each node $n$ of the v-tree, at most
        $w^2$ $\land$-gates are structured by $n$.

	The second prerequisite is to  eliminate the gates that are not connected to the output of~$D$, and then to propagate the constants in the circuit (i.e., to evaluate it partially). 
	In other words, eliminate all gates (and their wires) that are not
        connected to the output of $D$, and then repeat the following until
        convergence:
	\begin{itemize}
		\item For every constant $1$-gate $g$ (i.e., an $\land$-gate with no input) and wire $g \rightarrow g'$, if $g'$ is an $\land$-gate then simply remove the wire $g \rightarrow g'$, and if $g'$ is an $\lor$-gate, then replace $g$ by a constant $1$-gate; then remove $g$ and all the wires connected to it.
	
		\item For every constant $0$-gate $g$ (i.e., an $\lor$-gate with
                  no input) and wire $g \rightarrow g'$, if $g'$ is an $\lor$-gate then simply remove the wire $g \rightarrow g'$, and if $g'$
			is an $\land$-gate, then replace $g$ by a constant $0$-gate; then remove $g$ and all the wires connected to it.
	\end{itemize}
	This again can be done in linear time (by a DFS traversal of the
        circuit, for instance), and it does not change the properties of the
        circuit or the captured function. Further, it ensures that
        $\lor$- and $\land$-gates of the
        resulting circuit always have at least one input, or that we get to one
        single constant gate (if the circuit captures a constant Boolean
        function): as this second case of constant functions is uninteresting, we assume
        that we are in the first case.
	We call the resulting circuit $C$. It is clear that $C$ is still structured by
	$T$ (by taking the restriction of~$\rho$ to the gates that have not been removed).

	Having enforced these prerequisites on~$C$, the idea is to eliminate unlabeled leaves $l$ in he v-tree one by one by merging the parent and the sibling of $l$. 
	Formally, whenever we can find in $T$ an unlabeled leaf $l$ with parent $n$ and sibling $n'$, we perform these two steps:
	\begin{enumerate}
		\item Remove from $T$ the leaf $l$ (and its parent edge)
                  noticing that no gate of~$C$ was structured by~$l$ because we
                  propagated the constants in the circuit in our second
                  preprocessing step; then replace the parent~$n$ in~$T$ by its
                  remaining child~$n'$ so that it is again binary and full.

		\item We now need to modify $C$ so that $C$ is an extended
                  complete \mbox{(d-)}SDNNF structured by the new v-tree. There is nothing to do in the case that $n'$ was an unlabeled leaf, because
			then no gate of $C$ was structured by $n'$, or even by $n$ (since we propagated constants). 
			In the case where $n'$ was a variable leaf or an internal node, then, for every $\lor$-gate $g$
			that was structured by~$n$, we compute the set $I_g$ of
                        gates $g'$ that were structured by~$n'$, that are not an $\lor$-gate, 
			and such that there is a path from $g'$ to $g$ in $C$.
                        Thanks to our first preprocessing step, the set~$I_g$
                        can be computed in time $O(w^2)$ as this bounds the
                        number of $\land$-gates
			structured by $n$ and by $n'$.
			Observe that gates in $I_g$ can be either $\land$-gates that were structured by $n'$ (in case $n'$ was an internal node), or  $\lnot$-gates or variable gates (in case $n'$ was a variable leaf).
			Now, remove from $C$ all the $\lor$-gates that were structured by $n'$, all the $\land$-gates that were structured by $n$, and all the edges connected to them. 
			For each $\lor$-gate $g$ that was structured by~$n$, set its new inputs
			to be all the gates in $I_g$. One can check that the resulting circuit captures the same function (this uses the fact that we propagated constants),
			and that determinism cannot be broken in case the original circuit was a d-SDNNF.
                        Moreover, $C$ is now an extended complete \mbox{(d-)}SDNNF structured by the new v-tree $T$.
	\end{enumerate}
	By iterating this process, we will end up with a v-tree $T'$ that is not
        extended, and the resulting circuit will be an equivalent complete
        \mbox{(d-)}SDNNF of width $\leq w$ and size $O(n \times w^2)$.
	The total time is linear since we spend $O(w^2)$ time to eliminate each single unlabeled leaf.
	Moreover it is clear that the final v-tree obtained is a reduction of
        the original v-tree, as the property is preserved by each elimination.
\end{proof}

Like for OBDDs (Lemma~\ref{lem:partial_evaluate_nOBDD}), we will use the fact
that partially evaluating a complete \mbox{(d-)}SDNNF cannot increase the width:
\begin{lemma}
	\label{lem:partial_evaluate_DNNF}
        Let $(D,T,\rho)$ be a complete \mbox{(d-)}SDNNF of width $\leq w$ over
        variables $V$, and let $\phi$ be the  Boolean function that~$D$ captures. Let $X \subseteq V$, and $\nu$ be a valuation of $X$.
        Then there exists a complete \mbox{(d-)}SDNNF $(D',T',\rho')$ of width $\leq w$ on variables $V \setminus X$ computing $\nu(\phi)$ such that $T'$ is a reduction of~$T$.
\end{lemma}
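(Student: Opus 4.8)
The plan is to partially evaluate the complete (d-)SDNNF $(D,T,\rho)$ under $\nu$ by propagating constants, mirroring the strategy already used in the proof of Lemma~\ref{lem:reduce_extended}, and then to argue that the width cannot increase and that the resulting v-tree is a reduction of~$T$. First I would fix each variable gate $g$ with $\var(g) \in X$ to the constant $\nu(g)$, i.e.\ replace $g$ by a constant $1$-gate or $0$-gate (and likewise deal with the corresponding $\NOT$-gates, which become the opposite constant). This yields a circuit $D_0$ that still captures $\nu(\phi)$ and is still structured by $T$, but some of its leaves in $T$ (namely those in $X$) are now unlabeled. The key point is that constant propagation, exactly as described in the second preprocessing step of Lemma~\ref{lem:reduce_extended}, removes constant gates while preserving the property of being an extended complete (d-)SDNNF structured by $T$, preserving determinism, and preserving the captured function.

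The next step is to observe that this turns $(D_0, T, \rho)$ into an \emph{extended} complete (d-)SDNNF on the variable set $V \setminus X$, since the leaves of $T$ corresponding to $X$ are now unlabeled and thus fit the definition of an extended v-tree over $V \setminus X$. Crucially, the width does not increase: constant propagation only removes gates and rewires existing ones, so for every node $n$ of~$T$ the number of $\lor$-gates structured by~$n$ can only decrease, and hence the width of the extended complete (d-)SDNNF remains $\leq w$. This reduces the statement to the already-established machinery: applying Lemma~\ref{lem:reduce_extended} to $(D_0, T, \rho)$ produces a (non-extended) complete (d-)SDNNF $(D',T',\rho')$ of width $\leq w$ and size $O((|V \setminus X|) \times w^2)$, where $T'$ is a reduction of~$T$. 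Since $T'$ is a reduction of the original $T$ (Lemma~\ref{lem:reduce_extended} guarantees exactly this), and since $D'$ still captures $\nu(\phi)$, this is the desired object on variables $V \setminus X$.

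The main obstacle I anticipate is the bookkeeping needed to verify that after fixing the $X$-variables and propagating constants we genuinely retain \emph{all} the structural conditions (1)--(7) of extended completeness, together with determinism in the d-SDNNF case. In particular one must check that an $\lor$-gate does not accidentally acquire an $\lor$-gate child after rewiring, that every $\land$-gate whose structuring node is internal still has exactly two inputs (or is pruned), and that no pair of inputs of an $\land$-gate ends up sharing a structuring node. These are precisely the invariants shown to be preserved in the constant-propagation and leaf-elimination arguments of Lemma~\ref{lem:reduce_extended}, so I would invoke that analysis rather than redo it, and would only need to add the observation that, under determinism, replacing $X$-variable gates by constants cannot make two inputs of an $\lor$-gate simultaneously satisfiable that were not so before—determinism is preserved because restricting a Boolean function can only shrink the set of satisfying valuations of each gate.
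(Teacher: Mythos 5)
Your proposal is correct and follows essentially the same route as the paper's own proof: replace the $X$-leaves of the v-tree by unlabeled leaves, substitute constants for the corresponding variable gates and $\lnot$-gates, propagate constants as in the second prerequisite of Lemma~\ref{lem:reduce_extended} to obtain an \emph{extended} complete \mbox{(d-)}SDNNF computing $\nu(\phi)$, and then invoke Lemma~\ref{lem:reduce_extended} to get the non-extended complete \mbox{(d-)}SDNNF of width $\leq w$ whose v-tree is a reduction of~$T$. Your added checks (width monotonicity under constant propagation, preservation of determinism under restriction) are the details the paper leaves implicit, and they are argued correctly.
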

\begin{proof}
  We replace every leaf $l$ of $T$ that corresponds to a variable $x \in
  X$ by an unlabeled leaf, replace every variable gate $x$ in $D$ by a
  constant $\nu(x)$-gate, replace every $\lnot$-gate with input
  variable $x$ by a constant $(1-\nu(x))$-gate, and then propagate constants as in
  the second prerequisite in the proof of Lemma~\ref{lem:reduce_extended}.
        This yields an \emph{extended} complete \mbox{(d-)}SDNNF computing $\nu(\phi)$.
        We then conclude by applying Lemma~\ref{lem:reduce_extended}.
\end{proof}

\myparagraph{Making nOBDDs and SDNNFs complete.}
Imposing completeness on nOBDDs or SDNNFs is in fact not too restrictive, as we
can assume that OBDDs and SDNNFs are complete up to multiplying the size by
the number of variables:

\begin{lemma}
	\label{lem:complete_obdd_sdnnf}
        For any nOBDD (resp., SDNNF) $D$ on variables $V$, there exists an
        equivalent complete nOBDD (resp., SDNNF) of size at most $(\card{V}+1)
        \times \card{D}$.
\end{lemma}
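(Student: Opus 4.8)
The plan is to treat the two formalisms in parallel: in each case I build the complete object by inserting ``dummy'' pieces that fill in the variables (resp.\ v-tree nodes) that the original object skips, replacing each edge (resp.\ wire) of $D$ by a chain of length at most $\card{V}$, which immediately yields the size bound $(\card{V}+1)\times\card{D}$.

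First, the nOBDD case. Fix the variable order $\mathbf{v}=v_1,\dots,v_n$ structuring $D$. A \emph{dummy} test of a variable $v_i$ is a decision node labeled $v_i$ whose two outgoing edges ($0$ and $1$) lead to the same successor; inserting it preserves the captured function and keeps the order. The idea is, for every edge $(u,w)$ of $D$, to insert on it a chain of dummy tests for exactly the variables of $\mathbf{v}$ lying strictly between the ``exit position'' of $u$ and the ``entry position'' of $w$, where a variable node labeled $v_i$ has exit/entry position $i$, the root has exit position $0$, and a sink has entry position $n+1$. Carried out on edges between these ``positioned'' nodes, this makes every root-to-sink path test every variable, and each edge is replaced by at most $n$ dummy nodes.

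The one delicate point is what position to assign to an $\lor$-node, since different root-to-$\lor$ paths may have read different subsets of variables, so a shared outgoing edge cannot be completed uniformly. The main obstacle is therefore to \emph{synchronize} the $\lor$-nodes. I would resolve this by assigning to each $\lor$-node $z$ a level $\ell(z)\in\{0,\dots,n\}$: since along any root-to-sink path the indices of the variables read strictly increase, the largest index read on any path just before reaching $z$ is strictly smaller than the smallest index read just after leaving $z$, so such an $\ell(z)$ consistent with all incoming and outgoing edges exists (computed by a topological pass over the sub-DAG of $\lor$-nodes). I then insert dummy chains on each incoming edge of $z$ to bring it up to level $\ell(z)$ and on each outgoing edge to continue from $\ell(z)+1$; after this every $\lor$-node has a well-defined position and the edge-wise completion above applies. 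This construction also leaves unambiguity intact (the dummy tests are deterministic), so it specializes to complete uOBDDs as well.

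Second, the SDNNF case. Here the analogue of ``skipping a variable'' is an $\land$-gate $g$ structured by a v-tree node $\rho(g)$ whose inputs are not cleanly an $\lor$-gate structured by each child of $\rho(g)$, or a gate whose variables live strictly below $\rho(g)$ in $T$. I would enforce the seven completeness conditions top-down: add an $\lor$-gate above the output if needed; then for each $\land$-gate $g$ with $\rho(g)$ an internal node $n$ with children $n_l,n_r$, reroute each input $g'$ with $\VARS(g')\subseteq\LEAVES(T_{n_l})$ through a chain of dummy $\land$-gates (conjoined with a constant-true gate on the $n_r$-side, and sandwiched by trivial single-input $\lor$-gates to respect the $\lor/\land$ alternation of the completeness conditions) running down the path of $T$ from $n_l$ to the node actually structuring $g'$, and symmetrically for the $n_r$-side. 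Each such chain has length at most the depth of $T$, which is at most $\card{V}$, giving again a per-wire blow-up of at most $\card{V}$. These dummy gates are deterministic and decomposable, so structuredness and (when present) determinism are preserved, and the construction specializes to d-SDNNF and dec-SDNNF. Finally, the size accounting is identical in both cases: every original edge or wire of $D$ is expanded into a chain through at most $\card{V}$ freshly created nodes or gates, and no other edges are added, so the resulting complete object has size at most $(\card{V}+1)\times\card{D}$. I expect the $\lor$-node synchronization (and its SDNNF counterpart of filling an $\land$-gate's missing side with trivial structure while keeping the v-tree alignment and determinism) to be the only genuinely subtle step; the remaining verifications — equivalence of the captured function, structuredness by $\mathbf{v}$ or $T$, and preservation of (un)ambiguity — are routine.
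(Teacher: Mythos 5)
Your construction is, at bottom, the same one the paper relies on: the paper proves this lemma by deferring to its general completion result for unstructured classes (Lemma~\ref{lem:complete_circuit}), whose proof is in turn the classical dummy-test insertion for decision diagrams~\cite{wegener2000branching} and smoothing for DNNFs~\cite{darwiche2001tractable}, together with the observation that these constructions preserve structuredness. Your handling of the $\lor$-node synchronization is correct and is indeed the one real subtlety in the BDD case: a consistent level $\ell(z)$ exists because in a DAG any path into $z$ composes with any path out of $z$ into a root-to-sink path, so the largest index read before $z$ is strictly smaller than the smallest index read after $z$, and a topological pass (e.g., taking the maximum incoming position) makes all insertions consistent.

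Two points, however, do not hold as written. First, the size accounting is flawed: it is not true that ``no other edges are added''. Each dummy decision node necessarily carries \emph{two} outgoing edges (labeled $0$ and $1$, even when they share their target), so an edge on which you insert $k \leq \card{V}$ dummies becomes a chain with $1+2k$ edges, and your construction as described yields roughly $(2\card{V}+1)\times\card{D}$ edges, not $(\card{V}+1)\times\card{D}$; the SDNNF case is worse, since each inserted $\land$/$\lor$ pair contributes several wires. As the lemma's statement is precisely this bound, you need either a sharper accounting (e.g., sharing dummy chains among all edges entering a given node, and being careful about nodes versus edges), or to settle for $O(\card{V}\times\card{D})$, which is all the paper actually uses. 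Second, in the SDNNF case a literal ``constant-true gate'' on the $n_r$-side violates the completeness conditions: an $\land$-gate structured at an internal v-tree node must have exactly two inputs, each an $\lor$-gate structured at a child of that node, so the filler must be a full constant-true subcircuit spanning the whole subtree $T_{n_r}$, built recursively down to its leaves from $x \lor \lnot x$ gadgets --- this is exactly the smoothing construction, and these subcircuits can be shared across the whole circuit so that they only cost $O(\card{V})$ in total. With these two repairs your proof goes through and coincides with the paper's argument.
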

\begin{proof}
  The result will follow from a more general completion result on unstructured
  classes given later in the paper (Lemma~\ref{lem:complete_circuit}); it is
  straightforward to observe that applying the constructions of that lemma yield
  structured outputs when the input representations are themselves structured.
\end{proof}

\section{Upper Bound}
\label{sec:result}
In this section we study how to
compile Boolean circuits to d-SDNNFs (resp., uOBDDs), parameterized by the
treewidth (resp., pathwidth) of the input circuits. 
We first present our results in Section~\ref{sec:results_present}, then
show some examples of applications in Section~\ref{sec:applications},
before providing full proofs in Section~\ref{sec:proof}.

\subsection{Results}
\label{sec:results_present}

To present our upper bounds, we first review the 
independent result that was recently 
shown by Bova and Szeider~\cite{bova2017circuit} on compiling bounded-treewidth circuits to d-SDNNFs:

\begin{theorem}[{\cite[Theorem~3 and Equation~(22)]{bova2017circuit}}]
  \label{thm:bovaszeider}
  Given a Boolean circuit $C$ having \mbox{treewidth~$\leq k$}, there exists an
  equivalent d-SDNNF of size $O(f(k) \times |C_\var|)$, where $f$~is doubly
  exponential.
\end{theorem}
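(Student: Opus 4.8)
The plan is to run a bottom-up dynamic program over a width-$k$ tree decomposition of the gate graph of $C$, materialising each table cell as a gate of the output d-SDNNF. Since the statement is purely existential, I may fix any tree decomposition $T$ of the hypergraph $(G,W')$ underlying $C$ of width $k$ and, by Lemma~\ref{lem:friendly}, assume it is $g_\out$-friendly. From $T$ I read off an \emph{extended} v-tree over $C_\var$: its internal skeleton mirrors $T$, each variable gate is placed at a single labelled leaf owned by one subtree (its occurrences forming a connected subtree by the tree-decomposition axiom), and bags carrying no variable gate give rise to unlabelled leaves. I will build a complete extended d-SDNNF structured by this v-tree and then clean it up with Lemma~\ref{lem:reduce_extended}.

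The key is the choice of dynamic-programming state, which I take rich enough to force determinism for free. For a bag $b$ and a valuation $\nu$ of the variable gates occurring in $T_b$, let $R_b(\nu) \subseteq \{0,1\}^b$ be the set of assignments $f$ to the gates of $b$ that extend to an evaluation of all gates of $T_b$ which (i) agrees with $\nu$ on variable gates, (ii) agrees with $f$ on $b$, and (iii) satisfies every constraint ``$g = \mu(g)(\text{inputs of }g)$'' all of whose inputs already lie in $T_b$. For each possible value $S \subseteq \{0,1\}^b$ I introduce an $\lor$-gate computing $\psi_{b,S} \defeq \{\nu : R_b(\nu) = S\}$. At an internal bag $b$ with children $b_l,b_r$, the value $R_b(\nu)$ is determined by $R_{b_l}$, $R_{b_r}$ and the constraints that newly become checkable at $b$ (using $b \subseteq b_l\cup b_r$), so I obtain $\psi_{b,S}$ as a disjunction, over the pairs $(S_l,S_r)$ that combine into $S$, of conjunctions $\psi_{b_l,S_l} \land \psi_{b_r,S_r}$, each $\land$-gate being structured by the v-tree node attached to $b$. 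At the root bag $\{g_\out\}$ I collect the gates $\psi_{\{g_\out\},S}$ with $(g_\out \mapsto 1) \in S$.

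It remains to verify the two defining properties. Decomposability holds because each variable gate is placed at a single v-tree leaf, owned by exactly one child subtree, while the value of a variable gate lying in the separator $b$ is recorded in the states and merely matched for consistency rather than re-read; hence the two conjoined functions depend on disjoint sets of variable gates, and the extended v-tree structures every $\land$-gate as required. Determinism is the subtle point, and is precisely why I indexed states by \emph{sets} $S$ rather than single assignments: since $\nu\mapsto R_{b_l}(\nu)$ and $\nu\mapsto R_{b_r}(\nu)$ are \emph{functions}, the events $\psi_{b_l,S_l}$ (resp.\ $\psi_{b_r,S_r}$) for distinct $S_l$ (resp.\ $S_r$) are mutually exclusive, so distinct combined pairs cannot both be satisfied by one valuation and every disjunction is deterministic. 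The main obstacle is exactly getting this invariant right: because a gate of $b$ may still receive inputs from \emph{above} $b$ in $T$, its value is not determined by $\nu$ alone, so indexing by a single guessed assignment would break mutual exclusivity — carrying the whole consistent set $S$ repairs this.

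Finally, the number of admissible states at a bag is at most the number of subsets of $\{0,1\}^b$, i.e.\ $2^{2^{|b|}} \leq 2^{2^{k+1}}$, so each v-tree node carries at most $f(k) \defeq 2^{2^{k+1}}$ disjunction gates and the construction yields a complete extended d-SDNNF of width $\leq f(k)$ with $f$ doubly exponential. Applying Lemma~\ref{lem:reduce_extended} turns the extended v-tree into an ordinary one and compresses the circuit to an equivalent complete d-SDNNF of size $O(|C_\var| \times f(k)^2)$, which is of the announced form $O(f(k) \times |C_\var|)$ after absorbing the polynomial factor into $f$. As only existence is claimed, no bound on the running time is needed, and it is precisely this set-indexing of states — replaced in the authors' sharper result by a finer ``substantiation'' bookkeeping — that accounts for the doubly exponential dependence on $k$.
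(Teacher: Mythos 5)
Your high-level plan (a bottom-up dynamic program over a $g_\out$-friendly tree decomposition, one $\lor$-gate per reachable state, determinism obtained because the state is a \emph{function} of the valuation, then Lemma~\ref{lem:reduce_extended} to compress) has the right shape, and the size accounting is fine; but the state you chose does not support the combination step, so there is a genuine gap. The claim that ``the value $R_b(\nu)$ is determined by $R_{b_l}$, $R_{b_r}$ and the constraints that newly become checkable at $b$'' is false. Counterexample: let $C$ have variable gates $g_1,g_2$ and output $g=g_1\lor g_2$, and take the friendly decomposition with root $b=\{g\}$ whose left child $b_l=\{g\}$ has below it a bag $\{g_1,g\}$ (where $g_1$ is introduced and then forgotten), and symmetrically $g_2$ below the right child $b_r=\{g\}$. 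At $b_l$ the constraint of $g$ is not checkable, since its input $g_2$ does not lie in $T_{b_l}$; hence both assignments of $g$ extend, whatever $\nu(g_1)$ is, so $R_{b_l}(g_1\mapsto 0)=R_{b_l}(g_1\mapsto 1)=\{0,1\}^{\{g\}}$, and likewise on the right. Yet at the root, $R_b(g_1\mapsto 0,g_2\mapsto 0)=\{g\mapsto 0\}$ while $R_b(g_1\mapsto 1,g_2\mapsto 0)=\{g\mapsto 1\}$. Thus two valuations that lie in the same child classes $\psi_{b_l,S_l}\land \psi_{b_r,S_r}$ must be routed to different parent classes $\psi_{b,S}$: the intended wiring $\psi_{b,S}=\bigvee \psi_{b_l,S_l}\land\psi_{b_r,S_r}$ is not well defined, and the circuit you build is not equivalent to $C$.

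What your state forgets is precisely whether a strong value guessed for a gate of $b$ has already been \emph{justified} by one of its inputs occurring below $b$ --- information that is lost when that input is forgotten before all inputs of the gate have been seen. This is exactly the role of the unjustified/suspicious sets and the connectibility condition in the paper's construction of its stronger, singly exponential, constructive Theorem~\ref{thm:upper_bound} (the paper does not reprove the cited statement itself; it derives it a fortiori from that construction). So this bookkeeping is not, as your closing paragraph suggests, merely the refinement that improves doubly exponential to singly exponential: some form of it is necessary for soundness in the first place. Your proof can be repaired while keeping the powerset idea and the doubly exponential budget, e.g.\ by taking $R_b(\nu)\subseteq \{0,1\}^b\times 2^b$ to record, for each extendable bag assignment, which of its gates carrying strong values are justified within $T_b$, and by enforcing justification (and respect of strong values) at the moment a gate is forgotten; with that enrichment your recurrence, your determinism-for-free argument, and the bound $2^{2^{O(k)}}$ states per bag all go through.
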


Their result has two drawbacks:
(i) it has a doubly exponential
dependency on the width; and (ii) it is nonconstructive, because
\cite{bova2017circuit} gives no time bound
on the computation, 
leaving open the
question of 
effectively compiling 
bounded-treewidth circuits to d-SDNNFs.
The nonconstructive aspect can easily be tackled by encoding in linear time the
input circuit $C$ into a relational instance of same treewidth, and then
use~\cite[Theorem~6.11]{amarilli2016tractable} to construct in linear time a
d-SDNNF representation of the provenance on $I$ of a fixed MSO formula
describing how to evaluate Boolean circuits
(see the conference
version of this paper~\cite{amarilli2018connecting} for more details).
This ``naïve'' approach computes a d-SDNNF in 
time $O(|C| \times f(k))$, but where $f$ is a superexponential function that does not address the first drawback. We show that we can get $f$ to be \emph{singly} exponential.

\myparagraph{Treewidth bound.}
Our main upper bound result addresses both drawbacks and shows that we can compile
in time linear in the circuit and singly exponential in the treewidth.
Our proof is independent from~\cite{bova2017circuit}.
Formally, we show:

\begin{theorem}
\label{thm:upper_bound}
  There exists a function $f(k)$ that is in $O(2^{(4+\epsilon)k})$ for any $\epsilon \geq 0$ such that the following holds.
  Given as input a Boolean circuit $C$ and tree decomposition $T$ of width $\leq k$ of $C$,
	we can compute a complete extended d-SDNNF equivalent to $C$ of width $\leq 2^{2 (k+1)}$
	in time $O\left(|T| \times f(k)\right)$. 
\end{theorem}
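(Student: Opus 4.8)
The plan is to process the given tree decomposition bottom-up, after first turning it into a $g_\out$-friendly decomposition using Lemma~\ref{lem:friendly} (which preserves the width~$k$ and runs in time $O(k\,|T|)$). I take the friendly tree decomposition itself as the (extended) v-tree of the d-SDNNF: a leaf bag carrying a variable gate $x$ becomes the labeled leaf $x$, while an empty leaf bag or a leaf bag carrying a non-variable gate becomes an \emph{unlabeled} leaf, which is exactly why the output is an \emph{extended} SDNNF. Writing $G_b$ for the set of gates occurring in the subtree $T_b$, the connectivity condition ensures that every input of a forgotten gate (one in $G_b\setminus\dom(b)$) already lies in~$G_b$, and that the variable gates owned by the two children subtrees are disjoint; the latter is what makes the conjunctions I build decomposable and structured by the v-tree.

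The core of the construction is a dynamic-programming table indexed by \emph{states}. A state at a bag~$b$ is a pair $(\tau,s)$ where $\tau:\dom(b)\to\{0,1\}$ commits each frontier gate to a value and $s:\dom(b)\to\{0,1\}$ is a \emph{substantiation flag} recording, for each frontier gate~$g$, whether the committed value $\tau(g)$ is already forced by the inputs of~$g$ lying in~$G_b$ (a witnessing $1$-input for an $\lor$-gate committed to~$1$, a witnessing $0$-input for an $\land$-gate committed to~$0$, and all inputs present and matching in the remaining cases). For each state I build one $\lor$-gate $\gamma_{b,\tau,s}$ structured by~$b$, whose intended semantics is: a valuation $\nu$ of $G_b\cap C_\var$ satisfies $\gamma_{b,\tau,s}$ iff every forgotten gate of $G_b$ evaluates consistently under~$\nu$ and, for every frontier gate, its below-inputs are consistent with $(\tau(g),s(g))$. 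There are at most $2^{|\dom(b)|}$ choices for~$\tau$ and as many for~$s$, hence at most $4^{|\dom(b)|}\leq 2^{2(k+1)}$ such $\lor$-gates per node, which is exactly the claimed width bound. At a leaf these gates are immediate (a variable gate, its negation, or a trivial constant). At an internal bag~$b$ with children $b_l,b_r$ I form an $\land$-gate for each pair $(\sigma_l,\sigma_r)$ of children states that \emph{agree} on the shared frontier gates and whose flags combine correctly, additionally \emph{requiring every gate forgotten at~$b$ to be fully substantiated} (which is possible precisely because all of its inputs now lie in~$G_b$); I then let $\gamma_{b,\tau,s}$ be the disjunction over all such $\land$-gates that map to the state $(\tau,s)$. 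These $\land$-gates are decomposable and structured by~$b$ with inputs structured by $b_l,b_r$, and alternating the $\lor$- and $\land$-levels in this way yields a \emph{complete} extended d-SDNNF. Reading the answer off the root bag $\{g_\out\}$ (where $g_\out$ has all its inputs below and is thus forced) as the disjunction of the states committing $g_\out$ to~$1$ produces a circuit equivalent to~$C$.

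The step I expect to be the main obstacle — and the reason the substantiation flag is needed at all — is proving \emph{determinism}, i.e.\ that each $\lor$-gate $\gamma_{b,\tau,s}$ has mutually exclusive inputs. The states do \emph{not} partition the valuations: an $\lor$-gate with no witness below can be consistently committed either to~$0$ or to $1$-with-a-witness-above, so both states are satisfied by the same below-valuation, and determinism cannot come from a global partition argument. It must instead be argued locally. The key observation is that if some $\nu$ satisfied two distinct input conjunctions of a single $\gamma_{b,\tau,s}$, the corresponding children states would differ on some gate~$g$; such a $g$ cannot lie in $\dom(b)$, since both children states are forced by $(\tau,s)$ to agree with it there, so~$g$ must be forgotten at~$b$; but a forgotten gate has all its inputs in~$G_b$ and is required to be fully substantiated, so its value and flag are \emph{forced} by~$\nu$ and cannot differ between the two states, a contradiction. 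The same forcing argument at the root yields both top-level determinism and correctness. What remains is routine: checking that the combination rule preserves the stated invariant (a case analysis over gate types and over whether each input is below, frontier, or above), verifying the completeness conditions of Section~\ref{sec:completewidth}, and bounding the running time. For the latter, combining two children at~$b$ ranges over pairs of states, i.e.\ over flagged valuations of $\dom(b_l)\cup\dom(b_r)$, of which there are at most $4^{\,2(k+1)}=2^{4(k+1)}$; hence the total time is $O(|T|\times f(k))$ with $f(k)=O(2^{(4+\epsilon)k})$, the extended complete d-SDNNF being produced directly, while Lemma~\ref{lem:reduce_extended} is what later passes to the non-extended, linear-size form used in Corollary~\ref{cor:upper_bound}.
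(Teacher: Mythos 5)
Your construction is, up to bookkeeping, exactly the paper's: the friendly tree decomposition (Lemma~\ref{lem:friendly}) serves as the extended v-tree with responsible leaves, your states $(\tau,s)$ correspond to the paper's pairs $(\nu,S)$ of a $(C,b)$-almost-evaluation and a set of suspicious (unsubstantiated) gates, your requirement that forgotten gates be fully substantiated is the paper's connectibility condition (Definition~\ref{def:connectible}), and the width bound $2^{2(k+1)}$ and the $2^{4(k+1)}$-pairs time analysis coincide with the paper's. The gap is in the determinism argument, which you rightly single out as the crux but then settle with a local two-step argument that does not hold as stated.

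First, it is not true that two distinct pairs of children states feeding the same $\lor$-gate $\gamma_{b,\tau,s}$ must agree on the gates of $b$: they are forced to agree on the committed \emph{values} there, but not on the \emph{flags}. Under any workable combination rule (in particular the paper's Definition~\ref{def:result}), the parent flag of a gate $g \in b$ is the disjunction of ``$g$ is justified inside $b$ by $\tau$'' with the child flags, so when $g$ is justified inside $b$, a child state with $s_l(g)=0$ and one with $s_l(g)=1$ map to the same $(\tau,s)$. Hence a differing gate need not be forgotten at $b$, and this case requires its own argument. Second, even for a gate forgotten at $b$, the claim that full substantiation makes its value and flag ``forced by $\nu$'' is not a one-step fact: substantiation certifies the committed value of $g$ only relative to the \emph{committed} values of its inputs, and those inputs are themselves frontier gates carrying possibly unsubstantiated guesses, or forgotten gates one level further down, recursively. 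The value of a forgotten gate is pinned down by $\nu$ only through an inductive descent in the circuit DAG that bottoms out at variable gates. That descent is precisely the content of the paper's determinism proof: it defines an operator $\theta$ which, from any gate on which two traces disagree (in value \emph{or} in suspicion status), produces an input of that gate on which they again disagree (using Lemmas~\ref{lem:behaviour}, \ref{lem:suspicious_not_innocent_propagate} and~\ref{lem:innocent_or_suspect_but_not_innocent}), yielding arbitrarily long paths of distinct gates and contradicting acyclicity. Both the flag case and this descent are missing from your sketch; filling them in is not routine case analysis but essentially reproduces the paper's proof. (A smaller point: at the root you must take only the state committing $g_\out$ to~$1$ \emph{and} substantiated --- the paper's $G^{\nu,\emptyset}_{b_\root}$ --- since the state guessing $1$ without substantiation is satisfiable but wrong.)
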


This result assumes that the tree decomposition is provided as input; but we can
instead use Theorem~\ref{thm:bod-precise} to obtain it. We can also apply 
Lemma~\ref{lem:reduce_extended} to the resulting circuit to get a proper
(non-extended) d-SDNNF and reduce its size so
that it only depends on the number of variables of the input circuit~$C$ (i.e.,
$\card{C_\var}$ rather than $\card{C}$),
which allows us to truly generalize Theorem~\ref{thm:bovaszeider}.
Putting all of this together, we get:

\begin{corollary}
	\label{cor:upper_bound}
There exists a constant $c \in \NN$ such that the following holds.
Given as input a Boolean circuit $C$ of treewidth $\leq k$,
	we can compute
	in time $O\left(|C| \times 2^{ck}\right)$
        a complete d-SDNNF equivalent to $C$ of width $O(2^{ck})$
        and size $O(\card{C_\var} \times 2^{ck})$.
\end{corollary}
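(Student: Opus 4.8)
The plan is to assemble the corollary from three ingredients that are already available: the approximation algorithm of Theorem~\ref{thm:bod-precise}, the core construction of Theorem~\ref{thm:upper_bound}, and the compression step of Lemma~\ref{lem:reduce_extended}. We are only promised a circuit $C$ of treewidth $\leq k$, with no witnessing decomposition, so the first step is to produce one. Recalling that $\tw(C)$ is by definition the treewidth of the hypergraph $(G,W')$ associated to $C$, I would first discard in linear time every gate not connected to the output of $C$; since each surviving gate is then incident to some wire, the hypergraph $(G,W')$ has $O(|C|)$ vertices. Invoking Theorem~\ref{thm:bod-precise} on $(G,W')$ with parameter $k$ then runs in time $O(|C| \times 2^{c_0 k})$ and, as $\tw(C) \leq k$, outputs a tree decomposition $T$ of width $\leq 5k+4$ with $|T| = O(|C|)$.

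Next I would feed $C$ together with $T$ into Theorem~\ref{thm:upper_bound}, instantiated with the width bound $k' \defeq 5k+4$. This yields a complete extended d-SDNNF $D$ equivalent to $C$ of width $\leq 2^{2(k'+1)} = 2^{10(k+1)}$, computed in time $O(|T| \times f(k'))$ where $f(k') = O(2^{(4+\epsilon)(5k+4)})$ remains singly exponential in $k$. Since $|T| = O(|C|)$, this step runs in time $O(|C| \times 2^{c_1 k})$ for a suitable constant $c_1$, and in particular $|D| = O(|C| \times 2^{c_1 k})$ as well.

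Finally, the output $D$ is only an \emph{extended} d-SDNNF, whose v-tree may carry unlabeled leaves and whose size is controlled in terms of $|C|$ rather than $\card{C_\var}$. To fix both points I would apply Lemma~\ref{lem:reduce_extended}, which in time linear in $|D|$ turns $D$ into an equivalent proper complete d-SDNNF $D'$ of the same width $w = 2^{10(k+1)}$ and of size $O(\card{C_\var} \times w^2) = O(\card{C_\var} \times 2^{20(k+1)})$. Choosing $c$ large enough to dominate all three exponents, the resulting width is $O(2^{ck})$, the size is $O(\card{C_\var} \times 2^{ck})$, and the total running time — the sum of the three steps — is $O(|C| \times 2^{ck})$, as claimed.

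I expect no genuine conceptual obstacle here, since the heavy lifting is done by Theorem~\ref{thm:upper_bound}; the only real task is bookkeeping to keep everything \emph{singly} exponential. The point requiring the most care is the $5$-approximation factor of Theorem~\ref{thm:bod-precise}: the width produced by Theorem~\ref{thm:upper_bound} is exponential in $5k+4$ rather than $k$, and Lemma~\ref{lem:reduce_extended} squares this width into the size bound. Each operation preserves single exponentiality, but one must check they compose without silently producing a double exponential, and that Lemma~\ref{lem:reduce_extended} indeed preserves determinism and equivalence, so that $D'$ is a bona fide d-SDNNF computing the same function as $C$.
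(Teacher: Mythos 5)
Your proposal is correct and follows exactly the route the paper itself takes: compute a tree decomposition of width $\leq 5k+4$ via Theorem~\ref{thm:bod-precise}, run the construction of Theorem~\ref{thm:upper_bound} on it, and compress the resulting extended d-SDNNF with Lemma~\ref{lem:reduce_extended} to get size $O(\card{C_\var} \times 2^{ck})$. The bookkeeping on the $5k+4$ approximation factor and the width-squaring in the compression step is handled correctly, so nothing is missing.
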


However, Corollary~\ref{cor:upper_bound} is mainly of theoretical interest,
since the constant hidden in Theorem~\ref{thm:bod-precise} is huge.
In practice, one would first use a heuristic to compute a tree decomposition, and
then use our construction of Theorem~\ref{thm:upper_bound} on that decomposition.
We will prove Theorem~\ref{thm:upper_bound}
in Section~\ref{sec:proof}.

\myparagraph{Pathwidth bound.}
A by-product of our construction is that, in the special case where we start with a path decomposition, it turns out that the d-SDNNF computed is in fact an uOBDD.
The compilation of bounded-pathwidth Boolean circuits to OBDDs had already been studied in~\cite{jha2012tractability,amarilli2016tractable}: 
Corollary~2.13 of~\cite{jha2012tractability} shows that a circuit of pathwidth $\leq k$ has an equivalent OBDD of width $\leq 2^{(k+2)2^{k+2}}$, and \cite[Lemma~6.9]{amarilli2016tractable} justifies that the transformation can be made in polynomial time.
Our second upper bound result is that, by using uOBDDs instead of OBDDs, we can get a singly exponential dependency:

\begin{theorem}
\label{thm:upper_bound_pw}
  There exists a function $f(k)$ that is in $O(2^{(2+\epsilon)k})$ for any $\epsilon \geq 0$ such that the following holds.
  Given as input a Boolean circuit $C$ and path decomposition $P$ of width $\leq k$ of $C$,
	we can compute a complete uOBDD equivalent to $C$ of width $\leq 2^{2 (k+1)}$
	in time $O\left(|P| \times f(k)\right)$. 
\end{theorem}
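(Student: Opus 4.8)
The plan is to reuse verbatim the construction underlying Theorem~\ref{thm:upper_bound} and to observe that, when the input decomposition is path-shaped, the complete extended d-SDNNF it produces is already a presentation of a complete uOBDD. First I would apply Lemma~\ref{lem:friendly} to turn $P$ into a $g_\out$-friendly full binary tree decomposition of the same width~$k$; crucially, this transformation preserves the path-like shape, since every internal bag still has a single child lying on the original path, and fullness is restored only by attaching, as a second child, a leaf bag with empty (or singleton) label. I would then run the construction of Theorem~\ref{thm:upper_bound} on this decomposition, obtaining a complete extended d-SDNNF $(D,T,\rho)$ equivalent to~$C$ of width $\leq 2^{2(k+1)}$.

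The core of the argument is that $T$ is right-linear in this case. Because each internal bag of the friendly decomposition has one ``real'' child on the path and one trivial empty child, the v-tree built by the construction chains along the path, introducing at most one variable per level while the rest of the structure descends linearly; in the terminology of Section~\ref{sec:preliminaries}, every right child of~$T$ is a leaf. A complete d-SDNNF structured by a right-linear v-tree is exactly a complete uOBDD, which is the converse of the correspondence exploited in Proposition~\ref{prp:obdd-to-sdnnf}: the $\lor$-gates structured by the successive internal nodes of~$T$ become the decision nodes of the uOBDD (ordered according to the order in which the right-child variable leaves appear along the chain), the decomposable $\land$-gates encode the two outgoing branches, and the determinism of the d-SDNNF translates precisely into unambiguity of the uOBDD. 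Completeness of the d-SDNNF yields completeness of the uOBDD, and the width measures coincide, so the uOBDD has width $\leq 2^{2(k+1)}$.

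It remains to account for the improved time bound. In the treewidth construction, the dominating cost when processing an internal bag is to range over all \emph{pairs} of states inherited from its two children, which produces the quadratic blow-up and the $O(2^{(4+\epsilon)k})$ factor. For a path decomposition, however, one of the two children of each internal bag is the trivial empty leaf contributed by Lemma~\ref{lem:friendly}, carrying a single state; hence the per-bag cost collapses from combining pairs of states to handling single states, and the factor drops to $f(k) \in O(2^{(2+\epsilon)k})$. Since the number of bags is $O(|P|)$, the total running time is $O(|P| \times f(k))$ as claimed.

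I expect the main obstacle to be the bookkeeping needed to certify that the v-tree really is right-linear and that reading $(D,T,\rho)$ as a uOBDD preserves all required invariants — in particular, that the unlabeled leaves arising from the empty bags of the friendly decomposition do not obstruct completeness or the variable ordering, and that determinism maps exactly onto unambiguity. This hinges on the internal details of the construction of Theorem~\ref{thm:upper_bound}, so the cleanest route is to inspect that construction directly and verify the path-specialization step by step, rather than arguing purely from the black-box statement.
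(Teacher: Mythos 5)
Your overall route is exactly the paper's: adapt Lemma~\ref{lem:friendly} so that the path decomposition becomes a $g_\out$-friendly \emph{right-linear} tree decomposition, run the construction of Theorem~\ref{thm:upper_bound} on it, observe that the resulting structure is right-linear, and note that the per-bag cost drops to $O(2^{(2+\epsilon)k})$ because one child of every internal bag is a leaf bag carrying $O(1)$ states. Your time-bound argument and the right-linearity of the friendly decomposition are both correct and match the paper.

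However, there is a genuine gap at the crucial conversion step. The construction of Theorem~\ref{thm:upper_bound} outputs an \emph{extended} complete d-SDNNF: its v-tree mirrors the skeleton of the friendly decomposition, so every leaf bag that is not responsible for a variable --- in particular all the empty padding bags that Lemma~\ref{lem:friendly} adds to restore fullness --- becomes an \emph{unlabeled} leaf of the v-tree. On such an extended v-tree your central claim (``a complete d-SDNNF structured by a right-linear v-tree is exactly a complete uOBDD'') is false: at an internal node whose right child is unlabeled there is no variable for the would-be decision node to test, and completeness of the uOBDD cannot even be formulated. You explicitly flag this as the main obstacle, but you never resolve it, and resolving it is not mere bookkeeping. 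The paper's proof first applies the compression result (Lemma~\ref{lem:reduce_extended}) to eliminate unlabeled leaves, checking by inspection of that lemma's proof that the reduced v-tree is still right-linear, and exploits the constant propagation performed there to ensure every $\land$-gate of the reduced circuit has the form $g' \land x$ or $g' \land \lnot x$; only then does it build the uOBDD by an explicit top-down induction, merging each $\lor$-over-$\land$ layer into a variable-labeled decision node whose two branches are disjunctions of recursively built uOBDDs (these disjunctions are realized by the nBDD's $\lor$-nodes, and determinism of the d-SDNNF gives unambiguity). The paper also treats separately the corner case where $C$ computes a constant function, which your proposal omits; without it, constant propagation can leave inputless $\land$- or $\lor$-gates that break the inductive conversion.
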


While we do not know if the doubly exponential dependence on $k$ in~\cite{jha2012tractability} is tight for OBDDs, we will show in Section~\ref{sec:structured} that the singly exponential dependence for uOBDDs is indeed tight.

\subsection{Applications}
\label{sec:applications}

Theorem~\ref{thm:upper_bound} implies several consequences for bounded-treewidth
circuits. The first one deals with \emph{probability computation}: we are given
a \emph{probability valuation}~$\pi$ mapping each variable $g \in C_\var$ to a 
probability that~$g$ is true (independently from other variables),
and we wish to compute the probability
$\pi(C)$ 
that $C$ evaluates to true under~$\pi$, assuming that arithmetic operations (sum
and product) take unit time.
More formally, we define
        the \emph{probability} $\pi(\nu)$ of a valuation~$\nu: C_\var \to \{0,1\}$ as 
        \[\pi(\nu) \colonequals \left(\prod_{g \in C_\var,~\nu(g)=1}
        \pi(g)\right) \left(
        \prod_{g \in C_\var,~\nu(g)=0} (1-\pi(g))\right).\]
	The \emph{probability $\pi(C)$ of Boolean circuit $C$ with probability
	assignment~$\pi$} is then the total probability of the valuations that satisfy~$\phi$. Formally: 
	\[\pi(C) \colonequals \sum_{\nu \text{~satisfies~} \phi}
        \pi(\nu).\]

	When $\pi(x)=1/2$ for every variable, the probability computation
        problem simplifies to the \emph{model counting problem}, i.e., counting the number of satisfying valuations, noted $\#C$.
	Indeed, in this case we have $\#C = 2^{|C_\var|} \times \pi(C)$.
Hence, the probability computation problem is \#P-hard for arbitrary
circuits. However, it is tractable for deterministic decomposable circuits \cite{darwiche2001tractable}. 
Thus,
our result implies the following, where $\card{\pi}$
denotes the size of writing the probability valuation~$\pi$:

\begin{corollary}
  \label{cor:proba}
  Let $f(k)$ be the function from Theorem~\ref{thm:upper_bound}.
  Given a Boolean circuit $C$, a tree decomposition $T$ of width~$\leq k$ of~$C$, and
  a probability valuation~$\pi$ of~$C_\var$, we can compute~$\pi(C)$ in
	$O\left(\card{\pi} + \card{T} \times f(k)\right)$.
\end{corollary}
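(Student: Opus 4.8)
The plan is to reduce the problem to probability computation on a deterministic and decomposable circuit, using Theorem~\ref{thm:upper_bound} as a black box. Since the tree decomposition~$T$ is already supplied as input, I would first invoke Theorem~\ref{thm:upper_bound} directly (without passing through Theorem~\ref{thm:bod-precise}) to compile $C$ into a complete extended d-SDNNF $(D,T',\rho)$ equivalent to $C$, of width $\leq 2^{2(k+1)}$, computed in time $O(\card{T} \times f(k))$. In particular, the size $\card{D}$ of the produced circuit is bounded by the running time of the construction, so $\card{D} = O(\card{T} \times f(k))$. Note that there is no need here to apply Lemma~\ref{lem:reduce_extended} to turn $D$ into a non-extended d-SDNNF of size depending only on~$\card{C_\var}$: since the target bound is stated in terms of~$\card{T}$, it is simpler to work directly with the extended circuit~$D$.

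The second step is to compute $\pi(C) = \pi(D)$ on~$D$, which is legitimate since $D$ is equivalent to~$C$. Because a d-SDNNF is in particular a d-DNNF, the circuit~$D$ is both decomposable and deterministic, so I would apply the standard linear-time bottom-up evaluation for d-DNNFs~\cite{darwiche2001tractable} that computes, for each gate~$g$, the probability $p(g)$ that~$g$ evaluates to~$1$ under the distribution~$\pi$ on~$\VARS(g)$. Processing the gates of~$D$ in topological order, I would set $p(g) = \pi(g)$ for a variable gate~$g$; $p(g) = 1 - \pi(g')$ for a $\NOT$-gate~$g$ whose (variable) input is~$g'$; $p(g) = \prod_{g' \in W(g)} p(g')$ for an $\land$-gate~$g$; and $p(g) = \sum_{g' \in W(g)} p(g')$ for an $\lor$-gate~$g$. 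Reading and indexing~$\pi$ takes time $O(\card{\pi})$, and the evaluation itself performs $O(\card{D})$ arithmetic operations, so the total running time is $O(\card{\pi} + \card{T} \times f(k))$, as desired.

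It remains to justify the recurrences, which is where the (mild) content lies, the heavy lifting being hidden in Theorem~\ref{thm:upper_bound}. The product rule at an $\land$-gate is justified by \emph{decomposability}: its inputs have disjoint variable sets, so the events that they evaluate to~$1$ are independent under~$\pi$. The sum rule at an $\lor$-gate is justified by \emph{determinism}: no two distinct inputs can simultaneously evaluate to~$1$, so the corresponding events are pairwise disjoint and their probabilities add up. The constant gates arising from the extended v-tree (empty $\land$- and $\lor$-gates) are handled correctly by the convention that an empty product is~$1$ and an empty sum is~$0$. Finally, $p(g_\out) = \pi(C)$ since variables that do not reach~$g_\out$ do not affect the captured function and are harmlessly marginalised out. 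I do not anticipate any genuine obstacle: the only points requiring care are to confirm that the compiled circuit is indeed deterministic and decomposable (guaranteed by Theorem~\ref{thm:upper_bound}), so that both rules apply, and that its size is controlled by the construction time rather than by~$\card{C}$.
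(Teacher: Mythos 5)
Your proposal is correct and follows essentially the same route as the paper's proof: invoke Theorem~\ref{thm:upper_bound} on the given decomposition to obtain an equivalent (extended) complete d-SDNNF in time $O(\card{T} \times f(k))$, then compute $\pi(D)$ by the standard linear-time bottom-up pass of~\cite{darwiche2001tractable}, using decomposability for the product rule and determinism for the sum rule. The paper states this in two sentences; your additional details (working directly with the extended circuit, the handling of constant gates, and the marginalisation of variables not reaching the output) are all sound and merely make explicit what the paper leaves implicit.
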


\begin{proof}
	Use Theorem~\ref{thm:upper_bound} to compute an equivalent d-SDNNF $(D,T',\rho)$; as
  $C$ and $D$ are equivalent, it is clear that $\pi(C) = \pi(D)$. Now, compute
  the probability $\pi(D)$ in linear time in~$D$ and~$\card{\pi}$ by a simple
  bottom-up pass, using the fact that $D$ is a d-DNNF
  \cite{darwiche2001tractable}.
\end{proof}

This improves the bound obtained when applying message passing
techniques~\cite{lauritzen1988local} directly on the bounded-treewidth input
circuit (as presented, e.g., in
\cite[Theorem~D.2]{amarilli2015provenance}). Indeed, message passing
applies to \emph{moralized} representations of the input: for each gate,
the tree decomposition must contain a bag containing all inputs of this gate
\emph{simultaneously}, which is problematic for circuits of large fan-in. 
Indeed, if the original circuit has a tree decomposition of width~$k$,
rewriting it to make it moralized 
will result in a tree decomposition of width~$3k^2$ 
(see~\cite[Lemmas~53
and~55]{amarilli2017combined}),
and
the bound of
\cite[Theorem~D.2]{amarilli2015provenance} then yields an overall complexity 
of~$O\big(|\pi|+|T|\times 2^{3k^2}\big)$ for message passing.
Our Corollary~\ref{cor:proba}
achieves a more favorable bound because Theorem~\ref{thm:upper_bound}
directly uses the associativity of $\land$ and $\lor$.
We note that the connection between message-passing
techniques and structured circuits has also been investigated by
Darwiche,
but his construction
\cite[Theorem~6]{darwiche2003differential} produces arithmetic circuits rather
than d-DNNFs, and it also needs the input to be moralized.

A second consequence concerns the task of \emph{enumerating} the accepting
valuations of circuits, i.e., producing them one after the other, with small
\emph{delay} between each accepting valuation. 
The valuations are concisely represented as \emph{assignments}, i.e., as a
set of variables that are set to true, omitting those that are set to false.
This task is of course NP-hard on
arbitrary circuits (as it implies that we can check whether an accepting
valuation exists), but was recently shown in~\cite{amarilli2017circuit}
to be feasible on d-SDNNFs with
linear-time preprocessing and delay linear in the Hamming weight of each
produced assignment. Hence, we have:

\begin{corollary}
  \label{cor:enum}
  Let $f(k)$ be the function from Theorem~\ref{thm:upper_bound}.
  Given a Boolean circuit $C$ and a tree decomposition $T$ of width~$\leq k$ of~$C$,
  we can enumerate the satisfying assignments of~$C$ with preprocessing in $O\left(\card{T}
	\times f(k)\right)$ and delay linear in the size of each produced
  assignment.
\end{corollary}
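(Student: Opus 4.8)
The plan is to reduce Corollary~\ref{cor:enum} to the combination of Theorem~\ref{thm:upper_bound} and the known enumeration result for d-SDNNFs from~\cite{amarilli2017circuit}. First I would apply Theorem~\ref{thm:upper_bound} to the input circuit $C$ together with its tree decomposition $T$ of width $\leq k$, producing in time $O(|T| \times f(k))$ a complete extended d-SDNNF $(D,T',\rho)$ equivalent to $C$, of width $\leq 2^{2(k+1)}$. Since $C$ and $D$ capture the same Boolean function, they have exactly the same satisfying valuations, so enumerating the satisfying assignments of $D$ is the same as enumerating those of $C$. In particular the equivalence guarantees correctness of the output; nothing about the enumeration semantics changes when we pass from $C$ to its compiled form.

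Next I would invoke the enumeration algorithm of~\cite{amarilli2017circuit}, which operates on d-SDNNFs with linear-time preprocessing and with delay linear in the Hamming weight (equivalently, the size of the produced assignment, since assignments are represented as the set of true variables). A small technical point to address: Theorem~\ref{thm:upper_bound} gives an \emph{extended} complete d-SDNNF rather than an ordinary one. I would resolve this by first applying Lemma~\ref{lem:reduce_extended} to convert $(D,T',\rho)$ into a proper (non-extended) complete d-SDNNF of the same width and size $O(\card{C_\var} \times w^2)$, in linear time; this is exactly the same cleanup step used to derive Corollary~\ref{cor:upper_bound}. After this conversion, the circuit is a genuine d-SDNNF and the hypotheses of the result in~\cite{amarilli2017circuit} are met directly.

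It then remains only to account for the running time. The compilation via Theorem~\ref{thm:upper_bound} and the reduction via Lemma~\ref{lem:reduce_extended} together run in $O(|T| \times f(k))$, which dominates the linear-time preprocessing of the enumeration algorithm applied to $D$ (whose size is already bounded in terms of $\card{C_\var}$ and $f(k)$). Thus the total preprocessing is $O(\card{T} \times f(k))$, and the delay inherited from~\cite{amarilli2017circuit} is linear in the size of each produced assignment, as claimed.

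I do not expect any genuine obstacle here: the corollary is a direct composition of the upper bound of Theorem~\ref{thm:upper_bound} with an off-the-shelf enumeration guarantee for d-SDNNFs. The only care needed is the bookkeeping to ensure that the compiled circuit is in the exact (non-extended) d-SDNNF form expected by~\cite{amarilli2017circuit}, which is handled cleanly by Lemma~\ref{lem:reduce_extended}, and to confirm that the assignment representation used in the enumeration matches the set-of-true-variables convention fixed in the statement.
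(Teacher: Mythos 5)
Your proof is correct and follows essentially the same route as the paper's: apply Theorem~\ref{thm:upper_bound} to compile $C$ into an equivalent d-SDNNF and then invoke the enumeration result of~\cite{amarilli2017circuit}. Your extra step of invoking Lemma~\ref{lem:reduce_extended} to pass from the extended to the non-extended form is a reasonable piece of bookkeeping that the paper's terse proof leaves implicit, but it does not change the argument.
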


\begin{proof}
	Use Theorem~\ref{thm:upper_bound} to compute an equivalent d-SDNNF $(D,T',\rho)$, which
  has the same accepting valuations. We 
  conclude using \cite[Theorem~2.1]{amarilli2017circuit}.
\end{proof}

This corollary refines some existing results about enumerating the satisfying
valuations of some circuit classes with \emph{polynomial}
delay~\cite{DBLP:journals/jair/DarwicheM02}, and also relates to results on the
enumeration of monomials of arithmetic circuits~\cite{strozecki2010enumeration}
or of solutions to constraint satisfaction problems
(CSP)~\cite{creignou2011enumerating}, again with polynomial delay. It also relates to recent incomparable
results on constant-delay enumeration for classes of
DNF formulae~\cite{capelli2018enumerating}.

A third consequence concerns the tractability of \emph{quantifying variables} in bounded-treewidth circuits.
Let $\phi$ be a Boolean function on variables $V$, and let $X_1, \ldots X_n$ be disjoint subsets of $V$. 
A \emph{quantifier prefix} of length $n$ is a prefix of the form $\Pi \defeq Q_1 X_1 \ldots Q_n X_n$, where each $Q_i$ is either $\exists$ or $\forall$, with $Q_i \neq Q_{i+1}$.
Let $\Pi(\phi)$ be the Boolean function on variables $V \setminus (X_1 \cup \dots \cup X_n)$, with the obvious semantics.
Then~\cite{capelli2019tractable} shows:

\begin{theorem}[{\cite[Theorem~5]{capelli2019tractable}}]\label{thm:mainprojection}
There is an algorithm that, given a complete SDNNF $(D,T,\rho)$ on variables $V$ of width $k$ and $Z \subseteq V$, computes in time $2^{O(k)}|D|$ a complete d-SDNNF of width at most $2^k$ having a designated gate computing $\exists Z~D$ and another designated gate computing $\neg \exists Z~D$.
\end{theorem}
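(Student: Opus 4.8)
The plan is to determinize $D$ and project out $Z$ simultaneously by a \emph{powerset construction} along the v-tree~$T$, exploiting the fact that a complete SDNNF of width~$k$ has at most~$k$ $\lor$-gates structured by any node of~$T$. Write $G_n = \{g_1^n, \dots, g_{m_n}^n\}$ for the $\lor$-gates structured by a node~$n$, where $m_n \leq k$; each $g_i^n$ computes a function on the variables $V_n \defeq \LEAVES(T_n)$, and I set $Z_n \defeq Z \cap V_n$ and $Y_n \defeq V_n \setminus Z$. For a valuation $\nu$ of~$Y_n$, define the \emph{reachable state}
\[
R_n(\nu) \defeq \{\, i \mid \exists\, \mu \text{ on } V_n,\ \mu|_{Y_n} = \nu,\ g_i^n(\mu) = 1 \,\} \subseteq \{1, \dots, m_n\},
\]
i.e. the $\lor$-gates of~$G_n$ that some extension of~$\nu$ over~$Z_n$ can make true. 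There are at most $2^{m_n} \leq 2^k$ possible values of~$R_n$, and this is what governs both the determinism and the width bound.

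The construction processes~$T$ bottom-up and builds, for every node~$n$ and every subset $S \subseteq \{1,\dots,m_n\}$, an $\lor$-gate $h_S^n$ computing the indicator ``$R_n(\nu)=S$'' on~$Y_n$; these become the $\lor$-gates structured by~$n$. At a leaf~$n$ labeled by a variable~$x$ with $x \notin Z$, I read off from~$G_n$ which gates are true under $x=0$ and $x=1$ and realize the (at most two) gates~$h_S^n$ using the literals $x$ and~$\NOT x$ as inputs. If $x \in Z$, then $Y_n = \emptyset$, there is a single reachable state~$S^\ast$ (the satisfiable gates of~$G_n$), I realize $h_{S^\ast}^n$ by a constant gate, and the leaf becomes an unlabeled leaf. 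At an internal node~$n$ with children $n_1,n_2$, each $\land$-gate structured by~$n$ has the form $g_a^{n_1} \land g_b^{n_2}$, and each $g_j^n$ is the disjunction of the gates $g_a^{n_1}\land g_b^{n_2}$ over a set $A_j$ of pairs $(a,b)$; I set $f(S_1,S_2) \defeq \{\, j \mid \exists (a,b) \in A_j,\ a \in S_1,\ b \in S_2 \,\}$ and define
\[
h_S^n \;\defeq\; \bigvee_{(S_1,S_2)\,:\,f(S_1,S_2)=S} \bigl(h_{S_1}^{n_1} \land h_{S_2}^{n_2}\bigr).
\]

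The crux — and the step I expect to be the main obstacle — is the \emph{compositionality} of the reachable state, namely $R_n(\nu) = f\bigl(R_{n_1}(\nu|_{Y_{n_1}}),\, R_{n_2}(\nu|_{Y_{n_2}})\bigr)$. This is exactly where decomposability is essential: since the two children of~$n$ carry disjoint variable sets, a gate $g_a^{n_1}\land g_b^{n_2}$ is satisfiable by an extension of~$\nu$ \emph{iff} $g_a^{n_1}$ and $g_b^{n_2}$ are each independently satisfiable by extensions of $\nu|_{Y_{n_1}}$ and $\nu|_{Y_{n_2}}$, so the existential quantifier over~$Z_n$ factors through the two children and $R_n$ depends on them only through the subsets $R_{n_1}, R_{n_2}$. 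Granting this, correctness of~$h_S^n$ follows by induction, and determinism is immediate: for fixed~$\nu$ exactly one $h_{S_1}^{n_1}$ and one $h_{S_2}^{n_2}$ are true, so at most one input $\land$-gate of~$h_S^n$ fires, and the gates $(h_S^n)_S$ are pairwise mutually exclusive. Decomposability and structuredness of the new $\land$-gates are inherited from~$T$, and one checks that all the completeness conditions hold along the way.

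To extract the designated gates, let $i_0$ index the output $\lor$-gate $g_{i_0}^r$ of~$D$ at the root~$r$ (with children $n_1,n_2$). Rather than forming root subset gates $h_S^r$ and then OR-ing them — which would feed $\lor$-gates into an $\lor$-gate and break completeness — I define two root $\lor$-gates directly over the root $\land$-gates $h_{S_1}^{n_1}\land h_{S_2}^{n_2}$: the gate for $\exists Z\,D$ collects the pairs with $i_0 \in f(S_1,S_2)$, and the gate for $\neg\exists Z\,D$ collects those with $i_0 \notin f(S_1,S_2)$. Since the pairs $(S_1,S_2)$ are mutually exclusive, both are deterministic, and negation thus comes essentially for free from the powerset construction. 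For the bounds, each node carries at most $2^k$ gates~$h_S^n$, giving width~$\leq 2^k$; at each internal node there are at most $k^2$ original $\land$-gates to scan and $\leq 4^k$ pairs $(S_1,S_2)$, each processed in time polynomial in~$k$, so the total running time is $2^{O(k)}|D|$. The output is an \emph{extended} complete d-SDNNF (the quantified leaves are unlabeled); I conclude by applying Lemma~\ref{lem:reduce_extended} to obtain, in linear time, a proper complete d-SDNNF over $V \setminus Z$ of width~$\leq 2^k$ that retains the two designated root gates.
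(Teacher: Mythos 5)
The paper offers no proof of this statement---it is imported as-is from \cite[Theorem~5]{capelli2019tractable}---so the only meaningful comparison is with that cited reference. Your proof is correct and is essentially the same argument as Capelli and Mengel's: a subset construction along the v-tree in which, for each node $n$ and each set $S$ of $\lor$-gates structured by $n$, one builds a gate asserting that $S$ is exactly the set of gates satisfiable by some extension of the current valuation over the quantified variables below $n$; decomposability gives the compositionality of these ``reachable states'', determinism and the complement gate come for free from the mutual exclusivity of the states, and the width and time bounds follow from the $2^k$ bound on the number of states per node.
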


By iterating the construction of Theorem~\ref{thm:mainprojection} and using the identity $\forall X.D \equiv \neg \exists X \neg D$, one can easily get:
\begin{corollary}
	\label{cor:qbf}
	Let $\Pi \defeq Q_1 X_1 \ldots Q_n X_n$ be a quantifier prefix of length $n$ with $Q_n = \exists$. 
	There is an algorithm that, given a complete d-SDNNF $(D,T,\rho)$, computes in time $\exp^n(k) \times |D|$ a complete structured
	d-SDNNF of width $\leq \exp^n(k)$ representing $\Pi(D)$, where
        $\exp^{n} (k) = \underbrace{2^{\reflectbox{$\ddots$}^{2^{O(k)}}}}_{n}$.
\end{corollary}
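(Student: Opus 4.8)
The plan is to prove the corollary by induction on the number $n$ of quantifier blocks, using Theorem~\ref{thm:mainprojection} as the workhorse for eliminating a single existential block, and the duality $\forall X.\,D \equiv \neg \exists X\, \neg D$ to handle universal blocks. The crucial fact that makes the induction go through is the \emph{shape} of the output guaranteed by Theorem~\ref{thm:mainprojection}: it does not merely produce a gate for $\exists Z\, D$, but simultaneously a designated gate for $\neg \exists Z\, D$, within a single complete d-SDNNF structured by the same v-tree. This is what will let me negate ``for free'' between blocks, since a complete d-SDNNF that exposes both a function and its negation as designated gates is exactly what is needed to start the next quantifier block without any blow-up from complementation.

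First I would set up the induction by processing the quantifier prefix $\Pi = Q_1 X_1 \ldots Q_n X_n$ from the \emph{innermost} block $Q_n X_n$ outward, maintaining as the invariant a complete d-SDNNF that has a designated gate computing the Boolean function obtained so far and a designated gate computing its negation. The base case is a single existential block: applying Theorem~\ref{thm:mainprojection} with $Z = X_n$ to the input d-SDNNF $(D,T,\rho)$ of width $k$ yields a complete d-SDNNF of width $\leq 2^k$ with designated gates for $\exists X_n\, D$ and for $\neg \exists X_n\, D$, establishing the invariant with width $2^{O(k)}$. For the inductive step, suppose after processing $Q_{i+1} X_{i+1} \ldots Q_n X_n$ we have a complete d-SDNNF $(D_i,T,\rho_i)$ of width $w_i$ with designated gates for the current function $\psi_i$ and for $\neg \psi_i$. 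If $Q_i = \exists$, apply Theorem~\ref{thm:mainprojection} directly with $Z = X_i$ to the gate computing $\psi_i$, obtaining both $\exists X_i\, \psi_i$ and its negation with width $\leq 2^{w_i}$. If $Q_i = \forall$, use the gate computing $\neg \psi_i$ (already available by the invariant), apply Theorem~\ref{thm:mainprojection} with $Z = X_i$ to get $\exists X_i\, \neg\psi_i$ together with $\neg \exists X_i\, \neg \psi_i = \forall X_i\, \psi_i$; the latter is the desired designated gate, and the former serves as its negation. In both cases the invariant is restored, the width grows to $w_{i+1} \leq 2^{w_i}$, and the running time of this step is $2^{O(w_i)} |D_i|$.

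Unwinding the recurrence $w_{i+1} \leq 2^{w_i}$ with $w_n = 2^{O(k)}$ gives a tower of height $n$, i.e.\ the width bound $w_1 \leq \exp^n(k)$ as defined in the statement, and since the condition $Q_n = \exists$ ensures the innermost step is a genuine projection, the total running time telescopes to $\exp^n(k) \times |D|$ because each stage dominates the previous one. The main subtlety, rather than obstacle, is bookkeeping the alternation: one must check that consecutive blocks have genuinely alternating quantifiers (guaranteed by $Q_i \neq Q_{i+1}$ in the definition of a quantifier prefix) so that the availability of \emph{both} $\psi_i$ and $\neg\psi_i$ at each stage is exactly what a $\forall$ block needs as its input, and conversely an $\exists$ block produces both for the next (possibly $\forall$) block. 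Thus the ``simultaneous negation'' output of Theorem~\ref{thm:mainprojection} is used in an essential way at every step, and no separate complementation procedure — which could have caused an extra exponential blow-up on d-DNNFs — is ever invoked.
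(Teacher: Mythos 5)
Your proposal is correct and follows exactly the route the paper intends: the paper's entire ``proof'' is the sentence preceding the corollary, namely that one iterates Theorem~\ref{thm:mainprojection} from the innermost block outward and uses $\forall X.\,D \equiv \neg \exists X\, \neg D$, which is precisely your induction, including the key observation that the theorem's simultaneous output of $\exists Z\, D$ and $\neg \exists Z\, D$ is what makes the universal blocks free of any complementation cost. Your write-up merely makes explicit the invariant, the width recurrence $w_{i+1} \leq 2^{w_i}$ yielding the tower $\exp^n(k)$, and the role of $Q_n = \exists$, all of which the paper leaves implicit.
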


We can then combine Corollary~\ref{cor:qbf} with our Theorem~\ref{thm:upper_bound} to show:

\begin{corollary}
Let $C$ be a Boolean circuit of treewidth $\leq k$, and let $\Pi$ be a quantifier prefix of length $n$ that ends with $\exists$.
	We can compute in time $\exp^{n+1} (k) \times |C|$ a d-SDNNF of width at most $\exp^{n+1} (k)$ representing $\Pi(C)$.
\end{corollary}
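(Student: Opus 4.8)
The plan is to \emph{compose} the two main building blocks already established: first compile the bounded-treewidth circuit $C$ into a complete d-SDNNF via Corollary~\ref{cor:upper_bound} (itself obtained from Theorem~\ref{thm:upper_bound} together with Theorem~\ref{thm:bod-precise} to compute a tree decomposition), and then run the quantifier-elimination procedure of Corollary~\ref{cor:qbf} on the resulting d-SDNNF. Since $C$ and the compiled d-SDNNF capture the same Boolean function, applying $\Pi$ to one is the same as applying it to the other, so the output indeed represents $\Pi(C)$. Concretely, I would first invoke Corollary~\ref{cor:upper_bound} on $C$: this produces, in time $O(|C| \times 2^{ck})$, a complete d-SDNNF $(D,T,\rho)$ equivalent to $C$ of width $w = O(2^{ck})$ and size $|D| = O(\card{C_\var} \times 2^{ck}) = O(|C| \times 2^{ck})$. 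In particular $w = \exp^1(k)$ is singly exponential in $k$, and $(D,T,\rho)$ is exactly the kind of complete d-SDNNF that Corollary~\ref{cor:qbf} expects as input, with its width playing the role of the parameter $k$ there.

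Second, I would apply Corollary~\ref{cor:qbf} to $(D,T,\rho)$ with the prefix $\Pi$, which ends in $\exists$ as required. This yields a complete d-SDNNF representing $\Pi(D) = \Pi(C)$, of width $\leq \exp^n(w)$, computed in time $\exp^n(w) \times |D|$. It then remains to carry out the tower bookkeeping. Since $w = 2^{O(k)}$, the extra exponential level contributed by $w$ collapses into the top of the tower: a constant factor in the top exponent can be folded into the $O(\cdot)$ one level up, so $2^{O(w)} = 2^{O(2^{O(k)})} = 2^{2^{O(k)}}$, whence $\exp^n(w) = \exp^{n+1}(k)$. This gives both the claimed width bound $\exp^{n+1}(k)$ and, after substituting $|D| = O(|C| \times 2^{ck})$ and absorbing the factor $2^{ck} \leq \exp^{n+1}(k)$ into $\exp^{n+1}(k)$, the claimed running time $\exp^{n+1}(k) \times |C|$.

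The routine parts (applying the two black boxes, and checking that equivalence of $C$ and $D$ commutes with quantification) are immediate; the only point demanding care is this tower-height accounting, namely verifying rigorously that feeding a singly exponential width $\exp^1(k)$ into an $n$-fold tower produces $\exp^{n+1}(k)$ and not something larger. This hinges entirely on the elementary observation that $O(2^{O(k)}) = 2^{O(k)}$, so that the two nested exponentials appearing at the top of the tower merge into a single added level.
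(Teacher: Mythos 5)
Your proposal is correct and matches the paper's own (essentially one-line) proof, which likewise obtains the result by composing the bounded-treewidth-to-d-SDNNF compilation (Theorem~\ref{thm:upper_bound}, packaged as Corollary~\ref{cor:upper_bound}) with the iterated quantification of Corollary~\ref{cor:qbf}, the only work being the tower bookkeeping $\exp^n(2^{O(k)}) = \exp^{n+1}(k)$ that you carry out explicitly.
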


This generalizes the corresponding result of~\cite{capelli2019tractable}, which applies to bounded-treewidth CNFs instead of bounded-treewidth circuits.
(However, we note that~\cite[Theorem~10]{capelli2019tractable} also applies to CNFs of bounded \emph{incidence} treewidth, which can be smaller than the primal treewidth that we use in our article.) We refer to~\cite{capelli2019tractable} for a discussion of the related work on model counting of quantified formulas.

Other applications of Theorem~\ref{thm:upper_bound} include counting the
number of satisfying valuations of the circuit (a
special case of probability computation), MAP inference
\cite{fierens2015inference}, or random sampling of possible worlds (which
can easily be done on the d-SDNNF).

\section{Proof of the Upper Bound}
\label{sec:proof}
We first present in Section~\ref{sec:connecting-construction} the construction used for
Theorem~\ref{thm:upper_bound}, then prove in Section~\ref{sec:proofproof} that this construction is correct and can be done within the prescribed time bound.
We then explain how to specialize the construction to the case of bounded-pathwidth circuits and uOBBDs in Section~\ref{sec:construction_pathwidth}.

\subsection{Construction}
\label{sec:connecting-construction}

Let $C$ be the input circuit on $n$ variables, and $T$ the input tree decomposition of $C$ of width $\leq k$.
We start with prerequisites.
\myparagraph{Prerequisites.}
Let $g_\out$ be the output gate of $C$.
Thanks to Lemma~\ref{lem:friendly}, we can assume that $T$ is $g_\out$-friendly.
For every variable gate $x \in C_\var$, we choose a leaf bag $b_x$ of $T$ such that $\lambda(b_x) = \{x\}$. Such a leaf bag exists because $T$ is friendly (specifically, thanks to bullet points $2$ and $3$).
We say that $b_x$ is \emph{responsible for the variable gate $x$}.
We can obviously choose such a $b_x$ for every variable gate $x$ in linear time in $T$.

To abstract away the type of gates and their values in the construction,
we will talk of \emph{strong} and \emph{weak} values. Intuitively, a value is
\emph{strong} for a gate $g$ if any input $g'$ of~$g$ which carries this value
determines the value of~$g$; and \emph{weak} otherwise. Formally:
 
\begin{definition}
  Let $g$ be a gate and $c \in \{0, 1\}$:
  \begin{itemize}
    \item If $g$ is an $\land$-gate, we say that $c=0$ is \emph{strong} for~$g$ and $c=1$
      is \emph{weak} for~$g$;
    \item If $g$ is an $\lor$-gate, we say that $c=1$ is \emph{strong} for~$g$ and $c=0$
      is \emph{weak} for~$g$;
    \item If $g$ is a $\NOT$-gate, $c=0$ and $c=1$ are both \emph{strong}
      for~$g$;
    \item For technical convenience, if $g$ is a $\var$-gate, $c=0$ and $c=1$ are both \emph{weak} for~$g$.
  \end{itemize}
\end{definition}

If we take any valuation $\nu:C_\var\to\{0,1\}$ of the circuit $C = (G, W,
g_\out,
\mu)$,
and extend it to an evaluation $\nu:G\to\{0,1\}$, then $\nu$ will respect the
semantics of gates. In particular, it will \emph{respect strong values}: 
for any gate $g$ of~$C$, if $g$ has an input $g'$ for which $\nu(g')$ is a strong
value,
then $\nu(g)$ is determined by~$\nu(g')$, specifically, it is~$\nu(g')$ if~$g$ is
an $\lor$- or an $\land$-gate, and $1-\nu(g')$ if $g$ is a $\NOT$-gate. In our
construction, we will need to guess how gates of the circuit are evaluated,
focusing on a subset of the gates (as given by a bag of~$T$); we will then call
\emph{almost-evaluation} an assignment of these gates that respects strong
values. Formally:
\begin{definition}
	Let $U$ be a set of gates of~$C$.
        We call $\nu: U \to \{0, 1\}$ a \emph{$(C,U)$-almost-evaluation} if 
        it \emph{respects strong values}, i.e., for every gate $g \in U$, if
        there is an input~$g'$ of~$g$ in~$U$ such that $\nu(g')$ is a strong value for
        $g$, then $\nu(g)$ is determined from $\nu(g')$ in the sense above.
\end{definition}

Respecting strong values is a necessary condition for such an assignment to be extensible to a
valuation of the entire circuit. However, it is not sufficient: an
almost-evaluation $\nu$ may map
a gate $g$ to a strong value even though $g$ has no input that can justify this
value. This is hard to avoid: when we focus on the
set~$U$, we do not know about other inputs of~$g$. For now, let us call
\emph{unjustified} 
the gates of~$U$ that carry a strong value that is not justified by~$\nu$:

\begin{definition}
	Let $U$ be a set of gates of a circuit $C$ and $\nu$ a $(C,U)$-almost-evaluation.
        We call $g \in U$ \emph{unjustified} if $\nu(g)$ is a strong value
        for~$g$, but, for every input $g'$ of~$g$ in~$U$, the value $\nu(g')$ is
        weak for~$g$; otherwise, $g$ is \emph{justified}.
        The set of unjustified gates is written~$\UNF(\nu)$.
\end{definition}

Let us start to explain in a high-level manner
how to construct the d-SDNNF $D$ equivalent to the input
circuit~$C$ (we will later describe the construction formally). We do so by traversing
$T$ bottom-up,
and for each bag $b$ of $T$ we create gates~$G_b^{\nu,S}$ in~$D$,
where $\nu$~is a $(C,b)$-almost-evaluation
and $S$ is a subset of~$\UNF(\nu)$ which we call the \emph{suspicious gates}
of~$G_b^{\nu,S}$. We will connect the gates of~$D$
created for each internal bag~$b$ with the
gates created for its children in~$T$, in a 
way that we will 
explain later.
Intuitively, for a gate $G_b^{\nu,S}$ of~$D$,
the \emph{suspicious gates} $g$ in the set~$S$ are gates of~$b$
whose strong value is not justified by~$\nu$ (i.e., $g \in \UNF(\nu)$),
and is not justified either by any of the almost-evaluations at descendant bags
of~$b$ to which $G_b^{\nu,S}$ is connected.
We call \emph{innocent} the other gates of~$b$; hence, they are the gates that are 
justified in~$\nu$ (in particular, those who carry weak values),
and the gates that are unjustified in~$\nu$ but have been
justified by an almost-evaluation at a descendant bag $b'$ of~$b$. Crucially, in
the latter case, the gate $g'$ justifying the strong value in~$b'$ may no longer
appear in~$b$, making $g$ unjustified for~$\nu$; this is why we remember the set~$S$.

We still have to explain 
how we connect the gates $G_b^{\nu,S}$ of~$D$ to the gates
$G_{b_l}^{\nu_l,S_l}$ and~$G_{b_r}^{\nu_r,S_r}$ created for the children $b_l$
and $b_r$ of~$b$ in~$T$.
The first condition is that $\nu_l$ and~$\nu_r$ must \emph{mutually agree},
i.e., $\nu_l(g) = \nu_r(g)$ for all $g \in b_l \cap b_r$, and $\nu$ must then be
the union of~$\nu_l$ and~$\nu_r$, restricted to~$b$. 
We impose a second condition to prohibit suspicious gates from escaping
before they have been justified, which we formalize as \emph{connectibility} of
a pair $(\nu,S)$ at bag~$b$ to the parent bag of~$b$.
\begin{definition}
	\label{def:connectible}
	Let $b$ be a non-root bag, $b'$ its parent bag,
        and $\nu$ a $(C,b)$-almost-evaluation.
        For any set $S \subseteq \UNF(\nu)$,
        we say that $(\nu,S)$ is \emph{connectible} to~$b'$
        if $S \subseteq b'$, i.e., the suspicious gates of $\nu$ must still
        appear in~$b'$.
\end{definition}
If a gate $G^{\nu,S}_b$ is such that $(\nu,S)$ is not connectible to the parent
bag~$b'$, then this gate will not be used as input to any other gate, but we do
not try to preemptively remove these useless gates in the construction (but note that this will be taken care of at the end, when we will apply Lemma~\ref{lem:reduce_extended}).
We are now ready to give the formal definition that will be used to
explain how gates
are connected:
\begin{definition}
	\label{def:result}
	Let $b$ be an internal bag with children $b_l$ and $b_r$, 
        let $\nu_l$ and
        $\nu_r$ be respectively $(C,b_l)$ and $(C,b_r)$-almost-evaluations that
        mutually agree,
        and $S_l \subseteq \UNF(\nu_l)$
        and $S_r \subseteq \UNF(\nu_r)$ be sets of
	suspicious gates 
        such that both $(\nu_l,S_l)$ and $(\nu_r,S_r)$ are connectible to~$b$.
        The \emph{result} of $(\nu_l,S_l)$ and $(\nu_r,S_r)$ is then defined as
        the pair $(\nu,S)$ where:
	\begin{itemize}
		\item $\nu$ is defined as the
                  restriction of~$\nu_l \cup \nu_r$ to~$b$.
                \item $S \subseteq \UNF(\nu)$ is the new set of suspicious
                  gates, defined as follows.
                  A gate $g \in b$ is innocent (i.e., $g \in b \setminus S$) if it is
                  justified for $\nu$ or if it is innocent for some child.
			Formally, $b \setminus S \colonequals (b
                        \setminus\UNF(\nu)) \cup \big[b \cap \left[ (b_l \setminus S_l) \cup (b_r
                        \setminus S_r)\right]\big]$.
	\end{itemize}
        We point out that $(\nu,S)$ is not necessarily a $(C,b)$-almost-evaluation.
\end{definition}

\myparagraph{Construction.}
We now use these definitions to present the construction formally.
For every variable gate $g$ of $C$,
we create a corresponding variable gate $G^{g,1}$ of $D$,
and we create $G^{g,0} \colonequals \NOT(G^{g,1})$.
For every internal bag $b$ of $T$,
for each $(C,b)$-almost-evaluation $\nu$ and set
$S\subseteq \UNF(\nu)$ of suspicious
gates of $\nu$,
we create 
one $\lor$-gate $G_b^{\nu,S}$.
For every leaf bag $b$ of $T$, we create 
one $\lor$-gate $G_b^{\nu,S}$ for every $(C,b)$-almost-evaluation $\nu$,
where we set $S \colonequals \UNF(\nu)$; intuitively, in a leaf bag, an unjustified
gate is always suspicious (it cannot have been justified at a descendant bag).

Now, for each internal bag $b$ of~$T$ with children $b_l,b_r$,
for each pair of gates $G_{b_l}^{\nu_l,S_l}$ and~$G_{b_r}^{\nu_r,S_r}$ that are both
connectible to~$b$ and where $\nu_l$ and $\nu_r$ mutually agree, letting $(\nu,
S)$ be the result of~$(\nu_l,S_l)$ and~$(\nu_r,S_r)$, if $(\nu,S)$ is a $(C,b)$-almost-evaluation then we create 
a gate $G_b^{\nu_l,S_l,\nu_r,S_r}=\land(G_{b_l}^{\nu_l,S_l}, G_{b_r}^{\nu_r,S_r})$
and make it an input of~$G^{\nu,S}_b$.
We now explain where the variables gates are connected. 
For every leaf bag $b$ 
that is responsible for a variable gate~$x$ (i.e., $b$ is $b_x$), for $\nu \in \{\{x \mapsto 1\}, \{x \mapsto 0\}\}$,
we set the gate $G^{x,\nu(x)}$ 
to be the (only) input of the gate $G_b^{\nu,S}$.
Last, for every leaf bag $b$ 
that is not responsible for a variable gate, for every valuation $\nu$ of $b$, we create a constant $1$-gate (i.e., an $\land$-gate with no inputs), and we make it the (only) input of $G_b^{\nu,S}$.
The output gate of~$D$ is the gate $G^{\nu,\emptyset}_{b_\root}$ where
$b_\root$ is the root of~$T$
and $\nu$ maps~$g_\out$ to~$1$
(remember that $b_\root$ contains only $g_\out$).

We now construct the extended v-tree $T'$ together with the mapping $\rho$.
$T'$ has the same skeleton than $T$ (in particular, it is a full binary tree). For every node $b$ of $T$, let us denote by $b'$ the corresponding node of $T'$. 
For every leaf bag $b$ of $T$, $b'$ is either $x$ if $b$ is responsible for the variable $x$, or unlabeled otherwise.
For every bag $b$ of $T$ and every gate $g$ of the form $G_b^{\nu,S}$ or $G_b^{\nu_l,S_l,\nu_r,S_r}$, we take $\rho(g) = b'$.
For every leaf bag $b$ of $T$ that is not responsible for a variable, for any gate $g$ of the form $G_b^{\nu,S}$ (there can be either one (if $b$ is empty) or two of them), 
letting $g'$ be the (only) input of $g$  (i.e., $g'$ is a constant $1$-gate), we set $\rho(g') = b'$.
For every leaf bag $b$ of $T$ that is responsible for a variable $x$, we set 
$\rho(G^{x,1}) = \rho(G^{x,0}) = b'$.

\subsection{Proof of Correctness}
\label{sec:proofproof}

We now prove that $(D,T',\rho)$ is indeed an extended complete d-SDNNF equivalent to the initial circuit
  $C$, that its width is $\leq 2^{2 (k+1)}$, and that it can be constructed in time $O\left(|T|\times
  2^{(4+\epsilon)k}\right)$ for any~$\epsilon>0$. 

\subsubsection{$(D,T',\rho)$ is an extended complete SDNNF of the right width}

Negations only apply to the input gates, so $D$ is an NNF.
It is easy to check that $(D,T',\rho)$ satisfies the conditions of being a complete extended SDNNF. 
Now, for every leaf $l$ of $T'$, there at most two $\lor$-gates of $D$ that are structured by $l$ (remember that leaf bags of the friendly tree decomposition $T$ contain one or zero gates of $C$). 
For every internal node $n$ of $T'$ corresponding to a bag $b$ of $T$, the $\lor$-gates that are structured by $n$ are of the form $G^{\nu,S}_b$, so they are
at most $2^{2 (k+1)}$, which shows our claim about the width of $(D,T',\rho)$.

\subsubsection{$D$ is equivalent to $C$}

We now show that $D$ is equivalent to the original circuit $C$.
Recall the definition of a trace of a DNNF from Definition~\ref{def:trace}.
Our first step is to prove that traces have exactly one almost-evaluation
  corresponding to each descendant bag, and that these almost-evaluations mutually agree. 
\begin{lemma}
	\label{lem:only_one_agree2}
	Let $G^{\nu,S}_b$ a gate in $D$ and $\Xi$ be a 
	trace of $D$ starting at $G^{\nu,S}_b$.
	Then for any bag $b' \leq b$
        (meaning that $b'$ is $b$ or a descendant of $b$),
        $\Xi$ contains exactly one gate of the form~$G^{\nu',S'}_{b'}$.
	Moreover, over all $b' \leq b$, all the almost-evaluations of the gates $G^{\nu',S'}_{b'}$ that are in~$\Xi$ mutually agree.
\end{lemma}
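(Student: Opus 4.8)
The plan is to prove both assertions simultaneously by induction on the subtree $T_b$ rooted at $b$, exploiting the fact that the gates of $D$ are layered according to the bags of~$T$. The key structural observation is that, by construction, an $\lor$-gate $G_b^{\nu,S}$ has as inputs only $\land$-gates of the form $G_b^{\nu_l,S_l,\nu_r,S_r}$ (when $b$ is internal), or a single variable/negation/constant gate (when $b$ is a leaf); and each such $\land$-gate has exactly the two inputs $G_{b_l}^{\nu_l,S_l}$ and $G_{b_r}^{\nu_r,S_r}$ sitting at the children $b_l,b_r$ of~$b$. Hence the subcircuit of~$D$ rooted at a gate $G_{b'}^{\nu',S'}$ contains, among gates of this form, only those whose bag is $\leq b'$; and since $T_{b_l}$ and $T_{b_r}$ are disjoint subtrees of~$T$, the subcircuits rooted at the two children $\lor$-gates of any $\land$-gate $G_b^{\nu_l,S_l,\nu_r,S_r}$ involve disjoint sets of bags (this is also forced by decomposability of~$D$). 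In the base case $b$ is a leaf, so the only bag $b' \leq b$ is $b$ itself, and by minimality $\Xi$ consists of $G_b^{\nu,S}$, its single chosen input, and possibly the variable underlying a negation; the only gate of the required form is $G_b^{\nu,S}$, so the count is one and mutual agreement is vacuous.

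For the inductive step, $b$ is internal with children $b_l,b_r$. Since $\Xi$ is a trace and $G_b^{\nu,S}$ is an $\lor$-gate, exactly one of its inputs lies in~$\Xi$, namely some $\land$-gate $G_b^{\nu_l,S_l,\nu_r,S_r}$; as this is an $\land$-gate in~$\Xi$, both of its inputs $G_{b_l}^{\nu_l,S_l}$ and $G_{b_r}^{\nu_r,S_r}$ lie in~$\Xi$. I would then set $\Xi_l$ to be the restriction of~$\Xi$ to the gates reachable from $G_{b_l}^{\nu_l,S_l}$, and argue that $\Xi_l$ is itself a trace starting at $G_{b_l}^{\nu_l,S_l}$: the closure conditions of Definition~\ref{def:trace} are inherited directly, and minimality follows from that of~$\Xi$ (a strictly smaller left trace could be swapped in, which is legitimate because, by the bag-disjointness above, $\Xi_l$ shares no relevant gate with the remainder of~$\Xi$). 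Defining $\Xi_r$ symmetrically, the induction hypothesis applies to both $\Xi_l$ and $\Xi_r$.

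Part~(a) then follows by partitioning the bags $\leq b$ into $\{b\}$, the bags $\leq b_l$, and the bags $\leq b_r$: every gate $G_{b'}^{\cdot}$ of~$\Xi$ with $b' \leq b_l$ lies in the subcircuit rooted at $G_{b_l}^{\nu_l,S_l}$, hence in~$\Xi_l$, so by induction there is exactly one per such bag, and likewise on the right; and for $b'=b$ the only gate of the form $G_b^{\nu',S'}$ is $G_b^{\nu,S}$ itself, since the selected $\land$-gate has a different shape and its children sit at $b_l,b_r \neq b$. For part~(b), agreement within $\Xi_l$ and within $\Xi_r$ is given by induction, so the crux is a gate $g \in b_1 \cap b_2$ for some $b_1 \leq b_l$ and $b_2 \leq b_r$. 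Here I would invoke connectedness of tree decompositions: the bags containing~$g$ form a connected subtree, and the path in~$T$ between $b_1 \in T_{b_l}$ and $b_2 \in T_{b_r}$ runs through $b_l$, $b$ and $b_r$, so $g$ belongs to all three. Writing $\nu_{b_1},\nu_{b_2}$ for the almost-evaluations carried at $b_1,b_2$, the induction hypothesis gives $\nu_{b_1}(g)=\nu_l(g)$ (agreement within the left, as $g \in b_1 \cap b_l$) and $\nu_{b_2}(g)=\nu_r(g)$; and $\nu_l(g)=\nu_r(g)$ because $\nu_l$ and $\nu_r$ mutually agree on $b_l \cap b_r \ni g$, which is exactly the precondition under which $G_b^{\nu_l,S_l,\nu_r,S_r}$ was created. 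Chaining these equalities gives $\nu_{b_1}(g)=\nu_{b_2}(g)$, and agreement with $\nu$ at~$b$ is handled identically, using that $\nu$ is the restriction of $\nu_l \cup \nu_r$ to~$b$.

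I expect the main obstacle to be the bookkeeping in the inductive step rather than any single deep idea. Concretely, two points need care: first, justifying that the restrictions $\Xi_l,\Xi_r$ are genuine minimal traces so that the induction hypothesis applies, which rests entirely on the bag-disjointness of the two children's subcircuits (and hence on decomposability); and second, the cross-subtree agreement, whose only non-routine ingredient is the connectedness condition of tree decompositions, combined with the observation that the mere existence of the connecting $\land$-gate already guarantees that $\nu_l$ and $\nu_r$ agree on $b_l \cap b_r$.
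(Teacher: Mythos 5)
Your proof is correct, but it is organized quite differently from the paper's. The paper argues directly: uniqueness of the gate $G^{\nu',S'}_{b'}$ for each bag $b' \leq b$ is dismissed as immediate from the layered construction of~$D$ (an $\lor$-gate has exactly one input in a trace, and that input's inputs sit at the children bags), and mutual agreement is proved by contradiction --- if two gates in~$\Xi$ disagreed on some $g \in b_1 \cap b_2$, then by connectedness of tree decompositions $g$ lies in every bag on the tree path from $b_1$ to~$b_2$, and the construction's requirement that $\nu_l,\nu_r$ mutually agree whenever an $\land$-gate $G_b^{\nu_l,S_l,\nu_r,S_r}$ is created forces agreement link by link along that path. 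You instead run a structural induction on~$T_b$, extracting from~$\Xi$ two sub-traces $\Xi_l,\Xi_r$ rooted at the children's $\lor$-gates and applying the inductive hypothesis to each; the cross-subtree case of agreement then uses exactly the same two ingredients as the paper (connectedness, plus the local agreement condition at the $\land$-gate junction). The two proofs are therefore powered by the same facts, but your packaging buys rigor at the price of extra machinery: you must justify that a restriction of a trace is again a trace, which your swap argument does correctly --- it relies on the disjointness of the subcircuits under $G_{b_l}^{\nu_l,S_l}$ and $G_{b_r}^{\nu_r,S_r}$ (note this disjointness holds even for variable gates, since each variable has a unique responsible leaf bag). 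One detail you should make explicit in the same spirit: your claim that every gate of~$\Xi$ at a bag $b' \leq b_l$ lies in the cone under $G_{b_l}^{\nu_l,S_l}$ itself uses minimality of~$\Xi$ (every gate of a trace has a directed path within~$\Xi$ to the starting gate, else it could be removed), combined with the layered structure; this is the same observation underlying the paper's ``obvious by construction'' step, so it is a gap in exposition rather than in substance.
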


\begin{proof}
	The fact that $\Xi$ contains exactly one gate $G^{\nu',S'}_{b'}$
        for any bag $b' \leq b$ is obvious by construction of $D$, as
        $\lor$-gates are assumed to have exactly one input in~$\Xi$.
	For the second claim, suppose by contradiction that not all the almost-evaluations of the gates $G^{\nu',S'}_{b'}$ that are in $\Xi$ mutually agree.
	We would then have $G^{\nu_1,S_1}_{b_1}$ and $G^{\nu_2,S_2}_{b_2}$ in $\Xi$ and $g \in b_1 \cap b_2$ such that $\nu_1(g) \neq \nu_2(g)$. 
	But because $T$ is a tree decomposition,
	$g$ appears in all the bags on the path from $b_1$ and $b_2$, and by construction the almost-evaluations of 
	the gates $G^{\nu',S'}_{b'}$ on this path that are in $\Xi$ mutually agree, a contradiction.
\end{proof}

Therefore, Lemma~\ref{lem:only_one_agree2} allows us to define the union of the almost-evaluations in such a trace:
\begin{definition}
	\label{def:gamma2}
	Let $G^\nu_b$ a gate in $D$
	and $\Xi$ be a trace of $D$ starting at $G^\nu_b$.
	Then $\gamma(\Xi) \defeq \bigcup_{G^{\nu',S'}_{b'} \in \Xi} \nu'$ (the union of the almost-evaluations in $\Xi$, which is a valuation 
	from $\bigcup_{G^{\nu',S'}_{b'} \in \Xi} b'$ 
	to $\{0,1\}$) is properly defined.
\end{definition}

We now need to prove a few lemmas about the behavior of gates that are innocent (i.e., not suspicious).
\begin{lemma}
\label{lem:behaviour}
	Let $G^{\nu,S}_b$ a gate in $D$ and $\Xi$ be a 
	trace of $D$ starting at $G^{\nu,S}_b$.
	Let $g \in b$ be a gate that is innocent ($g \notin S$). Then the following holds:
	\begin{itemize}
		\item If $\nu(g)$ is a weak value of $g$, then for every input $g'$ of $g$ that is in the domain of $\gamma(\Xi)$ (i.e., $g'$ appears in a bag $b' \leq b$),
			we have that $\gamma(\Xi)$ maps $g'$ to a weak value of $g$;
		\item If $\nu(g)$ is a strong value of $g$, then there exists an input $g'$ of $g$ that is in the domain of $\gamma(\Xi)$ such that 
			$\gamma(\Xi)(g')$ is $\nu(g)$ if $g$ is an $\land$-
                        or $\lor$-gate, and $\gamma(\Xi)(g')$ is $1-\nu(g)$
                        if $g$ is a $\NOT$-gate.
	\end{itemize}
\end{lemma}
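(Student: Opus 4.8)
**The plan is to prove Lemma~\ref{lem:behaviour} by analyzing the structure of the trace~$\Xi$ and applying induction on the tree decomposition~$T$, exploiting the way innocent gates are propagated in Definition~\ref{def:result}.**

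First I would unfold what it means for $g$ to be innocent. By the definition of the result operation (Definition~\ref{def:result}), a gate $g \in b$ is innocent (i.e. $g \in b \setminus S$) exactly when either $g$ is justified for~$\nu$, or $g$ is innocent for one of the children of~$b$. Tracing this recursively down the trace, $g$ innocent at bag~$b$ means that at some descendant bag $b^* \leq b$ where $g$ still appears, the gate $g$ is \emph{justified} for the local almost-evaluation $\nu^*$ of the gate $G^{\nu^*,S^*}_{b^*} \in \Xi$; and by connectibility (Definition~\ref{def:connectible}), $g$ must remain in every bag on the path from $b^*$ up to~$b$, so by Lemma~\ref{lem:only_one_agree2} the value $\gamma(\Xi)(g)=\nu(g)$ is consistent throughout. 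The key observation is therefore: \emph{for an innocent gate, either it never carries an unjustified strong value anywhere in the trace, or it is locally justified at some descendant bag.}

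Next I would treat the two bullet points separately. For the \textbf{weak case}, I would argue by the fact that $\gamma(\Xi)$ is built from mutually-agreeing almost-evaluations, each of which respects strong values. Suppose $\nu(g)$ is weak for~$g$; then $g$ cannot carry a strong value anywhere in~$\Xi$ (since all local values agree via Lemma~\ref{lem:only_one_agree2}). Now for any input $g'$ of $g$ lying in some bag $b' \leq b$, the edge $(g',g)$ is a wire of~$C$, so by the tree-decomposition property there is a bag containing both $g$ and $g'$; at that bag the local almost-evaluation respects strong values, which forces $\gamma(\Xi)(g')$ to be weak for~$g$ (for otherwise $g$ would have been pushed to its strong value there, contradicting $\nu(g)$ being weak). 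For the \textbf{strong case}, innocence is what does the real work: since $g$ carries a strong value and is innocent, by the structural analysis above $g$ must be \emph{justified} at some descendant bag $b^*$, meaning there is an input $g'$ of~$g$ present in~$b^*$ whose value $\nu^*(g')$ forces the strong value $\nu(g)$. That $g'$ lies in the domain of $\gamma(\Xi)$, and $\gamma(\Xi)(g')$ equals $\nu(g)$ for $\land$/$\lor$-gates and $1-\nu(g)$ for $\NOT$-gates, precisely as required by the definition of respecting strong values.

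\textbf{The main obstacle} I anticipate is making the ``innocence propagates downward to a justifying bag'' argument fully rigorous, since innocence is defined recursively in Definition~\ref{def:result} and the justifying input $g'$ may appear only at a descendant bag $b^*$ and then disappear from higher bags before reaching~$b$. I would handle this by a careful induction down the trace along $T$: at each step, if $g$ is innocent because it is justified for the local~$\nu$, I stop and produce the witness input $g'$ directly; otherwise $g$ is innocent because it is innocent for a child, so I descend to that child and repeat. Connectibility guarantees $g$ stays present along the way, and finiteness of $T$ guarantees termination at a justifying bag. The consistency of all the local values via Lemma~\ref{lem:only_one_agree2} then glues everything together, ensuring the witness found at $b^*$ is compatible with~$\gamma(\Xi)$ globally.
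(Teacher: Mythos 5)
Your proposal follows essentially the same route as the paper's proof: the paper argues by bottom-up induction on the tree decomposition, where the strong case splits into ``$g$ justified at $b$'' versus ``$g$ innocent at a child, apply the induction hypothesis'' --- which is exactly your top-down descent to a justifying bag $b^*$ --- and the weak case is handled by the same direct argument via a common bag and the fact that almost-evaluations respect strong values. Two points in your write-up need tightening, though. First, in the weak case you take ``the bag containing both $g$ and $g'$'' and then speak of ``the local almost-evaluation at that bag''; but the occurrence property only gives you \emph{some} bag of $T$ containing the wire $\{g',g\}$, and that bag need not lie below $b$ --- bags outside the subtree of $b$ have no corresponding gate in the trace $\Xi$, so no local almost-evaluation to invoke. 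You must combine occurrence with connectedness: $g'$ occurs in some bag $b_1 \leq b$ by hypothesis and in the common bag; if the common bag is not below $b$, the path between the two passes through $b$, so $g' \in b$ and $b$ itself is a common bag $\leq b$. This is precisely what the paper's phrase ``by the occurrence and connectedness properties of tree decompositions'' is doing. Second, your claim that ``$g$ must remain in every bag on the path from $b^*$ up to $b$'' \emph{by connectibility} (Definition~\ref{def:connectible}) is a misattribution: connectibility constrains \emph{suspicious} gates, whereas $g$ is innocent. The persistence of $g$ along the descent is in fact built into Definition~\ref{def:result} --- innocence inherited from a child requires $g \in b_l \setminus S_l$ or $g \in b_r \setminus S_r$, so $g$ belongs to the child bag at each step --- or, alternatively, follows from tree-decomposition connectedness once the endpoints are fixed. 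Neither issue breaks the argument; both are routine to repair, and the overall structure matches the paper's proof.
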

\begin{proof}
	We prove the claim by bottom-up induction on $b \in T$. One can easily check that the claim is true when $b$ is a leaf bag, remembering that
	in this case we must have $S = \UNF(\nu)$ by construction (that is, all the gates that are unjustified are suspicious).
	For the induction case, let $b_l$, $b_r$ be the children of $b$.
	Suppose first that $\nu(g)$ is a weak value of $g$, and suppose for a contradiction that there is an input $g'$ of $g$ in the domain of $\gamma(\Xi)$ such that
	$\gamma(\Xi)(g')$ is a strong value of $g$. By the occurrence and connectedness properties of tree decompositions, there exists a bag $b' \leq b$ in which
	both $g$ and $g'$ occur. Consider the  gate $G^{\nu',S'}_{b'}$ that is in $\Xi$:
        by Lemma~\ref{lem:only_one_agree2}, this gate exists and is unique.
	By definition of $\gamma(\Xi)$ we have $\nu'(g')=\gamma(\Xi)(g')$.
	Because $\nu'$ is a $(C,b')$-almost-evaluation that maps $g'$ to a strong value of $g$, we must have that $\nu'(g)$ is also a strong value of $g$,
	thus contradicting our hypothesis that $\nu(g)=\gamma(\Xi)(g) = \nu'(g)$ is a weak value for $g$.

	Suppose now that $\nu(g)$ is a strong value of $g$. We only treat
        the case when $g$ is an $\lor$- or an $\land$-gate, as the case of a
        $\NOT$-gate is similar.
	We distinguish two sub-cases:

	\begin{itemize}
		\item $g$ is justified. Then, by definition of $\nu$ being a $(C,b)$-almost-evaluation, there must exist an input $g'$ of $g$ that is 
			also in $b$ such that $\nu(g')$ is a strong value of $g$, which proves the claim.
		\item $g$ is unjustified.
			But since $g$ is innocent ($g \notin S$),
			by construction (precisely, by the second item of Definition~\ref{def:result}) $g$ must then be innocent for a child of $b$.
			The claim then holds by induction hypothesis.\qedhere
	\end{itemize}
\end{proof}

Lemma~\ref{lem:behaviour} allows us to show that for a gate $g$ that is not a variable gate, letting $b$ be the topmost bag in which $g$ appears (hence, each input of $g$ must occur in some bag 
$b' \leq b$), if $g$ is innocent then for any trace $\Xi$ starting at a gate for bag $b$, $\gamma(\Xi)$ respects the semantics of $g$.
Formally, recalling that $W(g)$ denotes the inputs of~$g$:

\begin{lemma}
\label{lem:respects_semantics}
	Let $G^{\nu,S}_b$ a gate in $D$ and $\Xi$ be a 
	trace of $D$ starting at $G^{\nu,S}_b$.
	Let $g \in b$ be a gate that is not a variable gate and such that $b$ is the topmost bag in which $g$ appears (hence $W(g) \subseteq \text{domain}(\gamma(\Xi))$).
	If $g$  is innocent ($g \notin S$) then $\gamma(\Xi)$ respects the semantics of $g$, that is, 
	$\gamma(\Xi)(g)=\bigodot \gamma(\Xi)(W(g))$ where $\bigodot$ is the type of $g$.
\end{lemma}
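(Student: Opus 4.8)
The plan is to derive the equality $\gamma(\Xi)(g)=\bigodot\gamma(\Xi)(W(g))$ from Lemma~\ref{lem:behaviour} by a case analysis on whether $\nu(g)$ is a weak or a strong value of $g$. First I would observe that $\gamma(\Xi)(g)=\nu(g)$: indeed $g\in b$, and by Lemma~\ref{lem:only_one_agree2} the trace $\Xi$ contains a unique gate of the form $G^{\nu,S}_b$ whose almost-evaluation is $\nu$, so by definition of $\gamma(\Xi)$ (Definition~\ref{def:gamma2}) the two agree on $g$. This lets me replace the left-hand side $\gamma(\Xi)(g)$ by $\nu(g)$ and use the strong/weak classification. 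Moreover, since $b$ is the topmost bag containing $g$, every input $g'\in W(g)$ is joined to $g$ by a wire and hence, by the edge-covering and connectedness properties of the tree decomposition, occurs in some bag $b'\leq b$; thus $W(g)\subseteq\text{domain}(\gamma(\Xi))$ and the right-hand side is well-defined.

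Next I would treat the case where $\nu(g)$ is a weak value. Since both values are strong for $\NOT$-gates and $g$ is not a variable gate, $g$ must be an $\land$- or $\lor$-gate in this case. By the first bullet of Lemma~\ref{lem:behaviour}, every input $g'$ of $g$ in the domain of $\gamma(\Xi)$ — that is, every input of $g$ — is mapped by $\gamma(\Xi)$ to a weak value of $g$. For an $\land$-gate the weak value is $1$, so all inputs are $1$ and their conjunction is $1=\nu(g)$; dually, for an $\lor$-gate all inputs are $0$ and their disjunction is $0=\nu(g)$. (The empty-input subcase is automatically consistent, the empty conjunction and disjunction being $1$ and $0$ respectively, matching the weak value.) In both subcases $\gamma(\Xi)$ respects the semantics of $g$.

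Then I would treat the case where $\nu(g)$ is a strong value. By the second bullet of Lemma~\ref{lem:behaviour}, there is an input $g'$ of $g$ in the domain of $\gamma(\Xi)$ with $\gamma(\Xi)(g')=\nu(g)$ when $g$ is an $\land$- or $\lor$-gate, and $\gamma(\Xi)(g')=1-\nu(g)$ when $g$ is a $\NOT$-gate. For $\land$ the strong value is $0$, so one input equal to $0$ forces the conjunction to $0=\nu(g)$; symmetrically for $\lor$ one input equal to $1$ forces the disjunction to $1=\nu(g)$. For $\NOT$, which has a single input $g'$, we obtain $\NOT(\gamma(\Xi)(g'))=1-(1-\nu(g))=\nu(g)$. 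This exhausts all gate types.

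Since the argument rests entirely on Lemma~\ref{lem:behaviour} together with the elementary correspondence between the gate operations $\land,\lor,\NOT$ and the strong/weak terminology, I do not expect a serious obstacle. The only point requiring care is checking that the case split is exhaustive for each gate type — in particular that $\NOT$-gates fall only under the strong case, that the weak case concerns only $\land$- and $\lor$-gates, and that the input-less subcases of $\land$ and $\lor$ are handled correctly by the empty-product and empty-sum conventions — all of which follow directly from the definitions of strong and weak values.
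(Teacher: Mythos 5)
Your proof is correct and follows exactly the route the paper takes: the paper's entire proof of this lemma is the single sentence ``Clearly implied by Lemma~\ref{lem:behaviour}'', and your case analysis (weak vs.\ strong value of $\nu(g)$, split by gate type, with the identification $\gamma(\Xi)(g)=\nu(g)$ and the empty-input conventions) is precisely the detail that sentence leaves implicit.
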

\begin{proof}
	Clearly implied by Lemma~\ref{lem:behaviour}.
\end{proof}

We need one last lemma about the behavior of suspicious gates, which intuitively tells us that if we have already seen all the input gates of a gate $g$
and $g$ is still suspicious, then $g$ can never escape:
\begin{lemma}
\label{lem:suspicious_not_innocent_propagate}
	Let $G^{\nu,S}_b$ a gate in $D$ and $\Xi$ be a 
	trace of $D$ starting at $G^{\nu,S}_b$.
	Let $g$ be a gate such that the topmost bag $b'$ in which $g$ appears is $\leq b$, and consider the unique gate of the form
	$G^{\nu',S'}_{b'}$ that is in $\Xi$.
	If $g \in S'$ then $b'=b$ (hence $G^{\nu,S}_b = G^{\nu',S'}_{b'}$
        by uniqueness).
\end{lemma}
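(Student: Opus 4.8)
The plan is to argue by contradiction: suppose $b' \neq b$. Since $b' \leq b$, this means $b'$ is a proper descendant of $b$, so $b'$ has a parent bag $b''$ in $T$ with $b' \leq b'' \leq b$ and $b'' \neq b'$. I would then exploit the structure of the trace $\Xi$ to show that $g$ must belong to $b''$, which contradicts the hypothesis that $b'$ is the topmost bag containing $g$.

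First I would unwind how the gates of $D$ associated with $b'$ and $b''$ are connected inside $\Xi$. By Lemma~\ref{lem:only_one_agree2}, $\Xi$ contains exactly one gate $G^{\nu', S'}_{b'}$ and exactly one gate $G^{\nu'', S''}_{b''}$. As $G^{\nu'', S''}_{b''}$ is an $\lor$-gate, $\Xi$ contains exactly one of its inputs, which by construction is an $\land$-gate of the form $G^{\nu_l, S_l, \nu_r, S_r}_{b''} = \land\bigl(G^{\nu_l, S_l}_{b_l}, G^{\nu_r, S_r}_{b_r}\bigr)$, where $b_l, b_r$ are the children of $b''$; since $\land$-gates have all their inputs in the trace, both $G^{\nu_l, S_l}_{b_l}$ and $G^{\nu_r, S_r}_{b_r}$ lie in $\Xi$. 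One of $b_l, b_r$ is $b'$, and by the one-gate-per-bag property (Lemma~\ref{lem:only_one_agree2}) the corresponding input gate must be exactly $G^{\nu', S'}_{b'}$; hence $(\nu', S')$ is one of $(\nu_l, S_l), (\nu_r, S_r)$.

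Next I would read off the connectibility constraint from the construction. By the rule creating the $\land$-gates of $D$, the mere existence of $G^{\nu_l, S_l, \nu_r, S_r}_{b''}$ requires that both $(\nu_l, S_l)$ and $(\nu_r, S_r)$ be connectible to $b''$ in the sense of Definition~\ref{def:connectible}; in particular $(\nu', S')$ is connectible to $b''$, that is, $S' \subseteq b''$. Since $g \in S'$, this gives $g \in b''$. But $b''$ is the parent of $b'$, hence a proper ancestor of $b'$, while $b'$ is assumed to be the topmost bag in which $g$ appears, so $g \notin b''$ --- a contradiction. Therefore $b' = b$, and the uniqueness part of Lemma~\ref{lem:only_one_agree2} then yields $G^{\nu, S}_b = G^{\nu', S'}_{b'}$.

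I expect no serious obstacle here: the heart of the argument is the one-line observation that connectibility would force the suspicious gate $g$ into the parent $b''$ of its own topmost bag $b'$, which is impossible. The only point requiring care is the bookkeeping in the second paragraph, namely matching the unique trace-gate $G^{\nu', S'}_{b'}$ to the correct child-input of the $\land$-gate at $b''$; but this is entirely forced by the fact that traces contain exactly one gate per descendant bag.
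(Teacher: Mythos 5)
Your proof is correct and follows essentially the same route as the paper's: argue by contradiction via the parent bag of $b'$, observe that by construction $(\nu',S')$ must be connectible to that parent (Definition~\ref{def:connectible}), and conclude $g$ lies in the parent bag, contradicting that $b'$ is topmost. The only difference is that the paper states the connectibility claim as immediate ``by construction,'' whereas you carefully unwind the trace structure at the parent bag to justify it — a worthwhile elaboration, but not a different argument.
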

\begin{proof}
Let $g \in S'$. Suppose by contradiction that $b' \neq b$.
Let $p$ be the parent of $b'$ (which exists because $b' < b$).
It is clear that by construction $(\nu',S')$ is connectible to $p$ (recall Definition~\ref{def:connectible}), hence $g$ must be in $p$,
contradicting the fact that $b'$ should have been the topmost bag in which $g$ occurs.
Hence $b'=b$.
\end{proof}

We now have all the results that we need to show that $D \implies C$, i.e. that, for every valuation~$\chi$ of the variables of~$C$, if $\chi(D) = 1$ then $\chi(C) = 1$ (we see $\chi$ both as a valuation of the variables of $C$, and as a valuation of the (corresponding) variables of $D$). We prove a stronger result.
Given a valuation $\chi$ of the variable gates of $D$ and a gate $g$ of $D$, we say that a \emph{trace of $D$ starting at $g$ according to~$\chi$} 
is a trace of $D$ starting at $g$ such that $\chi$ satisfies every gate in $\Xi$. We show:

\begin{lemma}
	\label{lem:corresponds}
	Let $\chi$ be a valuation of the variable gates, $G^{\nu,\emptyset}_{\root(T)} \in D$ a gate that evaluates to~$1$ under $\chi$,
	and $\Xi$ a trace 
	of $D$ starting at $G^{\nu,\emptyset}_{\root(T)}$ according to~$\chi$. Then $\gamma(\Xi)$ corresponds to the evaluation $\chi$ of $C$.
\end{lemma}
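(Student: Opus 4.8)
The plan is to unfold the meaning of ``corresponds'': I want to show that for every gate $g$ in the domain of $\gamma(\Xi)$ (that is, every gate $g$ lying in some bag $b' \leq \root(T)$ with $G^{\nu',S'}_{b'} \in \Xi$), the value $\gamma(\Xi)(g)$ coincides with the value that $g$ receives in the evaluation of $C$ under $\chi$. This is exactly the asserted correspondence; in particular it applies to $g_\out$, which lies in the root bag and satisfies $\gamma(\Xi)(g_\out) = \nu(g_\out)$, and this is what will later give $D \Rightarrow C$.

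First I would record the key structural fact that every gate $g$ in the domain of $\gamma(\Xi)$ is \emph{innocent} at its topmost bag. Indeed, let $b'$ be the topmost bag containing $g$ and let $G^{\nu',S'}_{b'}$ be the unique gate of $\Xi$ associated with $b'$ (it exists and is unique by Lemma~\ref{lem:only_one_agree2}). If we had $g \in S'$, then Lemma~\ref{lem:suspicious_not_innocent_propagate} would force $b' = \root(T)$, so that $G^{\nu',S'}_{b'} = G^{\nu,\emptyset}_{\root(T)}$ and hence $S' = \emptyset$, contradicting $g \in S'$. Thus $g \notin S'$, i.e.\ $g$ is innocent at its topmost bag; this is precisely what will let us invoke Lemma~\ref{lem:respects_semantics} uniformly.

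Next I would prove the correspondence by induction on the gates of $C$ in a topological order (inputs first). For the base case, let $x$ be a variable gate in the domain, and consider the leaf bag $b_x$ responsible for $x$ together with the unique gate $G^{\nu_x,S_x}_{b_x}$ of $\Xi$ (which exists since $b_x \leq \root(T)$). By construction the sole input of this $\lor$-gate is $G^{x,\nu_x(x)}$, which must therefore lie in $\Xi$ and be satisfied by $\chi$; as $G^{x,1}$ evaluates to $\chi(x)$ and $G^{x,0} = \NOT(G^{x,1})$ to $1 - \chi(x)$, this forces $\nu_x(x) = \chi(x)$, so $\gamma(\Xi)(x) = \chi(x)$. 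For the inductive step, let $g$ be a non-variable gate in the domain and let $b$ be its topmost bag. Every input $g'$ of $g$ co-occurs with $g$ in some bag (since $(g',g)$ is a wire), and that bag is $\leq b$ because the bags containing $g$ form a connected subtree with topmost element $b$; hence $W(g) \subseteq \text{domain}(\gamma(\Xi))$ and each $g'$ precedes $g$ in the topological order. Since $g$ is innocent at $b$, Lemma~\ref{lem:respects_semantics} gives $\gamma(\Xi)(g) = \bigodot \gamma(\Xi)(W(g))$ with $\bigodot$ the type of $g$; applying the induction hypothesis to each input, $\gamma(\Xi)$ agrees with the evaluation of $C$ under $\chi$ on all of $W(g)$, and therefore also on $g$.

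The hard part will not be any single calculation but the innocence observation above: it is exactly the fact that the trace starts from a gate with an \emph{empty} suspicious set that prevents a gate from ``escaping'' while its strong value is still unjustified, which would otherwise break the agreement with the true evaluation. Once this is in place, the argument is a routine topological induction resting on the already-established Lemmas~\ref{lem:only_one_agree2} and~\ref{lem:respects_semantics}, with the variable-gate base case handled directly from the construction at the responsible leaf bags.
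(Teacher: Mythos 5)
Your proof is correct and follows essentially the same route as the paper's: a topological induction on the DAG of $C$, with variable gates handled at their responsible leaf bags and all other gates handled via Lemma~\ref{lem:respects_semantics}, using Lemma~\ref{lem:suspicious_not_innocent_propagate} together with the emptiness of the suspicious set at the root to secure innocence. The only (harmless, in fact slightly cleaner) difference is organizational: you factor out innocence at the topmost bag as a single uniform preliminary observation, whereas the paper re-derives it inside the inductive step via a case split on whether the topmost bag is the root, and treats constant gates as a separate case where your uniform observation covers them automatically.
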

\begin{proof}
  We prove by induction on $C$ (as its graph is a DAG) that for all $g \in C$, $\gamma(\Xi)(g) = \chi(g)$. When $g$ is a variable gate, 
consider the leaf bag $b_g$ that is responsible of $g$,
  and consider
 the gate~$G^{\nu',S'}_{b_g}$ that is in $\Xi$: this gate exists and is unique according to Lemma~\ref{lem:only_one_agree2}. 
	This gate evaluates to $1$ under $\chi$ (because it is in the trace),
 which is only possible if $G^{g,\nu'(g)}$ evaluates to $1$ under $\chi$, hence by construction we must have $\nu'(g)=\chi(g)$ and then $\gamma(\Xi)(g)=\chi(g)$.
 When $g$ is a constant gate (an $\lor$- or $\land$-gate with no inputs), consider the topmost bag $b'$ in which $g$ appears, and consider the unique $G_{b'}^{\nu',S'}$ that is in $\Xi$.
	According to Lemma~\ref{lem:respects_semantics}, we have that $\gamma(\Xi)(g) = \chi(g)$.
 Now suppose that $g$ is an internal gate of type $\bigodot$, and consider the topmost bag $b'$ in which $g$ appears. 
 Consider again the unique $G^{\nu',S'}_{b'}$ that is in~$\Xi$. 
	By induction hypothesis, we know that $\gamma(\Xi)(g') = \chi(g')$ for every input $g'$ of $g$.
We now distinguish two cases:
\begin{itemize}
	\item $b' = \root(T)$.
		Therefore
		by Lemma~\ref{lem:respects_semantics} we know that $\gamma(\Xi)$ respects the semantics of~$g$, which means that  
		$\gamma(\Xi)(g) = \bigodot \gamma(\Xi)(W(g)) = \bigodot \chi(W(g)) = \chi(g)$ (where the second equality comes from the induction hypothesis
		and the third equality is just the definition of the evaluation $\chi$ of $C$), which proves the claim.
	\item $b' < \root(T)$.
          But then by Lemma~\ref{lem:suspicious_not_innocent_propagate} we must have $g \notin S'$
		 (because otherwise we should have $b' = \root(T)$ and then $S' = \emptyset$, a contradiction), that is $g$ is innocent for $G^{\nu',S'}_{b'}$.
		Therefore, again by Lemma~\ref{lem:respects_semantics}, it must be the case that
		$\gamma(\Xi)$ respects the semantics of~$g$, and we can again
                show that $\gamma(\Xi)(g) = \chi(g)$, concluding the proof.
                \qedhere
\end{itemize}
\end{proof}

This indeed implies that $D \implies C$: let $\chi$ be a valuation of the variable gates and suppose $\chi(D) = 1$. Then by definition of the output of 
$D$, it means that the gate $G^{\nu,\emptyset}_{\root(T)}$ such that $\nu(g_\out)=1$ evaluates to $1$ under $\chi$.
But then, considering a trace $\Xi$ of $D$ starting at~$G^{\nu,\emptyset}_{\root(T)}$ according to~$\chi$, we have that $\chi(g_\out) = \gamma(\Xi)(g_\out) = \nu(g_\out) = 1$.

To show the converse ($C \implies D$), one can simply observe the following phenomenon:

\begin{lemma}
	\label{lem:follows2}
	Let $\chi$ be a valuation of the variable gates. Then for every
        bag $b \in T$, the gate~$G^{\chi_{|b},S}_b$ evaluates to $1$
        under $\chi$, where $S$ is the set of gates $g \in \UNF(\nu)$
        such that for every input $g'$ of $g$ that appears in some bag $b'
        \leq b$, then $\chi(g')$ is
        a weak value of $g$.
\end{lemma}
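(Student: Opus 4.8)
The plan is to prove the statement by bottom-up induction on the bag $b \in T$, writing $\nu \defeq \chi_{|b}$ throughout so that the gate under consideration is $G^{\nu,S}_b$ with $S$ as in the statement. I regard $\chi$ both as a valuation of the variable gates and as the induced evaluation of all of~$C$, so that $\chi$ respects the semantics (and in particular the strong values) of every gate. The first thing to observe is that the claimed gate actually exists in $D$: at internal bags, $\chi_{|b}$ is the restriction of a genuine evaluation and is therefore a $(C,b)$-almost-evaluation, and $S \subseteq \UNF(\chi_{|b})$ by its definition; at leaf bags one checks that the set $S$ of the statement coincides with $\UNF(\chi_{|b})$, which is exactly the suspicious set used by the construction for leaf bags.

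For the base case, let $b$ be a leaf bag. If $b$ is responsible for a variable $x$, then $x$ is a variable gate, so both its values are weak and $\UNF(\chi_{|b}) = \emptyset = S$; the single input of $G^{\nu,S}_b$ is $G^{x,\chi(x)}$, which evaluates to $1$ under $\chi$ (it is the variable gate if $\chi(x)=1$, and its negation if $\chi(x)=0$), so the $\lor$-gate $G^{\nu,S}_b$ does too. If $b$ is not responsible for a variable, then the single input of $G^{\nu,S}_b$ is a constant $1$-gate, so $G^{\nu,S}_b$ evaluates to $1$ regardless of~$S$.

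For the inductive case, let $b$ be an internal bag with children $b_l,b_r$, and let $S_l,S_r$ be the suspicious sets given by the statement for $b_l,b_r$, so that by induction $G^{\chi_{|b_l},S_l}_{b_l}$ and $G^{\chi_{|b_r},S_r}_{b_r}$ both evaluate to $1$ under $\chi$. The valuations $\chi_{|b_l}$ and $\chi_{|b_r}$ trivially mutually agree, and since $b \subseteq b_l \cup b_r$ their union restricted to $b$ is exactly $\nu = \chi_{|b}$, which is a $(C,b)$-almost-evaluation. It then remains to check that $(\chi_{|b_l},S_l)$ and $(\chi_{|b_r},S_r)$ are connectible to $b$, and that the suspicious set produced by their result (Definition~\ref{def:result}) is precisely the set $S$ of the statement; once this is done, the construction provides the gate $\land(G^{\chi_{|b_l},S_l}_{b_l},G^{\chi_{|b_r},S_r}_{b_r})$ feeding into $G^{\nu,S}_b$, and as both conjuncts evaluate to $1$, so does $G^{\nu,S}_b$.

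The main obstacle is exactly these two combinatorial checks, and the single decisive tool is the following consequence of $\chi$ being a genuine evaluation: if $\chi(g)$ is strong for $g$, then some input $g'$ of $g$ also carries a value that is strong for $g$, and by the edge-cover property $g$ and $g'$ co-occur in a bag. Connectibility $S_l \subseteq b$ then follows: if $g \in S_l$ but $g \notin b$, then by connectedness all occurrences of $g$ lie in $T_{b_l}$, so the witnessing input $g'$ occurs in a bag $\leq b_l$, contradicting the defining property of $S_l$ that every input of $g$ appearing in a bag $\leq b_l$ is mapped to a weak value. For the equality of suspicious sets, both the statement's $S$ and the construction's output are contained in $\UNF(\chi_{|b})$, so I fix such a $g$ and show that ``every input of $g$ in a bag $\leq b$ is weak'' is equivalent to ``$g \notin b_l$ or $g \in S_l$, and $g \notin b_r$ or $g \in S_r$''. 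The forward direction is immediate since bags $\leq b_l$ and bags $\leq b_r$ are among the bags $\leq b$. For the converse, given an input $g'$ of $g$ occurring in a bag $\leq b$, it is weak if $g' \in b$ (by $g \in \UNF(\chi_{|b})$); otherwise connectedness forces its occurrences within $T_b$ to lie entirely inside $T_{b_l}$ or inside $T_{b_r}$, say $T_{b_l}$, and a further connectedness argument (the bag covering $\{g,g'\}$ must then also lie in $T_{b_l}$, whence $g \in b_l$) lets us invoke $g \in S_l$ to conclude that $\chi(g')$ is weak.
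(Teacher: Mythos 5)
Your proof is correct and follows exactly the route the paper takes: the paper's own proof of this lemma is just the one-line remark ``easily proved by bottom-up induction,'' and your argument is precisely that induction with the details filled in. The three points you make explicit --- that the gate $G^{\chi_{|b},S}_b$ exists (statement's $S$ equals $\UNF(\chi_{|b})$ at leaves and is a subset of it at internal bags), that $(\chi_{|b_l},S_l)$ and $(\chi_{|b_r},S_r)$ are connectible to $b$, and that their result's suspicious set coincides with the statement's $S$ --- are exactly the verifications the paper leaves to the reader, and each is handled correctly via the edge-cover and connectedness properties of the tree decomposition.
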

\begin{proof}
	Easily proved by bottom-up induction.
\end{proof}

Now suppose $\chi(C)=1$. By Lemma~\ref{lem:follows2} we have that
$G^{\chi_{|\root(T)},\emptyset}_{\root(T)}$ evaluates to $1$ under~$\chi$, and because $\chi(g_\out)=1$ we have that $\chi(D)=1$.
This shows that $C \implies D$.
Hence, we have proved that $D$ is equivalent to $C$.

\subsubsection{$D$ is deterministic}
We now prove
that $D$ is deterministic, i.e., that every $\lor$-gate in $D$ is deterministic.
Recall that the only $\lor$-gates in $D$ are the gates of the form $G^{\nu,S}_{b}$.
We will in fact prove that for every valuation $\chi$ and every $\lor$-gate of $D$, there exists at most one trace of $D$ starting at that gate according to $\chi$, which clearly implies that all the
$\lor$-gates are deterministic. 

We start by proving the following lemma:

\begin{lemma}
\label{lem:innocent_or_suspect_but_not_innocent}
	Let $G^{\nu,S}_b$ be a gate in $D$ and $\Xi$ be a 
	trace of $D$ starting at $G^{\nu,S}_b$. 
	Let $g \in b$. Then the following is true:

	\begin{itemize}
		\item if $g$ is innocent ($g \notin S$) and $\nu(g)$ is a strong value of $g$,
			then there exists an input $g'$ of $g$ in the domain of $\gamma(\Xi)$ such that $\gamma(\Xi)(g')$ is a strong value for~$g$.
		\item if $g \in S$, then for every input $g'$ of $g$ that is in the domain of $\gamma(\Xi)$, we have that $\gamma(\Xi)(g')$ is a weak value for $g$. 
	\end{itemize}
\end{lemma}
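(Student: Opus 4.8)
The plan is to treat the two bullets separately: the first will follow almost immediately from Lemma~\ref{lem:behaviour}, while the second calls for a bottom-up induction on the bag~$b$, in the same spirit as the proof of Lemma~\ref{lem:behaviour}.

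For the first bullet, I would assume $g\notin S$ and that $\nu(g)$ is a strong value of~$g$; observe that this forces $g$ to be an $\land$-, $\lor$- or $\NOT$-gate, since both values are weak for a variable gate. I then simply invoke the second item of Lemma~\ref{lem:behaviour}, which yields an input $g'$ of~$g$ in the domain of~$\gamma(\Xi)$ with $\gamma(\Xi)(g')=\nu(g)$ when $g$ is an $\land$- or $\lor$-gate, and $\gamma(\Xi)(g')=1-\nu(g)$ when $g$ is a $\NOT$-gate. In the first case $\gamma(\Xi)(g')$ is the strong value $\nu(g)$ itself, and in the $\NOT$-case both values are strong; so in either case $\gamma(\Xi)(g')$ is a strong value for~$g$, as required.

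For the second bullet, I would induct bottom-up on~$b$. The base case, where $b$ is a leaf bag, is vacuous: leaf bags of the friendly decomposition contain at most one gate, so $b=\{g\}$ and $g$ has no input in the domain of~$\gamma(\Xi)$, which is exactly~$b$. For the inductive step, let $b_l,b_r$ be the children of~$b$. Unfolding the definition of the result (Definition~\ref{def:result}), $g\in S$ is equivalent to $g\in\UNF(\nu)$ together with the requirement that, for each child $b_c$, either $g\notin b_c$ or $g\in S_c$. Now fix an input $g'$ of~$g$ in the domain of~$\gamma(\Xi)$. If $g'\in b$, then $\gamma(\Xi)(g')=\nu(g')$, and since $g\in\UNF(\nu)$ this value is weak for~$g$ by definition of~$\UNF$. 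Otherwise $g'$ occurs only in bags strictly below~$b$, so by connectedness all bags containing~$g'$ lie inside the subtree $T_{b_c}$ for a single child~$b_c$; since the edge $\{g,g'\}$ is covered by some bag, which must then lie in $T_{b_c}$ and contain~$g$, connectedness forces $g\in b_c$ and hence $g\in S_c$. I would then note that the gates of~$\Xi$ reachable from $G^{\nu_c,S_c}_{b_c}$ form a trace starting at $G^{\nu_c,S_c}_{b_c}$, on which $\gamma$ agrees with~$\gamma(\Xi)$, and apply the induction hypothesis (with $g\in S_c$) to conclude that $\gamma(\Xi)(g')$ is weak for~$g$.

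The main obstacle is the non-trivial case of the inductive step: locating the input~$g'$ with respect to~$b$ and its children, and establishing that $g$ must be suspicious in the relevant child~$b_c$. This is where the occurrence and connectedness conditions of the tree decomposition, the exact formula for~$S$ from Definition~\ref{def:result}, and the fact that the restriction of~$\Xi$ to~$T_{b_c}$ is again a bona fide trace (so the induction hypothesis applies) must all be combined; the remaining verifications are routine.
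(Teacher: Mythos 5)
Your proposal is correct and follows essentially the same route as the paper's proof: the first bullet is dispatched by the second item of Lemma~\ref{lem:behaviour}, and the second bullet by bottom-up induction on~$b$, using the friendliness of the decomposition for the leaf case, and, in the inductive step, the edge-coverage and connectedness properties of the tree decomposition together with the unfolding of Definition~\ref{def:result} and the induction hypothesis applied to the child's sub-trace. The only difference is presentational: you argue directly (any input $g'$ is weak, splitting on $g'\in b$ versus $g'\notin b$), whereas the paper argues by contradiction via the contrapositive of the induction hypothesis — logically the same argument.
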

\begin{proof}
  We prove the two claims independently:

	\begin{itemize}
		\item Let $g \in b$ such that $g \notin S$ and $\nu(g)$ is a strong value for~$g$.
			Then the claim directly follows from the second item of Lemma~\ref{lem:behaviour}.
                      \item We prove the second claim via a bottom-up induction on $T$.
			When $b$ is a leaf then it is trivially true because $g$ has no input $g'$ in $b$ because $|b| \leq 1$ because
			$T$ is friendly.
			For the induction case, let $G^{\nu_l,S_l}_{b_l}$ and $G^{\nu_r,S_r}_{b_r}$ be the (unique) gates in $\Xi$ corresponding to the children
			$b_l, b_r$ of~$b$. By hypothesis we have $g \in S$.
			By definition of a gate being suspicious, we know that $\nu(g)$ is a strong value for~$g$.
			To reach a contradiction, assume that there is an input $g'$ of $g$ in the domain of $\gamma(\Xi)$ such that
			$\gamma(\Xi)(g')$ is a strong value for~$g$. Clearly this $g'$ is not in $b$, because $g$ is unjustified by $\nu$ (because $S \subseteq \UNF(\nu)$).
			Either $g'$ occurs in a bag $b_l' \leq b_l$, or it occurs in a bag $b_r' \leq b_r$. 
			The two cases are symmetric, so we assume that we are in the former.
                        As $g \in b$ and $g' \in b_l'$, by the properties of tree decompositions and because $g' \notin b$,
			we must have $g \in b_l$.
			Hence, by the contrapositive of the induction hypothesis on $b_l$ applied to $g$, we deduce that $g \notin  S_l$.
			But then by the second item of Definition~\ref{def:result}, $g$ should be innocent for $G^{\nu,S}_b$, that is $g \notin S$,	
			which is a contradiction. \qedhere
	\end{itemize}
\end{proof}

We are ready to prove that traces starting at $\lor$-gates are unique (according to a valuation of the variable gates).
Let us first introduce some useful notations:
Let $U$, $U'$ be sets of gates, $\nu$, $\nu'$ be valuations having the same domain. 
We write $(\nu,U) = (\nu',U')$ to mean $\nu = \nu'$ and $U = U'$, and for
$g$ in the domain of $\nu$ we write
$(\nu,U)(g) = (\nu',U')(g)$ to mean that $\nu(g)=\nu'(g)$ and that we have $g \in U$ iff $g \in U'$.
We show the following:

\begin{lemma}
	Let $\chi$ be a valuation of the variable gates, $G^{\nu,S}_{b}$ be a gate of $D$.
	Then there exists at most one trace of $D$ starting at 
	$G^{\nu,S}_{b}$ according to $\chi$.
\end{lemma}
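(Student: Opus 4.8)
The plan is to prove the statement by bottom-up induction on the bag $b$, showing that the entire trace from $G^{\nu,S}_b$ is forced by $\chi$ together with the starting gate. For a leaf bag this is immediate: the gate $G^{\nu,S}_b$ has a single input (a variable gate, the negation of one, or a constant gate), so there is no choice to make and the trace is unique. For an internal bag $b$ with children $b_l,b_r$, the only freedom in building a trace from the $\lor$-gate $G^{\nu,S}_b$ lies in the choice of which input $\land$-gate $G_b^{\nu_l,S_l,\nu_r,S_r}$ to include; once this gate is fixed, its two inputs $G_{b_l}^{\nu_l,S_l}$ and $G_{b_r}^{\nu_r,S_r}$ are forced into the trace, and by the induction hypothesis the sub-traces starting at them are unique (recall that by Lemma~\ref{lem:only_one_agree2} a trace contains exactly one gate per descendant bag). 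Hence it suffices to show that the pairs $(\nu_l,S_l)$ and $(\nu_r,S_r)$ are uniquely determined by $\chi$.

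First I would pin down the almost-evaluation $\nu_l$, the argument for $\nu_r$ being symmetric. On the gates of $b_l \cap b$ it is forced to agree with $\nu$, since by Definition~\ref{def:result} the valuation $\nu$ is the restriction of $\nu_l \cup \nu_r$ to $b$. On a forgotten gate $g \in b_l \setminus b$, note that $b_l$ is the topmost bag containing $g$ (as $g \notin b$ and the bags containing $g$ form a connected subtree), so $g$ cannot be suspicious: connectibility to $b$ (Definition~\ref{def:connectible}) forces $S_l \subseteq b$, whereas $g \notin b$, hence $g$ is innocent. If $g$ is a variable gate, its responsible leaf lies inside $T_{b_l}$ and forces $\nu_l(g)=\chi(g)$; otherwise Lemma~\ref{lem:respects_semantics}, applied to $\gamma(\Xi)$ from Definition~\ref{def:gamma2}, shows that $\nu_l(g)$ equals the value computed from its inputs, all of which occur in bags $\leq b_l$. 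Proceeding by induction on the circuit DAG (in topological order of $C$), every such value is in turn determined by $\chi$ and by the boundary values on $b_l \cap b$, so $\nu_l$ is uniquely fixed.

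Then I would determine the suspicious set $S_l$. Since $S_l \subseteq \UNF(\nu_l)$ and $\nu_l$ is now known, the only question is which gates $g \in \UNF(\nu_l)$ belong to $S_l$. The key is the characterization extracted from Lemma~\ref{lem:innocent_or_suspect_but_not_innocent}: a gate $g \in \UNF(\nu_l)$ lies in $S_l$ if and only if every input $g'$ of $g$ occurring in a bag $\leq b_l$ carries a weak value under $\gamma(\Xi)$. Indeed, the second item of that lemma gives the ``only if'' direction, while its first item gives the contrapositive of the ``if'' direction (an innocent gate with a strong value must have a strong input below it). As $\gamma(\Xi)$ is already determined by the previous paragraph, this condition depends only on $\chi$, so $S_l$ is uniquely fixed. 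Applying the same reasoning to $(\nu_r,S_r)$, the connecting $\land$-gate $G_b^{\nu_l,S_l,\nu_r,S_r}$ is unique, which completes the induction.

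I expect the determination of the suspicious sets $S_l,S_r$ to be the main obstacle: whereas $\nu_l$ follows fairly directly once one observes that forgotten gates are automatically innocent and hence respect the circuit semantics, the set $S_l$ is a semantic bookkeeping device whose uniqueness hinges on the precise ``weak inputs seen so far'' characterization coming from Lemma~\ref{lem:innocent_or_suspect_but_not_innocent} (with Lemma~\ref{lem:suspicious_not_innocent_propagate} ensuring that a suspicious gate never escapes its topmost bag). Care is also needed to check that the recursive determination of $\gamma(\Xi)$ is well founded and that the values obtained on $b_l \cap b_r$ are consistent, which is guaranteed by Lemma~\ref{lem:only_one_agree2}.
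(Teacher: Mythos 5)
Your proof is correct, and it follows the paper's skeleton up to a point: the same bottom-up induction on $T$, the same leaf case, and the same reduction of the inductive step to showing that the label $(\nu_l,S_l)$ (and symmetrically $(\nu_r,S_r)$) of the $\land$-input used by a trace is uniquely determined. Where you genuinely diverge is in how that uniqueness is established. The paper argues by contradiction: assuming two distinct $\land$-inputs both admit traces according to $\chi$, it builds an operator $\theta$ mapping any gate on which the two traces disagree (in value or in membership in the suspicious set) to an input gate on which they again disagree---using Lemmas~\ref{lem:respects_semantics}, \ref{lem:suspicious_not_innocent_propagate} and~\ref{lem:innocent_or_suspect_but_not_innocent}---so that iterating $\theta$ produces an arbitrarily long path of distinct gates in the finite DAG $C$, a contradiction. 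You instead determine the trace directly: boundary gates of $b_l \cap b$ are pinned to $\nu$ by Definition~\ref{def:result}; forgotten gates of $b_l \setminus b$ are innocent by connectibility (Definition~\ref{def:connectible}) and hence forced either by $\chi$ (variable gates, through their responsible leaf) or by the circuit semantics (Lemma~\ref{lem:respects_semantics}); a forward topological induction on $C$ then fixes all of $\gamma(\Xi_l)$; and $S_l$ is fixed by the ``iff'' characterization you extract from the two items of Lemma~\ref{lem:innocent_or_suspect_but_not_innocent} (a gate of $\UNF(\nu_l)$ is suspicious iff all its inputs occurring in bags $\leq b_l$ carry weak values). The two arguments are essentially dual---your forward determination is the contrapositive of the paper's downward propagation of disagreements, and both lean on exactly the same supporting lemmas---but yours is more constructive, exhibiting what the unique trace must be rather than refuting the existence of two; the price is the extra bookkeeping you rightly flag: the well-foundedness of the induction on $C$, and the fact that for gates whose topmost bag lies strictly below $b_l$ innocence comes not from connectibility but from Lemma~\ref{lem:suspicious_not_innocent_propagate} (your text states this only for the forgotten gates of $b_l \setminus b$, so a final write-up should spell out that case as well).
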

\begin{proof}
	Fix the valuation $\chi$.
	We will prove the claim by bottom-up induction on $T$. 
	The case when $b$ is a leaf is trivial because gates of the form $G^{\nu,S}_{b}$, for $b$ a leaf of $T$, have either:

	\begin{itemize}
		\item exactly one input $g'$ that is a constant $1$-gate;
		\item or, exactly one input that is a variable gate $G^{x,1}$, if $b$ is responsible of $x$ and $\nu(g)=1$;
		\item or, exactly one input $G^{x,0}$ that is the $\NOT$-gate $\NOT(G^{x,1})$, if $b$ is responsible of $x$ and $\nu(g)=0$.
	\end{itemize}

	In all cases, there can be at most one trace.
	For the inductive case, let $b$ be an internal bag with children $b_l$ and $b_r$.
	By induction hypothesis for every $G^{\nu_l,S_l}_{b_l}$ (resp., $G^{\nu_r,S_r}_{b_r}$), there exists at most one trace
       	of $D$ starting at $G^{\nu_l,S_l}_{b_l}$ (resp., $G^{\nu_r,S_r}_{b_r}$) according to $\chi$.
	Hence, if by contradiction there are at least two traces of $D$
        starting at $G^{\nu,S}_{b}$ according to $\chi$, it can only be because
        $G^{\nu,S}_{b}$ is not deterministic, i.e., because
	for at least two different inputs of $G^{\nu,S}_{b}$,
	there is a trace that starts at this input according to $\chi$,
	say $G^{\nu_l,S_l,\nu_r,S_r}_b$ and $G^{\nu'_l,S'_l,\nu'_r,S'_r}_b$ with $(\nu_l,S_l) \neq (\nu'_l,S'_l)$ or $(\nu_r,S_r) \neq (\nu'_r,S'_r)$. 
	We can suppose that it is $(\nu_l,S_l) \neq (\nu'_l,S'_l)$, since the other case is symmetric.
	Hence there exists $g_0 \in b_l$ such that 
	$(\nu_l,S_l)(g_0) \neq (\nu'_l,S'_l)(g_0)$.
	Let $\Xi_l$ be the trace of $D$ starting at $G^{\nu_l,S_l}_{b_l}$ and $\Xi'_l$ be the trace of $D$ starting at $G^{\nu'_l,S'_l}_{b_l}$ (according to $\chi$).
	We observe the following simple fact about $\Xi_l$ and $\Xi'_l$:

	\begin{enumerate}
		\item[(*)] for any $g$, if $\gamma(\Xi_l)(g) \neq \gamma(\Xi'_l)(g)$ then $g \notin b$. 
	\end{enumerate}

	Indeed otherwise we should have $\nu(g) = \nu_l(g) = \gamma(\Xi_l)(g)$ and $\nu(g) = \nu'_l(g) = \gamma(\Xi'_l)(g)$, 
	which is impossible.

	Now we will define an operator $\theta$ that takes as input a gate $g$ such that $(\gamma(\Xi_l), S_l)(g) \neq (\gamma(\Xi'_l), S'_l)(g)$, and outputs another
	gate $\theta(g)$ which is an input of $g$ and such that again $(\gamma(\Xi_l), S_l)(\theta(g)) \neq (\gamma(\Xi'_l), S'_l)(\theta(g))$.
	This will lead to a contradiction because for any $n \in \NN$, starting with $g_0$ and applying $\theta$ $n$ times consecutively
	we would obtain a path of $n$ mutually distinct gates (because $C$ is acyclic), but
	$C$ has a finite number of gates.

	Let us now define $\theta$: let $g$ such that $(\gamma(\Xi_l), S_l)(g) \neq (\gamma(\Xi'_l), S'_l)(g)$.
	We distinguish two cases:

	\begin{itemize}
		\item We have $(\gamma(\Xi_l), S_l)(g) \neq (\gamma(\Xi'_l), S'_l)(g)$ because $\gamma(\Xi_l)(g) \neq \gamma(\Xi'_l)(g)$. Then by (*), 
			we know for sure that $g \notin b$.
			Therefore the topmost bag $b'$ in which $g$ occurs is $\leq b_l$.
			Let $G^{\nu',S'}_{b'}$ be the gate in $\Xi_l$ and
                        $G^{\nu'',S''}_{b'}$ the gate in $\Xi'_l$ (they exist and are unique by Lemma~\ref{lem:only_one_agree2}). 
			We again distinguish two subcases:
			\begin{itemize}
				\item $g$ is a variable gate. But then it is clear that, by considering the bag $b''$ that is responsible for $g$, we have $b'' \leq b'$, and then
					that $\gamma(\Xi_l)(g) = \chi(g) =  \gamma(\Xi'_l)(g)$, a contradiction.
				\item $g$ is not a variable gate. Observe that by Lemma~\ref{lem:suspicious_not_innocent_propagate} we must have $g \notin S'$ 
			and $g \notin S''$, because otherwise
			we should have $b'=b$, which is not true.
			But then, by Lemma~\ref{lem:respects_semantics}
			 we know that both $\gamma(\Xi_l)$ and $\gamma(\Xi'_l)$ respect the semantics of $g$.
			But we have $\gamma(\Xi_l)(g) \neq \gamma(\Xi'_l)(g)$, so there must exist an input $g'$ of~$g$ such that $\gamma(\Xi_l)(g') \neq \gamma(\Xi'_l)(g')$.
			We can thus take $\theta(g)$ to be $g'$.
			\end{itemize}
		\item We have $(\gamma(\Xi_l), S_l)(g) \neq (\gamma(\Xi'_l),
                  S'_l)(g)$ because (without loss of generality) $g \notin S_l$ and $g \in S'_l$.
			Observe that this implies that $g \in b_l$, and that $\nu'_l(g)$ is a strong value for~$g$.
			We can assume that $\nu_l(g)=\nu'_l(g)$, as otherwise we would have $\gamma(\Xi_l)(g) \neq \gamma(\Xi'_l)(g)$, which is a case already
			covered by the last item. Hence $\nu_l(g)$ is also a strong value for $g$, but we have $g \notin S_l$, 
			so by the first item of Lemma~\ref{lem:innocent_or_suspect_but_not_innocent} we know that
			there exists an input $g'$ of $g$ that occurs in some bag $\leq b_l$ and such that $\gamma(\Xi_l)(g')$ is a strong value for~$g$.
			We show that $\gamma(\Xi'_l)(g')$ must in contrast be a weak value for $g$, so that we can take $\theta(g)$ to be $g'$ and conclude the proof.
			Indeed suppose by way of contradiction that $\gamma(\Xi'_l)(g')$ is a strong value for~$g$. By the contrapositive of the second item
			of Lemma~\ref{lem:innocent_or_suspect_but_not_innocent}, we get that $g \notin S'_l$, which contradicts our assumption.
	\end{itemize}

	Hence we have constructed $\theta$, which shows a contradiction, which means that in fact we must have $G^{\nu_l,S_l,\nu_r,S_r}_b = G^{\nu'_l,S'_l,\nu'_r,S'_r}_b$, so that 
	$G^{\nu,S}_{b}$ is deterministic, which proves that there is at most one trace of $D$ starting at $G^{\nu,S}_{b}$ according to $\chi$, which
	was our goal.
\end{proof}
This concludes the proof that $D$ is deterministic, and thus that $D$ is a d-SDNNF equivalent to $C$.

\subsubsection{Analysis of the Running Time}
We last check that the construction can be performed in time $O(|T| \times f(k))$, for some function $f(k)$ that is in $O(2^{(4+\epsilon)k})$ for any $\epsilon > 0$:
\begin{itemize}
	\item From the initial tree decomposition $T$ of $C$, in time
		$O(k|T|)$ we computed the $g_\out$-friendly tree decomposition $T_{\mathrm{friendly}}$ of size
          $O(k|T|)$;
	\item In linear time in $T_{\mathrm{friendly}}$, for every variable $x \in C_\var$ we selected a leaf bag $b_x$ of $T$ such that $\lambda(b_x) = \{x\}$;
	\item We can clearly compute the v-tree and the mapping in linear time in $T_{\mathrm{friendly}}$;
	\item For each bag $b$ of $T_{\mathrm{friendly}}$ we have $2^{2|b|}
          \leq 2^{2k+2}$ different pairs of a valuation $\nu$ of $b$ and of a subset $S$ of $b$, and checking 
if $\nu$ is a $(C,b)$-almost-evaluation and if $S$ is a subset of the unjustified gates of $\nu$ can be done in
polynomial time in $|b| \leq k+1$ (we access the inputs and the type of each
    gate in constant time from $C$), hence we pay
    $O(|T_{\mathrm{friendly}}| \times p(k) \times 2^{2k})$ to create the
    gates of the form $G^{\nu,S}_b$, for some polynomial $p$;
	\item We constructed  and connected in time $O(|T_{\mathrm{friendly}}| \times p'(k) \times 2^{4k})$ the gates of the form $G_b^{\nu_l,S_l,\nu_r,S_r}$ (the polynomial
		is for testing if the result of $(\nu_l,S_l)$ and of $(\nu_r,S_r)$ is an almost-evaluation);
	\item In time $O(|T_{\mathrm{friendly}}|)$ we connected the gates of the form $G^{\nu,S}_{b}$ for $b$ a leaf to their inputs.
\end{itemize}
Hence, the total time is indeed in $O(|T| \times f(k))$, for some function $f(k)$ that is in $O(2^{(4+\epsilon)k})$ for any $\epsilon > 0$.

\subsection{uOBDDs for Bounded-Pathwidth Circuits}
\label{sec:construction_pathwidth}

We now argue that our construction for Theorem~\ref{thm:upper_bound} can be specialized in the case of bounded-pathwidth circuits to compute uOBDDs, i.e., we prove Theorem~\ref{thm:upper_bound_pw}.
If the input circuit captures a constant Boolean function then there is no
difficulty, so we assume that this is not the case.

Let $C$ be the Boolean circuit with output gate $g_\out$, and $P$ be a path decomposition of $C$ of width $\leq k$.
It is clear that, by adapting Lemma~\ref{lem:friendly}, we can compute in linear time from $P$ a $g_\out$-friendly \emph{tree} decomposition $T$ that is further right-linear.
We then use the same construction as the one we use in Theorem~\ref{thm:upper_bound} with $T$.
This gives us an extended complete d-SDNNF $(D,T',\rho)$ of width $\leq 2^{2(k+1)}$ equivalent to $C$, where $T'$ is an extended v-tree that is right-linear.
It is easy to verify that we can compute this d-SDNNF in time $O(|T| \times f(k))$ for some function $f$ that is in $O(2^{(2+\epsilon)k})$ for any $\epsilon > 0$ (as opposed to $O(2^{(4+\epsilon)k})$ in the original construction). 
This is thanks to the fact that $T$ is right-linear, and because the leaf bags of a friendly tree decomposition can contain at most one gate of $C$.
We then use Lemma~\ref{lem:reduce_extended} to compress the extended complete d-SDNNF into a complete d-SDNNF $(D',T'',\rho')$ of width $\leq 2^{2(k+1)}$, again equivalent to $C$. 
By inspection of the proof of this Lemma, it is clear that $T''$ is a (non extended) v-tree that is right-linear. 
Again by inspection of the proof of Lemma~\ref{lem:reduce_extended}, we can see that we can rewrite every $\land$-gate $g$ of $D'$ to 
be of the form $g' \land x$ or $g' \land \lnot x$, for some variable $x$ and gate~$g'$: this is thanks to the fact that Lemma~\ref{lem:reduce_extended} starts by propagating constants, so that the 
right input of $g$ can only 
be equivalent to $x$ or to $\lnot x$
(there is an $\lor$-gate in between, which we remove).

We now justify that we can transform $D'$ into a complete uOBDD.
First, create the $0$-sink $\bot$ and the $1$-sink $\top$.
Then, we traverse the internal nodes $n$ of~$T'$ top-down and inductively define an uOBDD $O(g)$ for every $\lor$-gate $g$ of~$D'$ structured by $n$.
The intuition is that $O(g)$  captures the subcircuit rooted at $g$. 
Remember that $C$ captures a non-constant Boolean
function, which implies that there are no $\land$- or $\lor$-gates
without input left in $D'$.
We proceed as follows, letting $l$ be the right child of $n$, corresponding to variable $x$, and $n'$ being the left child of $n$:
\begin{itemize}
	\item For every $\lor$-gate $g$ of $D'$ structured by $n$ we do the following. 
		First, compute $A_g^0$, the set of gates $g'$ such that there
                exists an $\land$-gate of the form $g' \land \lnot x$ that is an input of $g$,
		and $A_g^1$, the set of gates $g'$ such that there exists an $\land$-gate of the form $g' \land x$ that is an input of $g$.
		Then define $O^0(g)$ to be $\bigvee_{g' \in A_g^0} O(g')$; and similarly $O^1(g) \defeq \bigvee_{g' \in A_g^1} O(g')$.
		Finally define $O(g)$ to consist of a node labeled by $x$, with outgoing $0$-edge to $O^0(g)$ and outgoing $1$-edge to $O^1(g)$.
\end{itemize}

There is one special case in this construction: we might not have defined $O(g)$ when $g$ is a variable gate (resp., a $\lnot$-gate) that is structured by the leftmost leaf of $T$.
In that case, we define $O(g)$ to consist of a node labeled by the corresponding variable, with outgoing $1$-edge to $\top$ (resp., $\bot$), and outgoing $0$-edge to $\bot$ (resp., $\top$).
It is then clear that, by considering the output gate $G_\out$ of $D'$, we have that $O(G_\out)$ is a complete uOBDD of width $\leq 2^{2(k+1)}$ that captures the same function as $D'$.

\section{Lower Bounds for Structured Classes: $\SCOV_n$ and $\SINT_n$}
\label{sec:set}
In this section, we start our presentation of our lower bound results.
Our upper bound in Section~\ref{sec:result} applied to arbitrary
Boolean circuits; however,
our lower bounds in this section and the next one will already apply to
much weaker formalisms for Boolean functions, namely, monotone DNFs and monotone CNFs. 

We first review some existing lower bounds on the compilation of
monotone CNFs and DNFs 
into OBDDs and d-SDNNFs. 
Bova and Slivovsky have
constructed a family of CNFs of bounded degree whose OBDD representations are
exponential~\cite[Theorem~19]{bova2017compiling},
following 
an earlier result of this type by Razgon~\cite[Corollary~1]{razgon2014obdds}. 
The result of Bova and Slivovsky is as follows:

\begin{theorem}[{\cite[Theorem~19]{bova2017compiling}}]
\label{thm:obddlowerbova}
There is a class of monotone CNF formulas of bounded degree and arity such that every formula $\phi$ in this class has OBDD size at least 
$2^{\Omega(|\phi|)}$.
\end{theorem}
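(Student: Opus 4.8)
The plan is to realize the class as \emph{edge CNFs of bounded-degree expanders}, and to prove the lower bound by a standard subfunction-counting (Myhill--Nerode) argument for OBDDs, made uniform over all variable orders by the expansion property. Concretely, start from a family of $d$-regular graphs $G_n=(V_n,E_n)$ with $|V_n|=n$ and constant edge expansion $h>0$ (e.g.\ Ramanujan graphs or random regular graphs), and define the monotone CNF $\phi_n \defeq \bigwedge_{\{u,v\}\in E_n}(u\lor v)$, whose variables are the vertices and whose clauses are the edges. This $\phi_n$ is exactly the ``vertex cover'' function: $\phi_n(\nu)=1$ iff the set of true variables is a vertex cover of $G_n$. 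Its arity is $2$ and its degree is $d$, both constant, and $|\phi_n|=|E_n|=\Theta(n)$, so an exponential bound in $n$ is an exponential bound in $|\phi_n|$.

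The core step is a lower bound on OBDD width at a balanced cut. Fix any OBDD $O$ for $\phi_n$ with variable order $\mathbf{v}$, and let $X$ be the first $\lfloor n/2\rfloor$ variables of $\mathbf{v}$ and $Y$ the rest. Since $O$ is ordered and reads all of $X$ before any of $Y$, the number of OBDD nodes at the frontier between $X$ and $Y$ is at least the number of distinct subfunctions $\{\nu(\phi_n)\mid \nu\in\{0,1\}^X\}$ on $Y$; thus it suffices to exhibit $2^{\Omega(n)}$ such subfunctions. By edge expansion, the balanced cut $(X,Y)$ is crossed by $\Omega(n)$ edges, \emph{no matter how $\mathbf{v}$ partitions the variables}. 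From these crossing edges I would greedily extract an \emph{induced} matching $\{x_1,y_1\},\dots,\{x_m,y_m\}$ with $x_i\in X$, $y_i\in Y$: picking one crossing edge forbids only the $O(d^2)$ crossing edges touching the closed neighborhoods of its endpoints, so for constant $d$ we get $m=\Omega(n)$.

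The distinctness of subfunctions then follows from the induced structure. For $A\subseteq\{1,\dots,m\}$ define $\nu_A$ on $X$ by setting $x_i=0$ for $i\in A$ and setting all other $X$-variables to $1$. Because the matching is induced, every clause internal to $X$ and every non-matched crossing clause is satisfied by $\nu_A$ (one endpoint is a true $X$-variable), so the residual function is $\bigl(\bigwedge_{i\in A}y_i\bigr)\land \Psi$, where $\Psi$ collects the clauses internal to $Y$ and is the same for all $A$. For $A\neq A'$, pick $i$ in the symmetric difference, say $i\in A\setminus A'$, and take the $Y$-valuation $\tau$ setting $y_i=0$ and all other $Y$-variables to $1$; then $\tau$ satisfies $\Psi$ and all non-$i$ matched constraints, so the $A$-subfunction evaluates to $0$ while the $A'$-subfunction evaluates to $1$. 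Hence the $2^m$ valuations $\nu_A$ give pairwise distinct subfunctions, the frontier width is $2^{\Omega(n)}$, and so $O$ has size $2^{\Omega(n)}=2^{\Omega(|\phi_n|)}$.

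I expect the main obstacle to be the \emph{uniformity over all orders}: an arbitrary order may scatter the matched vertices, so the argument must not depend on which vertices land in $X$ versus $Y$. This is precisely what edge expansion buys us, since it forces $\Omega(n)$ crossing edges at \emph{every} balanced cut; the second delicate point is ensuring the extracted matching is genuinely \emph{induced} in $G_n$ (not merely a matching among crossing edges), as induced-ness is what decouples the residual constraints and lets $\tau$ certify distinctness without inadvertently falsifying some internal clause. Everything else (the frontier-width $=$ subfunction-count identity for OBDDs, and the accounting $|\phi_n|=\Theta(n)$) is routine, and one could alternatively phrase the distinctness step as a fooling-set/rectangle lower bound in the communication-complexity language developed later in Section~\ref{sec:set}.
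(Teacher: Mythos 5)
Your proposal is correct in substance but takes a more self-contained route than the paper. The paper never proves this statement directly---it imports it from Bova and Slivovsky---and instead derives a strict generalization (Theorem~\ref{thm:obdd_lower_main}, Corollary~\ref{cor:obdd_omega}): \emph{any} monotone CNF of constant arity and degree requires nOBDDs of size $2^{\Omega(\pw(\phi))}$, which yields the present statement once one exhibits bounded-arity, bounded-degree CNFs of pathwidth linear in their size (the paper, like you, points to expander graphs for this). The paper's machinery runs: width of a complete nOBDD at a cut bounds the size of a rectangle cover (Theorem~\ref{thm:obdd_has_rectangle_cover}); pathwidth is at most arity times pathsplitwidth (Lemma~\ref{lem:cw_pw}), so \emph{every} variable order splits many clauses somewhere; an independent set in the exclusion graph plus a partial valuation turns the split clauses into $\SCOV_n$ (Proposition~\ref{prp:splitclauses}); and a fooling-set argument shows $\SCOV_n$ has no small rectangle cover (Theorem~\ref{thm:sint_rect_lowerbound}). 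Your proof replaces each ingredient with a lighter one: edge expansion gives $\Omega(n)$ crossing clauses at the balanced cut of every order (playing the role of the pathsplitwidth bound); your greedy induced matching is, for arity-$2$ clauses, exactly an independent set of the paper's exclusion graph; and Myhill--Nerode subfunction counting does the work of the rectangle-cover-plus-fooling-set step. What your route buys is brevity and elementarity for deterministic OBDDs; what the paper's route buys is generality---nOBDDs and uOBDDs, arbitrary constant arity, a bound for \emph{every} formula in terms of its pathwidth rather than for one constructed family, and (via v-trees) the analogous statements for SDNNFs.

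One intermediate claim in your write-up is false, though the proof survives. You assert that induced-ness makes ``every non-matched crossing clause'' satisfied by $\nu_A$, so that the residual is $\bigl(\bigwedge_{i\in A}y_i\bigr)\land\Psi$ with $\Psi$ independent of $A$. Induced-ness only forbids edges between endpoints of \emph{distinct} matched pairs; it does not prevent a matched vertex $x_i$ (which has degree $d\geq 2$) from having additional crossing edges to \emph{unmatched} $Y$-vertices. For $i\in A$ those clauses are \emph{not} satisfied by $\nu_A$ (their $X$-endpoint is $x_i=0$) and they leave unit clauses on the unmatched $Y$-neighbours of $x_i$, so the residual genuinely depends on $A$ beyond the conjunction $\bigwedge_{i\in A}y_i$. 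Fortunately your distinguishing valuation $\tau$ (only $y_i$ set to $0$, all other $Y$-variables set to $1$) still separates: the spurious unit clauses concern unmatched vertices only, hence are satisfied by $\tau$, while induced-ness guarantees that the unit clause $y_i$ occurs in the $A$-residual (from the matched edge) but in no $A'$-residual with $i\notin A'$. So the $2^m$ subfunctions remain pairwise distinct; you should simply restate the residual correctly and run the same evaluation argument.
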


By a similar approach, Bova, Capelli, Mengel, and Slivovsky show an exponential
lower bound on the size of d-SDNNF representing a given family of DNFs~\cite[Theorem~14]{bova2016knowledge}.
However, these bounds
apply to well-chosen families of
Boolean functions. 
We adapt some of these techniques to show a more general result. First, our lower
bounds will apply to \emph{any} monotone DNF or monotone CNF, not to one
specific family.
Second, our lower bounds apply to more expressive classes of binary decision
diagrams than OBDDs, namely, uOBDDs and nOBDDs (recall their definitions from
Section~\ref{sec:kc}).
Third, we obtain finer lower bounds on SDNNFs thanks to our new notion of
width.

In essence, our result is shown by observing that
the families of functions used in~\cite{bova2017compiling,bova2016knowledge} occur ``within'' any
bounded-degree, bounded-arity monotone CNF or DNF.
Here is the formal definition of these two families:

\begin{definition}
	\label{def:SCOV_SINT}
  Let $n \in \NN$ and consider two disjoint tuples $X = (x_1,\dots,x_n)$ and
  $Y = (y_1,\dots,y_n)$. The {\em set covering CNF}
$\SCOV_n(X,Y)$ is the monotone CNF:
\[ \SCOV_n(X,Y)=\bigwedge_{i=1}^n  x_i \vee y_i.\]
Similarly, the {\em set intersection DNF} $\SINT_n(X,Y)$ is
the monotone DNF:
\[ \SINT_n(X,Y)=\bigvee_{i=1}^n x_i \wedge y_i.\]
As the order chosen on $X$ and $Y$ does not matter, we will often abuse notation
  and consider them as sets rather than tuples.
\end{definition}

In this section, we prove lower bounds on the size of representations of the
functions $\SCOV_n$ and $\SINT_n$. We will then show 
in Section~\ref{sec:structured}
how to extend these bounds to arbitrary monotone CNFs/DNFs.

Our lower bounds on the representations of $\SCOV_n(X,Y)$ and $\SINT_n(X,Y)$
will only apply to some specific variable orderings.
Indeed, observe that both $\SCOV_n(X,Y)$ and $\SINT_n(X,Y)$ can easily be
represented by complete OBDDs of size $O(n)$ with the variable ordering $\mathbf{v} \defeq x_1 y_1 \ldots x_n y_n$.
The idea of our bounds is that ``inconvenient'' variable orderings (or
``inconvenient'' v-trees) can force OBDD (or SDNNF) representations of $\SCOV$
and $\SINT$ to be of exponential size.
We formalize our notion of inconvenient variable orderings and v-trees as follows:

\begin{definition}
	\label{def:cut}
Let $V$ be a set of variables, $X$ and $Y$ be two disjoint subsets of $V$.
	We say that a total order $\mathbf{v} =  v_1, \ldots, v_{|V|}$ of $V$ \emph{cuts $(X,Y)$} if there exists $1 \leq i \leq n$ such that $X \subseteq \mathbf{v}_{<i}$ and
	$Y \subseteq \mathbf{v}_{\geq i}$.
	Similarly, we say that a v-tree $T$ over $V$ \emph{cuts $(X,Y)$} if there exists a node $n$ of $T$ such that $X \subseteq \LEAVES(T_n)$ and
	$Y \subseteq \LEAVES(T \setminus T_n)$.
\end{definition}

In the rest of this section, we show the following two theorems. The
first theorem applies to OBDDs, and the second generalizes it to
SDNNFs.

\begin{theorem}
	\label{thm:obdd_lower_scov_sint}
	Let $\mathbf{v}$ be a total order that cuts $(X,Y)$. 
	Then the width of any complete nOBDD (resp., complete uOBDD) structured
        by $\mathbf{v}$ that computes $\SCOV_n(X,Y)$ (resp., $\SINT_n(X,Y)$) is $\geq 2^n -1$.
\end{theorem}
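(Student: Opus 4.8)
The plan is to reduce each width lower bound to a lower bound on the number of rectangles needed to cover (or to disjointly cover) the satisfying assignments of the function across the bipartition $(X,Y)$. Since $\mathbf{v}$ cuts $(X,Y)$ (Definition~\ref{def:cut}) and the variable set is exactly $X \cup Y$ with $|X| = |Y| = n$, all variables of $X$ occur in $\mathbf{v}$ before all variables of $Y$; let $v^\star$ denote the first variable of $Y$ in $\mathbf{v}$. First I would use completeness: every accepting path of a complete nOBDD structured by $\mathbf{v}$ tests every variable in the order $\mathbf{v}$, so it tests $v^\star$ exactly once (by freeness) and only after reading all of $X$. Hence every accepting path passes through exactly one node $u$ labelled $v^\star$, and the number of such nodes equals the $v^\star$-width, which is at most the width $w$ of the diagram.

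For each such node $u$, let $A_u$ be the set of valuations $\alpha$ of $X$ admitting a root-to-$u$ path compatible with $\alpha$, and let $B_u$ be the set of valuations $\beta$ of $Y$ admitting a $u$-to-$1$-sink path compatible with $\beta$. Concatenating such witnesses shows that $A_u \times B_u$ contains only accepting assignments, and the argument above shows that the accepting set equals $\bigcup_u A_u \times B_u$, a rectangle cover with at most $w$ rectangles. In the unambiguous case, if one accepting assignment belonged to two of these rectangles it would admit two accepting paths through two distinct $v^\star$-nodes, contradicting unambiguity; hence for a uOBDD the rectangles are pairwise disjoint and partition the accepting set.

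It then remains to lower-bound the number of rectangles. For $\SCOV_n$, whose accepting set is $\{(\alpha,\beta) : \alpha \cup \beta = [n]\}$ (identifying valuations with their sets of ones), I would exhibit the fooling set $\{(S, [n]\setminus S) : S \subseteq [n]\}$ of size $2^n$: each such pair is accepting, but for $S \neq S'$ at least one of the crossed pairs $(S, [n]\setminus S')$, $(S', [n]\setminus S)$ fails to cover $[n]$, since $S = S'$ is the only way to have both $S \subseteq S'$ and $S' \subseteq S$. Thus no valid rectangle contains two of these pairs, so any cover — in particular the one extracted from the nOBDD — uses at least $2^n$ rectangles, giving width $\geq 2^n \geq 2^n - 1$.

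For $\SINT_n$, whose accepting set is $\{(\alpha,\beta) : \alpha \cap \beta \neq \emptyset\}$, a plain fooling-set argument fails: this function admits a rectangle cover of size only $n$ (cover by the rectangles $\{\alpha_i = 1\} \times \{\beta_i = 1\}$), which is precisely why we must pass through a uOBDD and exploit disjointness. The main obstacle is therefore the disjoint-cover lower bound, which I would obtain by a rank argument over the reals. Since the rectangles extracted from the uOBDD partition the accepting set, the $2^n \times 2^n$ intersection matrix $I_n$ equals the sum $\sum_u \chi_{A_u}\chi_{B_u}^{\top}$ of rank-one $0/1$ matrices, so the number of rectangles is at least $\mathrm{rank}(I_n)$. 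Writing $I_n = J - D_n$, where $J$ is all-ones and $D_n$ is the disjointness matrix, and observing that $D_n$ is the $n$-fold Kronecker power of the full-rank matrix $\left(\begin{smallmatrix} 1 & 1 \\ 1 & 0 \end{smallmatrix}\right)$, we get $\mathrm{rank}(D_n) = 2^n$ and hence $\mathrm{rank}(I_n) \geq 2^n - 1$. This yields width $\geq 2^n - 1$, as claimed.
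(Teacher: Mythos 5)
Your proof is correct, and its overall architecture is the same as the paper's: you extract from the complete nOBDD a rectangle cover of size at most the width, using the nodes labelled by the first variable after the cut (this is exactly the paper's Theorem~\ref{thm:obdd_has_rectangle_cover}, including the disjointness-from-unambiguity step), and for $\SCOV_n$ you use the same fooling set $\{(S,[n]\setminus S)\}$ as in the paper's Theorem~\ref{thm:sint_rect_lowerbound}. The one genuine difference is the $\SINT_n$ half: the paper does not prove the disjoint-cover lower bound at all, but outsources it to Bova et al.\ (Theorem~15 and the proof of Proposition~14 of their paper), whereas you give a self-contained linear-algebraic argument: a disjoint cover of the accepting set writes the intersection matrix $I_n$ as a sum of rank-one $0/1$ matrices, and $\mathrm{rank}(I_n)\geq \mathrm{rank}(D_n)-\mathrm{rank}(J)=2^n-1$ since the disjointness matrix $D_n$ is the $n$-fold Kronecker power of the nonsingular matrix $\left(\begin{smallmatrix}1&1\\1&0\end{smallmatrix}\right)$. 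This buys you a fully self-contained proof (the paper's is not), and your observation that plain fooling sets cannot work for $\SINT_n$ --- because it admits a non-disjoint cover of size $n$ --- correctly pinpoints why determinism/unambiguity is essential on that side of the statement; the cost is a small amount of machinery (rank subadditivity and multiplicativity of rank under Kronecker products) that the paper avoids by citation. One point worth making explicit in your write-up: both you and the paper implicitly assume the diagram's variable set is exactly $X\cup Y$, which is the only situation in which the theorem is later applied (after the partial evaluation of Lemma~\ref{lem:partial_evaluate_nOBDD}).
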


\begin{theorem}
	\label{thm:sdnnf_lower_scov_sint}
	Let $T$ be a v-tree that cuts $(X,Y)$. 
	Then the width of any complete SDNNF (resp., complete d-SDNNF)
        structured by $T$ that computes $\SCOV_n(X,Y)$ (resp., $\SINT_n(X,Y)$) is $\geq 2^n -1$.
\end{theorem}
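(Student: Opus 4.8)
The plan is to route everything through \emph{combinatorial rectangles}. Since $T$ cuts $(X,Y)$, fix a node $u$ of $T$ with $X \subseteq \LEAVES(T_u)$ and $Y \subseteq \LEAVES(T \setminus T_u)$. Because the SDNNF computes a function whose variables are exactly $X \cup Y$, the bipartition of the leaves induced by $u$ restricts on the actual variables to exactly $(X,Y)$: indeed $\LEAVES(T_u)\cap(X\cup Y)=X$ and $\LEAVES(T\setminus T_u)\cap(X\cup Y)=Y$. Call a \emph{rectangle} over $(X,Y)$ a function of the form $\alpha(X)\wedge\beta(Y)$, and call a \emph{cover} of $\phi$ a family of rectangles each implying $\phi$ whose disjunction is $\phi$; the cover is \emph{disjoint} if no valuation satisfies two of them. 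The theorem will follow from two ingredients: a rectangle-cover lemma bounding the number of rectangles by the width, and combinatorial lower bounds showing that $\SCOV_n$ has no small cover while $\SINT_n$ has no small disjoint cover.

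The technical heart is the rectangle-cover lemma. For each $\lor$-gate $g$ structured by $u$ I would associate the function $\lfloor g\rfloor$ on $X$ computed at $g$ (well-defined since $\VARS(g)\subseteq\LEAVES(T_u)$ forces $\VARS(g)\subseteq X$) and a \emph{context} function $\lceil g\rceil$ on $Y$ recording the valuations of $Y$ under which $g$ can be used to reach the output. Reasoning with traces (Definition~\ref{def:trace}) exactly as in the upper-bound proof, I would show that every trace of $D$ from the output selects \emph{exactly one} $\lor$-gate structured by $u$ (by downward induction along the alternation forced by the completeness conditions: an $\lor$-gate feeds $\land$-gates at its node, whose inputs are $\lor$-gates at its children). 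Moreover, climbing from $g$ to the output one only crosses sibling subcircuits of ancestors of $u$, whose leaves lie outside $T_u$, hence read only $Y$; so $\lceil g\rceil$ is genuinely a function of $Y$. This yields $\phi=\bigvee_{g:\rho(g)=u}\lfloor g\rfloor\wedge\lceil g\rceil$, a cover by as many rectangles as there are $\lor$-gates structured by $u$, hence at most the width of the SDNNF. When $D$ is deterministic there is at most one trace per accepting valuation, so at most one rectangle fires on each input and the cover is disjoint. The main obstacle is making $\lceil g\rceil$ and the ``exactly one gate per trace'' claim precise, then checking soundness, covering, and (for d-SDNNF) disjointness; this is the v-tree analogue of Lemma~\ref{lem:only_one_agree2}.

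It then remains to lower-bound the two cover numbers, identifying a valuation of $X$ (resp.\ $Y$) with the set of indices it sets to~$1$. For $\SCOV_n$ and \emph{arbitrary} covers I would use the fooling set $F=\{(S,\overline{S}) : S\subseteq[n]\}$: each $(S,\overline{S})$ satisfies $\SCOV_n$ since $S\cup\overline{S}=[n]$, while for $S_1\neq S_2$ at least one of $(S_1,\overline{S_2})$, $(S_2,\overline{S_1})$ falsifies $\SCOV_n$ (the former satisfies iff $S_2\subseteq S_1$, the latter iff $S_1\subseteq S_2$, and both cannot hold). Hence no monochromatic rectangle contains two elements of $F$, so any cover has $\geq 2^n\geq 2^n-1$ rectangles. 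For $\SINT_n$ and \emph{disjoint} covers I would use a rank argument: the characteristic matrix of a rectangle has rank $1$, and in a disjoint cover these matrices sum to the communication matrix $M$ with $M[X,Y]=1$ iff $X\cap Y\neq\emptyset$, so the number of rectangles is at least $\mathrm{rank}_{\mathbb{R}}(M)$. Writing $M=J-M'$ where $M'$ is the set-disjointness matrix, $M'=\bigotimes_{i=1}^n\left(\begin{smallmatrix}1&1\\1&0\end{smallmatrix}\right)$ has full rank $2^n$, so $\mathrm{rank}(M)\geq 2^n-1$. Combining either bound with the rectangle-cover lemma gives width $\geq 2^n-1$, as required.

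Finally, I would note that this statement subsumes Theorem~\ref{thm:obdd_lower_scov_sint}: an nOBDD (resp.\ uOBDD) structured by an order that cuts $(X,Y)$ compiles, through Proposition~\ref{prp:obdd-to-sdnnf}, to an SDNNF (resp.\ d-SDNNF) of the same width whose right-linear v-tree again cuts $(X,Y)$, so the OBDD width bound is inherited; this confirms that the SDNNF version is the more general of the two.
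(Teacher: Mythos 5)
Your proposal is correct, and its skeleton is the same as the paper's: the paper also proceeds by (i) proving that a complete SDNNF structured by $T$ admits, for any node $u$ of the v-tree, a rectangle cover of size at most its $u$-width, disjoint in the deterministic case (Theorem~\ref{thm:has_rectangle_cover}, established exactly via your ``one $\lor$-gate structured by $u$ per trace'' argument, with the rectangle $R_g$ defined as the set of valuations accepted by a trace through $g$), and (ii) a $2^n-1$ lower bound on rectangle covers of $\SCOV_n$ and disjoint rectangle covers of $\SINT_n$ (Theorem~\ref{thm:sint_rect_lowerbound}); your fooling-set argument for $\SCOV_n$ is verbatim the paper's. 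The one genuine difference is ingredient (ii) for $\SINT_n$: the paper does not prove it but cites \cite{bova2016knowledge} (Theorem~15 and Proposition~14 there), whereas you give a self-contained linear-algebraic proof --- in a disjoint cover the characteristic matrices of the rectangles sum over the reals to the communication matrix $M$ of $\SINT_n$, each contributing rank at most one, and $M = J - M'$ where the disjointness matrix $M'$ has full rank $2^n$ as an $n$-fold tensor power of a nonsingular $2\times 2$ matrix, so $\mathrm{rank}(M) \geq 2^n - 1$. This rank argument is sound (disjointness is precisely what makes the rectangles sum to $M$ rather than merely dominate it), it yields exactly the $2^n-1$ of the statement, and it makes the proof self-contained where the paper outsources; its only limitation is that it applies to disjoint covers, so the fooling set remains necessary for the nondeterministic ($\SCOV_n$/SDNNF) half. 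Your closing observation that the OBDD bound is subsumed via right-linear v-trees matches the paper's Proposition~\ref{prp:obdd-to-sdnnf} perspective, though the paper proves Theorem~\ref{thm:obdd_lower_scov_sint} directly from Theorem~\ref{thm:obdd_has_rectangle_cover} rather than deriving it as a corollary of the SDNNF statement.
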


These results are proved in two steps, presented in the next two sections. First, we show that Boolean
functions computed by SDNNF or nOBDD (resp., d-SDNNF or uOBDD) of width
$w$ can be decomposed as a disjunction (resp., exclusive disjunction) of at most $w$
very simple Boolean functions known as {\em rectangles}.
We then appeal to known results about these functions that
show that $\SCOV_n$ (resp., $\SINT_n$) cannot be decomposed as a disjunction
(resp., exclusive disjunction) of less than $2^n-1$ rectangles,
which implies the desired lower bound.

\subsection{Rectangle Covers for Compilation Targets}

Towards our desired bounds on the size of compilation targets, we start by
formalizing the notion of decomposing Boolean functions as a
\emph{rectangle cover}. 

\begin{definition}
Let $V$ be a set of variables and $(X,Y)$ be a partition of $V$. A
  \emph{$(X,Y)$-rectangle} is a Boolean function $R : 2^V \rightarrow \{0,1\}$
such that there exists $R_X \colon 2^X \rightarrow \{0,1\}$ and
$R_Y \colon 2^Y \rightarrow \{0,1\}$ such that $R = R_X \wedge
  R_Y$. In other words, for any valuation~$\nu$ of~$V$, we have $R(\nu) = 1$ iff
  $R_X(\restr{\nu}{X}) = 1$ and 
  $R_Y(\restr{\nu}{Y}) = 1$.

For any Boolean function $f \colon 2^V \rightarrow \{0,1\}$, a {\em $(X,Y)$-rectangle
  cover} of $f$ is a set $S$ of \mbox{$(X,Y)$-rectangles} such that
  $f = \bigvee_{R \in S} R$. The {\em size} $\card{S}$ of $S$ is the number of
  rectangles. We say that $S$ is
{\em disjoint} if for every $R,R' \in S$, we have $R \wedge R' = \bot$.
\end{definition}

Connections between rectangle covers and compilation target sizes have
already been successfully used to prove lower bounds,
see~\cite{wegener,beame2015new,bova2016knowledge}. We adapt these
results to relate the size of rectangle covers to the width
of compilation targets in our context. We give proofs for these results that are
essentially self-contained: their main difference with existing proofs is that our proofs apply to
our notion of width whereas existing results generally apply to size.

We start by relating the width of OBDDs with the size of rectangle covers:

\begin{theorem}
  \label{thm:obdd_has_rectangle_cover}
  Let $O$ be a complete nOBDD on variables $V$ structured by the total
  order $\mathbf{v} = v_1, \ldots, v_{|V|}$. Let $1 \leq i \leq n$,
  let $\mathbf{v_{<i}} = \{v_j \mid j < i\}$ and
  let $\mathbf{v_{\geq i}} = \{v_j \mid j \geq i\}$. 
  There exists a
  $(\mathbf{v_{<i}}, \mathbf{v_{\geq i}})$-rectangle cover $S$ of $O$
  whose size is at most the $v_i$-width of $O$.
  Moreover, if $O$ is
  an uOBDD, then $S$ is disjoint.
\end{theorem}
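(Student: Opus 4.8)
The plan is to produce one rectangle per node of $O$ labeled by $v_i$, so that the number of rectangles is at most the $v_i$-width (after discarding trivial ones). Write $X \defeq \mathbf{v_{<i}}$ and $Y \defeq \mathbf{v_{\geq i}}$. The first thing I would establish is a structural observation: since $O$ is complete and structured by $\mathbf{v}$, every path $\pi$ from the root to the $1$-sink that is compatible with a valuation $\nu$ tests the variable $v_i$ \emph{exactly once} (at least once by completeness, at most once because $O$ is free), so $\pi$ passes through a unique node $N$ labeled $v_i$. Moreover, along such a $\pi$, every variable-labeled node strictly before $N$ carries a variable appearing before $v_i$ in $\mathbf{v}$ (hence lying in $X$), while $N$ and every variable-labeled node after it carry variables of $Y$; this is precisely where structuredness is used.

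Next I would define, for each node $N$ labeled $v_i$, two functions. Let $R_{X,N} \colon 2^X \to \{0,1\}$ send $\nu_X$ to $1$ iff there is a path from the root to $N$ compatible with $\nu_X$, and let $R_{Y,N} \colon 2^Y \to \{0,1\}$ send $\nu_Y$ to $1$ iff there is a path from $N$ to the $1$-sink compatible with $\nu_Y$. The key point, again from the observation above, is that these are well defined on the respective variable subsets: a root-to-$N$ path tests only $X$-variables and never uses the outgoing edge of $N$, so its compatibility depends only on $\restr{\nu}{X}$; and an $N$-to-$1$-sink path begins with the outgoing edge of $N$ (labeled by $\nu(v_i)$, with $v_i \in Y$) and subsequently tests only $Y$-variables, so its compatibility depends only on $\restr{\nu}{Y}$. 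Setting $R_N \defeq R_{X,N} \wedge R_{Y,N}$ then gives a genuine $(X,Y)$-rectangle, and I take $S$ to be the set of all $R_N$, omitting those equal to $\bot$.

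The core of the argument is to prove $\bigvee_{N} R_N = O$ by path surgery. For the inclusion $\bigvee_N R_N \Rightarrow O$: if $R_N(\nu) = 1$, then concatenating a root-to-$N$ path compatible with $\restr{\nu}{X}$ with an $N$-to-$1$-sink path compatible with $\restr{\nu}{Y}$ yields a single root-to-$1$-sink path compatible with $\nu$, so $O(\nu)=1$. For the converse $O \Rightarrow \bigvee_N R_N$: if $O(\nu)=1$, take an accepting path $\pi$ compatible with $\nu$, use the observation to locate the unique $v_i$-node $N$ on $\pi$, and split $\pi$ at $N$ to witness $R_{X,N}(\restr{\nu}{X}) = R_{Y,N}(\restr{\nu}{Y}) = 1$, whence $R_N(\nu)=1$. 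Finally, for the uOBDD case I would show disjointness: if $R_N(\nu)=R_{N'}(\nu)=1$ with $N \neq N'$, then the two accepting paths obtained by concatenation pass through $N$ and $N'$ respectively; since each tests $v_i$ exactly once, their unique $v_i$-nodes differ, so the paths are distinct, contradicting unambiguity. Hence $R_N \wedge R_{N'} = \bot$.

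The main obstacle, and the step deserving most care, is the clean separation of variables across the split point: everything hinges on arguing, from structuredness together with completeness and the free (read-once) property, that a compatible accepting path meets the layer of $v_i$-nodes exactly once and that the two halves constrain the disjoint variable sets $X$ and $Y$. Once this is pinned down, both the rectangle-cover identity and the disjointness in the unambiguous case follow from routine concatenation and splitting. A minor point to handle is $v_i$-nodes that are unreachable from the root, or from which the $1$-sink is unreachable: these give $R_N = \bot$, which I simply drop, so that $\card{S}$ is at most the $v_i$-width of $O$.
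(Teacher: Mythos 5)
Your proposal is correct and follows essentially the same route as the paper's proof: one rectangle per $v_i$-node, built from root-to-node and node-to-sink path compatibility (the paper defines $R_v$ as the accepted valuations whose path passes through $v$ and then proves it factors as the same product you write down directly), with the cover property coming from completeness and disjointness from unambiguity plus the fact that each compatible accepting path meets exactly one $v_i$-node. The only difference is presentational — you make explicit the well-definedness point that root-to-$N$ paths test only $\mathbf{v_{<i}}$-variables and $N$-to-sink paths only $\mathbf{v_{\geq i}}$-variables, which the paper treats as immediate from structuredness and freeness.
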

\begin{proof}
  Let $v$ be a node of $O$ that tests variable $v_i$, and let $R_v$ be the
  set of valuations of $2^{\mathbf{v}}$ that are accepted in $O$ by a path going
  through $v$. We claim that $R_v$ is a
  $(\mathbf{v_{<i}}, \mathbf{v_{\geq i}})$-rectangle.
  Indeed, let $R_{\mathbf{v_{<i}}}$ be the set of valuations of
  $\mathbf{v_{<i}}$ compatible with some path in~$O$ from the root to~$v$, and
  let $R_{\mathbf{v_{\geq i}}}$ be the set of valuations of
  $\mathbf{v_{\geq i}}$ compatible with some path in~$O$ from~$v$ to the
  $1$-sink. Any valuation of~$R_v$ can clearly be written as the union of one valuation
  from $R_{\mathbf{v_{<i}}}$ and of one valuation from $R_{\mathbf{v_{\geq
  i}}}$. Conversely, given any pair  $\nu_{\mathbf{v_{<i}}}$ and $\nu_{\mathbf{v_{\geq
  i}}}$
  of valuations of these two sets, we can
  combine any two witnessing paths for these valuations to obtain a path in~$O$
  that witnesses that $\nu_{\mathbf{v_{<i}}} \cup \nu_{\mathbf{v_{\geq
  i}}}$ is in~$R_v$. Hence, it is indeed the case that $R_v = 
R_{\mathbf{v_{<i}}} \land R_{\mathbf{v_{\geq
  i}}}$.
  
  Consider now the set $O_i$ of gates of $O$ that test $v_i$. Since $O$ is
  complete, every accepting path of $O$ contains a gate of $O_i$. It
  follows that $\bigcup_{v \in O_i} R_v$ is a rectangle cover of $O$, and its
  size is at most~$|O_i|$, i.e., the $v_i$-width of~$O$.

  Now, if $O$ is an uOBDD, then let $\nu$ be a satisfying valuation of
  $O$. Since $O$ is unambiguous, there exists a unique path $\pi$ in
  $O$ compatible with $\nu$.  Moreover, since $O$ is complete, $\pi$
  contains exactly one node $v$ that is labeled by $v_i$, so that
  $\nu$ is in $R_v$, and not in any other $R_{v'}$ for $v' \neq v$.
  Hence, $\bigcup_{v \in O_i} R_v$ is a disjoint rectangle cover of
  $O$.
\end{proof}

We now generalize this result from OBDDs to SDNNFs. To do so, we use the connections of \cite[Theorem~13]{bova2016knowledge} and
\cite[Theorem~3]{pipatsrisawat2010lower} between rectangle covers and SDNNF
size, and adapt them to our notion of width:

\begin{theorem}[{\cite[Theorem~13]{bova2016knowledge} and
    \cite[Theorem~3]{pipatsrisawat2010lower}}]
  \label{thm:has_rectangle_cover}
  Let $(D,T,\rho)$ be a complete SDNNF on variables~$V$. Let $n \in
  T$.
  There is a 
  $(\LEAVES(T_n),
  \LEAVES(T \setminus T_n))$-rectangle cover $S$
  of~$D$ whose size is at most the $n$-width of $(D,T,\rho)$. 
  Moreover, if $D$ is a d-SDNNF, then $S$ is disjoint.
\end{theorem}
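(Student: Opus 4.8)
The plan is to mirror the OBDD argument of Theorem~\ref{thm:obdd_has_rectangle_cover}, replacing the ``nodes testing $v_i$'' by the $\lor$-gates structured by $n$, and replacing accepting paths by traces (Definition~\ref{def:trace}). Write $X \defeq \LEAVES(T_n)$ and $Y \defeq \LEAVES(T \setminus T_n)$, and let $G_n$ be the set of $\lor$-gates of $D$ structured by~$n$, so that $\card{G_n}$ is exactly the $n$-width of $(D,T,\rho)$. For each $g \in G_n$, I will define a rectangle $R_g$ as the set of valuations $\nu$ of $V$ for which some trace of $D$ according to $\nu$ (a trace all of whose gates are satisfied by~$\nu$) has $g$ as its unique $\lor$-gate structured by~$n$; the cover will be $\{R_g \mid g \in G_n\}$.

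The first key step is a structural ``frontier'' lemma: in any trace $\Xi$ of $D$ starting at the output gate, there is, for \emph{every} node $m$ of~$T$, exactly one $\lor$-gate of $\Xi$ structured by~$m$. This follows by a top-down induction using the completeness conditions: the output $\lor$-gate is structured by the root; the chosen $\land$-input of an $\lor$-gate shares its node (condition~4); and an $\land$-gate structured by an internal node has exactly two $\lor$-inputs, structured by its two distinct children (conditions~5--7). Hence $\Xi$ descends through the v-tree exactly once per node, and in particular contains exactly one gate $g \in G_n$. Since a gate structured by a node of $T_n$ can only be reached through a gate of $G_n$, this $g$ is the ``entry point'' of $\Xi$ into $T_n$, and the gates of $\Xi$ split into the part $\Xi_X$ below~$g$ (structured by nodes of $T_n$, hence mentioning only variables of~$X$ since $\VARS(g) \subseteq X$) and the remaining part $\Xi_Y$ (structured by nodes of $T \setminus T_n$, mentioning only variables of~$Y$).

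Given this decomposition, I will show that $R_g$ is a genuine $(X,Y)$-rectangle. Writing $R_g^X(\alpha) = 1$ iff the subcircuit rooted at~$g$ admits a trace consistent with the valuation $\alpha$ of~$X$ (equivalently, $g$ evaluates to~$1$ under~$\alpha$), and $R_g^Y(\beta) = 1$ iff there is a ``context'' trace of~$D$ that reaches~$g$, treats $g$ as an accepted leaf, and is consistent with the valuation $\beta$ of~$Y$, the independence of $\Xi_X$ and $\Xi_Y$ gives $R_g = R_g^X \wedge R_g^Y$: a trace with entry~$g$ is obtained by freely gluing a sub-trace below~$g$ (over~$X$) to a context trace (over~$Y$), since the two use disjoint variable sets. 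That $\{R_g\}$ covers $D$ is then immediate: a valuation satisfies $D$ iff the output gate is satisfiable under it iff it admits a trace, and that trace has a unique entry $g \in G_n$, so $\nu \in R_g$; conversely each $R_g$ only contains satisfying valuations. As some $R_g$ may be empty, the cover has size at most $\card{G_n}$, i.e., at most the $n$-width.

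Finally, for the disjointness claim when $D$ is a d-SDNNF, I will use determinism to argue that for each satisfying valuation $\nu$ the trace according to~$\nu$ is \emph{unique}: building it top-down, each satisfied $\land$-gate forces both inputs into the trace, while determinism guarantees that a satisfied $\lor$-gate has exactly one satisfied input, leaving no choice. Consequently the entry gate $g \in G_n$ is determined by~$\nu$, so $\nu$ lies in exactly one~$R_g$; hence $R_g \wedge R_{g'} = \bot$ for $g \neq g'$. The main obstacle throughout is the frontier lemma and the consequent clean split of a trace along the cut induced by~$n$: this is precisely where the completeness conditions (each gate carries a v-tree node, and the alternation of $\lor$- and $\land$-gates respects the v-tree) are essential, and it is what lets us treat an SDNNF cut at~$n$ like an OBDD layer.
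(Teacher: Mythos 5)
Your proof is correct and follows essentially the same route as the paper's: rectangles $R_g$ indexed by the $\lor$-gates structured by~$n$, defined via traces, with the rectangle property obtained by gluing the sub-trace below~$g$ (over $\LEAVES(T_n)$) to the context trace (over $\LEAVES(T \setminus T_n)$), coverage from completeness, and disjointness from uniqueness of accepting traces under determinism. The only difference is presentational: you spell out the ``frontier'' lemma and the gluing argument explicitly, whereas the paper delegates these details to \cite[Theorem~1]{bova2016knowledge} and to a brief appeal to the completeness conditions.
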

\begin{proof}
  Recall the notion of trace (Definition~\ref{def:trace}).
  Given an $\lor$-gate $g$ of $D$ structured by $n$, we define $R_g$ to
  be the set of valuations of $2^V$ that are accepted in $D$ by a
  trace going through $g$. Then $R_g$
  defines a $(\LEAVES(T_n), \LEAVES(T \setminus T_n))$-rectangle.
  Intuitively, any trace going through~$g$ defines a trace on the variables
  $\LEAVES(T_n)$ that starts at gate~$g$, and one ``partial'' trace on
  $\LEAVES(T \setminus T_n)$ where $g$ is also used as a leaf; conversely, any pair
  of such traces can be combined to a complete trace in~$D$ that goes
  through~$g$. The precise argument
  is given in \cite[Theorem~1]{bova2016knowledge}, where traces are called 
	\emph{certificates}.

  Consider now the set~$D_n$ of $\lor$-gates structured by $n$. Since $D$ is
  complete, every satisfying valuation of $D$ has a corresponding
  trace containing a gate in $D_n$; this uses the facts that in complete d-SDNNFs,
  no input of an $\land$-gate is an $\land$-gate, and an $\land$-gate structured by an internal node of the v-tree has exactly two children.
  It follows that
  $S = \bigcup_{g \in D_n} R_g$ is a
  $(\LEAVES(T_n), \LEAVES(T \setminus T_n))$-rectangle cover of~$D$ of
  size $|D_n|\leq w$.

  Now, if $D$ is a d-SDNNF, it is not hard to see that every
  satisfying assignment has exactly one accepting trace: otherwise, considering
  any topmost gate where the two traces differ, we see that this gate must be a
  $\lor$-gate where the two traces witness a violation of determinism. Moreover, if
  $\nu$ is a satisfying valuation of $R_g$ for some $g \in D_n$ then
  its unique trace contains $g$ and cannot contain
  another $g' \in D_n$: otherwise this would imply that one $\land$-gate is not
  decomposable, or that there is an $\lor$-gate having an $\lor$-gate as input
  which would be a violation of completeness. Thus $\nu$ is not in $R_{g'}$. In other words, $S$ is disjoint.
\end{proof}

\subsection{Rectangle covers of $\SCOV$ and $\SINT$}
\label{sec:rcCNF}

The second step of the proof of Theorems~\ref{thm:obdd_lower_scov_sint} and
\ref{thm:sdnnf_lower_scov_sint} is to observe that $\SCOV_n(X,Y)$ (resp.,
$\SINT_n(X, Y)$) does not have small $(X,Y)$-rectangle covers (resp., disjoint
covers). This is a folklore result in communication complexity: see, e.g., 
\cite[Section~3]{sherstov2014thirty} for the bound on $\SCOV_n(X, Y)$.
For completeness, we state and prove the result here:

\begin{theorem}
  \label{thm:sint_rect_lowerbound}
  Let $S$ be a $(X,Y)$-rectangle cover (resp., disjoint $(X, Y)$-rectangle
  cover) of $\SCOV_n(X,Y)$ (resp., of $\SINT_n(X,Y)$). Then
  $|S| \geq 2^n-1$.
\end{theorem}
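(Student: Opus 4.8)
The plan is to treat the two cases separately---an arbitrary cover for $\SCOV_n$ and a disjoint cover for $\SINT_n$---after identifying a Boolean valuation of $X$ (resp.\ $Y$) with the subset $A \subseteq \{1,\dots,n\}$ (resp.\ $B \subseteq \{1,\dots,n\}$) of indices set to $1$. Under this identification one has $\SCOV_n(A,B)=1$ iff $A \cup B = \{1,\dots,n\}$ (equivalently $\overline{A} \subseteq B$), and $\SINT_n(A,B)=1$ iff $A \cap B \neq \emptyset$. Throughout I would use the basic fact that every rectangle $R$ occurring in a cover $S$ of a function $f$ satisfies $R \Rightarrow f$, since $f = \bigvee_{R' \in S} R'$ pointwise dominates each $R$.

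For $\SCOV_n$ I would run a fooling-set argument. Consider the $2^n$ ``diagonal'' valuations $(A,\overline{A})$ for $A \subseteq \{1,\dots,n\}$; each satisfies $A \cup \overline{A} = \{1,\dots,n\}$, so it is a satisfying valuation of $\SCOV_n$ and must be covered by some rectangle of $S$. The heart of the argument is that a single rectangle $R = R_X \wedge R_Y$ with $R \Rightarrow \SCOV_n$ cannot contain two distinct diagonal valuations: if $(A,\overline{A})$ and $(A',\overline{A'})$ both satisfy $R$ with $A \neq A'$, then by the product structure of $R$ the crossed valuations $(A,\overline{A'})$ and $(A',\overline{A})$ also satisfy $R$, hence satisfy $\SCOV_n$; but $(A,\overline{A'})$ satisfying $\SCOV_n$ forces $A' \subseteq A$, while $(A',\overline{A})$ forces $A \subseteq A'$, contradicting $A \neq A'$. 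Thus the $2^n$ diagonal valuations lie in pairwise distinct rectangles, giving $|S| \geq 2^n \geq 2^n - 1$.

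For $\SINT_n$ the fooling-set bound is too weak, and I would instead use a rank argument over $\mathbb{Q}$. Let $M$ be the $2^n \times 2^n$ matrix indexed by subsets $A,B$ with $M_{A,B} = \SINT_n(A,B)$. Because $S$ is disjoint, at each valuation at most one rectangle evaluates to $1$, so as integer-valued functions $\SINT_n = \sum_{R \in S} R$ and hence $M = \sum_{R \in S} M_R$, where $M_R$ is the $0/1$ matrix of $R$. Each $M_R$ is the outer product of the indicator vectors of $R_X$ and $R_Y$, so $\operatorname{rank}(M_R) \leq 1$, and subadditivity of rank yields $|S| \geq \operatorname{rank}(M)$. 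Writing $M = J - D$ with $J$ the all-ones matrix (rank $1$) and $D_{A,B} = [A \cap B = \emptyset]$ the disjointness matrix, subadditivity again gives $\operatorname{rank}(M) \geq \operatorname{rank}(D) - 1$, so it suffices to prove $\operatorname{rank}(D) = 2^n$.

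The main (and essentially only nontrivial) obstacle is this full-rank claim for $D$, which I would establish by a Kronecker-product decomposition: indexing rows and columns by characteristic bit-vectors, $D_{A,B} = \prod_{i=1}^n \bigl(1 - [i \in A]\,[i \in B]\bigr)$, so $D = D_1^{\otimes n}$ where $D_1 = \left(\begin{smallmatrix} 1 & 1 \\ 1 & 0 \end{smallmatrix}\right)$ is the single-coordinate disjointness matrix. Since $\det D_1 = -1 \neq 0$, $D_1$ has rank $2$, and because rank is multiplicative under Kronecker products, $\operatorname{rank}(D) = 2^n$. This yields $\operatorname{rank}(M) \geq 2^n - 1$ and hence $|S| \geq 2^n - 1$, completing both cases.
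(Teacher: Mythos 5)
Your proposal is correct, and it splits naturally into two halves that compare differently with the paper. For $\SCOV_n$, your fooling-set argument on the $2^n$ diagonal valuations $(A,\overline{A})$ is essentially identical to the paper's proof (which follows \cite{sherstov2014thirty}); the only cosmetic difference is that the paper derives the contradiction by exhibiting a clause $x_i \lor y_i$ falsified by a crossed valuation, while you derive the two inclusions $A' \subseteq A$ and $A \subseteq A'$ --- the same argument in different clothing. For $\SINT_n$, however, the paper does not give a proof at all: it outsources the bound to \cite{bova2016knowledge} (Theorem~15 together with a sentence from the proof of Proposition~14 there). Your contribution here is a self-contained linear-algebraic proof: writing the integer identity $M = \sum_{R \in S} M_R$ (valid precisely because the cover is disjoint), bounding $|S|$ below by $\operatorname{rank}(M)$ via subadditivity, and computing $\operatorname{rank}(D) = 2^n$ for the disjointness matrix through the Kronecker factorization $D = D_1^{\otimes n}$ with $\det D_1 = -1$. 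This is the classical rank lower bound from communication complexity, and it is (in essence) what underlies the cited result; making it explicit buys the paper self-containedness at the cost of about a paragraph, and it also makes transparent exactly where disjointness is needed --- the step converting the Boolean disjunction into an integer sum, which fails for arbitrary covers (consistently with the fact that $\SINT_n$ does have small non-disjoint covers, e.g., the $n$ rectangles $x_i \land y_i$). All steps check out: the rectangles of a cover do satisfy $R \Rightarrow \SCOV_n$ as you note, rank is indeed multiplicative under Kronecker products, and the final chain $|S| \geq \operatorname{rank}(M) \geq \operatorname{rank}(D) - \operatorname{rank}(J) = 2^n - 1$ is sound.
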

\begin{proof}
  We first prove our claim for $\SCOV_n(X,Y)$ as in~\cite{sherstov2014thirty}.
  For any valuation $\nu$ of~$X$, we denote by~$\overline{\nu}$ the valuation
  of~$Y$ defined by $\nu(y_i) \colonequals \neg \nu(x_i)$ for all~$1 \leq i \leq
  n$. Consider the set $\calF \colonequals \{\nu \cup \overline{\nu} \mid \nu: X \to
  \{0, 1\}\}$: we have $\card{\calF} = 2^n$, it is clear that $\SCOV_n(X, Y)$ evaluates
  to true on each valuation of~$\calF$, and we will now show that $\calF$ is a
  \emph{fooling set} in the terminology of~\cite{sherstov2014thirty}.
  Specifically, consider the $(X, Y)$-rectangle
  cover $S$ and let us show that every rectangle contains at
  most one valuation of~$\calF$, which implies the bound. Assume by contradiction
  that some rectangle $R_X \land R_Y$ contains two valuations $\nu \cup
  \overline{\nu}$ and $\nu' \cup \overline{\nu'}$ of~$\calF$ for $\nu \neq \nu'$, so that $R_X(\nu) =
  R_X(\nu') = R_Y(\overline{\nu}) = R_Y(\overline{\nu'}) = 1$. This implies that
  the rectangle $R_X \land R_Y$ must also contain $\nu \cup \overline{\nu'}$ and
  $\nu' \cup \overline{\nu}$. However, as $\nu \neq \nu'$, there is $1
  \leq i \leq n$ where $\nu(x_i) \neq \nu'(x_i)$. The first case is that we have $\nu(x_i) = 1$ but
  $\nu'(x_i) = 0$, in which case $\overline{\nu}(y_i) = 0$ and
  $\overline{\nu'}(y_i) = 1$, but then $\SCOV_n(X, Y)$ evaluates to~$0$ on
  $\nu' \cup \overline{\nu}$, a contradiction. The second case is that we have
  $\nu(x_i) = 0$ but $\nu'(x_i) = 1$ and we conclude symmetrically using
  $\nu \cup \overline{\nu'}$.
  Thus we have shown that any rectangle of~$S$ can contain
  at most one valuation from $\calF$, so that $\card{S} \geq \card{\calF} \geq 2^n$,
  in particular $\card{S} \geq 2^n - 1$.

The proof of our claim for $\SINT_n(X,Y)$ can be found
  in~\cite{bova2016knowledge}. More precisely, it is exactly Theorem~15
  of~\cite{bova2016knowledge}, together with the second-last sentence in the
  proof of Proposition~14 of~\cite{bova2016knowledge}.
\end{proof}

We are now ready to prove Theorem~\ref{thm:obdd_lower_scov_sint} and
Theorem~\ref{thm:sdnnf_lower_scov_sint}:
\begin{proof}[Proof of Theorem~\ref{thm:obdd_lower_scov_sint}]
	Let $O$ be a complete nOBDD (resp., complete uOBDD) computing $\SCOV_n(X,Y)$ (resp., $\SINT_n(X,Y)$), where $O$ is structured by $\mathbf{v} = v_1,\ldots,v_{|V|}$ that cuts $(X,Y)$.
	Let $1 \leq i \leq n$ witnessing that $\mathbf{v}$ cuts $(X,Y)$, and let $w_i$ be the $v_i$-width of $O$.
	By Theorem~\ref{thm:obdd_has_rectangle_cover}, $\SCOV_n(X,Y)$ (resp., $\SINT_n(X,Y)$) has a $(X,Y)$-rectangle cover (resp., disjoint rectangle cover) of size $\leq w_i$.
	Hence, by Theorem~\ref{thm:sint_rect_lowerbound}, we have $w_i \geq 2^n -1$. 
	But then this implies that the width of $O$ is also $\geq 2^n -1$.
\end{proof}

The proof of Theorem~\ref{thm:sdnnf_lower_scov_sint} is similar, using
Theorem~\ref{thm:has_rectangle_cover} instead of
Theorem~\ref{thm:obdd_has_rectangle_cover}.

\section{Lower Bounds for Structured Classes: General Case}
\label{sec:structured}
In this section we extend the lower bounds of the previous section from the
specific functions $\SCOV_n$
and $\SINT_n$, to obtain lower bounds that apply to
any family of CNFs and DNFs. Specifically, we will show:

\begin{theorem}
	\label{thm:obdd_lower_main}
	Let $\phi$ be a monotone CNF (resp., monotone DNF) of pathwidth~$k$,
        arity~$a$ and degree~$d$. 
	Then the width of any complete nOBDD (resp., any complete uOBDD) computing $\phi$ is $\geq 2^{\frac{k}{a^3d^2}} -1$.
\end{theorem}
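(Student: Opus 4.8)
The plan is to reduce to the special functions of Section~\ref{sec:set} by locating, inside any prescribed variable order, a $\SCOV$ (resp.\ $\SINT$) instance of dimension roughly $\pw(\phi)/(a^3 d^2)$, aligned with a cut of that order. Concretely, combining Lemma~\ref{lem:partial_evaluate_nOBDD} (partial evaluation does not increase width) with Theorem~\ref{thm:obdd_lower_scov_sint}, it suffices to prove: for \emph{every} total order $\mathbf{v}$ of the variables of $\phi$, there is a partial valuation $\nu$ and an integer $m \geq \pw(\phi)/(a^3 d^2)$ such that $\nu(\phi) = \SCOV_m(X,Y)$ (resp.\ $\SINT_m(X,Y)$) for disjoint $X,Y$ with $\mathbf{v}$ restricted to $X \cup Y$ cutting $(X,Y)$. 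Indeed, a complete nOBDD (resp.\ uOBDD) for $\phi$ has some order $\mathbf{v}$; evaluating it partially with $\nu$ yields, by Lemma~\ref{lem:partial_evaluate_nOBDD}, a complete nOBDD (resp.\ uOBDD) of no larger width computing $\SCOV_m$ (resp.\ $\SINT_m$) and structured by an order cutting $(X,Y)$, whence Theorem~\ref{thm:obdd_lower_scov_sint} gives width $\geq 2^m - 1$. This motivates a \emph{pathsplitwidth} parameter measuring, over all orders, the best cut one can force; the two pieces to establish are a lower bound on the number of crossing clauses in terms of $\pw(\phi)$, and an embedding turning crossing clauses into a clean $\SCOV$/$\SINT$ instance.

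For the first piece I would relate pathwidth to crossing edges through vertex separation. The primal graph of $\phi$ (in which the variables of each clause form a clique) has the same pathwidth as the hypergraph of $\phi$, and pathwidth equals the vertex separation number; since the latter is a minimum over orders, for \emph{any} order $\mathbf{v}$ there is a cut position $i$ such that at least $k \defeq \pw(\phi)$ variables lying before $i$ each co-occur in a clause with a variable lying at or after~$i$. Every such ``crossing'' clause has at most $a$ variables, hence contains at most $a$ of these separated variables, so the number of crossing clauses at position $i$ is at least $k/a$.

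The delicate second piece is to extract an \emph{exact} $\SCOV$/$\SINT$ from these crossing clauses; the main obstacle is that a restriction killing the non-selected clauses may conflict with the variables we must keep free, because some other clause could be entirely contained in the union of the selected clauses. To avoid this I would select crossing clauses \emph{greedily and bridge-freely}: pick a crossing clause, forbid its entire $2$-neighbourhood (all clauses sharing a variable with a clause sharing a variable with it), and repeat. By the arity and degree bounds this neighbourhood has size at most $a^2 d^2$, so we obtain $m \geq (k/a)/(a^2 d^2) = k/(a^3 d^2)$ pairwise disjoint clauses $K_1,\dots,K_m$ with no other clause intersecting two of them. Since the CNF/DNF is minimised, no other clause can be a subset of a single $K_j$, so no other clause is contained in $K_1 \cup \dots \cup K_m$ at all. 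Choosing in each $K_j$ one variable $x_j$ before the cut and one variable $y_j$ at or after it, I set every variable outside $\bigcup_j K_j$ so as to delete the remaining clauses (to $1$ in the CNF case, to $0$ in the DNF case), set the extra variables of each $K_j$ to $0$ (CNF) resp.\ $1$ (DNF), and keep $X = \{x_j\}$ and $Y = \{y_j\}$ free. Each $K_j$ then reduces to $x_j \vee y_j$ (resp.\ $x_j \wedge y_j$), every other clause disappears, so $\nu(\phi) = \SCOV_m(X,Y)$ (resp.\ $\SINT_m(X,Y)$), and $\mathbf{v}$ restricted to $X \cup Y$ cuts $(X,Y)$ at~$i$.

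Chaining the two pieces through the reduction above gives width $\geq 2^m - 1 \geq 2^{k/(a^3 d^2)} - 1$, as desired. I expect the bridge-free selection together with the verification that $\nu(\phi)$ is \emph{exactly} $\SCOV_m$/$\SINT_m$ (rather than a formula with stray clauses, which is exactly where the minimisation convention and the no-containment property are needed) to be the crux; the pathwidth-to-crossing-clauses bound via vertex separation should be comparatively routine.
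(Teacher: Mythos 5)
Your proposal is correct and follows essentially the same route as the paper: the reduction via Lemma~\ref{lem:partial_evaluate_nOBDD} and Theorem~\ref{thm:obdd_lower_scov_sint} is exactly the paper's proof of Theorem~\ref{thm:obdd_lower_main}, your greedy bridge-free selection with its $2$-neighbourhood bound is precisely the paper's exclusion-graph independent set (Lemmas~\ref{lem:indepset} and~\ref{lem:exclusiondegree}, Proposition~\ref{prp:splitclauses}), and your crossing-clause count is the paper's pathsplitwidth bound (Lemma~\ref{lem:cw_pw}), the only variation being that you invoke the vertex-separation characterization of pathwidth where the paper directly builds a path decomposition from the order. Incidentally, your polarity for the extra variables inside the selected clauses ($0$ in the CNF case, $1$ in the DNF case) is the right one; the text of Proposition~\ref{prp:splitclauses} in the paper states it the other way around, which appears to be a typo.
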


\begin{theorem}
	\label{thm:SDNNF_lower_main}
	Let $\phi$ be a monotone CNF (resp., monotone DNF) of treewidth~$k$,
        arity~$a$ and degree~$d$. 
	Then the width of any complete SDNNF (resp., any complete d-SDNNF) computing $\phi$ is $\geq 2^{\frac{k}{3a^3d^2}} -1$.
\end{theorem}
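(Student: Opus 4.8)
The plan is to prove both theorems by a single restriction argument: reduce an arbitrary monotone CNF (resp.\ DNF) $\phi$, relative to the \emph{fixed} order or v-tree of the given compilation target, to a large copy of $\SCOV_m$ (resp.\ $\SINT_m$) with $m = \Omega(k/(a^3d^2))$, and then invoke Theorem~\ref{thm:obdd_lower_scov_sint} (resp.\ Theorem~\ref{thm:sdnnf_lower_scov_sint}). Concretely, for Theorem~\ref{thm:SDNNF_lower_main} I would fix a complete (d-)SDNNF $(D,T,\rho)$ of width $w$ for $\phi$, then locate a node $n$ of $T$ and a valuation $\nu$ of part of the variables so that $\nu(\phi) = \SINT_m(X,Y)$ (resp.\ $\SCOV_m$) with $X \subseteq \LEAVES(T_n)$ and $Y \subseteq \LEAVES(T \setminus T_n)$, so that $T$ cuts $(X,Y)$. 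Applying Lemma~\ref{lem:partial_evaluate_DNNF} then yields a complete (d-)SDNNF of width $\leq w$ for $\nu(\phi)$ whose v-tree $T'$ is a \emph{reduction} of $T$; by Definition~\ref{def:vtree-refinement} the node $n$ has a counterpart $n'$ in $T'$ that still cuts $(X,Y)$, so Theorem~\ref{thm:sdnnf_lower_scov_sint} forces $w \geq 2^m - 1$. Everything thus reduces to the combinatorial task of finding such a large cut copy of the hard function; the OBDD statement (Theorem~\ref{thm:obdd_lower_main}) is entirely analogous, using orders, Lemma~\ref{lem:partial_evaluate_nOBDD}, and Theorem~\ref{thm:obdd_lower_scov_sint}.

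To quantify a good cut I would introduce \emph{treesplitwidth} (and, for Theorem~\ref{thm:obdd_lower_main}, \emph{pathsplitwidth}). Call a clause $K$ of $\phi$ \emph{crossing at $n$} if it meets both $\LEAVES(T_n)$ and $\LEAVES(T\setminus T_n)$, write $\mathrm{cross}_n$ for their number, and let $\mathrm{sep}_n$ be the number of variables of $\LEAVES(T_n)$ sharing a clause with a variable outside $\LEAVES(T_n)$ (the primal-graph boundary). Two elementary bounds hold at every $n$: since a crossing clause has at most $a$ variables and contributes at least one boundary variable, $\mathrm{sep}_n \leq a\,\mathrm{cross}_n$; and since each boundary variable lies in at most $d$ clauses, $\mathrm{cross}_n \leq d\,\mathrm{sep}_n$. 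The same definitions and bounds apply verbatim to the cuts of a linear order.

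The first key step is to lower-bound $\max_n \mathrm{sep}_n$ by the width. For orders this is the classical identity between pathwidth and vertex separation: any order has vertex separation at least $\pw(\phi)$, so some cut has $\mathrm{sep}_i \geq k$ and hence $\mathrm{cross}_i \geq k/a$. For v-trees I expect to build a tree decomposition of the primal graph of $\phi$ on the skeleton of $T$, taking as the bag of an internal node $n$ with children $n_1,n_2$ the union of the three boundaries of $\LEAVES(T_n)$, $\LEAVES(T_{n_1})$, $\LEAVES(T_{n_2})$; an edge $uv$ is then covered at the lowest common ancestor of $u$ and $v$, and connectivity holds because the ancestors of a vertex at whose subtree it is still a boundary vertex form a connected lower segment. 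This decomposition has width $\leq 3\max_n \mathrm{sep}_n$, whence $\max_n \mathrm{sep}_n \geq \tw(\phi)/3 = k/3$ and some $\mathrm{cross}_n \geq k/(3a)$. Checking coverage and connectivity is routine, but this is exactly where the factor $3$ (absent in the pathwidth case) is pinned down.

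The second, and hardest, step is the extraction. Fixing a cut with $C$ crossing clauses, I would greedily select a subfamily $K_1,\dots,K_m$ that is \emph{spread out}: pairwise at distance $\geq 3$ in the clause-intersection graph (no shared variable, and no third clause meeting two of them). A clause meets fewer than $ad$ clauses, and each of those fewer than $ad$ in turn, so the distance-$\leq 2$ neighbourhood of a clause has size $\leq (ad)^2$; a greedy independent set therefore yields $m \geq C/(ad)^2$, i.e.\ $m \geq k/(a^3 d^2)$ in the order case and $m \geq k/(3a^3 d^2)$ in the v-tree case. For each $K_j$ I pick representatives $x_j \in K_j \cap \LEAVES(T_n)$ and $y_j \in K_j \cap \LEAVES(T\setminus T_n)$; for the DNF/$\SINT$ case I set every other variable of each $K_j$ to $1$ and every remaining variable to $0$. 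As the $K_j$ are pairwise disjoint, each term survives as $x_j \wedge y_j$ and the $2m$ representatives are distinct; and the distance-$\geq 3$ condition together with minimization guarantees that no other term is contained in $\bigcup_j K_j$ (a term inside the union would meet a single selected term, contradicting minimization, or two of them, contradicting distance $\geq 3$), so every unselected term contains a variable set to $0$ and is killed. Hence $\nu(\phi)$ minimizes to exactly $\SINT_m(X,Y)$ with $X = \{x_j\}$ and $Y=\{y_j\}$, cut by $n$; the CNF/$\SCOV$ case is the exact dual, exchanging $0\leftrightarrow 1$ and $\wedge\leftrightarrow\vee$. The main obstacle throughout is precisely this extraction: making a single restriction simultaneously isolate disjoint representatives and annihilate all unselected clauses without damaging the chosen ones, which is what fixes the exponent $a^3 d^2$ (and the extra factor $3$ for treewidth).
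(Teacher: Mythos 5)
Your proposal is correct and follows essentially the same route as the paper: it introduces path/treesplitwidth, relates treewidth to it by building a tree decomposition on the v-tree skeleton (yielding the factor $3a$), extracts an independent (distance-$\geq 3$) set of split clauses via a greedy argument on what the paper calls the exclusion graph, restricts to obtain $\SCOV_m$/$\SINT_m$ cut by the v-tree, and concludes via Lemma~\ref{lem:partial_evaluate_DNNF} and Theorem~\ref{thm:sdnnf_lower_scov_sint}. The only deviations (using primal-graph boundary vertices $\mathrm{sep}_n$ as an intermediate instead of the full variable sets of split clauses, and the order in which the restriction is applied) are cosmetic and do not change the argument or the constants.
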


Together with our upper bounds, this implies the following when we have a
constant bound on arity and
degree:

\begin{corollary}
  \label{cor:obdd_theta}
  For any monotone CNF $\phi$ (resp., monotone DNF $\phi$) of constant arity and
  degree, the width of the
  smallest complete nOBDD (resp., uOBDD) computing~$\phi$ is $2^{\Theta(\pw(\phi))}$.
\end{corollary}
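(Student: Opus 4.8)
The plan is to derive the two matching bounds from results already in hand: the lower bound is an immediate specialization of Theorem~\ref{thm:obdd_lower_main}, while the upper bound comes from encoding the formula as a circuit (using the observation relating the pathwidth of a monotone CNF/DNF to that of its natural circuit, in Section~\ref{sec:preliminaries}) and then applying the pathwidth upper bound of Theorem~\ref{thm:upper_bound_pw}.

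For the lower bound, I would set $a \defeq \arity(\phi)$ and $d \defeq \degree(\phi)$, and read off Theorem~\ref{thm:obdd_lower_main}: every complete nOBDD (resp.\ complete uOBDD) computing a monotone CNF (resp.\ monotone DNF) $\phi$ of pathwidth $k = \pw(\phi)$ has width at least $2^{k/(a^3 d^2)} - 1$. Since $a$ and $d$ are bounded by a constant, the factor $a^3 d^2$ is a constant, so the exponent is $\Omega(\pw(\phi))$, and hence the smallest complete nOBDD (resp.\ uOBDD) for $\phi$ has width $2^{\Omega(\pw(\phi))}$.

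For the upper bound, I would first use the observation of Section~\ref{sec:preliminaries} to obtain from $\phi$ an equivalent circuit $C_\phi$ together with a path decomposition witnessing $\pw(C_\phi) \leq \pw(\phi) + 2$. Feeding $C_\phi$ and this decomposition into Theorem~\ref{thm:upper_bound_pw} produces a complete uOBDD equivalent to $\phi$ of width at most $2^{2(\pw(\phi)+3)}$, i.e.\ $2^{O(\pw(\phi))}$. A complete uOBDD is in particular a complete nOBDD, so this single construction simultaneously upper-bounds the width of the smallest complete nOBDD (for the CNF case) and of the smallest complete uOBDD (for the DNF case). Combining with the lower bound yields width $2^{\Theta(\pw(\phi))}$ in both cases.

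There is no genuine obstacle here, since the corollary only assembles earlier theorems; the only points needing care are checking that bounded arity and degree collapse $a^3 d^2$ into the constant hidden by $\Theta$, and noting that because uOBDDs are a special case of nOBDDs, the lower and upper bounds apply to the same class of objects in each of the two cases (nOBDD for CNFs, uOBDD for DNFs).
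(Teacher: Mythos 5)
Your proposal is correct and follows essentially the same route as the paper, which states this corollary as an immediate combination of the lower bound of Theorem~\ref{thm:obdd_lower_main} (with $a^3d^2$ absorbed into the constant) and the upper bound obtained by encoding $\phi$ as a circuit of pathwidth $\leq \pw(\phi)+2$ and applying Theorem~\ref{thm:upper_bound_pw}. Your observation that the constructed complete uOBDD also serves as the complete nOBDD witness in the CNF case is exactly the intended reading.
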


\begin{corollary}
  \label{cor:SDNNF_theta}
  For any monotone CNF $\phi$ (resp., monotone DNF $\phi$) of constant arity and
  degree, the width of the
  smallest complete SDNNF (resp., d-SDNNF) computing~$\phi$ is $2^{\Theta(\tw(\phi))}$.
\end{corollary}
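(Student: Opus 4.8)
The plan is to derive the $2^{\Theta(\tw(\phi))}$ estimate by assembling a matching pair of upper and lower bounds, both of which are essentially already available from earlier results once we relate the relevant formalisms.

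For the \emph{lower bound} I would invoke Theorem~\ref{thm:SDNNF_lower_main} directly. Write $k = \tw(\phi)$ and let $a$ and $d$ be the (constant) arity and degree of $\phi$. That theorem guarantees that every complete SDNNF (resp., complete d-SDNNF) computing a monotone CNF (resp., monotone DNF) $\phi$ has width at least $2^{k/(3a^3d^2)} - 1$. Since $a$ and $d$ are bounded by constants, the exponent $k/(3a^3d^2)$ is $\Omega(\tw(\phi))$, so this bound is $2^{\Omega(\tw(\phi))}$, which is the lower half of the claim.

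For the \emph{upper bound} I would first pass through the natural circuit encoding of $\phi$: the Observation of Section~\ref{sec:preliminaries} supplies a circuit $C_\phi$ equivalent to $\phi$ with $\tw(C_\phi) \leq \tw(\phi)+2$. Feeding $C_\phi$ into Corollary~\ref{cor:upper_bound} (which itself combines Theorem~\ref{thm:upper_bound} with Lemma~\ref{lem:reduce_extended}) yields a complete d-SDNNF equivalent to $C_\phi$, hence to $\phi$, of width $O(2^{c(\tw(\phi)+2)}) = 2^{O(\tw(\phi))}$. In the DNF case this object is already a complete d-SDNNF, so the smallest such representation has width $2^{O(\tw(\phi))}$. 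In the CNF case the target is a (not necessarily deterministic) complete SDNNF, so here I would merely invoke the inclusion of d-SDNNFs into SDNNFs (Section~\ref{sec:kc}): the complete d-SDNNF produced above is in particular a complete SDNNF, again of width $2^{O(\tw(\phi))}$.

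Combining the two halves shows that the smallest complete SDNNF (resp., d-SDNNF) for $\phi$ has width simultaneously $2^{\Omega(\tw(\phi))}$ and $2^{O(\tw(\phi))}$, that is $2^{\Theta(\tw(\phi))}$. I do not anticipate a genuine obstacle, since all the substantive work lies in Theorem~\ref{thm:SDNNF_lower_main} and in the construction of Section~\ref{sec:result}; the only points needing care are (i) routing the upper bound through the bounded-treewidth circuit $C_\phi$ rather than applying it to $\phi$ itself, and (ii) using that a d-SDNNF is \emph{a fortiori} an SDNNF, which is exactly what lets the d-SDNNF construction discharge the CNF/SDNNF upper bound. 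The companion statement Corollary~\ref{cor:obdd_theta} follows by the verbatim argument with pathwidth in place of treewidth, Theorem~\ref{thm:obdd_lower_main} in place of Theorem~\ref{thm:SDNNF_lower_main}, and Theorem~\ref{thm:upper_bound_pw} supplying the uOBDD upper bound (which also serves as an nOBDD bound, since uOBDDs are nOBDDs).
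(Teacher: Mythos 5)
Your proposal is correct and follows essentially the same route the paper intends: the paper states this corollary as an immediate consequence of Theorem~\ref{thm:SDNNF_lower_main} (the lower bound, with $a,d$ constant) together with the upper bounds of Section~\ref{sec:result}, which is exactly your combination of the Observation giving $C_\phi$ with $\tw(C_\phi) \leq \tw(\phi)+2$, Corollary~\ref{cor:upper_bound} producing a complete d-SDNNF of width $2^{O(\tw(\phi))}$, and the fact that a complete d-SDNNF is in particular a complete SDNNF for the CNF case. No gap.
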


The completeness assumption can be lifted using the completion results
(Lemma~\ref{lem:complete_obdd_sdnnf}), to show a lower bound on representations
that are not necessarily complete. However, if we do this, we no longer have a
definition of width, so the lower bound is on the \emph{size} of the
representation (and thus is no longer tight).

\begin{corollary}
  \label{cor:obdd_omega}
  For any monotone CNF $\phi$ (resp., monotone DNF $\phi$) of constant arity and
  degree, the size of the
  smallest nOBDD (resp., uOBDD) computing~$\phi$ is $2^{\Omega(\pw(\phi))}$.
\end{corollary}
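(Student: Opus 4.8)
The plan is to reduce to the width lower bound on \emph{complete} representations (Theorem~\ref{thm:obdd_lower_main}) via the completion procedure of Lemma~\ref{lem:complete_obdd_sdnnf}, and then to turn the resulting width lower bound into a size lower bound while controlling the polynomial blow-up that completion introduces. Throughout I treat the two branches of the statement (monotone CNF with nOBDD, and monotone DNF with uOBDD) in parallel, since Theorem~\ref{thm:obdd_lower_main} is stated with exactly this ``resp.'' pairing.

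Concretely, fix a monotone CNF (resp.\ monotone DNF) $\phi$ of constant arity $a$ and degree $d$, on a variable set $V$ with $n \defeq |V|$, and let $O$ be any nOBDD (resp.\ uOBDD) computing $\phi$. First I would apply Lemma~\ref{lem:complete_obdd_sdnnf} to obtain an equivalent \emph{complete} nOBDD (resp.\ uOBDD) $O'$ with $|O'| \leq (n+1)\,|O|$. Here one must check that completion preserves unambiguity in the uOBDD case (the lemma is phrased for nOBDDs and SDNNFs), which holds because completion only inserts, for each skipped variable, a deterministic ``don't care'' node whose two outgoing edges lead to the same target; this leaves unchanged the set of root-to-sink paths compatible with any valuation, hence preserves both the computed function and unambiguity. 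Since $O'$ is complete it has a well-defined width $w$, and because every variable-labeled node carries two outgoing edges, the $w$ nodes witnessing the width contribute $2w$ distinct edges, so $w \leq |O'|$. Applying Theorem~\ref{thm:obdd_lower_main} to $O'$ with $k \defeq \pw(\phi)$ gives $w \geq 2^{k/(a^3 d^2)} - 1$, and chaining the inequalities yields $2^{k/(a^3 d^2)} - 1 \leq w \leq |O'| \leq (n+1)\,|O|$.

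The main obstacle is removing the factor $(n+1)$: in general $\pw(\phi)$ can be far smaller than $\log n$ (e.g.\ for path-like hypergraphs of bounded arity and degree), so naively dividing by $n+1$ would destroy the exponential bound. The fix is to use a second, trivial lower bound $|O| \geq n$: as $\phi$ is minimized, it depends on each of its $n$ variables, so any nOBDD for $\phi$ must test each variable on some path, giving at least $n$ variable-labeled nodes, and hence $|O| \geq n$ since a rooted DAG in which every node is reachable from the root has at least (number of nodes)$\,-1$ edges. Consequently $n + 1 \leq 2|O|$, and the chain above becomes $2^{k/(a^3 d^2)} - 1 \leq 2\,|O|^2$, whence $|O| \geq \sqrt{(2^{k/(a^3 d^2)} - 1)/2}$. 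Since $a$ and $d$ are constant, $a^3 d^2$ is a constant, so this is $2^{\Omega(\pw(\phi))}$, which is exactly the claimed bound; the identical argument along the ``resp.'' branches establishes the monotone DNF / uOBDD case.
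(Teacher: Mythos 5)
Your proposal is correct and follows essentially the same route as the paper: apply the completion result (Lemma~\ref{lem:complete_obdd_sdnnf}) to reduce to the width bound of Theorem~\ref{thm:obdd_lower_main}, then absorb the resulting polynomial blow-up by noting that $(|V|+1)\times|O| \leq |O|^2$ (the paper states exactly this, which implicitly uses your trivial bound $|O| \geq |V|+1$) and that $\sqrt{2^{\Omega(k)}}$ is still $2^{\Omega(k)}$. Your explicit check that completion preserves unambiguity is a welcome extra care, since the paper states the completion lemma only for nOBDDs and SDNNFs and leaves that point implicit.
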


\begin{corollary}
  \label{cor:SDNNF_omega}
  For any monotone CNF $\phi$ (resp., monotone DNF $\phi$) of constant arity and
  degree, the size of the
  smallest SDNNF (resp., d-SDNNF) computing~$\phi$ is $2^{\Omega(\tw(\phi))}$.
\end{corollary}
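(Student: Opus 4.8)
The plan is to lift the completeness assumption behind Corollary~\ref{cor:SDNNF_theta} (concretely, behind the width lower bound of Theorem~\ref{thm:SDNNF_lower_main}) by invoking the completion result, and then to absorb the resulting polynomial blow-up using a simple structural lower bound on the size of any SDNNF for $\phi$. So I would fix a monotone CNF $\phi$ (resp.\ monotone DNF) of treewidth $k \defeq \tw(\phi)$, arity $a$ and degree $d$, let $D$ be any SDNNF (resp.\ d-SDNNF) computing $\phi$, and write $s \defeq |D|$ and $V$ for the variable set.

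First I would apply Lemma~\ref{lem:complete_obdd_sdnnf} to turn $D$ into an equivalent \emph{complete} SDNNF (resp.\ complete d-SDNNF) $D'$ with $|D'| \leq (|V|+1)\,s$; in the DNF/d-SDNNF case one checks that the underlying completion construction preserves determinism, so that $D'$ is genuinely a complete d-SDNNF. By Theorem~\ref{thm:SDNNF_lower_main}, the width of $D'$ is at least $2^{k/(3a^3d^2)}-1$, and since the width of a complete \mbox{(d-)}SDNNF counts a subclass of its gates it is bounded by $|D'|$. This yields
\[
  (|V|+1)\, s \;\geq\; |D'| \;\geq\; 2^{\,k/(3a^3d^2)} - 1 .
\]

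The remaining task is to remove the factor $|V|+1$. Here I would use that $\phi$, being a minimized monotone CNF/DNF, depends on every one of its variables: fixing a clause through a given variable to $0$ on its other literals while satisfying all other clauses shows that variable is relevant (dually for DNF). Hence $D$ contains a variable gate for each element of $V$, so $s \geq |V|$ and thus $|V|+1 \leq 2s$. Substituting into the displayed inequality gives $2s^2 \geq (|V|+1)\,s \geq 2^{k/(3a^3d^2)}-1$, whence $s \geq \sqrt{\bigl(2^{k/(3a^3d^2)}-1\bigr)/2}$. Since $a$ and $d$ are constant, $3a^3d^2$ is a constant, so $s = 2^{\Omega(k)} = 2^{\Omega(\tw(\phi))}$, establishing the bound for SDNNFs (CNFs) and for d-SDNNFs (DNFs) at once.

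The main obstacle is exactly this polynomial factor. Completion is cheap only up to multiplying the size by $O(|V|)$, and families of bounded arity and degree can have treewidth arbitrarily small relative to $\log|V|$ (e.g.\ a small hard core glued to a long low-width part), so one cannot simply divide the width bound by $|V|$ and still conclude $2^{\Omega(k)}$. The ``depends on all variables'' bound $s \geq |V|$ is what saves the argument: whenever $|V|$ is large the formula, and hence any representation of it, is already large, and combining the two inequalities \emph{multiplicatively} produces the clean exponential bound (this also explains the loss of tightness noted in the text). The companion statement Corollary~\ref{cor:obdd_omega} for pathwidth, nOBDDs and uOBDDs is proved identically, using Theorem~\ref{thm:obdd_lower_main} and the nOBDD half of Lemma~\ref{lem:complete_obdd_sdnnf} in place of their SDNNF analogues.
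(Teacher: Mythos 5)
Your proposal is correct and follows essentially the same route as the paper: complete the given (d-)SDNNF via Lemma~\ref{lem:complete_obdd_sdnnf}, apply the width lower bound of Theorem~\ref{thm:SDNNF_lower_main} to the completed circuit, bound width by size, and absorb the $(|V|+1)$ factor using the fact that any representation of a minimized monotone CNF/DNF has size at least $|V|$ (the paper compresses this last step into the remark that $(|V|+1)\times|D| \leq |D|^2$ and $\sqrt{2^{\Omega(k)}} = 2^{\Omega(k)}$). Your explicit check that the completion construction preserves determinism in the d-SDNNF case is a detail the paper leaves implicit, but it is the intended reading.
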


In the case of OBDDs, it is easy to generalize
width to
non-complete OBDDs such that we can lift the
completeness assumption in Corollary~\ref{cor:obdd_theta}: see Theorem~15
of~\cite{amarilli2018connecting}. However, this result
in~\cite{amarilli2018connecting} is only shown for OBDDs (not nOBDDs or uOBDDs),
and it is open whether it extends to these larger classes. As for
SDNNFs, we leave to future work the task of generalizing width to non-complete
circuits and showing comparable bounds.

To prove Theorem~\ref{thm:obdd_lower_main} and \ref{thm:SDNNF_lower_main}, we
will explain how we can find $\SCOV_n$ (resp., $\SINT_n$) in any monotone CNF
(resp., DNF) of high pathwidth/treewidth. To this end, we first present a general notion of
a set of clauses being \emph{split} by two variable subsets. We will
then show Theorem~\ref{thm:obdd_lower_main}, and last show Theorem~\ref{thm:SDNNF_lower_main}.

\subsection{From Split Sets of Clauses to $\SCOV_n$ and $\SINT_n$}

To prove Theorem~\ref{thm:obdd_lower_main} and
\ref{thm:SDNNF_lower_main}, we will need to find a subset of the clauses of the
formula which is \emph{split} by two subsets of variables. In this subsection,
we introduce the corresponding notions. We first define the notions of \emph{split
clauses}, which we introduce in terms of hypergraphs because the definition is
the same for DNF and CNF.

\begin{definition}
	\label{def:split}
Let $H = (V, E)$ be a hypergraph, and let $X', Y'$ be two disjoint subsets of~$V$.
  We say that a set $E' \subseteq E$ of hyperedges is \emph{split} by~$(X', Y')$
  if every $e \in E'$ intersects~$X'$ and~$Y'$ nontrivially, i.e., $e \cap
  X' \neq \emptyset$ and $e \cap Y' \neq \emptyset$.
\end{definition}

If we can find a set $K'$ of clauses of a monotone CNF (resp., monotone DNF) that are
split by some pair $(X', Y')$ of disjoint variable subsets, then we can use it to find a partial
valuation that yields $\SCOV_n(X, Y)$ (resp., $\SINT_n(X, Y)$) for some $X
\subseteq X'$ and $Y \subseteq Y'$, where the number $n$ of extracted clauses depends on the number of clauses
in~$K'$ and on the arity and degree. Formally:

\begin{proposition}
  \label{prp:splitclauses}
  Let $\phi$ be a monotone CNF (resp., monotone DNF) with variable set~$V$,
  arity~$a$ and degree~$d$. Assume that there are two disjoint subsets $X', Y'$
  and a set $K'$ of clauses of~$\phi$ such that $K'$ is split by $(X', Y')$. 
  Let $n \colonequals \left\lfloor\frac{\card{K'}}{a^2 \times d^2}\right\rfloor$.
  Then we can find $X \subseteq X'$ and $Y \subseteq Y'$ such that $\card{X} =
  \card{Y} = n$, and a valuation $\nu$ of~$V \setminus (X \cup Y)$ such that
  $\nu(\phi) = \SCOV_n(X, Y)$ (resp., $\nu(\phi) = \SINT_n(X, Y)$).
\end{proposition}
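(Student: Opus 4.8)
The plan is to view $\phi$ through its hypergraph, identifying each clause with the set of variables it contains, and to extract the desired function by a greedy selection of $n$ clauses from $K'$ that are pairwise ``far apart''. Since $K'$ is split by $(X',Y')$, every clause $K \in K'$ meets both $X'$ and $Y'$, so once a clause is selected we can designate one witness $x \in K \cap X'$ and one witness $y \in K \cap Y'$; these will be the free variables of the extracted function, while $\nu$ fixes all other variables. The delicate point is to fix the remaining variables so that each selected clause collapses to exactly $\{x_i,y_i\}$ (a factor $x_i \vee y_i$ in the CNF case, a term $x_i \wedge y_i$ in the DNF case), while \emph{every} non-selected clause is neutralized — made true in the CNF case, false in the DNF case — so that it disappears after minimization.

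First I would run the following greedy selection. Consider the intersection graph on the clauses of $\phi$, where two clauses are adjacent iff they share a variable. Repeatedly pick any clause $K$ still available in $K'$, add it to the selection, and delete from the candidate pool every clause lying within distance $\leq 2$ of $K$ in this graph. Using $\arity(\phi) \leq a$ and $\degree(\phi) \leq d$, the distance-$\leq 1$ neighborhood of $K$ contains at most $ad$ clauses (at most $a$ variables, each in at most $d$ clauses), so the distance-$\leq 2$ neighborhood contains at most $(ad)^2 = a^2 d^2$ clauses. Hence each selection removes at most $a^2 d^2$ clauses from the pool, and we can select at least $\lfloor \card{K'}/(a^2 d^2)\rfloor = n$ clauses $K_1,\dots,K_n$. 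By construction they are pairwise at distance $\geq 3$: in particular they are pairwise variable-disjoint, and no clause of $\phi$ shares a variable with two distinct $K_i$. For each $i$ choose $x_i \in K_i \cap X'$ and $y_i \in K_i \cap Y'$, and set $X \defeq \{x_1,\dots,x_n\} \subseteq X'$, $Y \defeq \{y_1,\dots,y_n\} \subseteq Y'$; disjointness of the $K_i$ and of $X',Y'$ gives $\card{X} = \card{Y} = n$.

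Next I would define $\nu$ on $V \setminus (X \cup Y)$. In the CNF case, set to $0$ every non-designated variable occurring in a selected clause (i.e.\ in $\bigcup_i K_i \setminus (X \cup Y)$) and set to $1$ every variable lying outside $\bigcup_i K_i$; in the DNF case swap the roles of $0$ and $1$. Then each selected clause $K_i$ collapses to exactly $x_i \vee y_i$ (resp.\ $x_i \wedge y_i$), its remaining variables being removed. The crux is to verify that every non-selected clause $K$ is neutralized. If $K$ is variable-disjoint from all $K_i$, then all its variables lie outside $\bigcup_i K_i$ and it is satisfied (resp.\ falsified) at once. Otherwise $K$ meets exactly one $K_i$ (by the distance-$\geq 3$ property), and here I would invoke that $\phi$ is \emph{minimized}: since $K \neq K_i$ we have $K \not\subseteq K_i$, so $K$ contains a variable $v \notin K_i$; as $K$ meets no other selected clause, $v \notin \bigcup_j K_j$, so $\nu(v)$ neutralizes $K$. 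Consequently, after minimization $\nu(\phi)$ is exactly $\bigwedge_i (x_i \vee y_i) = \SCOV_n(X,Y)$ (resp.\ $\bigvee_i (x_i \wedge y_i) = \SINT_n(X,Y)$), the sets $\{x_i,y_i\}$ being distinct $2$-element sets and hence already minimized.

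The main obstacle is precisely this last verification — ensuring that no non-selected clause survives the substitution and leaves a spurious constraint on the free variables $X \cup Y$. This is exactly what forces the distance-$2$ (rather than distance-$1$) deletion in the greedy step, producing the $a^2 d^2$ rather than $ad$ denominator, and it is where the minimization hypothesis is essential, since it supplies for each contaminated clause a private variable outside all selected clauses that we are free to fix.
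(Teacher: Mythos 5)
Your proof is correct and takes essentially the same route as the paper: your greedy selection with distance-$\leq 2$ deletion in the clause-intersection graph is precisely the paper's construction of an independent set in its \emph{exclusion graph} (its Lemmas~\ref{lem:indepset} and~\ref{lem:exclusiondegree} give the same $a^2 d^2$ count via the same greedy algorithm), and your use of minimization to neutralize any non-selected clause meeting a selected one is the paper's key step verbatim. The only differences are presentational — you define the valuation in one step rather than the paper's two-step $\nu_1,\nu_2$ decomposition (and, incidentally, your $0$/$1$ assignments state the CNF/DNF cases the right way around, where the paper's description of $\nu_2$ appears to swap them).
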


We prove Proposition~\ref{prp:splitclauses} in the rest of this subsection.
Intuitively, the idea is to use the clauses of $K'$ to achieve $\SCOV_n(X, Y)$
or $\SINT_n(X, Y)$, and assign the other variables to eliminate them from the
clauses and eliminate the other clauses. Our ability to do this will rely on
the monotonicity of~$\phi$, but it will also require a careful choice of a
subset of clauses of~$K'$ that are ``independent'' in some sense: they should be
pairwise disjoint and any pair of them should never intersect a common clause.
We formalize this as an independent set in an \emph{exclusion graph} constructed
from the hypergraph of~$\phi$:

\begin{definition}
  The \emph{exclusion graph} of a hypergraph $H = (V, E)$ is the graph $G_H$
  whose vertices are the edges $E$ of~$H$, and where two edges $e \neq e'$ are
  adjacent if (1.) $e$ and $e'$ both intersect some edge $e'' \in E$, or if (2.) $e$ and
  $e'$ intersect each other: note that case (2.) is in fact covered by case (1.) by
  taking $e'' \colonequals e'$.
  Equivalently, $e$ and $e'$ are adjacent in~$G_H$ iff they are at
  distance~$\leq 4$ in the so-called incidence graph of~$H$. 
  Formally, we can define $G_H = (E, \{\{e, e'\}
  \in E^2 \mid e \neq e' \land \exists e'' \in E, (e \cap e'') \neq \emptyset
  \land (e' \cap e'') \neq \emptyset\}$. 
\end{definition}

Remember that an
\emph{independent set} of a graph~$G=(V,E)$ is a subset $S$ of~$V$ such that
no two elements of~$S$ are adjacent in $G$. We will use the following
easy lemma on independent sets:

\begin{lemma}
  \label{lem:indepset}
  Let $G = (V, E)$ be a graph and let $V' \subseteq V$.
  Then $G$ has an independent set $S \subseteq V'$ of size at least
  $\left\lfloor \frac{\card{V'}}{\degree(G) + 1} \right\rfloor$.
\end{lemma}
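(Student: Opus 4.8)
The plan is to use a simple greedy selection argument. I would maintain a working set $W$ of candidate vertices, initialized to $W \colonequals V'$, and an independent set $S$, initialized to $S \colonequals \emptyset$. As long as $W$ is nonempty, I repeat the following step: pick an arbitrary vertex $v \in W$, add it to $S$, and then delete from $W$ both $v$ itself and all of its neighbors in $G$. I iterate this until $W$ becomes empty.

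First I would argue correctness, i.e., that the resulting set $S$ is an independent set contained in $V'$. The containment $S \subseteq V'$ is immediate, since every vertex added to $S$ is taken from $W$, and $W$ only ever shrinks from its initial value $V'$. For independence, observe that at the moment we add a vertex $v$ to $S$, we also remove all neighbors of $v$ from $W$; hence no neighbor of $v$ can be selected in any later step. Therefore no two vertices of $S$ are adjacent in $G$, so $S$ is independent.

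Next I would bound the size of $S$ by counting how many vertices are removed from $W$ per iteration. In a single step, we remove the chosen vertex $v$ together with its neighbors; since $v$ has at most $\degree(G)$ neighbors in $G$ (and hence at most $\degree(G)$ neighbors in $W$), each step removes at most $\degree(G) + 1$ vertices from $W$. Because $W$ starts with $\card{V'}$ vertices and the process only terminates once $W$ is empty, the number of iterations is at least $\left\lceil \frac{\card{V'}}{\degree(G)+1} \right\rceil \geq \left\lfloor \frac{\card{V'}}{\degree(G)+1} \right\rfloor$. Since exactly one vertex is added to $S$ per iteration, we obtain $\card{S} \geq \left\lfloor \frac{\card{V'}}{\degree(G)+1} \right\rfloor$, as required.

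I do not expect any genuine obstacle here: this is a routine greedy argument. The only point requiring a small amount of care is the per-step counting, where one must note that a step may remove \emph{fewer} than $\degree(G)+1$ vertices (for instance if some neighbors of $v$ already lie outside $W$, or if $v$ has fewer than $\degree(G)$ neighbors); but since we only need an \emph{upper} bound on the number removed per step to lower-bound the number of steps, this causes no difficulty, and the stated floor bound follows directly.
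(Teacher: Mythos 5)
Your proof is correct and follows essentially the same greedy argument as the paper: iteratively pick a vertex of $V'$, add it to $S$, and delete it together with its neighbors. In fact, your per-step counting is spelled out more explicitly than in the paper, which dismisses the size bound as clear and devotes its attention to the independence claim.
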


\begin{proof}
  We construct the independent $S$ set with the following trivial algorithm: start
  with $S \colonequals \emptyset$ and, while $V'$ is non-empty, pick an arbitrary
  vertex $v$ in~$V'$, add it to~$S$, and remove $v$ and all its neighbors
  from~$G$ and from~$V'$. It is clear that this algorithm terminates and adds the prescribed
  number of vertices to~$S$, so all that remains is to show that $S$ is an
  independent set at the end of the algorithm. This is initially true for $S =
  \emptyset$; let us show that it is preserved throughout the algorithm. Assume
  by way of contradiction that, at a stage of the algorithm, we add a vertex $v$
  to~$S$ and that it stops being an independent set. This means that $S$
  contains a neighbor $v'$ of~$v$ which must have been added earlier; but when
  we added $v'$ to~$S$ we have removed all its neighbors from~$G$, so we have
  removed $v$ and we cannot add it later, a contradiction. Hence, the algorithm
  is correct and the claim is shown.
\end{proof}

To use this lemma, let us bound the degree of~$G_H$ using the degree and arity of~$H$:
\begin{lemma}
  \label{lem:exclusiondegree}
  Let $H$ be a hypergraph. We have $\degree(G_H) \leq (\arity(H) \times \degree(H))^2
  - 1$.
\end{lemma}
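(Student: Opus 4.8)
The plan is to fix an arbitrary edge $e \in E$ and bound the number of its neighbors in $G_H$, since $\degree(G_H)$ is by definition the maximum of this quantity over all $e$. Recall that a distinct edge $e'$ is a neighbor of $e$ in $G_H$ exactly when there is some intermediate edge $e'' \in E$ meeting both $e$ and $e'$. So I would bound the number of such $e'$ by a two-stage incidence count that passes through the possible choices of the intermediate edge $e''$.

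First I would count the edges $e''$ that intersect $e$. Writing $a \defeq \arity(H)$ and $d \defeq \degree(H)$, the edge $e$ contains at most $a$ vertices, and each vertex $v$ belongs to at most $d$ edges, i.e.\ $\card{E(v)} \leq d$; hence $\card{\{e'' \in E \mid e'' \cap e \neq \emptyset\}} \leq \sum_{v \in e} \card{E(v)} \leq a d$. Applying the very same estimate with $e''$ in place of $e$, for each fixed such $e''$ there are at most $a d$ edges $e'$ meeting $e''$.

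Combining the two bounds, I would consider the set $N \defeq \{e' \in E \mid \exists e'' \in E,\ e'' \cap e \neq \emptyset \text{ and } e' \cap e'' \neq \emptyset\}$ and argue $\card{N} \leq (a d)^2$: indeed $N$ is contained in the union, over the at most $a d$ edges $e''$ meeting $e$, of the at most $a d$ edges $e'$ meeting each such $e''$. Finally I would observe that $e$ itself lies in $N$ (take $e'' \defeq e$, which meets $e$ since edges are non-empty), and that $N$ consists precisely of the neighbors of $e$ together with $e$. Removing $e$ therefore leaves at most $(a d)^2 - 1$ neighbors, which is exactly the claimed bound $\degree(G_H) \leq (\arity(H) \times \degree(H))^2 - 1$ after taking the maximum over $e$.

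This is essentially a routine incidence count, so I do not expect a genuine obstacle. The only points requiring a little care are (i) that the union over intermediate edges $e''$ controls the \emph{distinct} count $\card{N}$ even though the same $e'$ may be reached through several $e''$, and (ii) that $e$ itself belongs to $N$, which is what lets the passage from the closed neighborhood to the open neighborhood account for the $-1$ in the stated bound.
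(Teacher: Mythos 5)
Your proposal is correct and takes essentially the same approach as the paper: a two-stage incidence count through the intermediate edge $e''$. The only difference is bookkeeping: you count closed neighborhoods (at most $\arity(H)\times\degree(H)$ edges meeting a given edge, including itself) and obtain the $-1$ by removing $e$ from the set $N$, whereas the paper counts edges \emph{distinct} from the current one (at most $n \colonequals \arity(H)\times(\degree(H)-1)$ per step), bounds the degree by $n+n^2$, and recovers $(\arity(H)\times\degree(H))^2-1$ by a short algebraic comparison using $\arity(H),\degree(H)\geq 1$.
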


\begin{proof}
  Any edge~$e$ of~$H$ contains $\leq \arity(H)$ vertices, each of which occurs
  in $\leq \degree(H)-1$ edges that are different from~$e$, so any edge $e$
  of~$H$ intersects at most $n \colonequals \arity(H) \times (\degree(H)-1)$
  edges different from~$e$. Hence, the degree of~$G_H$ is at most $n + n^2$
  (counting the edges that intersect $e$ or those at distance~$2$ from~$e$).
  Now, we have $n + n^2 = n(n+1)$, and as $\degree(H) \geq 1$ and $\arity(H)
  \geq 1$ (because we assume that hypergraphs contain at least one non-empty
  edge), the degree of~$G_H$ is $< \arity(H) \times \degree(H) \times (1 +
  \arity(H) \times (\degree(H) - 1))$, i.e., it is indeed $< (\arity(H) \times
  \degree(H))^2$, which concludes.
\end{proof}

We are now ready to show Proposition~\ref{prp:splitclauses}.

\begin{proof}[Proof of Proposition~\ref{prp:splitclauses}]
  Let $\phi$ be the
monotone formula with variable set $V$, arity~$a$, and degree~$d$, fix the sets
$X'$ and~$Y'$ and the set $K'$ of split clauses,
and let 
$n \colonequals \left\lfloor \frac{\card{K'}}{a^2\times d^2} \right\rfloor$.
Let $G_\phi$ be the exclusion
graph of~$\phi$ (seen as a hypergraph of clauses). By
Lemma~\ref{lem:exclusiondegree}, the graph $G_\phi$ has degree~$\leq a^2 \times
d^2 - 1$, so by Lemma~\ref{lem:indepset} it has an independent set $K'' \subseteq K'$
with $\card{K''} \geq n$. Let us pick any subset $K$ of~$K''$ that has
	cardinality exactly~$n$: $K$ is still an independent set of~$G_\phi$, and $K$ is still split by $(X',Y')$.

Let us now define $X$ by choosing one element of~$X'$ in each clause of~$K$, and
define~$Y$ accordingly. We have $X \subseteq X'$, $Y \subseteq Y'$, so that $X'$
and $Y'$ are disjoint; and the cardinality of~$X$ and~$Y$ is exactly~$n$, because
the clauses of~$K$ are pairwise disjoint as none of them are adjacent
in~$G_\phi$.

Let us now define the valuation~$\nu$ of~$V \setminus (X \cup Y)$. We first let
$Z \colonequals \bigcup K$ be the set of variables occurring in the clauses
  of~$K$: we have $X \cup Y \subseteq Z$. Let $\nu_1$ be the partial valuation 
  that assigns all variables of~$V \setminus Z$ to~$1$ if $\phi$ is a CNF and to~$0$
  if~$\phi$ is a DNF, and consider the formula $\nu_1(\phi)$: it consists
  precisely of the clauses of~$\phi$ that only use variables of~$Z$ (in particular those
  of~$K$), because the other clauses evaluate to true (if $\phi$ is a CNF) or
  false (if $\phi$ is a DNF) and so are simplified away.

  Let us now observe that in fact $\nu_1(\phi)$ precisely consists of the
  clauses of~$K$. Indeed, the clauses of~$K$ are clearly in~$\nu_1(\phi)$, and
  for the converse let us assume by contradiction that $\phi$ contains a clause
  $e''$ that only uses variables of~$Z$ but is not a clause of~$K$. We know that
  $e''$ cannot be the empty clause because we have disallowed it, so it contains
  a variable of $Z$, which means that it intersects a clause $e$
  of~$K$. Now, by hypothesis $e''$ is not a clause of~$K$, and as $\phi$ is
  minimized we know that $e''$ cannot be a subset of~$e$, which means that it
  must contain some variable of~$Z$ which is not in~$e$, hence it must intersect
  some other clause $e' \neq e$ of~$K$. Hence, $e''$ intersects both $e$
  and~$e'$, which is impossible as $K$ is an independent set of~$G_\phi$ but
  $e''$ witnesses that $e$ and $e'$ are adjacent in~$G_\phi$. We conclude that
  $\nu_1(\phi)$ precisely consists of the clauses of~$K$.

  Now, let us consider the partial valuation $\nu_2$ of $\nu_1(\phi)$ that assigns all
  variables of $Z \setminus (X \cup Y)$ to~$0$ (if $\phi$ is a DNF) or to~$1$
  (if $\phi$ is a CNF).
  Let $\nu \colonequals \nu_1 \cup \nu_2$.
  It is clear that the clauses of $\nu(\phi)$ are
  the intersection of the clauses of~$\nu_1(\phi)$ with $X \cup Y$, i.e., the
  intersection of the clauses of~$K$ with $X \cup Y$. Now, the definition of~$K$
  ensures that each clause contains exactly one variable of~$X$ and one
  variable of~$Y$, with each variable occurring in exactly one clause. Thus, it
  is the case that $\nu(\phi) = \SCOV_n(X, Y)$ or $\nu(\phi) = \SINT_n(X, Y)$,
  which concludes the proof.
\end{proof}

\subsection{Proof of Theorem~\ref{thm:obdd_lower_main}: Pathwidth and OBDDs}

In this section, we explain how we can obtain $\SCOV_n$ (resp., $\SINT_n$) by applying
a well-chosen partial valuation to any monotone CNF (resp., monotone DNF). The
key result is:

\begin{proposition}
	\label{prp:reduce_pathwidth}
	Let $\phi$ be a monotone CNF (resp., monotone DNF) of pathwidth $\geq k$ on variables $V$. 
	Then, for any variable ordering $\mathbf{v}$ of $V$, there exist disjoint subsets $X,Y$ of $V$ such that $\mathbf{v}$ cuts $(X,Y)$
	and a valuation $\nu$ of $V \setminus (X \cup Y)$ such that $\nu(\phi) = \SCOV_l(X,Y)$ (resp., $\SINT_l(X,Y)$) for some $l \geq \frac{k}{a^3d^2}$.
\end{proposition}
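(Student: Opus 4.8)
The plan is to reduce everything to Proposition~\ref{prp:splitclauses}: it suffices to exhibit, for the given ordering $\mathbf{v} = v_1, \ldots, v_n$, a cut position $i$ together with a \emph{large} set $C$ of clauses that is split by $(\mathbf{v}_{<i}, \mathbf{v}_{\geq i})$, where ``large'' means $\card{C} \geq k/a$. Indeed, once such a $C$ is available, applying Proposition~\ref{prp:splitclauses} with $X' \defeq \mathbf{v}_{<i}$, $Y' \defeq \mathbf{v}_{\geq i}$ and $K' \defeq C$ produces subsets $X \subseteq \mathbf{v}_{<i}$, $Y \subseteq \mathbf{v}_{\geq i}$ with $\card{X} = \card{Y} = l \defeq \lfloor \card{C}/(a^2 d^2)\rfloor$ and a valuation $\nu$ of $V \setminus (X \cup Y)$ with $\nu(\phi) = \SCOV_l(X,Y)$ (resp.\ $\SINT_l(X,Y)$). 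Since $X \subseteq \mathbf{v}_{<i}$ and $Y \subseteq \mathbf{v}_{\geq i}$, the order $\mathbf{v}$ cuts $(X,Y)$ by definition, and from $\card{C} \geq k/a$ we get $l \geq k/(a^3 d^2)$, which is exactly the conclusion.

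The heart of the argument is thus to show that \emph{some} cut of $\mathbf{v}$ is crossed by at least $k/a$ clauses; this is where the hypothesis $\pw(\phi) \geq k$ enters, and where the quantifier ``for any ordering'' is handled, namely by converting the ordering into a path decomposition. Concretely, I would define for each $i$ the bag $B_i \defeq \{v_i\} \cup \partial_i$, where $\partial_i$ is the set of variables $v_j$ with $j < i$ that occur in some clause of $\phi$ that also contains a variable at a position $\geq i$. A routine check shows that $B_1, \ldots, B_n$ is a valid path decomposition of the hypergraph of $\phi$: every clause $e$ is contained in the bag indexed by the last position of a variable of $e$ (coverage), and every variable occupies a contiguous range of bags (connectedness). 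Since $\pw(\phi) \geq k$, this decomposition has width $\geq k$, so there is an index $i^*$ with $\card{B_{i^*}} \geq k+1$, i.e.\ $\card{\partial_{i^*}} \geq k$.

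It then remains to turn the large boundary $\partial_{i^*}$ into many split clauses. Let $C$ be the set of clauses of $\phi$ split by $(\mathbf{v}_{<i^*}, \mathbf{v}_{\geq i^*})$, i.e.\ those having a variable at a position $< i^*$ and a variable at a position $\geq i^*$. By construction of $\partial_{i^*}$, every variable of $\partial_{i^*}$ lies in some clause of $C$; conversely each clause of $C$ has arity $\leq a$ and hence contributes at most $a$ of its variables to $\partial_{i^*}$. Therefore $\card{\partial_{i^*}} \leq a \cdot \card{C}$, so $\card{C} \geq \card{\partial_{i^*}}/a \geq k/a$. Plugging this $C$ (with $i = i^*$) into the reduction of the first paragraph completes the proof.

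The step I expect to be the real content, as opposed to bookkeeping, is the construction of the path decomposition from $\mathbf{v}$ and the observation that its width is governed exactly by the boundary sets $\partial_i$ (this is essentially the vertex-separation characterization of pathwidth): it is precisely this that lets the single inequality $\pw(\phi) \geq k$ force a heavily-crossed cut for \emph{every} ordering, which is the whole point of the ``for any variable ordering'' quantifier. The remaining conversions --- boundary to split clauses, and split clauses to $\SCOV_l/\SINT_l$ via Proposition~\ref{prp:splitclauses} --- are direct, modulo carefully tracking the arity and degree factors (and the floor in the definition of $l$).
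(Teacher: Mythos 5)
Your proof is correct and is essentially the paper's own argument: the paper merely packages your path-decomposition step as the notion of \emph{pathsplitwidth} and Lemma~\ref{lem:cw_pw} ($\pw(H) \leq \arity(H)\times\psw(H)$), whose proof builds from $\mathbf{v}$ a decomposition with bags $\{v_i\}\cup\bigcup\spl_i(\mathbf{v},H)$ --- a cosmetic variant of your boundary bags $\{v_i\}\cup\partial_i$ --- and then, exactly as you do, feeds the $\geq k/a$ clauses split at the worst cut into Proposition~\ref{prp:splitclauses}. (Both your write-up and the paper's silently pass from $\lfloor k/(a^3d^2)\rfloor$ to $k/(a^3d^2)$ at the last step, so that is not a point of difference.)
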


Thanks to this result, we can extend Theorem~\ref{thm:obdd_lower_scov_sint} to
arbitrary monotone CNFs/DNFs, which is what we need to prove Theorem~\ref{thm:obdd_lower_main}:
\begin{proof}[Proof of Theorem~\ref{thm:obdd_lower_main}]
	Let $O$ be a complete nOBDD  (resp., complete uOBDD) computing~$\phi$,
        and let $\mathbf{v}$ be its order on the variables~$V$. 
	By Proposition~\ref{prp:reduce_pathwidth}, there exist disjoint subsets $X,Y$ of $V$ such that $\mathbf{v}$ cuts $(X,Y)$
	and a valuation $\nu$ of $V \setminus (X \cup Y)$ such that $\nu(\phi) = \SCOV_l(X,Y)$ (resp., $\SINT_l(X,Y)$) for $l \geq \frac{k}{a^3d^2}$.
	By applying Lemma~\ref{lem:partial_evaluate_nOBDD} to~$O$, we know that there is a complete nOBDD (resp., complete uOBDD) $O'$ on variables $X \cup Y$ with order $\mathbf{v'} = \mathbf{v}|_{X \cup Y}$ computing $\SCOV_l(X,Y)$ (resp., $\SINT_l(X,Y)$), whose width is no greater than that of $O$.
	Now, it is clear that $\mathbf{v'}$ still cuts $(X,Y)$, so that by Theorem~\ref{thm:obdd_lower_scov_sint} the width of~$O'$, and hence that of $O$, is $\geq 2^{\frac{k}{a^3d^2}} -1$.
\end{proof}

Hence, in the rest of this subsection, we prove
Proposition~\ref{prp:reduce_pathwidth}.

\myparagraph{Pathsplitwidth.}
The first step of the proof of Proposition~\ref{prp:reduce_pathwidth} is to rephrase the
bound on pathwidth, arity, and degree, in terms of a bound on the performance of variable
orderings. Intuitively, a good variable ordering is one which does not \emph{split}
too many clauses. Formally:

\begin{definition}
	\label{def:pathsplitwidth}
	Let $H=(V,E)$ be a hypergraph, and
        $\mathbf{v} = v_1, \ldots, v_{|V|}$ be an ordering on the variables of~$V$.
        For $1 \leq i \leq \card{V}$,
        we let $\spl_i(\mathbf{v}, H)$ be the set of hyperedges $e$ of~$H$ that
        contain both a variable at or before $v_i$, and a variable strictly
        after~$v_i$, i.e.,
        $\spl_i(\mathbf{v}, H) \colonequals \{e \in E \mid \exists l \in \{1,
        \ldots, i\} \text{~and~}
        \exists r \in \{i+1, \ldots, \card{V}\} \text{~such that~} \{v_l, v_r\} \subseteq~e\}$.
Note that $\spl_{|V|}(\mathbf{v},H)$ is always empty.

        The \emph{pathsplitwidth} of $\mathbf{v}$
        relative to $H$ is the maximum size of the split, formally,
        $\psw(\mathbf{v}, H) \colonequals \max_{1 \leq i \leq |V|} |
        \spl_i(\mathbf{v},H) |$.
        The \emph{pathsplitwidth} $\psw(H)$ of $H$ is then the
        minimum
        of $\psw(\mathbf{v}, H)$
        over all variable orderings $\mathbf{v}$ of~$V$.
\end{definition}

In other words, $\psw(H)$ is the smallest integer $n\in \NN$ such that,
for any variable ordering~$\mathbf{v}$ of the nodes of $H$, there is a moment at which $n$
hyperedges of $H$ are split by $(\mathbf{v_{\leq i}},\mathbf{v_{> i}})$, in the sense of Definition~\ref{def:split}. 
We note that the pathsplitwidth of $H$ is exactly the \emph{linear
branch-width}~\cite{exploring2017nordstrand} of the dual hypergraph of $H$,
but we introduced pathsplitwidth because it
fits our proofs better.
This being said, the definition of pathsplitwidth is also reminiscent of
pathwidth, and we can indeed connect the two (up to a factor of the
arity):

\begin{lemma}
	\label{lem:cw_pw}
        For any hypergraph $H=(V,E)$, we have
        $\pw(H) \leq \arity(H) \times \psw(H)$.
\end{lemma}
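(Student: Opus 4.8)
The plan is to turn an optimal variable ordering directly into a path decomposition, using the arity only to convert split \emph{edges} into split \emph{vertices}. First I would fix an ordering $\mathbf{v} = v_1, \ldots, v_{\card{V}}$ of~$V$ achieving $\psw(\mathbf{v}, H) = \psw(H)$, and introduce some bookkeeping: for each vertex $v$ write $\mathrm{pos}(v)$ for its index in~$\mathbf{v}$; for each edge $e \in E$ set $\mathrm{last}(e) \colonequals \max_{u \in e} \mathrm{pos}(u)$; and for each vertex $v$ set $\mathrm{hi}(v) \colonequals \max_{e \in E(v)} \mathrm{last}(e)$ (taking $\mathrm{hi}(v) \colonequals \mathrm{pos}(v)$ if $v$ lies in no edge), so that $\mathrm{pos}(v) \leq \mathrm{hi}(v)$ always.

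I would then define a path decomposition whose bags $b_1, \ldots, b_{\card{V}}$ are arranged in a path, with $b_i \colonequals \{v \in V : \mathrm{pos}(v) \leq i \leq \mathrm{hi}(v)\}$. The two defining conditions of a path decomposition are then immediate. For connectedness (condition (ii)), the set of indices $i$ with $v \in b_i$ is exactly the contiguous interval $[\mathrm{pos}(v), \mathrm{hi}(v)]$. For edge coverage (condition (i)), every edge $e$ is contained in the single bag $b_{\mathrm{last}(e)}$: for each $u \in e$ we have $\mathrm{pos}(u) \leq \mathrm{last}(e)$ and, since $e \in E(u)$, also $\mathrm{hi}(u) \geq \mathrm{last}(e)$, hence $u \in b_{\mathrm{last}(e)}$.

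The crux is bounding $\card{b_i}$, and this is where the arity enters. I would split the members of $b_i$ into the (at most one) vertex at position exactly $i$, and the vertices $v$ with $\mathrm{pos}(v) < i \leq \mathrm{hi}(v)$. For each vertex $v$ of the second kind, choosing an edge $e \ni v$ realizing $\mathrm{hi}(v) = \mathrm{last}(e) \geq i$ shows that $e$ has a vertex (namely $v$) at position $\leq i-1$ and a vertex at position $\geq i$, i.e.\ $e \in \spl_{i-1}(\mathbf{v}, H)$. Hence all such $v$ lie in the union of the edges of $\spl_{i-1}(\mathbf{v}, H)$, so there are at most $\arity(H) \times \card{\spl_{i-1}(\mathbf{v}, H)} \leq \arity(H) \times \psw(H)$ of them. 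This gives $\card{b_i} \leq 1 + \arity(H) \times \psw(H)$ for every $i$, so the width of the path decomposition is $\leq \arity(H) \times \psw(H)$, whence $\pw(H) \leq \arity(H) \times \psw(H)$.

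I expect the only delicate point to be this last counting step: getting the index bookkeeping right (the relevant split is at the cut \emph{before} position~$i$, not at~$i$, and for $i=1$ there are simply no second-kind vertices so $\spl_0$ is never invoked), and remembering that a single split edge may contribute up to $\arity(H)$ of its vertices to the bag, which is precisely the source of the $\arity(H)$ factor. The remaining verifications — that the interval structure yields connectedness and that each edge is covered at $b_{\mathrm{last}(e)}$ — are routine once the bags are defined.
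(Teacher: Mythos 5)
Your proof is correct, and although it follows the same high-level strategy as the paper's — fix an ordering $\mathbf{v}$ achieving $\psw(H)$, build a path decomposition indexed by positions $1,\ldots,\card{V}$, and pay a factor $\arity(H)$ to turn split \emph{edges} into bag \emph{vertices} — the decomposition you construct is genuinely different. The paper takes $b_i \colonequals \{v_i\} \cup \bigcup \spl_i(\mathbf{v},H)$, i.e., it places the \emph{entire} split edges (including vertices not yet enumerated) into the bag; with that choice the width bound is immediate from the definition, and each edge is covered in the bag of its \emph{first} vertex, but connectedness requires an argument (the paper shows that, for each vertex $v$, the bags containing $v$ form a union of segments each containing or adjacent to the bag at $v$'s own position). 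Your construction is the transpose of this: each vertex is assigned the contiguous interval $[\mathrm{pos}(v),\mathrm{hi}(v)]$, so connectedness holds by construction, each edge is covered in the bag of its \emph{last} vertex, and the verification burden shifts to the width bound, where you need the (correct) observation that every vertex of $b_i$ other than $v_i$ lies in an edge of $\spl_{i-1}(\mathbf{v},H)$. What your version buys is that bags contain only already-enumerated vertices; in fact, counting slightly more carefully — each edge of $\spl_{i-1}(\mathbf{v},H)$ has at least one vertex at position $\geq i$, hence contributes at most $\arity(H)-1$ vertices to $b_i$ — your construction even yields the marginally stronger bound $\pw(H) \leq (\arity(H)-1) \times \psw(H)$, which the paper's bags do not give. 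What the paper's version buys is that the width bound is definitional, with all the work concentrated in the connectedness check.
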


\begin{proof}
  Let $H = (V, E)$ be a hypergraph, and
  let $\mathbf{v}$ be an enumeration of the nodes of~$H$ witnessing that $H$ has
  pathsplitwidth $\psw(H)$.
  We will construct a path decomposition of~$H$ of width $\leq \arity(H) \times
  \psw(H)$.
  Consider the path $P = b_1, \cdots, b_{|V|}$ and the labeling function
  $\lambda$
  where $\lambda(b_i) \colonequals \{v_i\} \cup \bigcup
  \spl_i(\mathbf{v},H) $ for $1 \leq i \leq |V|$. Let us show that $(P,
  \lambda)$ is a path 
  decomposition of $H$: once this is established, it is clear that its width
  will be $\leq \arity(H) \times \psw(H)$.

  First, we verify the occurrence condition. Let $e \in E$.
  If $e$ is a singleton $\{v_i\}$ then $e$ is included in $b_i$. 
Now, if $|e| \geq 2$, then let $v_i$ be the first element of~$e$
  enumerated by~$\mathbf{v}$. We have $e\in\spl_i(\mathbf{v},H)$,
  and therefore $e$ is included in~$b_i$.

Second, we verify the connectedness condition. Let $v$ be a vertex of $H$,
  then by definition $v \in b_i$ iff $v=v_i$ or there exists $e \in \spl_i(\mathbf{v},H)$
with $v \in e$. We must show that the set~$T_v$ of the bags that contain
  $v$ forms a connected subpath in
  $P$. To show this, first observe that for every $e \in E$, letting $\spl(e) =
  \{v_i\mid 1\leq i<|V| \land e \in \spl_{i}(\mathbf{v},H)\}$, then $\spl(e)$ is clearly a
  connected segment of $\mathbf{v}$. Second, note that for every $e$ with $v \in
  e$, 
  then either $v \in \spl(e)$ or $v$ and the connected subpath $\spl(e)$ are adjacent (in the case where $v$
  is the last vertex of~$e$ in the enumeration). Now, by definition $T_v$
  is the union of the $b_{v'}$ for $v' \in \spl(e)$ with $v \in e$ and of~$b_i$, so it is a 
  union of connected subpaths which all contain $b_i$ or are adjacent to it:
  this establishes that $T_v$ is a connected subpath, which shows in turn that $(T,
  \lambda)$ is a path decomposition, concluding the proof.
\end{proof}

  For completeness with the preceding result, we note that the following also
  holds, although we do not use it (the proof is in the extended version of \cite{amarilli2018connecting}):

\begin{lemma}
        For any hypergraph $H$, it is the case that $\psw(H) \leq \degree(H) \times (\pw(H) + 1)$.
\end{lemma}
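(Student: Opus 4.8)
The plan is to prove the bound by exhibiting, from an optimal path decomposition, a single variable ordering $\mathbf{v}$ whose pathsplitwidth relative to $H$ is at most $\degree(H) \times (\pw(H)+1)$; since $\psw(H)$ is the minimum of $\psw(\mathbf{v},H)$ over all orderings, this suffices. First I would fix a path decomposition $(P,\lambda)$ of $H$ of width $\pw(H)$, with bags $b_1,\dots,b_m$ listed along the path, so that $\card{\lambda(b_j)} \leq \pw(H)+1$ for every $j$. For each vertex $v$ of $H$ — which, by the connectedness condition, occupies a contiguous interval of bags — let $\first(v)$ and $\mathrm{last}(v)$ denote the indices of the first and last bags containing $v$. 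I then order $V$ by increasing $\first(\cdot)$, breaking ties (vertices first appearing in the same bag) arbitrarily, obtaining $\mathbf{v} = v_1,\dots,v_{\card V}$. Vertices lying in no edge, if any, play no role in any split and may simply be placed first.

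The heart of the argument is to bound $\card{\spl_i(\mathbf{v},H)}$ for an arbitrary cut position $i$. Write $j \defeq \first(v_i)$ and consider any hyperedge $e$ split at $i$, i.e., containing a vertex $u$ at a position $\leq i$ and a vertex $w$ at a position $> i$. Because the ordering respects $\first$, a position $\leq i$ forces $\first(u) \leq j$ and a position $> i$ forces $\first(w) \geq j$. By the occurrence condition there is a bag $b_s$ with $e \subseteq \lambda(b_s)$; since $w \in e$ we get $s \geq \first(w) \geq j$, and since $u \in e \subseteq \lambda(b_s)$ we also have $s \leq \mathrm{last}(u)$. Thus $\first(u) \leq j \leq s \leq \mathrm{last}(u)$, and the connectedness condition yields $u \in \lambda(b_j)$. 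Hence every hyperedge split at $i$ contains at least one vertex of the bag $\lambda(b_j)$.

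It then remains to count: the hyperedges meeting $\lambda(b_j)$ number at most $\sum_{v \in \lambda(b_j)} \card{E(v)} \leq \card{\lambda(b_j)} \times \degree(H) \leq (\pw(H)+1) \times \degree(H)$, and $\spl_i(\mathbf{v},H)$ is a subset of these. As this holds for every cut position $i$, we obtain $\psw(\mathbf{v},H) \leq \degree(H) \times (\pw(H)+1)$, and therefore $\psw(H) \leq \degree(H) \times (\pw(H)+1)$. I expect the only subtlety to be the connectedness step establishing $u \in \lambda(b_j)$, which crucially combines the tie-broken first-appearance ordering with \emph{both} defining properties of a path decomposition (occurrence to locate the common bag $b_s$, connectedness to push $u$ down to $b_j$); the final counting step is routine.
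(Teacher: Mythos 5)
Your proof is correct. Note that this paper does not actually contain its own proof of this lemma: it explicitly defers it to the extended version of the conference paper~\cite{amarilli2018connecting}, so there is no in-text argument to compare yours against. Your argument is, however, the natural converse construction to the paper's proof of Lemma~\ref{lem:cw_pw} (there, an ordering of low pathsplitwidth is turned into a path decomposition whose bags collect the split edges; here, you turn an optimal path decomposition into an ordering by first occurrence), and every step checks out: ordering by $\first(\cdot)$ guarantees that a hyperedge split at position $i$, with $u$ on the left and $w$ on the right of the cut, satisfies $\first(u) \leq j \leq \first(w)$ for $j = \first(v_i)$; the covering bag $b_s \supseteq e$ then has $j \leq s \leq \mathrm{last}(u)$, so the connectedness condition forces $u \in \lambda(b_j)$; and since every split edge meets $\lambda(b_j)$, the count $\card{\spl_i(\mathbf{v},H)} \leq \card{\lambda(b_j)} \times \degree(H) \leq (\pw(H)+1) \times \degree(H)$ follows. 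Your treatment of vertices lying in no hyperedge (placing them first, where all splits are trivially empty, so that $\first$ is well defined wherever you invoke it) correctly closes the one degenerate case the paper's definitions leave open.
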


We are finally ready to prove Proposition~\ref{prp:reduce_pathwidth}:

\begin{proof}[Proof of Proposition~\ref{prp:reduce_pathwidth}]
	Let $\phi$ be the monotone CNF (resp., monotone DNF) on variables $V$ having pathwidth $\geq k$, $a$ be its arity, $d$ its degree,
and let $\mathbf{v}$ be a variable ordering of~$V$. Remember that we identify $\phi$ with its associated hypergraph.
By Lemma~\ref{lem:cw_pw}, the pathsplitwidth $k'$ of $\phi$ is $\geq \frac{k}{a}$.
By definition of pathsplitwidth, there exists $1 \leq i \leq |V|$ such that
$\spl_i(\mathbf{v}, \phi) \geq k'$.
	Let $X' \defeq \mathbf{v_{\leq i}}$ and $Y' \defeq \mathbf{v_{> i}}$, and let
	$K' \defeq \spl_i(\mathbf{v}, \phi)$. Then by definition, $K'$ is split by $(X',Y')$.
	Hence by
	Proposition~\ref{prp:splitclauses}, letting 
	$n\defeq \left\lfloor\frac{k'}{a^2 \times d^2}\right\rfloor \geq  \left\lfloor\frac{k}{a^3 \times d^2}\right\rfloor$, we can find $X \subseteq X'$ and $Y \subseteq Y'$ of size $n$ 
	and a valuation $\nu$ of $V \setminus (X \cup Y)$ such that $\nu(\phi) = \SCOV_n(X,Y)$ 
	(resp., $\nu(\phi) = \SINT_n(X,Y)$), which is what we wanted.
\end{proof}

\subsection{Proof of Theorem~\ref{thm:SDNNF_lower_main}: Treewidth and SDNNFs}

In this section we show our general lower bound relating the treewidth of CNFs/DNFs to the width of equivalent (d)-SDNNFs.
We proceed similarly to the previous section, and start by showing
the analogue of Proposition~\ref{prp:reduce_pathwidth} for treewidth and v-trees:

\begin{proposition}
	\label{prp:reduce_treewidth}
	Let $\phi$ be a monotone CNF (resp., monotone DNF) of treewidth $\geq k$ on variables $V$. 
	Then, for any v-tree $T$ of $V$,
        there exist disjoint subsets $X,Y$ of $V$ such that $T$ cuts $(X,Y)$
	and a valuation $\nu$ of $V \setminus (X \cup Y)$ such that $\nu(\phi) = \SCOV_l(X,Y)$ (resp., $\SINT_l(X,Y)$) for some $l \geq \frac{k}{3a^3d^2}$.
\end{proposition}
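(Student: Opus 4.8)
The plan is to mirror the proof of Proposition~\ref{prp:reduce_pathwidth}, replacing pathwidth and variable orderings by treewidth and v-trees. First I would introduce the v-tree analogue of pathsplitwidth (Definition~\ref{def:pathsplitwidth}), which I call \emph{treesplitwidth}. For a v-tree $T$ over $V$ and a node $n$ of $T$, let $\spl(n,T,H)$ denote the set of hyperedges of $H$ that are split by $(\LEAVES(T_n), \LEAVES(T \setminus T_n))$ in the sense of Definition~\ref{def:split}, i.e., those $e$ with $e \cap \LEAVES(T_n) \neq \emptyset$ and $e \cap \LEAVES(T \setminus T_n) \neq \emptyset$; then set $\tsw(T,H) \defeq \max_{n} \card{\spl(n,T,H)}$ and $\tsw(H) \defeq \min_T \tsw(T,H)$. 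This quantity is exactly the branch-width of the dual hypergraph of $H$, just as pathsplitwidth is its linear branch-width, but phrasing it this way fits the proof better.

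The heart of the argument, and the main obstacle, is the v-tree analogue of Lemma~\ref{lem:cw_pw}: namely $\tw(H) \leq 3\,\arity(H) \times \tsw(H)$. I would establish it by a direct construction. Starting from a v-tree $T$ witnessing $\tsw(H)$, I build a tree decomposition on the same underlying tree as $T$: for each node $n$ with children $n_1,n_2$, I put into the bag $\dom(n)$ the \emph{boundary} variables at $n$, i.e., the variables lying in $\LEAVES(T_{n_i})$ for some child $n_i$ that belong to a hyperedge reaching outside $T_{n_i}$, and each leaf bag additionally contains its own variable. The occurrence condition holds because every hyperedge $e$ sits entirely in the bag of the least common ancestor of its leaves: being split between the two child subtrees forces all variables of $e$ to be boundary variables there. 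The delicate point is connectedness, which I would prove by showing that a variable $v$ occupies precisely the bags on the path from its leaf up to the least common ancestor of all variables appearing in hyperedges through $v$, a connected subpath. The width bound then follows since every boundary variable at $n_i$ comes from some hyperedge of $\spl(n_i,T,H)$ and each such hyperedge contributes at most $\arity(H)$ variables, so $\card{\dom(n)}$ is linear in $\arity(H) \times \tsw(H)$.

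With this lemma in hand the conclusion follows exactly as in Proposition~\ref{prp:reduce_pathwidth}. Writing $a \defeq \arity(\phi)$ and $d \defeq \degree(\phi)$, and since $\tsw(H) = \min_T \tsw(T,H)$, the given v-tree $T$ satisfies $\tsw(T,H) \geq \tsw(H) \geq \frac{\tw(H)}{3a} \geq \frac{k}{3a}$. Hence there is a node $n$ of $T$ with $\card{\spl(n,T,H)} \geq \frac{k}{3a}$. Setting $X' \defeq \LEAVES(T_n)$, $Y' \defeq \LEAVES(T \setminus T_n)$ and $K' \defeq \spl(n,T,H)$, the set $K'$ is split by $(X',Y')$ by construction. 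Applying Proposition~\ref{prp:splitclauses} with these sets and $l \defeq \left\lfloor \frac{\card{K'}}{a^2 \times d^2} \right\rfloor \geq \left\lfloor \frac{k}{3a^3 \times d^2} \right\rfloor$ yields $X \subseteq X'$, $Y \subseteq Y'$ with $\card{X} = \card{Y} = l$ and a valuation $\nu$ of $V \setminus (X \cup Y)$ with $\nu(\phi) = \SCOV_l(X,Y)$ (resp.\ $\nu(\phi) = \SINT_l(X,Y)$). Finally, since $X \subseteq \LEAVES(T_n)$ and $Y \subseteq \LEAVES(T \setminus T_n)$, the v-tree $T$ cuts $(X,Y)$ via the node $n$ (Definition~\ref{def:cut}), which completes the proof.
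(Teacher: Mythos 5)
Your proposal is correct and follows essentially the same route as the paper: introduce treesplitwidth (identical to the paper's Definition~\ref{def:treesplitwidth}), prove the key lemma $\tw(H) \leq 3\,\arity(H) \times \tsw(H)$ by turning the v-tree's skeleton into a tree decomposition, and then conclude via Proposition~\ref{prp:splitclauses} exactly as in the paper's proof. The only (harmless) difference is in the bag definition for the key lemma: you keep only the boundary variables inside each child subtree, whereas the paper takes all variables of the edges in $\spl_n \cup \spl_{n_l} \cup \spl_{n_r}$; your bags are a subset of the paper's, your occurrence/connectedness arguments go through, and you in fact get the slightly better constant $2\,\arity(H) \times \tsw(H)$.
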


This allows us to extend Theorem~\ref{thm:sdnnf_lower_scov_sint} to arbitrary monotone CNFs/DNFs and to prove Theorem~\ref{thm:SDNNF_lower_main}:

\begin{proof}[Proof of Theorem~\ref{thm:SDNNF_lower_main}]
	Let $(D,T,\rho)$ be a complete SDNNF  (resp., complete d-SDNNF) on $V$ computing $\phi$. 
	By Proposition~\ref{prp:reduce_treewidth}, there exist disjoint subsets $X,Y$ of $V$ such that $T$ cuts $(X,Y)$
	and a valuation $\nu$ of $V \setminus (X \cup Y)$ such that $\nu(\phi) = \SCOV_l(X,Y)$ (resp., $\SINT_l(X,Y)$) for $l \geq \frac{k}{3a^3d^2}$.
	By Lemma~\ref{lem:partial_evaluate_DNNF}, there exists a complete SDNNF (resp., complete d-SDNNF) $(D',T',\rho')$ on variables $X \cup Y$ computing $\SINT_l(X,Y)$ whose width is no greater than that of $(D,T,\rho)$, and such that $T'$ is a reduction of $T$.
	Now, it is clear that $T'$ still cuts $(X,Y)$ (by definition of $T'$ being a reduction of $T$), so that by Theorem~\ref{thm:sdnnf_lower_scov_sint} the width of $(D',T',\rho')$, and hence that of $(D,T,\rho)$, is $\geq 2^{\frac{k}{3a^3d^2}} -1$.
\end{proof}

Hence, in the rest of this subsection, we prove Proposition~\ref{prp:reduce_treewidth}.

\myparagraph{Treesplitwidth.} Informally, treesplitwidth is to v-trees what
pathsplitwidth is to variable orders: it bounds the ``best performance'' of
any v-tree.
\begin{definition}
	\label{def:treesplitwidth}
	Let $H=(V,E)$ be a hypergraph, and $T$ be a v-tree over $V$. For any
        node~$n$ of $T$,
	we define $\spl_n(T,H)$ as the set of hyperedges $e$ of~$H$
        that contain both a variable in~$T_n$ and one outside~$T_n$
        (recall that $T_n$ denotes the subtree of~$T$ rooted
        at~$n$).
        Formally $\spl_n(T,H)$ is defined as the following set of hyperedges:
	\[ \{e \in E \mid \exists v_{\mathrm i} \in \LEAVES(T_n)
        \text{~and~} \exists
        v_{\mathrm o} \in \LEAVES(T \setminus T_n) \text{~such~that~}
	\{v_{\mathrm i}, v_{\mathrm o}\} \subseteq e\}\]

	The \emph{treesplitwidth} of $T$ relative to $H$ is $\tsw(T, H)
        \colonequals \max_{n \in T} |\spl_n(T,H)|$.
        The \emph{treesplitwidth} $\tsw(H)$ of~$H$ is then the
        minimum
         of $\tsw(T, H)$
        over all v-trees $T$ of~$V$.
\end{definition}

We note that the treesplitwidth of $H$ is exactly the
\emph{branch-width}~\cite{robertson1991obstructions} of the dual hypergraph of
$H$, but treesplitwidth is more convenient for our proofs.
As with pathsplitwidth and pathwidth (Lemma~\ref{lem:cw_pw}), we can
bound the treewidth of a hypergraph by its treesplitwidth:

\begin{lemma}
	\label{lem:tsw_tw}
	For any hypergraph $H=(V,E)$, we have
        $\tw(H) \leq 3 \times \arity(H) \times \tsw(H)$.
\end{lemma}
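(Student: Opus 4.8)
The plan is to mimic the proof of Lemma~\ref{lem:cw_pw}, but using the v-tree $T$ witnessing $\tsw(H)$ directly as the skeleton of a tree decomposition. Fix such a $T$ with $\tsw(T,H)=\tsw(H)=:w$, so that $|\spl_n(T,H)| \le w$ for every node $n$. For a node $m$ of $T$, write $S_m \defeq \bigcup_{e\in\spl_m(T,H)} e$ for the set of vertices appearing in edges split at $m$; since each such edge has at most $\arity(H)$ vertices, $|S_m| \le \arity(H)\times w$. I would then define the bag at each node $n$ as $\beta(n) \defeq S_n \cup \bigcup_{c} S_c$, the union ranging over the (at most two) children $c$ of $n$, and additionally put $v$ in $\beta(v)$ when $n=v$ is the leaf associated with a variable $v$. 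The width bound is then immediate: an internal bag is a union of three sets $S_m$ and a leaf bag adds a single vertex, so $|\beta(n)| \le 1 + 3\arity(H)\times w$ and the width is $\le 3\arity(H)\times\tsw(H)$, as desired. The crux is thus to check that $(T,\beta)$ is a genuine tree decomposition.

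For the occurrence condition I would treat singleton edges $\{v\}$ and larger edges separately. A singleton is covered by the leaf bag $\beta(v)\ni v$. For an edge $e$ with $|e|\ge 2$, let $m_e$ be the least common ancestor of the leaves of $e$; it is internal, with children $c_1,c_2$, and the leaves of $e$ are split across $T_{c_1}$ and $T_{c_2}$ with both sides nonempty (by minimality of $m_e$). Hence $e \in \spl_{c_1}(T,H)$, so $e \subseteq S_{c_1} \subseteq \beta(m_e)$ and $e$ is covered.

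The main obstacle, and the reason for the factor $3$, is the connectedness condition; I would establish it through a Steiner-tree analysis. For a fixed edge $e$ with $|e|\ge 2$, one checks that the set $A_e$ of nodes at which $e$ is split equals $\mathrm{St}(e)\setminus\{m_e\}$, where $\mathrm{St}(e)$ is the minimal subtree of $T$ connecting the leaves of $e$: indeed a strict descendant $n$ of $m_e$ splits $e$ iff $T_n$ contains some (but, automatically, not all) leaves of $e$, i.e.\ iff $n$ is an ancestor of a leaf of $e$. The two topmost nodes of $A_e$ are $c_1,c_2$, whose common parent is $m_e$; consequently the set $\hat A_e$ of nodes $n$ such that $e$ is split at $n$ or at a child of $n$ equals $A_e\cup\{m_e\}=\mathrm{St}(e)$, which is connected and contains every leaf of $e$. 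Now a vertex $v$ lies in $\beta(n)$ exactly when $n\in\hat A_{e}$ for some edge $e\ni v$, together with the leaf $n=v$; since each $\hat A_{e}$ is connected and contains the leaf $v$ (a leaf of every $e\ni v$), the union over all $e\ni v$ is connected, and hence so is the full set of bags containing $v$. This yields a valid tree decomposition of width $\le 3\arity(H)\times\tsw(H)$, and combined with $\tsw(H)=\tsw(T,H)$ it gives $\tw(H)\le 3\arity(H)\times\tsw(H)$. The delicate point throughout is that using only $S_n$ (without the children's contributions) would leave $A_e$ disconnected at $m_e$; adding the children's split sets re-glues the two components at $m_e$, precisely at the cost of the factor~$3$.
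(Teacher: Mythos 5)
Your proof is correct and follows essentially the same route as the paper's: use the witnessing v-tree as the skeleton, label each internal node with the vertices of the edges split at it and at its two children (whence the factor $3$), and establish connectedness via the Steiner-tree/LCA observation that the split nodes of an edge form the minimal subtree spanning its leaves minus its root. The only cosmetic difference is that you also include $S_v$ in leaf bags where the paper uses just $\{v\}$, which changes nothing in the bound.
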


\begin{proof}
  Let $H = (V, E)$ be a hypergraph, and $T$ a v-tree over~$V$ witnessing
  that $H$ has treesplitwidth $\tsw(H)$.
	We will construct a tree decomposition $T'$ of $H$ of width $\leq 3
        \times 
        \arity(H) \times \tsw(H)$.
        The skeleton of $T'$ is the same as that of $T$. Now, for each node $n
        \in T$, we call $b_n$ the corresponding bag of $T'$, and we define the
        labeling $\lambda(b_n)$ of~$b_n$.

        If $n$ is an internal node of $T$ with children $n_l,n_r$ (recall
        that v-trees are assumed to be binary), then 
        we define
        $\lambda(b_n) \colonequals \bigcup \spl_n(T,H) \cup \bigcup \spl_{n_l}(T,H) \cup \bigcup \spl_{n_r}(T,H)$, 
	and if $n$ is a variable $v \in V$ (i.e., $n$ is a leaf of $T$) then
        $\lambda(b_n) \colonequals \{v\}$.
        It is clear that the width of $P$ is $\leq \max(3 \times \arity(H) \times
        \tsw(H),1) -1 \leq 3 \times \arity(H) \times \tsw(H)$.
	
        The occurrence condition is verified: let $e$ be an edge of
        $H$. If $e$ is a singleton edge~$\{v\}$
	then it is included in $b_v$. If $|e| \geq 2$ then there must exists a node $n \in T$ such that $e \in \spl_n(T,H)$.
	If $n$ is an internal node of $T$ then $e \subseteq \bigcup \spl_n(T,H) \subseteq b_n$, and if
	$n$ is a leaf node of $T$ then it must have a parent $p$ (since
        $e$ is split), and $e \subseteq \bigcup \spl_n(T,H) \subseteq b_p$.

        Connectedness is proved in the same way as in the proof of
        Lemma~\ref{lem:cw_pw}: for a given vertex $v\in V$, the nodes
        of~$T$ where each edge~$e$
        containing~$v$ is split is a connected subtree of~$T$ without its root
        node: more
        precisely, they are all the ancestors of a leaf in~$e$ 
        strictly lower than their the least
        common ancestor. Adding the missing root to each
        such subtree and unioning them all will result in the subtree of all
ancestors of a vertex adjacent to~$v$ (included $v$~itself) up to their
  least common ancestor~$a$.
        Consequently, the set of nodes of~$T'$ containing~$v$ is
        a connected subtree of~$T'$, rooted in~$b_a$.
\end{proof}

We are now ready to prove Proposition~\ref{prp:reduce_treewidth}, similarly to
the way we proved Proposition~\ref{prp:reduce_pathwidth}:

\begin{proof}[Proof of Proposition~\ref{prp:reduce_treewidth}]
  Let $\phi$ be the monotone CNF (resp., monotone DNF) on variables $V$ having
  treewidth $\geq k$, $a$ be its arity, $d$ its degree, and let $T$ be a v-tree
  of~$V$. By Lemma~\ref{lem:tsw_tw}, the treesplitwidth $k'$ of~$\phi$ is $\geq
  \frac{k}{3a}$. By definition of treesplitwidth, there exists $n \in T$ such
  that $\card{\spl_n(T, \phi)} \geq k'$. Let $X'$ be the variables in~$T_n$, let
  $Y'$ be the variables outside $T_n$, and let $K' \colonequals \spl_n(T,
  \phi)$. Then by definition, $K'$ is split by~$(X', Y')$. Hence by
  Proposition~\ref{prp:splitclauses}, letting $n \colonequals
  \left\lfloor\frac{k'}{a^2\times d^2}\right\rfloor \geq
  \left\lfloor\frac{k}{3 a^3\times d^2}\right\rfloor$, we can find $X \subseteq
  X'$ and $Y \subseteq Y'$ of size~$n$
	and a valuation $\nu$ of $V \setminus (X \cup Y)$ such that $\nu(\phi) = \SCOV_n(X,Y)$ 
	(resp., $\nu(\phi) = \SINT_n(X,Y)$), which is what we wanted.
\end{proof}

\section{Lower Bounds for Unstructured Classes}
\label{sec:unstructured}
Theorem~\ref{thm:SDNNF_lower_main} gives an exponential lower
bound on the size of \emph{structured} DNNFs computing monotone CNF
formulas of treewidth $k$. Intuitively, the high treewidth of the CNF makes it possible to find a
large instance of $\SCOV$ for some set of variables that are
cut by the v-tree, and this implies a lower bound on the size of the SDNNF.
However, this argument crucially depends on the fact that the whole circuit is
structured by the same v-tree.

It turns out that we can nevertheless extend our exponential lower bound on
monotone CNF of treewidth $k$; but this requires a completely different proof
technique as we cannot isolate a single bad partition of variables anymore. As
in the rest of our work, the same argument applies to decision diagrams with
pathwidth.

As in the previous sections, our lower bounds in this section will apply to
so-called \emph{complete} circuits and decision diagrams.
However, the definitions of completeness in Section~\ref{sec:completewidth} were
only given for structured classes. We now give these missing definitions:

\begin{definition}
  \label{def:un_complete}
  We say that an nFBDD is \emph{complete} if, for any root-to-sink path $\pi$, all
  variables are tested along $\pi$, i.e., occur as the label of a node of~$\pi$. For DNNFs, recalling the definition of a \emph{trace} (see
  Definition~\ref{def:trace}), we say that a DNNF is \emph{complete} if, for
  any trace $\Xi$ starting at the output gate, all variable gates are in~$\Xi$.
\end{definition}

Observe that a complete nOBDD is indeed complete when seen as an nFBDD, and a
complete SDNNF is also complete when seen as an DNNF. 

Our main results in this section 
are the following analogues of
Corollaries~\ref{cor:obdd_omega} and~\ref{cor:SDNNF_omega}, where
the structuredness assumption is lifted:

\begin{theorem}
\label{thm:nFBDD_lower_main}
For any monotone CNF $\phi$ of constant arity and degree,
the size of the smallest complete nFBDD computing $\phi$ is $2^{\Omega(\pw(\phi))}$.
\end{theorem}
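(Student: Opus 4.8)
The plan is to reduce the statement to a lower bound on \emph{multi-partition} rectangle covers, using completeness to recover a clean rectangle structure from the otherwise unordered nFBDD. First I would pass from pathwidth to pathsplitwidth via Lemma~\ref{lem:cw_pw}, so that the pathsplitwidth of~$\phi$ is $\geq m \defeq k/a$, where $k=\pw(\phi)$ and $a$ is the (constant) arity. The key structural observation is that, in a \emph{complete} nFBDD (Definition~\ref{def:un_complete}), every node $v$ lying on some root-to-sink path has a well-defined \emph{tested set} $A_v\subseteq V$: all root-to-$v$ paths read exactly the variables of $A_v$, and all $v$-to-sink paths read exactly $V\setminus A_v$. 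Indeed, if a root-to-$v$ path reading~$A$ concatenates with a $v$-to-sink path reading~$B$, read-onceness forces $A\cap B=\emptyset$ and completeness forces $A\cup B=V$, so $B=V\setminus A$; comparing two concatenations through~$v$ then pins down a unique $A_v$. Consequently, the set $R_v$ of accepting assignments whose path visits~$v$ is an $(A_v,V\setminus A_v)$-rectangle, since the root-to-$v$ and $v$-to-sink computations depend on disjoint variable blocks and can be spliced.

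Next I would build a rectangle cover of~$\phi$ from these node-rectangles. For each satisfying assignment~$\nu$, its accepting path reads all variables (by completeness) and thus induces a linear order on~$V$; applying pathsplitwidth to this order yields a node~$v$ on the path whose partition $(A_v,V\setminus A_v)$ splits at least~$m$ clauses. Charging~$\nu$ to such a node and letting $\mathcal{V}$ be the set of nodes used, we obtain $\phi=\bigvee_{v\in\mathcal{V}}R_v$, where each $R_v$ is a rectangle for a \emph{high-split} partition and $|\mathcal{V}|\leq|D|$. It therefore suffices to exhibit a family~$\mathcal{F}$ of satisfying assignments of size $2^{\Omega(m/(a^2d^2))}=2^{\Omega(k/(a^3d^2))}$ such that every $(A,B)$-rectangle with split $\geq m$ meets~$\mathcal{F}$ in at most one point. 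Given such an~$\mathcal{F}$, since every member lies in some $R_v$ with $v\in\mathcal{V}$, we get $|D|\geq|\mathcal{V}|\geq|\mathcal{F}|=2^{\Omega(\pw(\phi))}$, as desired.

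The main obstacle is constructing this \emph{partition-robust} fooling family~$\mathcal{F}$, and this is exactly where the argument departs from the structured case: a fooling set tailored to a single bipartition (as in Theorem~\ref{thm:sint_rect_lowerbound}) is useless here, since a rectangle for a different high-split partition---high-split because of \emph{unrelated} clauses---may swallow all of it. The plan is to take $\mathcal{F}$ to consist of ``exact'' satisfying assignments that select one representative variable per clause, and to analyze, for an arbitrary high-split partition $(A,B)$, the disjoint system of split clauses extracted through the exclusion graph (Lemmas~\ref{lem:exclusiondegree} and~\ref{lem:indepset}, exactly as in Proposition~\ref{prp:splitclauses}). For monotone CNFs one checks that two such assignments can coexist in an $(A,B)$-rectangle only if, on each of these disjoint split clauses, their representatives lie on the \emph{same} side of the cut; using the bounded degree to control how representatives interact across clauses, a sufficiently spread-out family of size $2^{\Omega(k/(a^3d^2))}$ then guarantees, for every high-split $(A,B)$, a split clause on which any two distinct members disagree in side, which breaks one cross-mix and hence the rectangle. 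Making this separation \emph{uniform} over all exponentially many high-split partitions simultaneously is the crux; the pathwidth hypothesis is precisely what ensures that every relevant cut is entangled enough for the representatives to be separable.
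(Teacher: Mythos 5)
Your reduction to multi-partition rectangle covers is exactly the paper's own first step: completeness plus read-onceness give every node $v$ a well-defined tested set $A_v$, the accepting valuations whose path visits $v$ form an $(A_v,V\setminus A_v)$-rectangle implying $\phi$, and pathsplitwidth (via Lemma~\ref{lem:cw_pw}) lets you charge every satisfying valuation to a node whose partition splits at least $k/a$ clauses, yielding $\phi=\bigvee_{v\in\mathcal{V}}R_v$ with $|\mathcal{V}|\leq|D|$. Up to that point your argument is sound, and you even spell out the tested-set claim more explicitly than the paper does.

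The gap is in the last step, which is the heart of the theorem. To lower-bound $|\mathcal{V}|$ you postulate a \emph{partition-robust fooling family} $\mathcal{F}$: a set of models such that every rectangle over \emph{any} high-split partition, implying $\phi$, contains at most one member. You never construct this family; the sketch (``exact'' assignments, ``sufficiently spread-out'', ``the pathwidth hypothesis is precisely what ensures\dots'') is not a proof, and you concede yourself that uniformity over all high-split partitions ``is the crux''. Moreover the requirement you impose is far stronger than needed, and it is unclear it can be met at all: for a pair $\nu_1\neq\nu_2\in\mathcal{F}$ and a high-split partition $(A,B)$, if the disagreement set of $\nu_1,\nu_2$ lies entirely inside $A$ (or inside $B$), then both cross-mixes are just $\nu_1$ and $\nu_2$ themselves and the fooling condition fails; hence $\mathcal{F}$ would have to be injective on \emph{both} sides of \emph{every} high-split partition simultaneously, a global constraint that no greedy or ``spread-out'' construction obviously satisfies. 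The paper avoids exactly this obstacle by replacing the fooling-set condition with a \emph{density} bound, Lemma~\ref{lem:small_fraction}: every $(X,Y)$-rectangle $R$ with $R\Rightarrow\phi$ whose partition splits many clauses satisfies $\#R\leq(1+2^{-a^2d})^{-n}\times\#\phi$, proved clause by clause through an injection argument that crucially uses monotonicity of the CNF and the arity/degree bounds. Summing this bound over the at most $|D|$ rectangles of the cover gives $|D|\geq(1+2^{-a^2d})^{n}=2^{\Omega(\pw(\phi))}$. This per-rectangle counting statement is the correct ``generalization of fooling sets'' to the unstructured setting; without it, or without an actual construction of your family $\mathcal{F}$, your proof does not go through.
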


\begin{theorem}
\label{thm:DNNF_lower_main}
For any monotone CNF $\phi$ of constant arity and degree,
the size of the smallest complete DNNF computing $\phi$ is $2^{\Omega(\tw(\phi))}$.
\end{theorem}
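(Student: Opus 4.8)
The plan is to mirror the structured argument behind Theorem~\ref{thm:SDNNF_lower_main}, but to replace the single global v-tree by a decomposition of the DNNF into rectangles that is forced to respect a partition of \emph{our} choosing. First I would reduce treewidth to treesplitwidth through Lemma~\ref{lem:tsw_tw}, so that it suffices to exhibit a large instance of $\SCOV_n$ and to lower-bound its rectangle cover, exactly as in Section~\ref{sec:structured}; with constant arity~$a$ and degree~$d$ this reduction loses only constant factors, so a bound of the form $2^{\Omega(\tsw(\phi))}$ will translate into the desired $2^{\Omega(\tw(\phi))}$.

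The technical engine I would invoke is the ``balanced rectangle cover'' property of DNNFs from \cite{bova2016knowledge,pipatsrisawat2010lower}: in its cleanest form, a complete DNNF of size~$s$ computing a function~$f$ on variables~$V$, together with \emph{any} weight function $w \colon V \to \NN$, yields a partition $(X,Y)$ of~$V$ that is balanced for~$w$ (each side carries at least a third of the total weight) and an $(X,Y)$-rectangle cover of~$f$ of size at most~$s$; completeness (Definition~\ref{def:un_complete}) is what guarantees that every accepting trace, in the sense of Definition~\ref{def:trace}, is captured by some rectangle. The purpose of the weights is that, unlike in the structured case, the partition is chosen by the circuit and not by us; by concentrating all the weight on a well-chosen set $W \subseteq V$, I can force the partition to split~$W$ into two large halves, hence to cross the hyperedges linking the two halves of~$W$.

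The combinatorial core is then to pick~$W$ so that \emph{every} balanced bipartition of~$W$ splits many clauses of~$\phi$. This is exactly where high treewidth (rather than a single bad v-tree cut) is indispensable: treewidth $\geq k$ guarantees a well-linked, bramble-like set~$W$ of size $\Omega(k)$ such that any way of cutting~$W$ into two large parts is crossed by $\Omega(k)$ vertex-disjoint paths, and with bounded arity and degree these crossings yield $\Omega(k)$ distinct split clauses. Feeding this split set into Proposition~\ref{prp:splitclauses} produces disjoint subsets $X_0 \subseteq X$, $Y_0 \subseteq Y$ and a valuation~$\nu$ with $\nu(\phi) = \SCOV_n(X_0,Y_0)$ for $n = \Omega(k/(a^3 d^2))$. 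Restricting the cover by~$\nu$ turns each $(X,Y)$-rectangle into an $(X_0,Y_0)$-rectangle (since $X_0 \subseteq X$ and $Y_0 \subseteq Y$), so we obtain an $(X_0,Y_0)$-rectangle cover of $\SCOV_n(X_0,Y_0)$ of size $\leq s$; Theorem~\ref{thm:sint_rect_lowerbound} then forces $s \geq 2^n - 1 = 2^{\Omega(\tw(\phi))}$. The same scheme with pathwidth and nFBDDs (reading a root-to-sink path as an implicit variable ordering) would prove Theorem~\ref{thm:nFBDD_lower_main}.

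I expect the main obstacle to be precisely the step that has no counterpart in the structured proof: making the split \emph{unavoidable}. A balanced partition need not split many clauses on its own (a low-complexity part of~$\phi$ may be cut cheaply while all the hard structure sits entirely on one side), so the extraction of a robustly-splitting, well-linked set~$W$ from the treewidth bound, and the choice of weights concentrated on it, is the delicate part. A secondary difficulty is that the DNNF decomposition is naturally a \emph{multi-partition} cover, where different rectangles may use different balanced partitions; one must therefore either argue that the weighting collapses this to a single usable partition, or strengthen the fooling-set argument underlying Theorem~\ref{thm:sint_rect_lowerbound} so that it remains valid against rectangles drawn from varying partitions.
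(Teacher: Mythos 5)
There is a genuine gap, and it sits exactly where you placed your ``secondary difficulty.'' The technical engine you invoke does not exist in the form you state it: for an \emph{unstructured} complete DNNF there is no theorem producing a \emph{single} balanced partition $(X,Y)$ together with an $(X,Y)$-rectangle cover of size $\leq s$. What \cite{bova2016knowledge,pipatsrisawat2010lower} give for DNNFs is inherently a \emph{multi-partition} cover: each rectangle $R_i$ comes with its own (balanced) partition $(X_i,Y_i)$, induced by the gate/trace that generates it, and no choice of weight function can collapse these to one common partition --- different accepting traces of the same DNNF genuinely decompose the variables in incompatible ways (this is precisely why the paper says the structured argument ``crucially depends'' on the single v-tree). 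Once the partitions vary, your whole pipeline breaks: restricting by $\nu$ does not turn an $(X_i,Y_i)$-rectangle into an $(X_0,Y_0)$-rectangle unless $X_0\subseteq X_i$ and $Y_0\subseteq Y_i$, which fails for most rectangles; and the fooling-set bound of Theorem~\ref{thm:sint_rect_lowerbound} is tailored to one fixed partition --- indeed $\SCOV_n(X_0,Y_0)$ \emph{is itself a single rectangle} for any balanced partition that keeps each pair $\{x_i,y_i\}$ on the same side, so no fooling-set argument over a fixed instance of $\SCOV$ can survive adversarially varying partitions. Your two proposed fixes are thus respectively impossible (collapsing to one partition) and unsubstantiated (strengthening fooling sets), and the latter is not a side issue but the entire content of the theorem.

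The paper's proof resolves this with a different mechanism, which you may find instructive to compare. Instead of reducing to a fixed hard function, it proves a \emph{per-rectangle counting bound} (Lemma~\ref{lem:small_fraction}): if $R$ is an $(X,Y)$-rectangle with $R \Rightarrow \phi$ and the partition $(X,Y)$ splits $|K'|$ clauses, then $\#R \leq (1+2^{-a^2d})^{-n}\,\#\phi$ with $n = \lfloor |K'|/(a^2d^2)\rfloor$ --- a statement that is uniform over \emph{all} partitions, proved by an injection argument with bounded preimages, and (as the paper notes) valid only for CNFs, not DNFs. Then, for each satisfying valuation $\nu$, completeness ensures its trace $\Xi$ contains all variables, so the trace induces a tree over $V$; by Lemma~\ref{lem:tsw_tw} (treesplitwidth $\geq k/3a$) some gate $g(\nu)$ of $\Xi$ induces a descendant/non-descendant partition splitting $\geq k/3a$ clauses, and the set $R_{g(\nu)}$ of valuations with a trace through $g(\nu)$ is a rectangle for \emph{that} partition implying $\phi$. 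A union bound over the at most $|D|$ gates then gives $\#\phi \leq |D|(1+2^{-a^2d})^{-n}\#\phi$, i.e., $|D| \geq (1+2^{-a^2d})^{n} = 2^{\Omega(\tw(\phi))}$. In short: the paper never fixes a partition and never uses $\SCOV_n$ or fooling sets here; it lets each rectangle choose its partition and shows that any such rectangle is measure-theoretically small. Your well-linked-set idea is also unnecessary under this scheme --- treesplitwidth applied trace-by-trace already supplies the split clauses for whichever partition each trace happens to induce.
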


Like in the previous section, we can in fact lift the completeness assumption
because one can make any nFBDD or any DNNF complete by only increasing its size
with a factor linear in the number of variables:

\begin{lemma}
  \label{lem:complete_circuit}
  For any nFBDD (resp. DNNF) $D$ on variables set $V$, there exists an equivalent complete nFBDD (resp. DNNF) of size at most $(|V|+1) \times |D|$.
\end{lemma}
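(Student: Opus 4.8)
The plan is to prove Lemma~\ref{lem:complete_circuit} by a local rewriting that forces every missing variable to be tested, without changing the captured Boolean function. I would handle the two cases (nFBDD and DNNF) separately but with the same underlying idea: whenever a computation ``commits'' to a value without having tested some variable $x$, I splice in a gadget that tests $x$ in a way that does not constrain it (accepting both values), thereby making $x$ appear on every accepting path or in every trace while preserving the function.

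For the nFBDD case, first I would observe that completeness is about root-to-sink paths testing every variable, so the natural fix is to insert, on each edge leading into a sink, the missing variables. Concretely, I would introduce for each variable $x \in V$ a small ``free test'' gadget: a node labeled by $x$ whose $0$-edge and $1$-edge both lead to the same target node. Such a gadget does not change which valuations are accepted along a path through it, but it does force $x$ to be tested. To make the whole diagram complete while respecting the free binary decision diagram property (no variable tested twice on a path), I would process the structure so that on any root-to-sink path, exactly the variables not yet tested get inserted as free tests just before reaching the sink. A clean way is to route every edge entering a sink through a fixed chain of $|V|$ free-test gadgets, one per variable, each of which is a no-op unless the variable was untested: since each path already tests a subset of variables, and the chain tests all $|V|$ of them, care is needed to avoid testing a variable twice. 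The simplest accounting is to bound the size by adding, for each of the $|D|$ edges, at most $|V|$ new nodes, giving the stated bound $(|V|+1)\times |D|$; correctness follows because free tests neither add nor remove accepting paths (they are bijective on the two outgoing edges for every valuation).

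For the DNNF case, the analogous move is to force every variable gate to appear in every trace starting at the output. Here I would replace the output gate $g_\out$ (or rather, conjoin with it) a decomposable tautology gadget over the missing variables: for each variable $x$, build the gate $(x \lor \lnot x)$, which is a decomposable $\lor$-gate (its inputs read disjoint — indeed identical up to a single variable — variable sets, so decomposability of the surrounding $\land$ must be checked), and $\land$ these gadgets together with the original circuit via a decomposable $\land$-gate whose one input is the original output and whose other inputs supply the tautologies on variables not already read. Since $\VARS(g_\out)$ may already contain some variables, I would take the conjunction only over variables in $V \setminus \VARS(g_\out)$ at the top, but completeness requires \emph{every} trace to contain \emph{all} variable gates, so I must ensure the gadget is attached in a decomposable way and that every trace of the new output necessarily selects each tautology gadget. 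Because a trace at an $\land$-gate must include all its inputs, conjoining the tautology gadgets at the top guarantees each such gadget lies in every trace; and a trace at $(x \lor \lnot x)$ picks exactly one of $x$ or $\lnot x$, so the underlying variable gate $x$ enters the trace. The size increase is $O(|V|)$ gates and wires added at the top level, comfortably within $(|V|+1)\times|D|$.

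The main obstacle I expect is decomposability: attaching the tautology gadgets must not create a non-decomposable $\land$-gate, since $\VARS$ of the original output and of the added gadgets must be made disjoint. The honest fix is to conjoin tautologies \emph{only} for variables not already in $\VARS(g_\out)$, and for variables already present lower in the circuit I must instead ensure completeness by inserting free tests \emph{along} the circuit rather than only at the top — which for DNNFs means the gadget should be interleaved the way it is for nFBDDs, tracking at each $\lor$-gate which variables its traces have read. This bookkeeping, ensuring that the completion preserves both decomposability (and, when needed, determinism and structuredness, as required by Lemma~\ref{lem:complete_obdd_sdnnf}) while keeping the size blowup linear in $|V|$, is the delicate part; the rest is routine verification that free tests and tautology conjunctions preserve the captured function.
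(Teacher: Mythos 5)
Your overall strategy---forcing skipped variables to be read by splicing in ``free tests'' (for nFBDDs) and tautology conjuncts (for DNNFs)---is the idea behind the standard constructions that the paper itself invokes (the paper's proof is essentially a citation: Wegener's Lemma~6.2.2 for completing FBDDs, and Darwiche's smoothing for DNNFs). But your write-up stops exactly where the proof content lies, namely the choice of \emph{where} the gadgets go. For nFBDDs, your concrete suggestion---routing every edge that enters a sink through a chain of free tests---cannot work as stated: if the chain tests all of $V$, the read-once property is violated on every path that already tests some variable, and if the chain is supposed to test only the ``missing'' variables, that set is not well defined, because different root-to-sink paths crossing the same edge have tested different subsets of $V$. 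You acknowledge both problems (``care is needed to avoid testing a variable twice'') but never resolve them, and the size bound is then asserted rather than derived. The resolution in the cited construction is to make the inserted set path-independent: writing $\VARS(u)$ for the set of variables labelling nodes reachable from $u$, one inserts on each edge $(u,w)$ a chain of free tests for exactly $\VARS(u) \setminus (\VARS(w) \cup \{x_u\})$, where $x_u$ is the variable tested at $u$ (if $u$ is a decision node), plus a chain for $V \setminus \VARS(r)$ above the root~$r$. Freeness is preserved because no variable of $\VARS(u)$ can be tested on a path above $u$ (otherwise some path of the original nFBDD would test it twice), and a bottom-up induction shows that every path from $u$ to a sink in the new diagram tests exactly $\VARS(u)$, each variable once---hence completeness; each original edge receives at most $|V|$ new nodes, which is the source of the $(|V|+1)\times|D|$ bound.

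The same gap appears in your DNNF argument. You correctly observe that conjoining tautologies $(x \lor \lnot x)$ at the output only handles $V \setminus \VARS(g_\out)$, and that variables read in some traces but not in others require gadgets inserted \emph{inside} the circuit---but that insertion rule is precisely the proof, and you leave it as ``bookkeeping''. The standard solution (smoothing, as in Darwiche's construction cited by the paper) is again local and path-independent: for every $\lor$-gate $g$ and every input $g_i$ of~$g$, replace the wire by $g_i \land \bigwedge_{x \in \VARS(g)\setminus\VARS(g_i)} (x \lor \lnot x)$, with the conjunction realized by a chain of binary $\land$-gates. Decomposability is preserved since the conjoined variables are disjoint from $\VARS(g_i)$ and from one another; an induction shows every trace through $g$ then reads exactly $\VARS(g)$, so together with the output-level gadget over $V \setminus \VARS(g_\out)$ this yields completeness, with at most $|V|$ gadgets added per wire. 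With these two insertion rules stated and the corresponding inductions carried out, your proof would be complete; without them it has a genuine gap.
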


The proof of Lemma~\ref{lem:complete_circuit} for nFBDD can be
straightforwardly adapted from~\cite[Lemma 6.2.2]{wegener2000branching} where it is
shown for FBDD. As for Lemma~\ref{lem:complete_circuit} for DNNF, it can be
shown similarly to the way that DNNF are made \emph{smooth}
in~\cite{darwiche2001tractable} (after Definition~4).

Combining Lemma~\ref{lem:complete_circuit} with
Theorems~\ref{thm:nFBDD_lower_main} and~\ref{thm:DNNF_lower_main} allows us to remove the completeness
assumption as we did for Corollary~\ref{cor:obdd_omega}
and~\ref{cor:SDNNF_omega} (since $(|V|+1)\times |D| \leq |D|^2$, and because $\sqrt{2^{\Omega(k)}}$ is also $2^{\Omega(k)}$):

\begin{corollary}
  \label{cor:unstructured}
  For any monotone CNF $\phi$
  of constant arity
  and degree, the size of the smallest nFBDD computing
  $\phi$ is
  $2^{\Omega(\pw(\phi))}$.
  Likewise, the size of the smallest DNNF computing~$\phi$ is 
  $2^{\Omega(\tw(\phi))}$.
\end{corollary}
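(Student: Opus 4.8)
The plan is to derive Corollary~\ref{cor:unstructured} directly from the completeness-restricted lower bounds, Theorems~\ref{thm:nFBDD_lower_main} and~\ref{thm:DNNF_lower_main}, by invoking Lemma~\ref{lem:complete_circuit} to pass from arbitrary representations to complete ones. Since the two assertions of the corollary (nFBDDs against pathwidth, DNNFs against treewidth) have identical structure, I would treat them in parallel. First, fix a monotone CNF $\phi$ of constant arity and degree on variable set $V$, and let $D$ be any nFBDD (resp.\ DNNF) computing $\phi$. By Lemma~\ref{lem:complete_circuit} there is an equivalent \emph{complete} nFBDD (resp.\ complete DNNF) $D'$ with $|D'| \leq (|V|+1)\times|D|$.

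The key bookkeeping step is to collapse this $(|V|+1)$-factor blow-up into a quadratic one. For this I would observe that $|D| \geq |V|+1$: because $\phi$ is a minimized monotone CNF whose variable set is exactly $V$, it genuinely depends on every variable of~$V$, so any representation of~$\phi$ must read each of the $|V|$ variables, forcing $|D|$ to exceed~$|V|$. Hence $(|V|+1)\times|D| \leq |D|^2$, so $|D'| \leq |D|^2$. I would then apply the complete-case bound: by Theorem~\ref{thm:nFBDD_lower_main} (resp.\ Theorem~\ref{thm:DNNF_lower_main}), since $D'$ is a complete nFBDD (resp.\ complete DNNF) for~$\phi$, we have $|D'| \geq 2^{\Omega(\pw(\phi))}$ (resp.\ $|D'| \geq 2^{\Omega(\tw(\phi))}$). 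Combining gives $|D|^2 \geq 2^{\Omega(\pw(\phi))}$, i.e.\ $|D| \geq 2^{\Omega(\pw(\phi))/2}$, which is still $2^{\Omega(\pw(\phi))}$, and symmetrically for treewidth; this yields the two stated bounds.

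The main obstacle here is purely the size-accounting just described, namely justifying $|V|+1 \leq |D|$ so that a square root preserves the singly exponential rate, together with the fact (asserted after Lemma~\ref{lem:complete_circuit}) that completion can indeed be done with only a factor-$(|V|+1)$ size increase. All the genuine difficulty has already been absorbed into Theorems~\ref{thm:nFBDD_lower_main} and~\ref{thm:DNNF_lower_main}, which I take as given; relative to those, the corollary is just the remark that polynomial (here quadratic) overhead is negligible against an exponential lower bound, so the completeness hypothesis can be dropped at no asymptotic cost.
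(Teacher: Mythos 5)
Your proposal is correct and follows exactly the paper's own route: complete the given nFBDD/DNNF via Lemma~\ref{lem:complete_circuit}, bound the blow-up by $(|V|+1)\times|D| \leq |D|^2$, apply Theorem~\ref{thm:nFBDD_lower_main} (resp.\ Theorem~\ref{thm:DNNF_lower_main}) to the complete representation, and absorb the square root into the $2^{\Omega(\cdot)}$ notation. In fact you are slightly more careful than the paper, which asserts $(|V|+1)\times|D| \leq |D|^2$ without spelling out the justification $|D| \geq |V|+1$ that you supply.
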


We note that, in contrast with our results in Sections~\ref{sec:set}
and~\ref{sec:structured}, the above results \emph{only} apply to
CNFs. By contrast, for DNFs, there is no hope of showing a lower bound
on DNNFs, because any DNF is in particular a DNNF. A natural analogue
would be a lower bound on \emph{d-DNNFs} representations of DNFs, but
this is a long-standing open problem in knowledge
compilation~\cite{DBLP:journals/jair/DarwicheM02,bova2016knowledge}.

Corollary~\ref{cor:unstructured} generalizes a lower bound of
Razgon~\cite{razgon2014twnrop} where he constructs a family
$(F_n)_{n\in \mathbb{N}}$ of monotone 2CNFs such that for every $n$,
$F_n$ has $n$ variables and treewidth $k$. He proves that every nFBDD
computing $F_n$ has size at least $n^{\Omega(k)}$. One can actually
observe that $F_n$ has pathwidth $\Omega(k \log(n))$, making Razgon's
lower bound a consequence of Corollary~\ref{cor:unstructured}. This
observation is actually crucial in Razgon's reasoning and this is
exactly this fact that makes his proof works. Our lower bound is more
general though as it works for \emph{any} monotone CNF and
can be lifted to treewidth and DNNF. The proof technique is however
roughly the same and rely on a similar technical result,
Lemma~\ref{lem:small_fraction} here, which corresponds
to~\cite[Theorem 4]{razgon2014twnrop}. We give a simpler and
more generic presentation of the proof.

The proofs of Theorem~\ref{thm:nFBDD_lower_main} and
Theorem~\ref{thm:DNNF_lower_main} follow the same structure. We will first
present the proof of Theorem~\ref{thm:nFBDD_lower_main} and then explain how the
argument adapts from nFBDDs to DNNFs.

Both results rely on a bound on the number of satisfying valuations that a
rectangle can cover when its underlying partition splits (remember Definition~\ref{def:split}) a large number of
clauses. The result can be understood as a
variant of Proposition~\ref{prp:splitclauses}, coupled with a generalization of
the notion of \emph{fooling sets} in the proof of
Theorem~\ref{thm:sint_rect_lowerbound}.

\begin{lemma}
  \label{lem:small_fraction}
  Let $\phi$ be a monotone CNF with variable
  set~$V$, arity~$a$ and degree~$d$. Let $X,Y$ be a partition of $V$
  and $R$ be an $(X,Y)$-rectangle such that $R \Rightarrow \phi$.
  Assume that there exists a set $K'$ of clauses of~$\phi$ such that $K'$ is
  split by $(X, Y)$.  Let
  $n \colonequals \left\lfloor\frac{\card{K'}}{a^2 \times
      d^2}\right\rfloor$.  Then we can bound the number of satisfying valuations
  of~$R$ as follows: $\#R \leq (1+\alpha_{d,a})^{-n}\times \#\phi$ where
  $\alpha_{d,a}=2^{-a^2 d} > 0$.
\end{lemma}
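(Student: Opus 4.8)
The plan is to mimic the structure of Proposition~\ref{prp:splitclauses}: from the split set $K'$ I would first extract, via the exclusion graph of~$\phi$ together with Lemmas~\ref{lem:exclusiondegree} and~\ref{lem:indepset}, an independent set $K\subseteq K'$ of exactly $n=\lfloor\card{K'}/(a^2d^2)\rfloor$ clauses. By construction these clauses are pairwise disjoint and \emph{exclusion-independent}: no clause of~$\phi$ meets two of them, and each is split by $(X,Y)$. Writing $Z\colonequals\bigcup K$, the same minimization argument as in Proposition~\ref{prp:splitclauses} shows that the clauses of~$\phi$ contained in~$Z$ are exactly those of~$K$, while any other clause touching~$Z$ meets a \emph{single} block $e\in K$ (meeting two would make them adjacent in the exclusion graph). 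Rather than kill the extra clauses, here I keep them and compare $\#R$ and $\#\phi$ through the lens of these $n$ disjoint blocks.

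The key structural step is a fiberwise factorization. Fix an assignment $\rho$ of the variables $V\setminus Z$. Using monotonicity of~$\phi$ to simplify the clauses that $\rho$ already satisfies, the residual formula on~$Z$ becomes a conjunction of constraints each of which involves the variables of only one block~$e$ (this is exactly where exclusion-independence is used). Hence the set of $Z$-configurations satisfying~$\phi$ in this fiber factors as $\prod_{e\in K}\Phi^\rho_e$, so that $\#\phi=\sum_\rho\prod_{e\in K}\card{\Phi^\rho_e}$. On the rectangle side, since $R=R_X\wedge R_Y$ and $R\Rightarrow\phi$, the restriction of~$R$ to the fiber is a combinatorial rectangle $\mathcal A\times\mathcal B$ contained in $\prod_e\Phi^\rho_e$; bounding each factor by the product of its projections gives $\card{R_\rho}\leq\prod_{e\in K}\card{A^\rho_e}\,\card{B^\rho_e}$, where $A^\rho_e\times B^\rho_e$ is a 1-rectangle inside the block matrix~$\Phi^\rho_e$. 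The generalization of the fooling-set argument of Theorem~\ref{thm:sint_rect_lowerbound} enters per block: because clause~$e$ forbids the all-false configuration on~$e$, the rectangle $A^\rho_e\times B^\rho_e$ cannot contain the bottom corner, so one of the two sides is \emph{forced}, and when $\Phi^\rho_e$ is not itself pinned down to a subrectangle the largest admissible 1-rectangle misses at least a $\tfrac{1}{2^a-1}$ fraction of~$\Phi^\rho_e$ --- a loss comfortably below the target factor $(1+\alpha_{d,a})^{-1}$ since $\alpha_{d,a}=2^{-a^2d}$ is tiny.

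The remaining step is to accumulate these per-block gains into the global factor $(1+\alpha_{d,a})^{-n}$, and this is the main obstacle. One cannot argue fiber by fiber, because for some~$\rho$ a neighboring clause may pin a variable of~$e$ (forcing $\Phi^\rho_e$ to be a full subrectangle), in which case the rectangle can cover all of~$\Phi^\rho_e$ and block~$e$ yields no loss; moreover the rectangle does not factor over blocks. The plan is therefore to carry out an averaging over the fibers~$\rho$, showing that each block contributes an \emph{expected} multiplicative gain of at least $1+\alpha_{d,a}$. The point is that pinning a variable of~$e$ requires an entire neighboring clause of~$e$ to be falsified by~$\rho$, which is a rare event, and distinct blocks have disjoint clause-neighborhoods by exclusion-independence, so the correlations between blocks are controlled; the deliberately small value $\alpha_{d,a}=2^{-a^2d}$ is exactly what lets this lossy averaging absorb the rare ``forced'' fibers and the residual dependencies. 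This computation is the technical heart of the lemma and is the analogue of~\cite[Theorem~4]{razgon2014twnrop}, whose counting I would follow but present in the present, more generic, monotone-CNF setting; multiplying the $n$ per-block gains then yields $\#R\leq(1+\alpha_{d,a})^{-n}\times\#\phi$.
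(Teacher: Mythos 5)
Your opening moves are fine (the extraction of an exclusion-independent set $K$ of $n$ clauses is exactly the paper's first step, and your fiberwise factorization $\#\phi=\sum_\rho\prod_{e\in K}\card{\Phi^\rho_e}$ together with the per-block projection bound for the restricted rectangle is a correct observation), but the proof has a genuine gap exactly where you locate it: the averaging over fibers is asserted, not carried out, and the heuristics you offer for it do not hold. Pinning is \emph{not} a rare event in the relevant (weighted) sense: if a block $e$ has $d-1$ neighboring clauses of the form $\{x,z^{(j)}\}$ through a single variable $x\in e$, the fibers in which $e$ is pinned carry all but roughly a $2^{-(d-2)}$ fraction of the weight $\sum_\rho\card{\Phi^\rho_e}$, so the ``loss'' per block is only of order $2^{-d}$ and survives solely because $\alpha_{d,a}=2^{-a^2d}$ is even smaller --- this is a quantitative computation, not an absorption of a rare event. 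Worse, exclusion-independence does not control the inter-block correlations you need: it guarantees that no \emph{clause} meets two blocks, but neighboring clauses of distinct blocks may freely share variables of $V\setminus Z$, so the pinning events and the fiber weights of different blocks are coupled, and the product $\prod_e g_e(\rho)$ of per-block gains cannot simply be averaged blockwise. Making this work would require redoing something like Razgon's counting in full; as written, the technical heart of the lemma is missing.

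The paper avoids this obstacle entirely with one global use of the rectangle property that your proposal misses, and which is the key idea of the proof: for each clause $C\in K$, \emph{either every} satisfying valuation of $R$ satisfies $C\cap X$, \emph{or every} satisfying valuation of $R$ satisfies $C\cap Y$. (If $\nu_1$ falsified $C\cap X$ and $\nu_2$ falsified $C\cap Y$, the mixed valuation $\restr{\nu_1}{X}\cup\restr{\nu_2}{Y}$ would satisfy $R$ but falsify $C$, contradicting $R\Rightarrow\phi$.) This is the global version of your per-fiber ``forbidden corner'' remark, and it is strictly stronger: it yields $R\Rightarrow\psi(K_X,K_Y)$ where $\psi$ adds the clause $C\cap X$ or $C\cap Y$ to $\phi$ for each $C\in K$, so the rectangle disappears from the problem and no pinned fibers can ever help it. What remains is a pure CNF-counting statement, $\#\psi(K_X,K_Y)\leq(1+\alpha_{d,a})^{-n}\#\phi$, proved by removing one added clause at a time via an explicit map (set $C\cap X$ to $0$, set every variable sharing a clause with $C\cap X$ to $1$, keep the rest) whose preimages have multiplicity at most $2^{a^2d}=1/\alpha_{d,a}$. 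I would encourage you to compare: the dichotomy replaces your entire averaging scheme, and the bounded-multiplicity map replaces the per-block rectangle-deficit estimates.
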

\begin{proof}
  We start by extracting a subset $K \subseteq K'$ of size $n$ from
  $K'$ such that $K$ is an independent set of $G_\phi$, the
  exclusion graph of $\phi$, as in the beginning of the proof of
  Proposition~\ref{prp:splitclauses}.

  Let $C$ be a clause of $K$. We denote by~$C \cap X$ the (disjunctive) clause
  formed of the variables in~$X$ that occur in~$C$.
  We claim that one of the following holds:
  \begin{itemize}    
  \item every satisfying valuation $\nu$ of $R$ satisfies $C \cap
    X$; or
  \item every satisfying valuation $\nu$ of $R$ satisfies $C \cap Y$.
  \end{itemize}

  Indeed, assume toward a contradiction that there exist two
  satisfying valuations $\nu_1,\nu_2$ of $R$ such that $\nu_1$ does
  not satisfy $C \cap X$ and $\nu_2$ does not satisfy $C \cap
  Y$. 
  Consider now the valuation
  $\nu \colonequals \restr{{\nu_1}}{X} \cup \restr{{\nu_2}}{Y}$. 
  As $R$ is a rectangle, it is easy to see that $\nu$ should satisfy~$R$,
  hence~$\phi$. However, 
  by definition $\nu$ does not satisfy~$C$, hence it does not satisfy~$\phi$, a
  contradiction.
  We point out here that this claim would not hold if we had a DNF formula instead of CNF.

  Thus, for every clause $C \in K$, all satisfying valuations of $R$
  satisfy either $C \cap X$ or $C \cap Y$.
  Let $K_X$ and $K_Y$ be a partition of~$K$ such that all clauses in~$K_X$
  (resp., in~$K_Y$) are
  such that all satisfying valuations of~$R$ satisfy $C \cap X$ (resp., $C \cap
  Y$): if there is any clause such that both conditions hold, assign it to~$K_X$
  or to~$K_Y$ arbitrarily.
  
  This definition ensures that every satisfying valuation $\nu$ of~$R$ satisfies $C
  \cap X$ for each clause $C \in K_X$ and that $\nu$ satisfies $C \cap Y$ for each
  clause $C \in K_Y$; what is more, as $R \Rightarrow \phi$, the valuation $\nu$
  must also satisfy all clauses in $\phi$. This means that we have $R
  \Rightarrow \psi(K_X,K_Y)$ where $\psi$ is defined as follows (up to
  minimization, i.e., removing clauses that are supersets of other clauses):
  \[\psi(K_X,K_Y) \colonequals 
  \phi
  \cup \left\{C
  \cap X \mid C \in K_X\right\} \cup \left\{C \cap Y \mid C \in K_Y\right\}.\]
  
We will now show that $\#\psi(K_X,K_Y) \leq (1+\alpha_{d,a})^{-n}\times \#\phi$. This
is enough to conclude the proof since $R \Rightarrow \psi(K_X,K_Y)$, that is,
$\#R \leq \#\psi(K_X,K_Y)$.

The proof is by induction on the size of $K_X \cup K_Y$. 
	More precisely, we will show
that given $C \in K_X$, it holds:
  \[ \#\psi(K_X,K_Y) \leq (1+\alpha_{d,a})^{-1}\times \#\psi(K_X \setminus \{C\},
  K_Y). \tag{*}\] The same is true for $C \in K_Y$, 
  but as the argument is symmetric, we only explain it for $K_X$. 
	This will be enough to show that $\#\psi(K_X,K_Y) \leq (1+\alpha_{d,a})^{-n}\times \#\phi$, since iteratively applying (*) to~$K_X$ and~$K_Y$ until $K_X \cup K_Y$ is
empty, and recalling that $K_X \cup K_Y = K$, we get:
\[\#\psi(K_X,K_Y) \leq (1+\alpha_{d,a})^{-(\card{K_X \cup
      K_Y})}\times\#\psi(\emptyset,\emptyset) = (1+\alpha_{d,a})^{-n}\times\#\phi. \]

	Hence, let us
  fix $C \in K_X$ and show (*).
  To do so, we will consider the satisfying valuations 
  of $\psi(K_X \setminus \{C\},K_Y)$ and distinguish those that happen to
  satisfy $C \cap X$ and those who do not. (Note that these valuations are not
  necessarily satisfying valuations of~$R$.) Specifically, let us call $S$ the
  set of satisfying valuations of~$\psi(K_X \setminus \{C\},K_Y)$ which do not
  satisfy~$C \cap X$. Observe that all the other satisfying valuations
  of~$\psi(K_X \setminus \{C\}, K_Y)$ also satisfy $\psi(K_X, K_Y)$. Hence, we
  have:
  \[\#\psi(K_X \setminus \{C\},K_Y) = \card{S} + \#\psi(K_X,K_Y)\tag{**}.\]This implies
  that to show (*), in combination with (**) it suffices to show:
  \[\#\psi(K_X,K_Y) \leq  \frac{\card{S}}{\alpha_{d,a}}.\tag{***}\]
  
  To show (***), we define a function $f$ from the satisfying valuations of~$\psi(K_X,K_Y)$
to~$S$, and show that each valuation in~$S$ has at most $1/\alpha_{d,a}$ preimages
by~$f$. To define~$f$, let us map every satisfying valuation $\nu$ of
$\psi(K_X,K_Y)$ to~$\nu' = f(\nu)$ defined as follows:
\begin{itemize}
  \item set $\nu'(x) \colonequals 0$ for every $x \in C \cap X$,
  \item set $\nu'(z) \colonequals 1$ for every $z \in V \setminus (C \cap X)$ that appears together in a
clause with a variable of $C \cap X$;
    \item set $\nu'(z') \colonequals \nu(z')$ for every other~$z'$.
\end{itemize}

We first show that $f$ is indeed a function that maps to~$S$, in other words:

\begin{claim}
  For any satisfying valuation $\nu$ of~$\psi(K_X, K_Y)$, letting $\nu'
  \colonequals f(\nu)$, we have $\nu' \in S$.
\end{claim}

\begin{proof}
  First, by definition, $\nu'$ does
not satisfy $C \cap X$ since $\nu'(x) = 0$ for every $x \in C \cap
X$. Now we have to show that $\nu'$ also satisfies
$\psi(K_X \setminus \{C\},K_Y)$. First observe that $\nu'$ satisfies $C$ since $C$ has
	at least one variable $y$ in $Y$ and by definition (precisely, by the second item), $\nu'(y) = 1$.

Now let $C'$ be a clause of $\phi$. Then either
$C'$ was satisfied by $\nu$ thanks to a variable $z$ such that
$\nu'(z) = \nu(z) = 1$. In this case, $C'$ is still satisfied by
$\nu'$. Otherwise, it may be that there is an $x \in C \cap X$ such
that $x \in C'$ and $\nu(x) = 1$. In this case, it is not guaranteed
anymore that $C'$ is satisfied by $\nu'$. However, since $\phi$ is
minimized, there must exist $z \in C'$ such that $z \notin C$. By
definition, $z$ appears in the same clause $C'$ as $x$, meaning that
	$\nu'(z) = 1$ (again by the second item) and thus $C'$ is satisfied by $\nu'$.

Finally let $C' \in (K_X \setminus \{C\})$ (the case $C' \in K_Y$ is
similar). We have to check that $C' \cap X$ is satisfied by $\nu'$
as this clause appears in $\psi(K_X \setminus \{C\}, K_Y)$. Since $C'$ is
also in $\psi(K_X,K_Y)$, $\nu$ satisfies $C'$, thus, there exists a
variable $y$ of $C'$ such that $\nu(y) = 1$. Now, we claim that
$\nu'(y) = 1$ too. Indeed, by definition of $K=K_X \cup K_Y$, all
clauses of $K$ form an independent set in $G_\phi$. Thus, $C'$ does
not share any variable with $C$. In other words,
	$\nu'(y) = \nu(y) = 1$ (by the third item), that is, $\nu'$ satisfies $C'$. Thus, we have
  established the claim.
\end{proof}

Now it remains to count the maximal number of preimages of a valuation $\nu' \in
S$ by~$f$.
To define $\nu'$ from~$\nu$ we only changed the valuation of
variables that occur in $C$ or in a clause
with a non-empty intersection with $C$.
There are at most $a^2d$ such variables: at most $a$ variables in $C$, and for
each of these variables, we have at most $d-1$ other clauses of size $a$
incident to it, that is, we change the value of at most $a+a(d-1)(a-1) \leq da^2$ variables.
Thus, given a
valuation $\nu'$ of $S$, there are at most $2^{a^2d} = 1/\alpha_{d,a}$ satisfying valuations of
$\psi(K_X \setminus \{C\},K_Y)$ whose image by~$f$ is $\nu'$. In other words, we
have shown~(***).

Combining this inequality with (**), we get:
\[ \#\psi(K_X,K_Y) \leq (1+\alpha_{d,a})^{-1} \times\#\psi(K_X \setminus
  \{C\},K_Y).\]

This concludes the proof of Lemma~\ref{lem:small_fraction}.
\end{proof}

Now that we have proved our technical lemma, we are ready to prove
our main result on nFBDDs, Theorem~\ref{thm:nFBDD_lower_main}:

\begin{proof}[Proof of Theorem~\ref{thm:nFBDD_lower_main}]
  Let $V$ be the variables of $\phi$, let $k \colonequals \pw(\phi)$, let $a$ be the arity
  of $\phi$, and let $d$ be the degree of $\phi$.
  Let $D$ be a complete nFBDD computing
  $\phi$.
 
  Let $\nu$ be a satisfying valuation of $\phi$.
  By definition of $D$, there exists a root-to-sink path~$\pi$ in $D$
  compatible with $\nu$. As we assumed $D$ to be complete,
  this path induces a total order on $V$. By
  Lemma~\ref{lem:cw_pw}, since $\phi$ is of pathwidth $k$, it is also
  of pathsplitwidth at least $k/a$. In other words,
  we know that there exists a set of clauses $K'$ of
  size at least $k/a$ and a gate~$g(\nu)$ in $\pi$ such that every clause of
  $K'$ has at least one variable tested before $g(\nu)$ in $\pi$ and one
  variable tested after $g(\nu)$ in $\pi$.
  In other words, letting $X$ (resp., $Y$) be the variables of $V$ tested before
  $g(\nu)$ (resp., after $g(\nu)$), we have that $K'$ is split by
  $(X, Y)$.
  
  We denote by $R_{g(\nu)}$ the set of satisfying valuations
  of $\phi$ having a compatible path going through
  $g(\nu)$. Observe now that, similarly to 
  how we proved Theorem~\ref{thm:obdd_has_rectangle_cover},
  $R_{g(\nu)}$ is an $(X,Y)$-rectangle such that
  $R_{g(\nu)} \Rightarrow \phi$: this uses in particular the fact that, although
  all valuations of~$R_{g(\nu)}$ may not test the variables in the same order,
  we know that all variables of~$X$ are tested before~$g(\nu)$, and all
  variables of~$Y$ are tested after~$g(\nu)$, thanks to the fact that each
  variable can be tested only once along root-to-sink paths in an nFBDD.
  Now, by
  Lemma~\ref{lem:small_fraction}, the number of valuations in~$R_{g(\nu)}$ is 
  $\leq (1+\alpha_{d,a})^{-n}\times \#\phi$
  where $n \colonequals {k \over a^3d^2}$.

  Now, observe that given $\nu \models \phi$, we can always find such
  a gate $g(\nu)$ inducing a rectangle $R_{g(\nu)}$ such that
  $\nu \models R_{g(\nu)}$ and such that the previous inequality holds,
  i.e., $\card{R_{g(\nu)}}$ is small wrt
  $\#\phi$. Let $S = \{g(\nu) \mid \nu \models \phi\}$. We thus have:
  $\phi = \bigvee_{g \in S} R_{g}$. Thus:
  \begin{align}
    \#\phi & \leq \sum_{g \in S} \#R_{g} \\
           & \leq \card{S}(1+\alpha_{d,a})^{-n}\times\#\phi \\
           & \leq \card{D}(1+\alpha_{d,a})^{-n}\times\#\phi \text{ since } S \subseteq D.
  \end{align}

  Simplifying by $\#\phi$ gives
  $\card{D} \geq (1+\alpha_{d,a})^{n} = 2^{\Omega(\pw(\phi))}$ since $d$
  and $a$ are constants.
\end{proof}

We have shown our result on nFBDDs, Theorem~\ref{thm:nFBDD_lower_main}. We now
explain how we can adapt the proof to DNNFs and show
Theorem~\ref{thm:DNNF_lower_main}:

\begin{proof}[Proof of Theorem~\ref{thm:DNNF_lower_main}]
  Let $V$ be the variables of~$\phi$, let $k \colonequals \tw(\phi)$, let $a$
  and $d$ be the arity and degree of~$\phi$, and let $D$ be a complete DNNF
  computing~$\phi$.

  For any satisfying valuation $\nu$ of~$\phi$, we consider a trace $\Xi$ of~$D$
  starting at the output gate that witnesses that $\nu$ satisfies $D$. As $D$ is complete, all variables of~$V$ occur in~$\Xi$. By
  Lemma~\ref{lem:tsw_tw}, we know that $\phi$ has treesplitwidth $\geq
  \frac{k}{3 a}$. Hence, we can define a gate $g(\nu)$ of~$\Xi$ such
  that every clause of~$K'$ has at least one variable in a leaf which a
  descendant of~$g(\nu)$ in~$\Xi$ (we call~$X$ the set of such variables) and
  one variable in a leaf which is not a descendant of~$g(\nu)$ in~$\Xi$ (we call
  $Y$ the set of such variables), i.e., $K'$ is split by~$(X, Y)$.

  We denote again by $R_{g(\nu)}$ the set of satisfying valuations of~$\phi$
  having a trace where~$g(\nu)$ occurs. 
  Similarly to how we proved Theorem~\ref{thm:has_rectangle_cover}, it is
  again the case that $R_{g(\nu)}$ is an $(X, Y)$-rectangle such that
  $R_{g(\nu)} \Rightarrow \phi$: this again uses the
  fact that, even though the different traces using $g(\nu)$ may have a very
  different structure, decomposability ensures that the variables of~$X$ must
  occur as descendants of~$g(\nu)$ in~$\Xi$ and the variables of~$Y$ must occur
  as non-descendants of~$g(\nu)$ in~$\Xi$.  
  Now, 
  Lemma~\ref{lem:small_fraction} ensures that the number of valuations in~$R_{g(\nu)}$ is 
  $\leq (1+\alpha_{d,a})^{-n}\times \#\phi$
  where $n \colonequals {k \over 3 a^3d^2}$.

  We conclude exactly as in Theorem~\ref{thm:nFBDD_lower_main} by considering
  the gates $g(\nu)$ for all satisfying valuations~$\nu$ and writing the
  corresponding rectangle cover. This establishes the desired bound.
\end{proof}

\section{Conclusion}
\label{sec:conclusion}
We have shown tight connections between 
structured circuit classes and width measures on circuits.
We constructively rewrite bounded-treewidth (resp., bounded-pathwidth) circuits
to d-SDNNFs (resp., uOBDDs) in time linear
in the circuit and singly exponential in the treewidth, and show matching lower
bounds for arbitrary monotone CNFs or DNFs under degree and arity assumptions,
also for CNFs in the unstructured case.
Our upper bound results imply the tractability of several tasks (probability
computation, enumeration, quantification, etc.) on bounded-treewidth and
bounded-pathwidth circuits, whereas our lower bounds show that pathwidth and
treewidth \emph{characterize} compilability to these classes. Our lower bounds
also have consequences for probabilistic query evaluation as described in the
conference version~\cite{amarilli2016tractable}.

Our work also raises a number of open questions. We leave to future work a more
thorough study of the relationships between the knowledge compilation classes
that we investigated, or their relationship to other classes such as 
\emph{sentential decision diagrams} (SDD)~\cite{darwiche2011sdd} which we did
not consider. Indeed, one interesting question is whether our upper bound result 
Theorem~\ref{thm:upper_bound} could be modified to construct SDDs, or whether
this is impossible and SDDs and d-SDNNFs can thus be separated. A related
question would be to characterize the bounds on the compilation of
bounded-pathwidth circuits to OBDDs (not uOBDDs): this can be done with doubly
exponential complexity in the pathwidth by the results of
\cite[Corollary~2.13]{jha2012tractability} but it is unclear whether this is
tight.
Another intriguing question is whether we could improve our main lower bound of Theorem~\ref{thm:upper_bound} by compiling to d-DNNFs that are not necessarily structured; could we then consider a less restrictive parameter than the treewidth of the original circuit?

In terms of our lower bounds, the main questions would be to investigate more
general languages, e.g., where the arity or degree are not bounded, or where the
functions are not monotone. There is also the question of proposing convincing
width definitions for non-complete circuit formalisms, so as to remove all
completeness assumptions from our results. Last, there is of course the
tantalizing question of showing a lower bound for \emph{unstructured} representations
of DNF formulae, i.e., an analogue for DNF and d-DNNF of the results of
Section~\ref{sec:unstructured}, that would match the results shown in 
Section~\ref{sec:structured} for the structured case. This relates to the open
problem in probabilistic databases of whether safe queries have tractable
lineages~\cite{jha2013knowledge,monet2018towards}.



\bibliography{main}

\begin{thebibliography}{10}

\bibitem{amarilli2017circuit}
Antoine Amarilli, Pierre Bourhis, Louis Jachiet, and Stefan Mengel.
\newblock \href{https://arxiv.org/pdf/1702.05589} {A circuit-based approach to
  efficient enumeration}.
\newblock In {\em ICALP}, 2017.

\bibitem{amarilli2017combined}
Antoine Amarilli, Pierre Bourhis, Mika{\"{e}}l Monet, and Pierre Senellart.
\newblock
  \href{http://drops.dagstuhl.de/opus/volltexte/2017/7051/pdf/LIPIcs-ICDT-2017-6.pdf}{Combined
  tractability of query evaluation via tree automata and cycluits}.
\newblock In {\em ICDT}, 2017.

\bibitem{amarilli2015provenance}
Antoine Amarilli, Pierre Bourhis, and Pierre Senellart.
\newblock
  \href{http://pierre.senellart.com/publications/amarilli2015provenance.pdf}{Provenance
  circuits for trees and treelike instances}.
\newblock In {\em {ICALP}}, July 2015.

\bibitem{amarilli2016tractable}
Antoine Amarilli, Pierre Bourhis, and Pierre Senellart.
\newblock \href{https://arxiv.org/pdf/1604.02761} {Tractable lineages on
  treelike instances: limits and extensions}.
\newblock In {\em PODS}, 2016.

\bibitem{amarilli2018connecting}
Antoine Amarilli, Mika\"el Monet, and Pierre Senellart.
\newblock
  \href{http://drops.dagstuhl.de/opus/volltexte/2018/8608/pdf/LIPIcs-ICDT-2018-6.pdf}{Connecting
  width and structure in knowledge compilation}.
\newblock In {\em {ICDT}}, 2018.

\bibitem{beame2013lower}
Paul Beame, Jerry Li, Sudeepa Roy, and Dan Suciu.
\newblock \href{https://arxiv.org/pdf/1702.05589.pdf}{Lower bounds for exact
  model counting and applications in probabilistic databases}.
\newblock In {\em UAI}, 2013.

\bibitem{beame2017exact}
Paul Beame, Jerry Li, Sudeepa Roy, and Dan Suciu.
\newblock
  \href{https://homes.cs.washington.edu/~beame/papers/mpc-journal.pdf}{Exact
  model counting of query expressions: {L}imitations of propositional methods}.
\newblock {\em TODS}, 42(1), 2017.

\bibitem{beame2015new}
Paul Beame and Vincent Liew.
\newblock \href{https://arxiv.org/pdf/1506.02639v2}{New limits for knowledge
  compilation and applications to exact model counting}.
\newblock In {\em UAI}, 2015.

\bibitem{bodlaender1996linear}
Hans~L Bodlaender.
\newblock \href{https://epubs.siam.org/doi/pdf/10.1137/S0097539793251219}{A
  linear-time algorithm for finding tree-decompositions of small treewidth}.
\newblock {\em SIAM J.~Comput.}, 25(6), 1996.

\bibitem{bodlaender2016approximation}
Hans~L. Bodlaender, P{\aa}l~Gr{\o}n{\aa}s Drange, Markus~S. Dregi, Fedor~V.
  Fomin, Daniel Lokshtanov, and Michal Pilipczuk.
\newblock \href{https://epubs.siam.org/doi/pdf/10.1137/130947374}{A
  c\({}^{\mbox{k}}\) n 5-approximation algorithm for treewidth}.
\newblock {\em {SIAM} J. Comput.}, 45(2), 2016.

\bibitem{bollig2018relative}
Beate Bollig and Matthias Buttkus.
\newblock \href{https://arxiv.org/abs/1802.04544}{On the relative succinctness
  of sentential decision diagrams}.
\newblock {\em CoRR}, abs/1802.04544, 2018.

\bibitem{bollig1997complexity}
Beate Bollig and Ingo Wegener.
\newblock
  \href{https://embedded.eecs.berkeley.edu/esd-seminar/spring98/papers/wegener1.ps}{Complexity
  theoretical results on partitioned (nondeterministic) binary decision
  diagrams}.
\newblock In {\em MFCS}, 1997.

\bibitem{bova2016sdds}
Simone Bova.
\newblock \href{https://arxiv.org/pdf/1601.00501}{{SDD}s are exponentially more
  succinct than {OBDD}s}.
\newblock In {\em AAAI}, 2016.

\bibitem{bova2016knowledge}
Simone Bova, Florent Capelli, Stefan Mengel, and Friedrich Slivovsky.
\newblock \href{https://www.ijcai.org/Proceedings/16/Papers/147.pdf}{Knowledge
  compilation meets communication complexity}.
\newblock In {\em {IJCAI}}, 2016.

\bibitem{bova2017compiling}
Simone Bova and Friedrich Slivovsky.
\newblock \href{https://link.springer.com/article/10.1007/s00224-016-9715-z}{On
  compiling structured {CNF}s to {OBDD}s}.
\newblock {\em TCS}, 61(2), 2017.

\bibitem{bova2017circuit}
Simone Bova and Stefan Szeider.
\newblock \href{https://arxiv.org/pdf/1701.04626}{Circuit treewidth, sentential
  decision, and query compilation}.
\newblock In {\em {PODS}}, 2017.

\bibitem{bryant1991complexity}
R.~E. Bryant.
\newblock \href{http://www.cs.cmu.edu/~bryant/pubdir/ieeetc91.ps}{On the
  complexity of VLSI implementations and graph representations of Boolean
  functions with application to integer multiplication}.
\newblock {\em IEEE Trans.\ Computers}, 40(2), 1991.

\bibitem{bryant1992symbolic}
Randal~E. Bryant.
\newblock
  \href{http://users.ece.cmu.edu/~gnb/Related%20Work_files/bryant92.pdf}{Symbolic
  Boolean manipulation with ordered binary-decision diagrams}.
\newblock {\em {ACM} Comput. Surv.}, 24(3), 1992.

\bibitem{cali2017non}
Andrea Calí, Florent Capelli, and Igor Razgon.
\newblock
  \href{http://users.ece.cmu.edu/~gnb/Related%20Work_files/bryant92.pdf}{Non-{FPT}
  lower bounds for structural restrictions of decision {DNNF}}.
\newblock {\em CoRR}, abs/1708.07767v1, 2017.

\bibitem{capelli2016structural}
Florent Capelli.
\newblock {\em
  \href{https://www.dcs.bbk.ac.uk/~florent/these_capelli.pdf}{Structural
  restrictions of {CNF}-formulas: {A}pplications to model counting and
  knowledge compilation}}.
\newblock PhD thesis, Universit\'e Paris-Diderot, 2016.

\bibitem{capelli2017understanding}
Florent Capelli.
\newblock \href{https://arxiv.org/pdf/1701.01461}{Understanding the complexity
  of \#{SAT} using knowledge compilation}.
\newblock In {\em LICS}, 2017.

\bibitem{capelli2019tractable}
Florent Capelli and Stefan Mengel.
\newblock \href{http://drops.dagstuhl.de/opus/volltexte/2019/10257/}{Tractable
  QBF by knowledge compilation}.
\newblock In {\em {STACS}}, 2019.

\bibitem{capelli2018enumerating}
Florent Capelli and Yann Strozecki.
\newblock \href{https://arxiv.org/abs/1810.04006}{Enumerating models of {DNF}
  faster: {B}reaking the dependency on the formula size}.
\newblock {\em CoRR}, abs/1810.04006, 2018.

\bibitem{creignou2011enumerating}
Nadia Creignou, Fr{\'e}d{\'e}ric Olive, and Johannes Schmidt.
\newblock
  \href{https://satassociation.org/proceedings/SAT11/papers/SAT2011-page-117.pdf}{Enumerating
  all solutions of a {B}oolean {CSP} by non-decreasing weight}.
\newblock In {\em SAT}, 2011.

\bibitem{DBLP:journals/jacm/Darwiche01}
Adnan Darwiche.
\newblock
  \href{http://citeseerx.ist.psu.edu/viewdoc/download?doi=10.1.1.110.8260&rep=rep1&type=pdf}{Decomposable
  negation normal form}.
\newblock {\em JACM}, 48(4), 2001.

\bibitem{darwiche2001tractable}
Adnan Darwiche.
\newblock \href{https://arxiv.org/pdf/cs/0003044}{On the tractable counting of
  theory models and its application to truth maintenance and belief revision}.
\newblock {\em J. Applied Non-Classical Logics}, 11(1-2), 2001.

\bibitem{darwiche2003differential}
Adnan Darwiche.
\newblock \href{https://doi.org/10.1145/765568.765570}{A differential approach
  to inference in {B}ayesian networks}.
\newblock {\em {JACM}}, 50(3), 2003.

\bibitem{darwiche2011sdd}
Adnan Darwiche.
\newblock \href{https://doi.org/10.5591/978-1-57735-516-8/IJCAI11-143}{{SDD:}
  {A} new canonical representation of propositional knowledge bases}.
\newblock In {\em IJCAI}, 2011.

\bibitem{DBLP:journals/jair/DarwicheM02}
Adnan Darwiche and Pierre Marquis.
\newblock \href{https://arxiv.org/pdf/1106.1819.pdf}{A knowledge compilation
  map}.
\newblock {\em JAIR}, 17, 2002.

\bibitem{devadas93comparing}
Srinivas Devadas.
\newblock
  \href{https://pdfs.semanticscholar.org/38dc/b31dc422e6b68bf5f20ffa000d75606dc9f4.pdf}{Comparing
  two-level and ordered binary decision diagram representations of logic
  functions}.
\newblock {\em {IEEE} {TCAD}}, 12(5), 1993.

\bibitem{fierens2015inference}
Daan Fierens, Guy~Van den Broeck, Joris Renkens, Dimitar Shterionov, Bernd
  Gutmann, Ingo Thon, Gerda Janssens, and Luc~De Raedt.
\newblock \href{https://arxiv.org/pdf/1304.6810.pdf}{Inference and learning in
  probabilistic logic programs using weighted Boolean formulas}.
\newblock {\em {TPLP}}, 15(3), 2015.

\bibitem{grohe2009treewidth}
Martin Grohe and Dániel Marx.
\newblock \href{https://core.ac.uk/download/pdf/82315391.pdf}{On tree width,
  bramble size, and expansion}.
\newblock {\em J.~Combinatorial Theory, Series B}, 99(1), 2009.

\bibitem{jha2010bridging}
Abhay~Kumar Jha, Dan Olteanu, and Dan Suciu.
\newblock
  \href{http://www.cs.ox.ac.uk/dan.olteanu/papers/jos-edbt10.pdf}{Bridging the
  gap between intensional and extensional query evaluation in probabilistic
  databases}.
\newblock In {\em EDBT}, 2010.

\bibitem{jha2012tractability}
Abhay~Kumar Jha and Dan Suciu.
\newblock \href{https://homes.cs.washington.edu/~suciu/file37_paper.pdf}{On the
  tractability of query compilation and bounded treewidth}.
\newblock In {\em {ICDT}}, 2012.

\bibitem{jha2013knowledge}
Abhay~Kumar Jha and Dan Suciu.
\newblock
  \href{https://link.springer.com/content/pdf/10.1007%2Fs00224-012-9392-5.pdf}{Knowledge
  compilation meets database theory: {C}ompiling queries to decision diagrams}.
\newblock {\em TCS}, 52(3), 2013.

\bibitem{lauritzen1988local}
Steffen~L. Lauritzen and David~J. Spiegelhalter.
\newblock
  \href{https://www.eecis.udel.edu/~shatkay/Course/papers/Lauritzen1988.pdf}{Local
  computations with probabilities on graphical structures and their application
  to expert systems}.
\newblock {\em J.~Royal Statistical Society. Series B}, 1988.

\bibitem{meinel2012algorithms}
Christoph Meinel and Thorsten Theobald.
\newblock {\em Algorithms and Data Structures in VLSI Design: OBDD-foundations
  and applications}.
\newblock Springer Science \& Business Media, 2012.

\bibitem{monet2018towards}
Mika{\"e}l Monet and Dan Olteanu.
\newblock \href{http://ceur-ws.org/Vol-2100/paper19.pdf}{Towards deterministic
  decomposable circuits for safe queries}.
\newblock In {\em AMW}, 2018.

\bibitem{exploring2017nordstrand}
Joakim~Alme Nordstrand.
\newblock
  \href{http://bora.uib.no/bitstream/handle/1956/16325/Start2.pdf?sequence=1&isAllowed=y}{Exploring
  graph parameters similar to tree-width and path-width}.
\newblock Master's thesis, University of Bergen, 2017.

\bibitem{pipatsrisawat2008new}
Knot Pipatsrisawat and Adnan Darwiche.
\newblock \href{https://www.aaai.org/Papers/AAAI/2008/AAAI08-082.pdf}{New
  compilation languages based on structured decomposability}.
\newblock In {\em {AAAI}}, 2008.

\bibitem{pipatsrisawat2010lower}
Knot Pipatsrisawat and Adnan Darwiche.
\newblock
  \href{https://www.aaai.org/ocs/index.php/AAAI/AAAI10/paper/viewFile/1856/1991}{A
  lower bound on the size of decomposable negation normal form}.
\newblock In {\em {AAAI}}, 2010.

\bibitem{pipatsrisawat2010reasoning}
Thammanit Pipatsrisawat.
\newblock {\em Reasoning with propositional knowledge: {F}rameworks for
  {B}oolean satisfiability and knowledge compilation}.
\newblock PhD thesis, University of California, Los Angeles, 2010.

\bibitem{razgon2014obdds}
Igor Razgon.
\newblock
  \href{https://www.aaai.org/ocs/index.php/KR/KR14/paper/download/7982/7918}{On
  {OBDD}s for {CNF}s of bounded treewidth}.
\newblock In {\em {KR}}, 2014.

\bibitem{razgon2014twnrop}
Igor Razgon.
\newblock
  \href{http://www.dcs.bbk.ac.uk/~igor/papers/proceedings/ipec2014.pdf}{No
  small nondeterministic read-once branching programs for {CNF}s of bounded
  treewidth}.
\newblock In {\em IPEC}, 2014.

\bibitem{robertson1991obstructions}
Neil Robertson and P.D Seymour.
\newblock
  \href{http://citeseerx.ist.psu.edu/viewdoc/download?doi=10.1.1.183.1975&rep=rep1&type=pdf}{Graph
  minors. {X}. {O}bstructions to tree-decomposition}.
\newblock {\em Journal of Combinatorial Theory, Series B}, 52(2):153--190,
  1991.

\bibitem{Sauerhoff03}
Martin Sauerhoff.
\newblock \href{https://core.ac.uk/download/pdf/82133419.pdf}{Approximation of
  {B}oolean functions by combinatorial rectangles}.
\newblock {\em TCS}, 1-3(301), 2003.

\bibitem{sherstov2014thirty}
Alexander~A. Sherstov.
\newblock
  \href{http://web.cs.ucla.edu/~sherstov/pdf/mfcs-disjointness.pdf}{Communication
  complexity theory: Thirty-five years of set disjointness}.
\newblock In {\em MFCS}, 2014.

\bibitem{strozecki2010enumeration}
Yann Strozecki.
\newblock {\em
  \href{http://perso.prism.uvsq.fr/~ystr/these_strozecki}{Enumeration
  complexity and matroid decomposition}}.
\newblock PhD thesis, Paris 7, 2010.

\bibitem{suciu2011probabilistic}
Dan Suciu, Dan Olteanu, Christopher R{\'e}, and Christoph Koch.
\newblock {\em Probabilistic databases}.
\newblock Morgan \& Claypool, 2011.

\bibitem{Szeider04b}
Stefan Szeider.
\newblock \href{https://www.ac.tuwien.ac.at/files/pub/szeider-SAT-2003.pdf}{On
  fixed-parameter tractable parameterizations of {SAT}}.
\newblock In {\em SAT}, 2003.

\bibitem{wegener}
Ingo Wegener.
\newblock {\em The complexity of {B}oolean functions}.
\newblock Wiley, 1991.

\bibitem{wegener2000branching}
Ingo Wegener.
\newblock {\em Branching programs and binary decision diagrams: {T}heory and
  applications}.
\newblock SIAM, 2000.

\end{thebibliography}

\end{document}